\title{On Approximability of Satisfiable \texorpdfstring{$k$}{k}-CSPs: V}
\newcommand{\symm}{\textsc{Mildly-Symmetric}}
\newcommand\E{\mathop{\mathbb{E}}}
\newcommand\card[1]{\left| {#1} \right|}
\newcommand\sett[2]{\left\{ \left. #1 \;\right\vert #2 \right\}}
\newcommand\set[1]{{\left\{ #1 \right\}}}
\newcommand\Prob[2]{{\Pr_{#1}\left[ {#2} \right]}}
\newcommand\cProb[3]{{\Pr_{#1}\left[ \left. #3 \;\right\vert #2 \right]}}
\newcommand\Expect[2]{{\mathop{\mathbb{E}}_{#1}\left[ {#2} \right]}}
\newcommand\cExpect[3]{{\mathbb{E}_{#1}\left[ \left. #3 \;\right\vert #2 \right]}}
\newcommand\norm[1]{\| #1 \|}
\newcommand\skipi{{\vskip 10pt}}
\newcommand\spn{{\sf spn}_{\mathbb{N}}}
\newcommand\inner[2]{\langle{#1},{#2}\rangle}
\newcommand\eps{\varepsilon}
\renewcommand\geq{\geqslant}
\renewcommand\leq{\leqslant}
\newcommand{\rom}[1]{\uppercase\expandafter{\romannumeral #1\relax}}
\def\ggg{\gtrsim}
\def\lll{\lesssim}
\newcommand{\truncerr}{\varrho}
	\newcommand{\V}[1]{\boldsymbol{#1}}
	\newcommand{\calP}{\mathcal{P}}
	\newcommand{\calC}{\mathcal{C}}
	\newcommand{\calG}{\mathcal{G}}
	\newcommand{\calF}{\mathcal{F}}
	\newcommand{\calL}{\mathcal{L}}
	\newcommand{\bF}{\boldsymbol{F}}
	\newcommand{\bcalF}{\boldsymbol{\mathcal{F}}}
	\newcommand{\bcalX}{\boldsymbol{\mathcal{X}}}
	\newcommand{\bg}{{\V g}}
	\newcommand{\scale}{\mathbf{Scale}}
	\newcommand{\dict}{\mathbf{Dict}}
	\newcommand{\Round}{\mathbf{Round}}
	\newcommand{\simplex}{\blacktriangle}
    \newcommand{\alg}{\mathcal{ALG}}
	\newcommand{\calV}{\mathcal{V}}
	\newcommand{\inst}{\Upsilon}
	\newcommand{\val}{\mathbf{val}}
	\newcommand{\supp}{\mathtt{supp}}
	\newcommand{\x}{\mathbf{x}}
	\newcommand{\opt}{\textsc{Opt}}
\begin{document}
\maketitle

\begin{abstract}
We propose a framework of algorithm vs.~hardness for all Max-CSPs and demonstrate it for a large class of predicates. This framework extends the work
of Raghavendra \cite{Rag08}, who showed a similar result for almost satisfiable Max-CSPs.

Our framework is based on a new {\em hybrid approximation algorithm}, which uses a combination of the Gaussian elimination technique (i.e., solving a system of linear equations over an Abelian group) and the semidefinite programming relaxation. We complement our algorithm with a matching dictator vs.~quasirandom test that has perfect completeness.

The analysis of our dictator vs.~quasirandom test is based on a novel invariance principle, which we call the {\em mixed invariance principle}. Our mixed invariance principle is an extension of the invariance principle of Mossel, O'Donnell and Oleszkiewicz \cite{MOO} which plays a crucial role in Raghavendra's work. The mixed invariance principle allows one to relate $3$-wise correlations over discrete probability spaces with expectations over spaces that are a mixture of Gaussian spaces and Abelian groups, and may be of independent interest.
\end{abstract}

 \section{Introduction}
\subsection{Constraint Satisfaction Problems}
The class of constraint satisfaction problems (CSPs in short) consists of some of the most studied computational problems in artificial intelligence, database theory, logic, graph theory, and computational complexity. Given a predicate  $P : \Sigma^k \rightarrow \{0,1\}$ for some finite alphabet $\Sigma$, a $P$-CSP instance consists of a set of variables $X = \{x_1, x_2, \ldots, x_n\}$ and a collection of {\em local} constraints $C_1, C_2, \ldots, C_m$, each one of the form $P(x_{i_1}, x_{i_2}, \ldots, x_{i_k}) = 1$. Here and throughout, we refer to the parameter $k$ as the arity of the CSP. For a class of predicates $\mathcal{P}\subseteq \{P\colon \Sigma^k\to\{0,1\}\}$, an instance of $\mathcal{P}$-CSP consists of a set of variables $X$ and a collection of constraints $C_1,\ldots,C_m$ each one of the form $P(x_{i_1}, x_{i_2}, \ldots, x_{i_k}) = 1$ for some $P\in\mathcal{P}$. The value of an instance $\inst$, denoted by $\val(\inst)$,  is the maximum fraction of the constraints that can be satisfied by an assignment to the variables.

The most natural decision problem associated with instances of $\mathcal{P}$-CSP is the satisfiability problem: given an instance $\inst$ of $\mathcal{P}$-CSP, determine if it is satisfiable, i.e., if there exists an assignment $A\colon X\to\Sigma$ satisfying all of the constraints of $\inst$. In a relaxation of this problem called the Max-$\mathcal{P}$-CSP problem (which is most relevant to this paper), one is given an instance $\inst$ of $\mathcal{P}$-CSP, and the task is to efficiently find an assignment to the variables that satisfies as many of the constraints as possible.
An $\alpha$-approximation algorithm is a polynomial-time algorithm that, given an instance $\inst$, finds an assignment satisfying at least $\alpha\cdot \opt(\inst)$ fraction of the constraints, where $\opt(\inst)$ is the value of the optimal assignment.
The study of CSPs has driven some of the most influential developments in theoretical computer science,
including
the theory of NP-completeness~\cite{Cook71, Levin73},
the PCP theorem~\cite{AroraLMSS1998, AroraS1998, FGLSS96},
the development of semidefinite programming algorithms~\cite{GW95, KMS98, Zwick98, Zwick99, Rag08, CharikarMM09},
the Unique Games Conjecture and its consequences~\cite{Khot02UG,Rag08},
the dichotomy theorem~\cite{FederV98,Bulatov17,Zhuk20} and more. Below, we
elaborate on two of these topics.

\paragraph{The Dichotomy Theorem:} a systematic study of the complexity of solving CSPs was started by Schaefer in 1978~\cite{Schaefer78}, who showed that for every predicate $P$ over a Boolean alphabet, the problem of checking satisfiability of a $P$-CSP instance is either in P or is NP-complete.
Note that this is not a trivial
statement: by Ladner's theorem~\cite{Lander}, if P$\neq$NP, then there are languages that are not in P nor are NP-hard; these are often called NP-intermediate problems.
Thus, another way of stating Schaefer's theorem is that CSPs over Boolean
alphabets cannot be NP-intermediate. Feder and Vardi~\cite{FederV98} conjectured that Schaefer's theorem holds
for all finite alphabets, and this statement is often referred to
as the Dichotomy Conjecture.
Much effort has gone into studying the dichotomy conjecture, mainly
using the tools of abstract algebra. The conjecture was recently resolved by Bulatov and Zhuk (independently)~\cite{Bulatov17, Zhuk20}, who proved that, indeed,
for any family of predicates $\mathcal{P}$, the decision problem $\mathcal{P}$-CSP is either in P or is NP-complete.

\paragraph{Approximating Almost Satisfiable Instances:} the complexity of approximating {\em almost satisfiable} instances is
rather well understood by now. Here and throughout, we say that an
instance $\inst$ is almost satisfiable
if $\opt(\inst)\geq 1-\eps$ where
$\eps>0$ is a small constant.

Some of the theory
here is based on the PCP theorem~\cite{FGLSS96,AroraS1998,AroraLMSS1998}.
As an example, an important result of
H\r{a}stad~\cite{Has01} states
that for all $\eps>0$, given an instance $\inst$ of the Max-$3$-Lin problem promised to have $\opt(\inst)\geq 1-\eps$, it is NP-hard to find an assignment satisfying at least $1/2+\eps$ fraction of the constraints. Here, the Max-3LIN
problem is the problem Max-$\mathcal{P}$-CSP where
$\mathcal{P} = \{P_0,P_1\}$
and $P_a\colon\mathbb{F}_2^3\to\{0,1\}$ is defined by $P_a(x,y,z) = 1_{x+y+z = a}$. In fact, H\r{a}stad's hardness result~\cite{Has01} also applies to the {\em decision version} of the problem: for every $\eps>0$, it is NP-hard to distinguish between the cases $\val(\inst)\geq 1-\eps$ and $\val(\inst)\leq 1/2+\eps$.

Getting a more comprehensive understanding of the approximability of almost satisfiable CSPs requires a stronger PCP characterization of NP, known as the Unique Games Conjecture (UGC)~\cite{Khot02UG}. Assuming UGC, Raghavendra~\cite{Rag08} showed that for every collection of predicates $\mathcal{P}$ and $\eps>0$, there is a constant $\beta_{\mathcal{P}}$ such that:
\begin{enumerate}
    \item Algorithm: there exists
    a polynomial-time algorithm that given an instance $\inst$ of $\mathcal{P}$-CSP promised
    to have $\opt(\inst)\geq 1-\eps$, outputs an assignment satisfying at least $\beta_{\mathcal{P}}$ fraction of the constraints. Clearly, this also means that there exists a polynomial-time algorithm that distinguishes instances with a value at least $1-\eps$ from instances with a value at most almost $\beta_{\mathcal{P}}$.
    \item Hardness: for all $\delta>0$, given an instance $\inst$, it is NP-hard to distinguish between the case that $\val(\inst)\geq 1-\eps-\delta$, and the case that $\val(\inst)\leq \beta_{\mathcal{P}}+\delta$.
\end{enumerate}
In words, Raghavendra's result
asserts that for every collection of predicates, the approximability
of almost satisfiable instances exhibits
a dichotomy between approximation
ratios that can be achieved in polynomial time, and those that
are NP-hard (assuming UGC).

\subsection{Approximating Satisfiable Instances}
The complexity of approximating satisfiable CSPs is much more complicated and remains mostly unsolved as of now, even under reasonable conjectures in the style of UGC~\cite{Khot02UG}. The proof of the dichotomy theorem implies that for $\mathcal{P}$ for which $\mathcal{P}$-CSP is NP-complete, there exists a constant $0<\delta_{\mathcal{P}}<1$ such that it is NP-hard to distinguish satisfiable instances from instances with value at most $\delta_{\mathcal{P}}$.  However, unlike the almost-satisfiable case, we do not know tight inapproximability results for satisfiable instances for every $\mathcal{P}$. There are only a few collections $\mathcal{P}$ for which we know the existence of $\alpha_{\mathcal{P}}$ for which  efficient $\alpha_{\mathcal{P}}$ approximation of Max-$\mathcal{P}$-CSP is possible, while  $\alpha_{\mathcal{P}}+\eps$ approximation is NP-hard. There are even fewer collections $\mathcal{P}$ for which we know the
value of $\alpha_{\mathcal{P}}$. An example for such problem is the $3$-SAT a problem, for which H\r{a}stad~\cite{Has01} proved that $\alpha_{3SAT} = 7/8$ works (and more generally $\alpha_{kSAT} = 1-1/2^k$ for the $k$-SAT problem). Another example is the NTW predicate\footnote{The accepting assignments of the predicate NTW are $\{(0,0,0), (0,0,1), (0,1,0), (1,0,0), (1,1,1) \}$.}~\cite{OW_dict_09, OW_hard_09, KSaket06}, for which the optimal threshold of $\alpha_{NTW} = 5/8$ was shown by H{\aa}stad~\cite{Hastad14}. The
problem of determining the existence of the ratio $\alpha$
and pinning it down gets very difficult very quickly though; see~\cite{DBLP:conf/soda/BrakensiekHPZ21} for the case of the NAE predicate.

\subsection{The Dichotomy Approximation Conjecture}
Motivated by the dichotomy theorem and Raghavendra's theorem discussed above, in~\cite{BKMcsp1} the authors
suggested the following statement, referred to as the \emph{Approximation Dichotomy Conjecture}:
\begin{conjecture}[Approximation Dichotomy Conjecture]
\label{conj:ADC}
    For all $k\in\mathbb{N}$ and for all collections of $k$-ary predicates $\mathcal{P}$ for which $\mathcal{P}$-CSP is NP-hard, there exists a constant $\alpha_{\mathcal{P}}$ such that:
    \begin{enumerate}
        \item Algorithm: there is a polynomial-time algorithm that distinguishes satisfiable instances of Max-$\mathcal{P}$-CSP from instances with value at most $\alpha_{\mathcal{P}}$.
        \item Hardness: for all
        $\eps>0$, it is NP-hard to
        distinguish satisfiable instances of Max-$\mathcal{P}$-CSP from instances with value at most $\alpha_{\mathcal{P}} + \eps$.
    \end{enumerate}
\end{conjecture}
In words, the approximation dichotomy conjecture states that the
complexity of the approximating
Max-$\mathcal{P}$-CSP exhibits a
rapid phase transition between
approximation ratios that can
be achieved by polynomial-time
algorithms, and approximation ratios that are NP-hard to achieve. In other words, the approximation problem is never NP-intermediate.

\newcommand{\define}[1]{{\color{cyan}#1}}
A sequence of works~\cite{BKMcsp1,BKMcsp2,BKMcsp3,BKMcsp4} made progress on the case $k=3$ of Conjecture~\ref{conj:ADC}, focusing on the ``hardness'' part; we elaborate on these works below. The goal of the current paper is to make progress
on the ``algorithmic'' part of
Conjecture~\ref{conj:ADC}.
In particular, we propose an approximation algorithm
for a wide class of CSPs,
and develop tools to bridge between its performance and the ``hardness'' side of the conjecture.

\subsubsection{Why \texorpdfstring{$3$}{3}-ary Predicates?}
In full generality, Conjecture~\ref{conj:ADC}
is likely
to be very difficult to settle. Settling it, even for certain classes of CSPs, requires one to study associated,
very general analytical problems.
These problems include within
them (as subcases) the inverse
theorems for Gowers' uniformity
norms over finite fields~\cite{bergelson2010inverse,tao2010inverse,tao2012inverse}. In fact, the resulting analytical
problem for $k$-ary predicates
implies a generalization of the
inverse theorem for Gowers' $U^{k-1}$-norms. For instance,
the main result of~\cite{BKMcsp4}
is a solution to this analytical problem for $k=3$, and it
can be used to study the density of restricted $3$-AP free sets~\cite{BKM}, a result which was previously unknown. As the
difficulty in the study of Gowers'
uniformity norms sharply increases
when $k$ is large, it stands to  reason that the associated analytical
problems we study also become
more challenging as $k$ increases. Hence, we focus on the simplest
case that is not understood, $k=3$, which is already highly non-trivial.
It is also reasonable
to expect that the resolution of the analytical problem for general $k$ will ultimately proceed by induction
on $k$, in which scenario the case $k=3$ will be the base
case of this induction.\footnote{
Many of the proofs of the
inverse theorem for Gowers' uniformity norms indeed proceed in this fashion.}

We remark that prior works,
and more specifically Raghavendra's theorem~\cite{Rag08} as well as
the dichotomy theorem~\cite{Bulatov17,Zhuk20},
have side-stepped these issues. First, due to the imperfect completeness
Raghavendra requires only fairly
simple inverse theorems (corresponding to distributions with full support), and the argument is essentially the same for all $k$.
Second, in the context
of the dichotomy theorem, one does
not have to worry about preserving
approximation ratios, and the
difference between this case
and our case is analogous to the
difference between the relatively elementary Hales-Jewett theorem~\cite{hales2009regularity} and the more challenging density Hales-Jewett theorem~\cite{DHJ1,DHJ2,DHJ3}.

\subsubsection{Approximation Algorithms vs Hardness Reductions}
To characterize the right approximation threshold for Max-$\mathcal{P}$-CSP, one has to work both on the algorithmic front as well as on the hardness front, and make them meet. In this section
we discuss the way it is manifested in almost satisfiable
CSPs, and the difference between
that and our setting of satisfiable
CSPs.

\paragraph{Dictatorship tests
as evidence for hardness:}
dictatorship tests
are one of the most important
components in hardness of
approximation results. A function $f: \Sigma^n \rightarrow \Sigma$ is called a dictatorship function if there is $i\in[n]$ such that $f(x) = x_i$ for all $x$. A dictatorship test is a procedure that queries $f$ at a few (correlated) locations randomly, and based on these values decides if $f$ is a dictator function or {\em far} from any dictator function. For brevity, we often refer to the latter type of functions as {\em quasirandom} functions.

We briefly describe the notion of being {\em far} from dictator functions here. The influence of a coordinate $i$ on a function $f\colon (\Sigma^n, \nu^{n})\to \mathbb{C}$ is the amount it affects the value of $f$ at a random input,
i.e.,
\[
I_i[f] = \Expect{\substack{(x_1,\ldots,x_n)\sim \nu^{n}\\x_i'\sim \nu}}
{\card{f(x_1,\ldots,x_i,\ldots,x_n) - f(x_1,\ldots,x_i',\ldots,x_n)}^2}.
\]
 A function is called far from dictatorship functions if, for every coordinate $i$, the influence of the coordinate $i$ in $f$ is small. There are three important properties of a dictatorship test that are useful in getting the hardness of approximation result for Max-$\mathcal{P}$-CSP:
\begin{enumerate}
    \item The completeness $c$, which is the probability that
    the test accepts any dictatorship
    function.
    \item The soundness $s$, which is the probability that the test accepts any function which is far from being a dictatorship.
    \item The check that the
    test makes: if the dictatorship test makes
    $k$ queries, say to the points $x(1),\ldots,x(k)$, and performs
    a check of the form
    $P(f(x(1)),\ldots,f(x(k))) = 1$, for $P\in \mathcal{P}$, then it can be used to
    get a hardness result for Max-$\mathcal{P}$-CSP.
\end{enumerate}
Typically, a
dictatorship test with parameters
$0<s<c\leq 1$ that uses a collection of predicates $\mathcal{P}$, can be converted to
an NP-hardness result for the
following promise problem: given
an instance $\inst$ of CSP-$\mathcal{P}$ promised to be at least $c$-satisfiable, find an assignment
satisfying at least $s$ fraction
of the constraints. While this transformation is far from being automatic in general,\footnote{Often times one requires a plausible complexity-theoretic assumption, such as the Unique-Games Conjecture~\cite{Khot02UG} or
the Rich-2-to-1 Games Conjecture~\cite{BravermanKM21}.}
dictatorship tests are often
thought of as strong evidence
towards this form of hardness
of approximation result.
Thus, the hardness part of Conjecture~\ref{conj:ADC}  requires dictatorship
tests with $c=1$, which we refer
to as perfect completeness.

\paragraph{Dictatorship tests with perfect completeness:} the
papers~\cite{BKMcsp1,BKMcsp2,BKMcsp3, BKMcsp4} develop tools to analyze dictatorship tests
with perfect completeness. This
case turns out to be considerably
more complicated than the case of
imperfect completeness. Whereas
in the latter case, one can change
the dictatorship test slightly
(to gain desirable features from it) at only a mild cost in
the completeness parameter,
this is unaffordable once we
take $c=1$. In particular,
the paper~\cite{BKMcsp4} gives
a ``quasirandom'' versus ``dictatorship'' result for
a wide class of $3$-ary predicates.

\paragraph{Matching the dictatorship test via an algorithm:}
to match the performance of the
dictatorship test, an approximation algorithm has to gain insights from the soundness analysis of the test. At a high
level, the algorithm has to be
able to utilize any quasirandom function in an algorithmic way.

In the context
of Raghavendra's theorem~\cite{Rag08},
the class of quasirandom functions consists of
low-degree functions with
no influential coordinates~\cite{Mossel}.
To utilize these functions algorithmically, the powerful
invariance principle of~\cite{MOO}
is used, asserting that these
type of functions essentially
come from Gaussian space.
As Gaussian samples
can be produced algorithmically
by solving the semi-definite
programming relaxation and multiplying by vector-valued
Gaussian random variables,
this gives rise to an algorithm.

In the context of satisfiable CSPs, the class of quasirandom
functions is more rich in general.
In fact, this class may even depend on the
specific predicate in question. In many cases of interest this class includes low-degree
functions as well as Fourier characters, and it is
not immediately clear how to use
these types of functions algorithmically. The invariance
principle of~\cite{MOO} fails
for these types of functions,
and more information
(besides the one gained from the semi-definite programming relaxation) seems necessary
for an algorithm.
The main contribution of the
current work is to propose
such an algorithm and prove
a generalization of the invariance
principle that facilitates the
use of this type of functions algorithmically.

\subsection{Our Main Result}
We now state the main result
of this paper, and towards this
end we require a few definitions.
We begin with the notion of
Abelian embeddings, also
referred to as linear embeddings.
\begin{definition}
\label{def:lin_embed1}
Let $\Sigma_1,\ldots,\Sigma_k$
be finite alphabets and let
$A\subseteq \prod\limits_{i=1}^{k}\Sigma_i$.
For an Abelian group $(G,+)$, we say maps
$\sigma_i\colon \Sigma_i\to G$ for
$i=1,\ldots, k$
form an Abelian embedding of $A$
if
\[
(a_1,\ldots,a_k)\in A
\implies
\sum\limits_{i=1}^k\sigma_i(a_i) = 0_G.
\]
We say $A$ is Abelianly
embeddable if there are maps
$\sigma_i$ that are not all
constant that form an Abelian
embedding of $A$. We say that a distribution $\mu$ over $\prod\limits_{i=1}^{k}\Sigma_i$ is Abelianly embeddable if ${\sf supp}(\mu)$ is.
\end{definition}
The notion of
Abelian embeddings is central
to the study
of satisfiable CSPs, and the
existence of embeddings into a
group $G$ should be thought of
as hinting towards a subspace-type structure inside $A$.
For technical reasons though,
we are only able to handle Abelian groups $G$ that are finite. Thus, we say that $A$
admits a $(\mathbb{Z},+)$ embedding if it has an Abelian embedding into the group $(\mathbb{Z},+)$.
With this in mind, we
wish to define the class of predicates
we handle in the current paper, and we begin by
giving a few examples.
\begin{enumerate}
    \item {\bf Vector valued punctured $3$-Lin:} let $p$ be a prime number, let $m\geq 2$ and take $\Sigma =\mathbb{F}_p^m\setminus \{\vec{0}\}$. Take the collection of predicates $\mathcal{P} = \{P\}$, where the set of satisfying assignments of $P$ consists of tuples $(x, y, z)$ such that  $x, y, z\in \mathbb{F}_p^m\setminus \{\vec{0}\}$, each pair of $x,y,z$ is linearly independent and $x+y+z=\vec{0}$. It is easy
    to observe
    that $P$ admits an Abelian
    embedding into  $\mathbb{F}_p^m$. There are
    other Abelian embeddings that
    are induced by it, such as
    $\sigma_1(x) = \inner{\alpha}{x}$, $\sigma_2(y)=\inner{\alpha}{y}$, $\sigma_3(z) = \inner{\alpha}{z}$ for any
    $\alpha\in\mathbb{F}_p^m$.

    The predicate $P$ does not admit a $(\mathbb{Z},+)$
    embedding, and furthermore,
    there is a group action on
    $\Sigma$ that preserves satisfying
    assignments of $P$. Namely,
    for each invertible $M\in\mathbb{F}_p^{m\times m}$
    we can take $\tau_M\colon \mathbb{F}_p^m\to\mathbb{F}_p^m$
    defined as $\tau_M u = Mu$.
    It is easy to see that for
    each such $M$, if $P(x,y,z) = 1$, then
    $P(\tau_M(x), \tau_M(y), \tau_M(z)) = 1$. Using this,
    one can show that the collection $\mathcal{P}$ is {\symm} as defined below, and our results therefore apply to $\mathcal{P}$.

    \item {\bf $3$-Uniform hypergraph $p$-strong coloring:} let $p\geq 5$ be a prime, take $\Sigma = \mathbb{F}_p$ and consider the
    predicate $P\colon \Sigma\to\{0,1\}$ where $P(x,y,z) = 1_{x,y,z\text{ are distinct}}$ and $\mathcal{P} = \{P\}$. The problem Max-$\mathcal{P}$-CSP is also known~\cite{BrakensiekG14} as the $p$-strong coloring problem: an instance can be viewed as a hypergraph (whose hyperedges are the triplets that are involved in some constraint), and a solution can be viewed as coloring the graph maximizing the fraction of hyperedges with all distinct colors.

    The predicate $P$ does not
    admit Abelian embeddings,
    however its support does
    contain subsets that
    do admit Abelian embeddings.
    The collection $\mathcal{P}$
    can be seen to be {\symm}
    (see Section~\ref{sec:proof_that_symm} for a proof), and so
    our result applies to $\mathcal{P}$.
    \item {\bf $3$-Uniform hypergraph $3$-rainbow coloring:} consider the previous example except that we take $p=3$. This problem is also known~\cite{GuruswamiL18} as the rainbow hypergraph coloring problem, where a hyperedge is properly colored if {\em all} the colors are present. This time the predicate $P$ has Abelian embeddings, such as
    $\sigma,\gamma,\phi\colon \Sigma\to \mathbb{F}_3$ defined as $\sigma(x) = x$, $\gamma(y) = y-1$, $\phi(z) = z-2$. In fact, this is even a $(\mathbb{Z},+)$ embedding.
    We also note that
    there is a group action $\{\tau_{a,b}\}_{a,b\in \mathbb{F}_3, a\neq 0}$
    on $\Sigma$ defined as $\tau_{a,b}(u) = au + b$
    that preserves the satisfying assignment of $P$.

    Because $P$ admits a $(\mathbb{Z},+)$ embedding, the collection $\mathcal{P}$ is not {\symm} as defined below, and our result does not
    apply to it.
    As discussed in Sections~\ref{sec:on_symm} and~\ref{section:rel_work}
    we view the $(\mathbb{Z},+)$-embedding obstruction as an
    issue of a technical nature (as opposed to the issue of pairwise connectivity, which appears much more fundamental).
    Therefore, we suspect that
    a variant of our result should
    hold for this predicate as
    well.
\end{enumerate}

Having looked at a few examples and non-examples, we now formally define the class of {\symm}
predicates as follows:
\begin{definition} [{\symm} predicates]
\label{def:symm_p}
    A family of predicates
    $\mathcal{P}\subseteq \{P:\Sigma^3 \rightarrow \{0,1\}\}$
    is called {\symm} if there are actions $\tau_1, \tau_2, \ldots, \tau_\ell : \Sigma \rightarrow \Sigma$ such that:
    \begin{enumerate}
        \item For every $P\in\mathcal{P}$, every $i\in[\ell]$ and every satisfying assignment $\sigma \in \Sigma^3$ of $P$, the assignment $(\tau_i(\sigma_1),\tau_i(\sigma_2),\tau_i(\sigma_3))$ is a satisfying assignment of $P$.
        \item For every $P\in\mathcal{P}$ and for every satisfying assignment
        $\sigma \in \Sigma^3$ of $P$, the set 
        \[
        \{(\tau_i(\sigma_1), \tau_i(\sigma_2), \tau_i(\sigma_3)) \mid i\in [\ell]\}\subseteq \Sigma^3
        \]
        does not have a $(\mathbb{Z},+)$-embedding.
    \end{enumerate}
\end{definition}
In words, a collection of predicates $\mathcal{P}$ is
called {\symm} if there are
maps on the alphabet $\Sigma$
that both (1) preserve the satisfying
assignments of all predicates
in $\mathcal{P}$, and (2) the orbit of each satisfying assignment under the maps
is rich enough that it doesn't
admit any $(\mathbb{Z},+)$ embedding.

\skipi

Our main result with regard to
Conjecture~\ref{conj:ADC} is
the following statement.
\begin{theorem}\label{thm:main}
      Let $\mathcal{P}$ be a collection of {\symm} $3$-ary predicates. Then there exists $\alpha_{\mathcal{P}}$
      such that for all $\eps>0$:
      \begin{enumerate}
          \item Hardness: there is
          a dictatorship
          vs quasirandom  test using $\mathcal{P}$ with perfect completeness and
          soundness $\alpha_{\mathcal{P}}+\eps$.
          \item Algorithm:
          there exists a polynomial-time algorithm that distinguishes between satisfiable instances of $\mathcal{P}$-CSP from instances of $\mathcal{P}$-CSP with value at most $\alpha_{\mathcal{P}}$.
      \end{enumerate}
 \end{theorem}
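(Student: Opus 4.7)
My plan is to meet both halves of the theorem at the same analytic threshold. For each distribution $\mu$ supported on the satisfying assignments of some $P\in\mathcal{P}$, let $\alpha(\mu,P)$ be the supremum, as the influences and noise parameters tend to zero, of $\Expect{\mathbf{x}\sim\mu^{\otimes n}}{P(f(\mathbf{x}(1)),f(\mathbf{x}(2)),f(\mathbf{x}(3)))}$ taken over quasirandom (low-influence) $f\colon\Sigma^n\to\Sigma$; then set $\alpha_{\mathcal{P}}=\inf_{\mu,P}\alpha(\mu,P)$. I would then construct, independently, (i) a dictator-vs-quasirandom test whose soundness is $\alpha_{\mathcal{P}}+\eps$, and (ii) a polynomial-time algorithm that achieves value at least $\alpha_{\mathcal{P}}$ on satisfiable instances, so that the same constant $\alpha_{\mathcal{P}}$ appears on both sides by construction.

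\textbf{Hardness side.}
Pick $(\mu,P)$ nearly attaining the infimum. The test samples $(\mathbf{x}(1),\mathbf{x}(2),\mathbf{x}(3))\sim\mu^{\otimes n}$, queries $f$ at the three points, and accepts iff the returned triple satisfies $P$. Since $\mu$ is supported on satisfying assignments, any dictator is accepted with probability one, yielding perfect completeness. For soundness, I would decompose $f$ into (a) a part spanned by characters of the finite Abelian groups into which $\supp(\mu)$ embeds, and (b) a complementary low-influence part. The mixed invariance principle (this paper's main analytic contribution) then replaces the complementary part by genuine Gaussian variables while preserving the Abelian characters. The resulting expectation matches the analytic object that defines $\alpha(\mu,P)$, so the soundness is at most $\alpha(\mu,P)\leq\alpha_{\mathcal{P}}+\eps$.

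\textbf{Algorithmic side.}
Given a satisfiable instance $\inst$, I would solve the basic SDP relaxation to obtain unit vectors $\mathbf{v}_{(x,a)}$ for every variable--value pair, and additionally, for each Abelian embedding $(\sigma_1,\sigma_2,\sigma_3)$ induced on constraints into a finite group $G$, solve the corresponding linear system over $G$ by Gaussian elimination to obtain labels $\ell\colon X\to G$ (such $\ell$ exist because $\inst$ is satisfiable). The rounding draws a vector-valued Gaussian and, for each variable $x$, uses a rounding table indexed jointly by the Gaussian projections of the SDP vectors $\mathbf{v}_{(x,\cdot)}$ and the label $\ell(x)$. Applying the mixed invariance principle in the reverse direction, the joint law of the three rounded values on a random constraint approximates the Gaussian-plus-Abelian ensemble that defines $\alpha(\cdot,\cdot)$, so the expected fraction of satisfied constraints is at least $\alpha_{\mathcal{P}}$, by the very definition of the infimum.

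\textbf{Main obstacle.}
The crux is proving the mixed invariance principle itself. Extending the Mossel--O'Donnell--Oleszkiewicz principle to targets that mix Gaussian variables with characters of a finite Abelian group demands uniform moment and hypercontractivity control for the characters, and this is exactly where the \symm{} hypothesis pays off: ruling out $(\mathbb{Z},+)$-embeddings forces the Abelian groups that arise to be finite, so that the needed $L^p$-bounds hold uniformly. A secondary subtlety, which I expect will require careful bookkeeping, is to ensure that the algorithm's rounding table consumes the embedding data in precisely the form produced by the invariance principle, so that the hardness optimum $\inf_{\mu,P}\alpha(\mu,P)$ and the algorithmic guarantee coincide exactly rather than up to an avoidable slack; a compactness/limit argument is then needed to upgrade the pointwise matching over $(\mu,P)$ into a single threshold $\alpha_{\mathcal{P}}$ for the entire family $\mathcal{P}$.
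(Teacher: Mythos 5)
Your overall framework differs from the paper's in a way that introduces several genuine gaps; the most serious is that your algorithmic side aims for a \emph{search} algorithm, which the paper explicitly leaves open.

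\textbf{The algorithm is a decision procedure, not a rounding procedure.}
You write that the algorithm should ``achieve value at least $\alpha_{\mathcal{P}}$ on satisfiable instances,'' and you propose a Gaussian-plus-Abelian rounding to deliver it. The paper does not do this and states explicitly (in the ``On Search vs Decision Algorithm'' discussion) that turning the decision algorithm into a search algorithm is an open problem. In the paper the hybrid algorithm $\alg$ never rounds: it only massages the SDP and the system of linear equations and accepts if a value-$1$ consistent solution exists. The constant $\alpha_{\mathcal{P}}$ is then \emph{defined} as the infimum of $\opt(\inst)$ over accepted instances. The Gaussian-plus-Abelian rounding scheme ($\Round_{\bcalF}$ in the paper's Figure~3) does appear, but only as a thought experiment in the \emph{soundness analysis of the test}: given a quasirandom $f$ passing the test with probability $\rho$, one shows it induces a rounding of $\inst$ with expected value $\rho - o(1) \le \opt(\inst) + o(1) \le \alpha_{\mathcal{P}} + o(1)$. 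This rounding scheme requires the quasirandom $f$ as input, which the algorithm does not have. Your step ``the expected fraction of satisfied constraints is at least $\alpha_{\mathcal{P}}$, by the very definition of the infimum'' therefore does not follow and, if true, would settle the paper's open problem.

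\textbf{The iterative consistency machinery is missing and is essential.}
You propose to solve the basic SDP once and the linear system once per Abelian embedding, then round. The paper's Algorithm~4 (``Massaging the SDP solution and the {\gesystem}'') instead iterates between the two: it restricts local distributions $\mu_C$ in the SDP when the linear system rules an atom out, and it adds equations to the linear system when a character is trivial on ${\sf span}({\sf supp}(\mu_C))$. This iterative process achieves \emph{consistency} (Definition~7.14), which is what guarantees that for every $P,Q,R$ arising in the decomposition of $f$, the product $PQR$ is identically $1$ under $\mu_C$ if and only if it is identically $1$ under the set of solutions $\calT$ to the linear system (Lemma~7.15). Without this, the statistical-distance bound between $\chi(\mathcal{D}_{\mu_C})$ and $\chi(\mathcal{D}_{\mathcal{T}})$ (Lemma~7.35) does not hold, and one cannot replace local per-constraint samples by globally sampled solutions of the linear system. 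A related point you elide: even when the predicates in $\mathcal{P}$ are {\symm}, the basic SDP's local distributions $\mu_C$ may be supported on proper subsets of $P_C^{-1}(1)$ that \emph{do} admit $(\mathbb{Z},+)$-embeddings. The {\symm} assumption is used to show (Lemma~7.29) that the algorithm can repair this via tensoring and Lemma~7.21, again as part of the iterative pipeline; it is not used merely to ``force the Abelian groups to be finite.''

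\textbf{The analytic definition of $\alpha_{\mathcal{P}}$ creates a matching problem the paper avoids.}
You define $\alpha(\mu,P)$ as a supremum over quasirandom $f$ of the acceptance probability, then want the algorithm and the test to both meet $\inf_{\mu,P}\alpha(\mu,P)$. Beyond the compactness/limit issue you flag, there is the more basic problem that the algorithm faces an entire instance with a distribution over constraints of various $P\in\mathcal{P}$, not a single pair $(\mu,P)$; the infimum over pairs does not directly give the guarantee you claim. The paper sidesteps all of this by defining $\alpha_{\mathcal{P}}$ operationally via $\alg$ and building the test from an integrality-gap instance $\inst$; the test's soundness is then bounded by $\opt(\inst) + o(1) \le \alpha_{\mathcal{P}} + \eps$ via the rounding argument, with no need for a compactness step or a threshold that is intrinsic to $(\mu,P)$.
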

\paragraph{Organization:}
the rest of this introductory
section is organized as follows.
In Section~\ref{sec:appx_alg_csp}
we discuss the algorithmic approach
for CSPs, and in Section~\ref{sec:new_approx_alg} we
present our \emph{hybrid algorithm}.
In Section~\ref{sec:analysis_of_hybrid} we discuss the analysis of
the hybrid algorithm and in
Section~\ref{section:mixed_inv_intro} we discuss our main technical
contribution, the {\em mixed
invariance principle}.
In Section~\ref{section:rel_work} we
discuss other related works.

\subsection{Approximation Algorithms for CSPs}\label{sec:appx_alg_csp}
In this section we discuss two algorithmic
techniques that are vital
towards our hybrid algorithm,
and are used in
Raghavendra's theorem and
in the dichotomy theorem.

\subsubsection{Raghavendra's Algorithm: Semi-definite
Programming Relaxations}
For almost satisfiable CSPs,
Raghavendra~\cite{Rag08} showed that {\bf semi-definite programming} (SDP) based algorithms give the optimal approximation algorithms (assuming the UGC). More specifically, we can write down a basic SDP relaxation of a given instance of Max-$\mathcal{P}$-CSP as shown in Table~\ref{fig:sdp+gaussianE}. Here, $\mathcal{V}$ is the set of
variables of the instance and
$\mathcal{C}$ is a distribution
over constraints of the instance
(representing a weighted instance of CSP-$\mathcal{P}$), and for each $c\in\mathcal{C}$, $\mathcal{V}(c)$ is the set of
variables appearing in $c$. For each
variable $i\in\mathcal{V}$ and
an alphabet symbol $a\in\Sigma$
the program has a vector
$\V b_{i,a}$ and additionally
there is a global vector $\V b_0$.
We think of these vectors as
describing a distribution over
good assignments to the instance,
and write down conditions corresponding to that.

The SDP-solution is then rounded, via a non-trivial rounding procedure, to an assignment to the variables. More precisely,
given a solution to the SDP program that lives in dimension $m$, Raghavendra samples Gaussians $g^{(1)},\ldots,g^{(R)}\in \mathbb{R}^m$ in which each coordinate is an independent standard Gaussian random variable, and produces jointly
distributed Gaussians $z_{\ell,(i,a)} = \langle g^{(\ell)}, \V b_{i,a}\rangle$. It is easy to
see that the $z$'s pairwise
correlations
match the inner products
of the SDP solution vectors.
Using these samples (as inputs),
Raghavendra shows that any
quasirandom function that
performs well in the dictatorship
tests yields a rounding function
that satisfies (in expectation)
at least $s$ fraction of the
constraints, where $s$ is
the soundness of the dictatorship
test. We stress that in
this context, quasirandom functions refer to low-degree
functions in which all variables
have small influence.

It is, therefore, natural to ask if
semi-definite programming relaxation also gives the best
approximation algorithm in
the setting of Theorem~\ref{thm:main}. This turns out to be false as can be seen from the following CSP. Let $(G, \cdot)$ be a non-Abelian group and consider the problem $3$-Lin$_{G}$. In this problem,
we have variables $x_1,\ldots,x_n$ that are supposed to be assigned values from $G$, and the constraints are of the form $x_i\cdot x_j \cdot x_k = c$ where $c\in G$ are constants. In~\cite{BenabbasGMT12} it is shown that the basic SDP for $3$-LIN$_{G}$ has an integrality gap of $1$ vs.\ $1/|G|$, meaning that
an algorithm that only uses SDP rounding cannot achieve a better ratio than $1/|G|$ on satisfiable instances. However, there is an algorithm that achieves $1/|[G,G]|$ factor approximation on satisfiable instances, where $[G,G]$ is a commutator subgroup of $G$; see~\cite{BKnonabelian} for example.\footnote{For many non-Abelian groups $G$, such as for the group $G=S_m$ for example, the size of the commutator subgroup $[G,G]$ is strictly smaller than the size of $G$. In particular, a $1/|[G,G]|$-approximation is strictly better than a $1/|G|$-approximation.}

\subsubsection{Gaussian Elimination}
Another important algorithmic technique is {\bf Gaussian elimination}, i.e., solving a system of linear equations over an Abelian group~\cite{EngebretsenHR}.
Gaussian elimination can at times be more powerful than
semi-definite programming
relaxations: using it one
can decide whether a given
system of linear equations
over $\mathbb{F}_p$ is perfectly
satisfiable or not, but SDPs fail
to do so~\cite{Grigoriev,Schoenebeck}.
Still, by itself it is rather weak, and it is not clear how
to use it to obtain non-trivial
approximation algorithms for
problems such as Max-Cut.
We remark that Gaussian elimination is not enough to check the satisfiability of bounded width $P$-CSPs~\cite{BartoK09}, which otherwise are tractable using local-propagation algorithms~\cite{BartoK09}.

\subsubsection{Our Hybrid Algorithm}
Our hybrid algorithm blends the two aforementioned algorithmic tools in a nontrivial way. We note that these two techniques have recently been used together to produce nontrivial algorithms in the area of {\em promise CSPs}. Indeed, therein a combination of semidefinite program/linear program (SDP/LP) and affine integer program (AIP) was used~\cite{BrakensiekG19, BrakensiekGWZ20, CiardoZ_CLAP23} to solve a few tractable cases. Similarly, SDP+AIP was shown~\cite{CiardoZ24} not to be enough to solve the approximate graph coloring problem. However, these prior works consider solving the SDP and the system of linear equations once and using these solutions to output the final decision. In contrast, our hybrid algorithm iteratively modifies the SDP program and the system of linear equations before coming up with the final decision.

\subsection{A New Approximation Algorithm for Satisfiable CSPs}\label{sec:new_approx_alg}
	In this section we present our hybrid algorithm that will be used in the analysis of our dictatorship test.

Let $\mathcal{P}$ be a collection of $3$-ary predicates, all of which are embeddable in a finite Abelian group $G$ via the map $\sigma\colon \Sigma\to G$.\footnote{For simplicity, we assume here that this is essentially the only group in which the predicates are embeddable and all the embedding maps are identical, i.e., $\sigma_1 = \sigma_2 = \sigma_3 = \sigma$.}
Fix an instance $\inst = (\calV, \calC)$ of Max-$\mathcal{P}$-CSP, where $\calV$ is identified with the set $\{1, 2, \ldots, n\}$, and $\calC$ is a set of constraints on $\calV$. The basic semidefinite programming relaxation of the instance is given on the left side of Table~\ref{fig:sdp+gaussianE}. It consists of vectors $\{\V b_{i,a}\}_{i\in \calV, a\in \Sigma}$, distributions $\{\mu_c\}_{c\in \calC}$ over the local assignments (i.e., on $\Sigma^{\calV(c)}$, where $\calV(c)$ denotes the tuple of variables in $c$) and a unit vector $\V b_0$. The notation $\simplex(Z)$ refers to the collection of all probability distributions over the set $Z$. Observe that this is indeed a relaxation: given a satisfying assignment $\alpha: \calV \rightarrow \Sigma$, choose $\V b_0$ to be some unit vector, and consider the vector assignment
\begin{align}
    \V b_{i,a}= \left\{\begin{array}{cc}
          \V b_0 & \mbox{ if } \alpha(i) = a,\\
          \vec{0} & \mbox{ otherwise.}
    \end{array}\right. \label{eq:intor_sdp_integral}
\end{align}
For every $c\in \calC$, the distribution $\mu_c$ is set to be supported on the assignment $\alpha|_{\calV(c)}$, which by definition satisfies the constraint $c$. It can be easily observed that all the constraints (1)--(4) are satisfied, and furthermore, the objective value of this vector solution is $1$, as $c(x) = 1$ if $x\sim \mu_c$ for every $c\in \calC$.

\begin{table}[t]
	\begin{tabular}{|c|c|}
		\hline
		Semidefinite Program & System of linear equations over $G$ \\
		\hline
		$\begin{aligned}
			&\mbox{maximize}\quad \E_{c\in \calC} \E_{x\sim \mu_{c}} [c(x)] \nonumber\\
		&\mbox{subject to } \\
		&(1) \quad \langle \V b_{i,a}, \V b_{j,b}\rangle = \Pr_{x\sim \mu_{c}}[x_i = a, x_j = b] &\\
		& \quad\quad\quad\quad\quad\quad \quad  c\in \calC,\quad i,j\in \calV(c), \quad a,b\in \Sigma, \\
		&(2) \quad \langle \V b_{i,a}, \V b_0 \rangle = \|\V b_{i,a}\|_2^2,  \forall i\in \calV, a\in \Sigma,\\
		&(3) \quad \|\V b_0 \|_2^2 = 1,\\
		&(4) \quad \mu_{c} \in \simplex(\Sigma^{\calV(c)}),  c\in \calC.
		\end{aligned}$
		&
		$\begin{aligned}
		&\mbox{Find $\vartheta:\{y_1, y_2,\ldots, y_n\} \rightarrow (G, +)$ such that}\\
		& \forall c\in \calC \mbox{ with } \calV(c) = (i_1, i_2, i_3), \mbox{ we have } \\
        \\
		& \quad\quad \vartheta(y_{i_1}) + \vartheta(y_{i_2}) + \vartheta(y_{i_3}) = 0_G.
		\end{aligned}$ \\
		\hline
	\end{tabular}
	\caption{A semidefinite programming formulation and a system of linear equations for a given Max-$P$-CSP instance $\inst = (\calV, \calC)$.}
	\label{fig:sdp+gaussianE}
\end{table}

We also write down a system of linear equations over an Abelian group $G$, given on the right-side of Table~\ref{fig:sdp+gaussianE}. Note that if the instance $\inst$ is satisfiable, then the system of linear equations over $G$ also has solutions. This follows from the definition of Abelian embeddability (Definition~\ref{def:lin_embed1}): take any satisfying assignment $\alpha\in \Sigma^n$ and assign $\vartheta(y_i) = \sigma(\alpha_i)$ for all $i\in [n]$.

We modify the semi-definite program and the system of linear equations iteratively as follows. At every step, we work with an SDP solution where for every satisfying assignment $\alpha: \calV \rightarrow \Sigma$ to the instance $\inst$, $\mu_c(\alpha|_{\calV(c)}) >0$.\footnote{This can be achieved in polynomial time, see Lemma~\ref{lemma:includes_all_sat_assignments}.} In other words, every satisfying assignment to the instance `survives' in the SDP solution.
\begin{itemize}
    \item For a constraint $c\in \calC$ we say that $(g_1,g_2,g_3)\in G^3$ is SDP-unattainable if 
    \[(\sigma(a_1),\sigma(a_2),\sigma(a_3)) \neq (g_1,g_2,g_3)
    \]
    for all $(a_1,a_2,a_3)\in {\sf supp}(\mu_c)$. Consider an SDP-unattainable tuple $(g_1,g_2,g_3)$ not in a subgroup (of $G^3$) generated by the SDP-attainable tuples.  We modify the system of linear equations by eliminating $(g_1,g_2,g_3)$ from being a possible setting to the variables corresponding to $c$ while preserving all the SDP-attainable tuples in a solution.
    \item For every constraint $c\in \calC$ and an assignment $\V a = (a_1, a_2, a_3) \in \Sigma^3$ if no solution to system of linear equations assigns $(\sigma(a_1), \sigma(a_2), \sigma(a_3))$ to the variables of $c$, then add a constraint $\mu_c(\V a) = 0$ to the SDP formulation.
\end{itemize}

It is easy to see that the process ends in polynomially many steps. The algorithm accepts an instance if the SDP value is $1$ at the end of the above procedure. Since we ensured that every satisfying assignment to the instance `survives' in the SDP solution, for a satisfying assignment $\alpha$, the vector solution given in~\eqref{eq:intor_sdp_integral}) will be a feasible solution to the final SDP. Therefore, the algorithm always accepts satisfiable instances of Max-$\mathcal{P}$-CSP.

As for the soundness of the algorithm, let $\mathcal{S}$ be the collection of all
Max-$\mathcal{P}$-CSP instances
where the algorithm accepts,
and define $s = \inf_{\inst\in \mathcal{S}}\val(\inst)$.
To complete the proof of
Theorem~\ref{thm:main} we
show a $1$ vs.\ $s+o(1)$
dictatorship test. We explain
this in Section~\ref{sec:analysis_of_hybrid}.

\subsection{The Dictatorship Test}\label{sec:analysis_of_hybrid}
In this section we explain how to
use an integrality gap of the above hybrid algorithm to construct a dictatorship test.
Throughout, we have an integrality gap $\inst$
with $\val(\inst) = s+o(1)$.
To analyze the dictatorship test, it will be useful for us to
consider the following approximation algorithm (we stress that it is specific for the instance $\inst$):
\begin{enumerate}
    \item \emph{Solving the programs.} Solve the final SDP program such that the objective value of the solution is $1$. As there are $n|\Sigma|+1$ vectors, we can assume without loss of generality that these vectors live in $\mathbb{R}^m$ for $m\leq  n|\Sigma|+1$. Denote by $S\subseteq G^n$ be the set of all the satisfying assignments $\vartheta$ for this system of linear equations,
    and note that $S$ is a subgroup.
    \item \emph{Setting up rounding functions and sampling.} For a constant $R\geq 1$ to be chosen as large enough, we fix a rounding function $f: \mathbb{R}^{R|\Sigma|} \times G^R\rightarrow \simplex(\Sigma)$, such that $f(\V z, \V w)$ gives a probability distribution on $\Sigma$. We sample $R$ Gaussian vectors $\bg^{(1)}, {\bg}^{(2)}, \ldots, {\bg}^{(R)} \in \mathbb{R}^m$ where for each $j\in [m]$ and $\ell\in [R]$, $g^{(\ell)}_j$is distributed according to the standard normal variable. We also sample $R$ uniformly random satisfying assignments over $G$ from the set $S$, which we denote by ${\V \sigma}^{(1)}, {\V \sigma}^{(2)}, \ldots, {\V \sigma}^{(R)}$.
    \item {\em SDP rounding component.} We first take the inner product of the Gaussian vectors with the SDP vectors corresponding to the variable $i$. More formally, let $z_{i,(\ell,a)} = \langle \V b_{i,a}, \bg^{(\ell)}\rangle$, for every $i\in\mathcal{V}$, $\ell\in [R]$ and $a\in \Sigma$. This gives a vector $\V z_i = (z_{i,(\ell,a)})_{\ell\in [R], a\in\Sigma}\in \mathbb{R}^{R|\Sigma|}$
    for each variable $i\in \mathcal{V}$.

    \item {\em Gaussian elimination component.} Taking the assignments we sampled from $S$, we create a string from $\V w_i \in G^R$ for each variable $i\in\mathcal{V}$, where $w_i = (\V \sigma^{(\ell)}(y_i))_{\ell\in [R]}$.
    \item \emph{Outputting an assignment.} For each $i\in \mathcal{V}$, sample $A(i)$ according to the distribution $f(\V z_i, \V w_i)$.
\end{enumerate}

\subsubsection{A Dictatorship Test with Imperfect Completeness}

Towards the construction of the dictatorship test, it is helpful
to first analyze a variant of the above algorithm that only solves the
SDP program and applies a rounding
scheme (or alternatively, that applies a rounding scheme that
ignores its $\V w$-component).
This is the setting in Raghavendra's theorem, and he uses it to construct a dictatorship test
with completeness $1-\eps$ and soundness $s+o(1)$ that uses the collection $\mathcal{P}$, for all $\eps>0$.
Denote by $(\V b, \V \mu)$ the SDP solution with value $1$
and consider the following test to check if a given function $f: \Sigma^R \rightarrow \Sigma$
is a dictator function or far from a dictator function:
\begin{enumerate}
    \item Sample $(\V y_1, \V y_2, \ldots, \V y_k)$ as follows:
    \begin{enumerate}
        \item Sample a random constraint $c\in \calC$
        and let $P\in\mathcal{P}$
        be the predicate it uses.
        \item For each $i\in [R]$, sample $(y_{1,i},  y_{2, i}, \ldots, y_{k, i})$ according to $\mu_c$ independently.
         \item For each $i\in [R]$, with probability $\eps$ resample $(y_{1,i},  y_{2, i}, \ldots, y_{k, i})$ from $\Sigma^k$ uniformly and independently.
    \end{enumerate}
    \item Check if $P(f(\V y_1), f(\V y_2), \ldots, f(\V y_k)) = 1$.
\end{enumerate}

\noindent{\bf Completeness:} If $f$ is a dictator function, say $f(\V y) = y_j$ for some $j\in [R]$, then the probability that the test passes is as follows:
\begin{align*}
    \Pr[\mbox{Test passes}] &= \E_{(c,P)}
     \left[\E_{(\V y_1, \V y_2, \ldots, \V y_k)}\left[P(f(\V y_1), f(\V y_2), \ldots, f(\V y_k)) \right]\right]\\
     & \geq (1-\eps)\E_{(c,P)}
     \left[\E_{(y_{1,j},  y_{2, j}, \ldots, y_{k, j})\sim \mu_c}\left[P(y_{1,j},  y_{2, j}, \ldots, y_{k, j}) \right]\right]\\
     & = (1-\eps) \cdot 1,
\end{align*}
where the last equality follows from the fact that the SDP value is $1$ and hence $\mu_c$ is supported on $P^{-1}(1)$ for every $c\in \calC$.
\skipi

\noindent{\bf Soundness:} Let us now analyze the soundness of the test. Towards this, suppose $f\colon \Sigma^R\to\Sigma$ is a quasirandom function.\footnote{By that, we mean that for all $a\in \Sigma$, the function $1_{f(x) = a}$ is quasirandom.} Our goal is to show that the dictatorship test accepts with probability at most $s+o(1)$. First, we express the test passing probability as follows:
\begin{align*}
    \Pr[\mbox{Test passes}] &= \E_{(c,P)}
    \left[\E_{(\V y_1, \V y_2, \ldots, \V y_k)}\left[P(f(\V y_1), f(\V y_2), \ldots, f(\V y_k)) \right]\right].
\end{align*}
Fix $c$ and $P$ and focus on the
inner expectation.
By expressing $P$ in terms of its multi-linear extension, the expectation above can be written as a linear combination of expectations of the form
\begin{equation}
\label{eq:exp}
    \E\left[\prod_{i\in S} F_i(\V y_i)\right],
\end{equation}
where $S\subseteq [k]$ and $F_i: \Sigma^R\rightarrow \{0,1\}$ is an indicator function of the form $F_i(\V y) = 1_{f(\V y) = a}$ for some $a\in\Sigma$. An important point is that the analysis of the test boils down to analyzing the expectation of product of functions where the output of each function is bounded. From this point onwards, Raghavendra's
argument~\cite{Rag08} proceeds as follows
(for the simplicity of presentation we are omitting from the description many important technical details, such as how to keep the functions we work with bounded):
\begin{enumerate}
    \item[I.]
    For $\eps>0$, as the distribution on $(y_{1,j},  y_{2, j}, \ldots, y_{k, j})$ has full support, a result of~\cite{Mossel} asserts the expectation~(\ref{eq:exp}) can be approximately computed by only considering the low-degree part $F_i^{\leq d}$ of the corresponding $F_i$s, that is,
    \begin{equation}\label{eq:key_ld_only_matters}
    \E \left[\prod_{i\in S} F_i(\V y_i)\right] \approx  \E \left[\prod_{i\in S} F_i^{\leq d}(\V y_i)\right]  ,
    \end{equation}
    for some $d= O_\eps(1)$.
    \item[II.] Using the fact that $F_i^{\leq d}$s are low-degree and far from dictator functions, the invariance principle of~\cite{MOO} states that the inputs $F_i^{\leq d}$ can be ``replaced'' by correlated Gaussian random variables that
    have matching pairwise
    correlation to the $y_i$'s,
    i.e.,
    \begin{equation}\label{eq:cor_of_inv}
    \E \left[\prod_{i\in S} F_i^{\leq d}(\V y_i)\right] \approx \E \left[\prod_{i\in S} F_i^{\leq d}(\V g_i)\right].
    \end{equation}

    \item[III.] The process of sampling the correlated Gaussian can be simulated by the SDP rounding component as stated above. Thus, using
    the above observations, an
    algorithm can generate Gaussian samples such that
    \begin{equation}\label{eq:end_of_soundness_intro}
    \left[\E_{(\V y_1, \V y_2, \ldots, \V y_k)}\left[P(f(\V y_1), f(\V y_2), \ldots, f(\V y_k)) \right]\right]
    \approx
    \left[\E_{(\V y_1, \V y_2, \ldots, \V y_k)}\left[P(\tilde{f}(\V g_1), \tilde{f}(\V g_2), \ldots, \tilde{f}(\V g_k)) \right]\right],
    \end{equation}
    for some function $\tilde{f}$.
    This means that in expectation, the randomized
    strategy above satisfies at least $\Pr[\mbox{Test passes}] - o(1)$.
    On the other hand, $\val(\inst)\leq s+o(1)$,
    so overall we get that
    $\Pr[\mbox{Test passes}]\leq  s+o(1)$ as required.
\end{enumerate}
Note that in the analysis above,
even though we started with
an integrality gap with perfect
completeness, the resulting dictatorship test has completeness $1-\eps$.
Tracing this back, the only place
in the proof that the fact that
$\eps>0$ was used is in~\eqref{eq:key_ld_only_matters}.
In words, that equality asserts
that for a $k$-ary distribution
$\mu$, if we want to measure the
correlation of $f_1(y_1),\ldots,f_k(y_k)$
where $(y_1,\ldots,y_k)\sim \mu^{R}$ and $f_i\colon \Sigma^R\to\mathbb{R}$, then this correlation only comes from the
low-degree parts of $f_1,\ldots,f_k$. This fact holds if the support of the distribution $\mu$ is $\Sigma^k$,
and more generally when the
distribution $\mu$ is connected
in the sense of~\cite{Mossel}.
Alas, this fact is false for general distributions.

\subsubsection{A Dictatorship Test with Perfect Completeness}
To design a dictatorship test with completeness $1$ via the above paradigm we are forced to set $\eps=0$, meaning that
the distributions arising in~\eqref{eq:key_ld_only_matters} are arbitrary. This means that
the equality in~\eqref{eq:key_ld_only_matters}
is no longer true, and it is
unclear how to proceed with
the analysis of the dictatorship
test.

This is the point where the works~\cite{BKMcsp1,BKMcsp2,BKMcsp3,BKMcsp4} enter the picture.
The goal in these works is
to understand what sort of
functions may contribute
to the left hand side of~\eqref{eq:key_ld_only_matters}
under only very mild assumptions
on the input distribution.
This goal was partially achieved
in~\cite{BKMcsp4}, and we now
explain how we use that result to
proceed with the analysis of
the dictatorship test. Throughout
the rest of the discussion, we fix
$k=3$ and $\eps=0$ in
the above dictatorship test.

\paragraph{Decoding correlations and fixing~\eqref{eq:key_ld_only_matters}:}
the result of~\cite{BKMcsp2}
asserts that if $\mu_c$ has
no Abelian embeddings, then~\eqref{eq:key_ld_only_matters} continues to hold, and so
the analysis above proceeds in
the same way. The result of~\cite{BKMcsp4} is a
strengthening of it, asserting that if
$\mu_c$ admits Abelian embeddings but no $(\mathbb{Z},+)$ embeddings, then functions $F_1,F_2,F_3$ for which the left hand side of~~\eqref{eq:key_ld_only_matters} is non-negligible must arise from characters of Abelian groups and low-degree functions. More
precisely, that result shows that
there is a finite Abelian group
$G$, a character $\chi\in\widehat{G}^{R}$,
a map $\sigma\colon \Sigma\to G$
and a low-degree function
$L\colon \Sigma^R\to\mathbb{R}$
with $2$-norm $1$ such that
\[
\Expect{(\V y_1, \V y_2, \V y_3)\sim \mu_c^{R}}{F_1(\V y_1)\chi(\sigma(\V y_1))L(\V y_1)}\geq \Omega(1).
\]
In this paper, we deduce a list-decoding version of this statement, roughly making the
following assertion.
We can find functions $G_1,G_2,G_3\colon \Sigma^R\to\mathbb{R}$
that are each a sparse combination
of functions of the form
$\chi\circ \sigma\cdot L$
for a character $\chi$,
$\sigma\colon \Sigma\to G$
and low-degree function $L$
such that a correct version
of
~\eqref{eq:key_ld_only_matters}
becomes
\begin{equation}\label{eq:key_ld_char_only_matters}
    \E \left[\prod_{i\in S} F_i(\V y_i)\right] \approx  \E \left[\prod_{i\in S} G_i(\V y_i)\right]  ,
    \end{equation}
for any $S\subseteq \{1,2,3\}$.
\footnote{We remark that strictly speaking, the description of $G_1,G_2,G_3$ is incorrect, and
they are only close in $L_2$-distance to functions of this form. Our argument requires that $G_1,G_2,G_3$
are $O(1)$-bounded, and this results in
significant technical complications.}

\paragraph{Generalization of the invariance principle and fixing~\eqref{eq:cor_of_inv}:}
with~\eqref{eq:key_ld_char_only_matters} in hand, the analysis
of the dictatorship test would
be done provided an algorithm
could generate samples that
``fool'' the functions $G_i$
on the right hand. The invariance
principle of~\cite{MOO} asserts
that low-degree functions with
small influences are ``fooled''
by Gaussian samples, which we
are able to produce using the
semi-definite programming relaxation. However, Gaussian samples fail to fool characters.
This is where the Gaussian
elimination part of the algorithm
enters the picture: we argue
that characters are fooled by
the solutions to the linear
system of equations in the
hybrid algorithm. Expressing
\[
G_1(\V y_1) =
\sum\limits_{j=1}^{W}\chi_j(\sigma(\V y_1)) L_j(\V y_1)
\]
leads us to consider the function
$\tilde{G}_1\colon \mathbb{R}^{\card{\Sigma} R}\times G^{R}\to\mathbb{R}$
defined by
\[
\tilde{G}_1(\V \sigma_1,\V g_1)
=\sum\limits_{j=1}^{W}\chi_j(\V \sigma_1) L_j(\V g_1)
\]
A priori, it is not clear what
is the relation between $G_1$
and $\tilde{G}_1$. We have
split the input $\V y_1$ of
$G_1$ into two independent samples
from Gaussian space and from the
set of solutions to the linear
system. We show, however, that
if all $L_j$ have small influences, then the functions $G_1$ and $\tilde{G}_1$ are close in a sense that suffices for our purposes (this requires
some features from the set of characters $\chi_j$ appearing
in $G_1$ that we are able to ensure). This is what we refer
to by the ``mixed invariance principle'', because the discrete
probability distribution of $\V y_1$ is replaced by a mix of
a Gaussian distribution and a
distribution arising from a solution for a system of linear
equations. We defer a more formal
statement of the mixed invariance principle to Section~\ref{section:mixed_inv_intro} below.

Algorithmically, we are able to generate inputs to $\tilde{G}_1,\tilde{G}_2,\tilde{G}_3$ by solving the hybrid algorithm, allowing us to replace~\eqref{eq:cor_of_inv} with
\begin{equation}\label{eq:cor_of_mixed_inv}
    \E_{(\V y_1,\V y_2, \V y_3)\sim \mu_c^{R}} \left[\prod_{i\in S} G_i(\V y_i)\right] \approx
    \E_{(\V \sigma,\V g)} \left[\prod_{i\in S} \tilde{G}_i(\V \sigma_i,\V g_i)\right]
\end{equation}
for all $S\subseteq \{1,2,3\}$,
and the rest of the arguments proceeds in the same way. In
the end we get an analog of~\eqref{eq:end_of_soundness_intro} wherein the inputs to the function $\tilde{f}$ are generated
by the rounding algorithm presented in the beginning of the
section. Hence, we showed that the
probability the test accepts is at most $\val(\inst)+o(1)\leq s+o(1)$.

\subsubsection{On the {\symm} Assumption}\label{sec:on_symm}
We finish this section by discussing the {\symm} assumption, and in particular where it is used in the
above analysis. To carry out
the above argument, our only requirement is that the distributions $\mu_c$ arising in
the solution of the SDP part of
the hybrid algorithm are distributions for which the result
of~\cite{BKMcsp4} applies.
Therein, an inverse theorem
holds for all $3$-ary distributions that
do not admit $(\mathbb{Z},+)$
embeddings, and it stands to
reason that the result should hold
for the more general class of pairwise connected distributions
(in the sense defined therein).

Therefore, if one is interested
in a particular family of predicates $\mathcal{P}$ which
is not {\symm}, then in principle
one could still solve the SDP
program, and if the resulting
local distributions $\mu_c$
do not admit any $(\mathbb{Z},+)$
embeddings, then our argument
still goes through and the
hybrid algorithm above works.

The conditions stated in {\symm}
are a relatively elegant way of ensuring that
the local distributions $\mu_c$
have no $(\mathbb{Z},+)$ embeddings, and hence Theorem~\ref{thm:main} is stated in this way. More precisely, we argue that if we have an SDP solution with value $1$ for a collection $\mathcal{P}$ that is {\symm}, then we can modify it to get an SDP solution with value $1$ in which the local distributions $\mu_c$ have no $(\mathbb{Z},+)$ embeddings.

 \subsection{The Mixed Invariance Principle}
 \label{section:mixed_inv_intro}

In this section we discuss the mixed invariance principle.
Let $\mu$ be a distribution over
$\Sigma^3$ that has no $(\mathbb{Z},+)$ embeddings, and let
$f_1\colon (\Sigma^n,\mu_1^{\otimes n})\to\mathbb{C}$,
$f_2\colon (\Sigma^n,\mu_2^{\otimes n})\to\mathbb{C}$,
$f_3\colon (\Sigma^n,\mu_3^{\otimes n})\to\mathbb{C}$
be $1$-bounded functions.
The goal of the mixed invariance
principle is to study expectations
of the form
\begin{equation}\label{eq:intro_mixedinv}
    \Expect{(x,y,z)\sim \mu^{\otimes n}}{f_1(x)f_2(y)f_3(z)}
\end{equation}
and relate them to expectations
of related functions over different domains. The above expression should
be compared to the left hand
side in~\eqref{eq:key_ld_char_only_matters}. As explained therein, in
this paper we show that we can find functions
$\tilde{f}_1,\tilde{f}_2,\tilde{f}_3$ that are $1$-bounded, and
additionally are close in $L_2$-distance to sparse sums of functions that are product of characters over some finite Abelian group $G$ and low-degree functions, such that
\begin{equation}
    \card{\Expect{(x,y,z)\sim \mu^{\otimes n}}{f_1(x)f_2(y)f_3(z)}
    -
    \Expect{(x,y,z)\sim \mu^{\otimes n}}{\tilde{f}_1(x)\tilde{f}_2(y)
    \tilde{f}_3(z)}
    }=o(1).
\end{equation}
For the sake of simplicity assume that $\tilde{f}_1,\tilde{f}_2,\tilde{f}_3$ are themselves this sparse combinations (as opposed to just close to them; see Section~\ref{section:techniques} for a more formal discussion), and write
\begin{equation}\label{eq:decoupled_intro0}
\tilde{f}_1(x) = \sum\limits_{P\in\mathcal{P}_1}P(\sigma(x)) L_P(x)
\end{equation}
and similarly for $\tilde{f}_2$
and $\tilde{f}_3$.
Here, $\mathcal{P}_1$ is a set
of characters and $\sigma\colon \Sigma\to G$ is some map.\footnote{We remark that this decomposition is not unique, and often, for our applications, we work with a decomposition with certain extra properties.} Note that
in the case that $\card{\mathcal{P}_1} = 1$ and the
only element in $\mathcal{P}_1$
is the constant $1$ function,
the function $\tilde{f}_{1}$ is precisely
the low-degree part of $f_1$.
Towards the statement of the
invariance principle, we decouple
the inputs to the characters-part of $\tilde{f}_1$ and the low-degree parts of $\tilde{f}_1$
and define
\begin{equation}\label{eq:decoupled_intro}
	\tilde{f}^{1}_{{\sf decoupled}}(x,x') = \sum\limits_{P\in\mathcal{P}_1} P(\sigma(x)) \cdot L_P(x'),
\end{equation}
and we similarly define
$\tilde{f}^{2}_{{\sf decoupled}}$ and $\tilde{f}^{3}_{{\sf decoupled}}$.

\begin{definition}
		We say that a function $\tilde{f}_{{\sf decoupled}}$
		of the~\eqref{eq:decoupled_intro} has $\tau$-small
		shifted low-degree influences if
		for every $j\in [n]$ and every $P\in \mathcal{P}_1$ it holds that the $j^{th}$ influence of $L_P$ is at most $\tau$.
	\end{definition}

 We are now ready to state our mixed invariance principle.

 \begin{theorem}\label{thm:plain_mixed_inv} (Informal)
     Suppose $\mu$ is a distribution over $\Sigma^3$ that has no $\mathbb{Z}$-embedding. There exists an Abelian group $G$ and embeddings $\sigma_i: \Sigma\rightarrow G$ such that for $1$-bounded functions $f_1, f_2, f_3:\Sigma^n \rightarrow \mathbb{C}$, if the corresponding decoupled functions $\tilde{f}^{i}_{{\sf decoupled}}(x,x')$ have $\tau$-small shifted low-degree influences, then by letting $\tilde{F}_i$ be $\tilde{f}^{i}_{{\sf decoupled}}$ except we replace $L_P$ with their corresponding multi-linear expansions, we have
     \begin{equation}
			\card{
				\Expect{(x,y,z)\sim \mu^{\otimes n}}{f_1(x)f_2(y)f_3(z)}
				-
				\Expect{\substack{(x,y,z)\sim \mu^{\otimes n}, \\(g_x,g_y,g_z)\sim \mathcal{G}^{\otimes n}}}{\tilde{F}_1(x,g_x)\tilde{F}_2(y,g_y)\tilde{F}_3(z,g_z)}
			}\leq \xi(\tau),
		\end{equation}
  where $\xi(\tau)\rightarrow 0$ as $\tau\rightarrow 0$. Here $\mathcal{G}$ is a distribution over Gaussians with the same pairwise correlations as the distribution $\mu$.
 \end{theorem}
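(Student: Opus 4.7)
The plan is to prove the mixed invariance bound by a Lindeberg-style hybrid argument, replacing the coordinates of the $L$-argument one at a time, moving from the discrete input (shared with the character argument) to an independent Gaussian coordinate. First I would fix orthonormal bases $\{\chi_{i,k}\}_{k\geq 0}$ of $L^2(\Sigma,\mu_i)$ with $\chi_{i,0}\equiv 1$ and a jointly Gaussian system $\{G_{i,k}\}$ on $\mathcal{G}$ whose pairwise cross-correlations match those of $\mu$. In this basis, each low-degree polynomial has a multilinear expansion $L_P(x)=\sum_{\vec a} c^P_{\vec a}\prod_j \chi_{i,a_j}(x_j)$, and its multilinear extension is $M_P(g)=\sum_{\vec a} c^P_{\vec a}\prod_j G_{i,a_j}(g_j)$, so that $\tilde F_i(x,g)$ differs from $f_i(x)$ precisely in where the $\chi$-factors read their inputs.

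I would then introduce the hybrid expectations $\mathcal{H}_k$ for $k=0,\ldots,n$: inside every $L_{P_i}$ the coordinates $j\leq k$ are read from $g_j$ and the coordinates $j>k$ are read from $x_j$, while every character factor $P_i(\sigma(x))$ always uses $x$. Then $\mathcal{H}_0$ is the left-hand side and $\mathcal{H}_n$ is the right-hand side of the theorem, so by the triangle inequality it suffices to bound each $|\mathcal{H}_k-\mathcal{H}_{k-1}|$. To handle one step, I would isolate the $k$-th coordinate by writing $L_{P_i}(x)=\sum_{a}\chi_{i,a}(x_k)\,\ell^{(k)}_{P_i,a}(x_{-k})$; the resulting step difference is a weighted sum over frequency triples $\vec a=(a_1,a_2,a_3)$ of products $\prod_i \ell^{(k)}_{P_i,a_i}$ paired with the single-coordinate discrepancy
\begin{equation*}
    \Delta^{(k)}(\vec a)\;=\; \mathbb{E}_{\mu}\!\Big[\prod_i \phi_i^{(k)}\chi_{i,a_i}\Big]\;-\;\mathbb{E}_{\mu}\!\Big[\prod_i \phi_i^{(k)}\Big]\cdot \mathbb{E}_{\mathcal{G}}\!\Big[\prod_i G_{i,a_i}\Big],
\end{equation*}
where $\phi_i^{(k)}$ denotes the restriction of $P_i\circ \sigma$ to the $k$-th coordinate. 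The tuple $\vec a=\vec 0$ contributes zero, and for $\vec a\neq \vec 0$ at least one $a_i$ is positive, so after a preliminary degree truncation via a noise operator, Cauchy-Schwarz bounds the step contribution by $\sqrt{I_k(L_{P_i})}$ times a controlled factor. Summing over $k$ via $\sum_k I_k(L_{P_i})\leq d\|L_{P_i}\|_2^2$ and over the (finite) number of characters in each decoupled decomposition gives the final estimate $\xi(\tau)\to 0$ as $\tau\to 0$, independent of $n$.

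The main obstacle is to show that the factor multiplying $\sqrt{I_k(L_{P_i})}$ in the single-step bound is genuinely bounded and summable over the frequency triples $\vec a$. In the classical MOO setting~\cite{MOO,Rag08} without character twists the analogue of $\Delta^{(k)}(\vec a)$ vanishes identically thanks to the matching of pairwise correlations and the vanishing of odd-order Gaussian moments; here the twist by $\phi_i^{(k)}$ breaks this clean cancellation. This is precisely where the hypothesis that $\mu$ admits no $\mathbb{Z}$-embedding is decisive: combined with the inverse theorem of~\cite{BKMcsp4}, it forces the Abelian group $G$ and the embeddings $\sigma_i$ to be finite, keeps the character set $\mathcal{P}_i$ in the decoupled representation finite, and restricts the ``resonant'' frequency configurations on which $\Delta^{(k)}(\vec a)$ is non-negligible to a family controlled by the finite Fourier spectrum of the characters. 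A secondary difficulty is that the decoupled representation produced by~\cite{BKMcsp4} only yields $\tilde f_i$ that is $L_2$-close to (rather than equal to) the desired sparse form of \eqref{eq:decoupled_intro0}; preserving $O(1)$-boundedness of $\tilde F_i$ throughout the hybrid without inflating the influences or destroying the $L_2$-approximation to $f_i$ will require a careful truncation and renormalization step, which I expect to contribute the dominant bookkeeping burden in the full proof.
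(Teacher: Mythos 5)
Your Lindeberg-style hybrid is a genuinely different route from the one taken in the paper, but as written it has a real gap at the heart of the single-step estimate. Isolate the $k$-th step and split the twisted character as $P_i(\sigma(x))=\phi_i^{(k)}(x_k)\cdot P_i^{(-k)}(\sigma(x_{-k}))$, then condition on everything except $(x_k,g_k)$. The per-coordinate discrepancy you wrote down,
\begin{equation*}
\Delta^{(k)}(\vec a)\;=\;\E_\mu\Bigl[\prod_i\phi_i^{(k)}\Bigr]\cdot\E_{\mathcal{G}}\Bigl[\prod_i G_{i,a_i}\Bigr]\;-\;\E_\mu\Bigl[\prod_i\phi_i^{(k)}\chi_{i,a_i}\Bigr],
\end{equation*}
is indeed $0$ at $\vec a=\vec 0$, but for $|\vec a|_0=1$ (say $a_1\neq 0$, $a_2=a_3=0$) the first term vanishes since $\E_{\mathcal{G}}[G_{1,a_1}]=0$, while the second term $\E_\mu[\prod_i\phi_i^{(k)}\cdot\chi_{1,a_1}]$ is simply the $\chi_{1,a_1}$-Fourier coefficient of $\phi_1^{(k)}(x_k)\cdot\E_{y_k,z_k\mid x_k}[\phi_2^{(k)}\phi_3^{(k)}]$ and is $\Theta(1)$ in general. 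The same is true for $|\vec a|_0=2$, where matching of second moments is broken by the twist. Your Cauchy--Schwarz bound on those terms is $O(\sqrt{I_k(L_{P_1})})$ (the higher $\ell^{(k)}_{P_2,0},\ell^{(k)}_{P_3,0}$ factors contribute $O(1)$), and $\sum_k\sqrt{I_k(L)}$ can be as large as $\sqrt{nd}$: there is no saving of the form $\xi(\tau)$. In the classical MOO/Raghavendra setting this is exactly the step where matching of first and second moments kills the $|\vec a|_0\leq 2$ terms, so only $|\vec a|_0\geq 3$ survives and the influence product gives summability; the twist $\phi_i^{(k)}$ destroys that cancellation and your ``no $\mathbb{Z}$-embedding forces a finite group'' remark does not restore it, because finiteness of $G$ says nothing about the size of $\E_\mu[\prod_i\phi_i^{(k)}\chi_{1,a_1}]$ for a single coordinate.

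The paper avoids this entirely by \emph{not} running an invariance principle on the original shared input. Instead it first proves (Lemma~\ref{lem:coupling_decoupled}, using that the full product-function collection $\mathcal{D}$ has rank $\gg 1/\kappa$) that the joint law of $(\mathrm{T}_{\mathcal{P},0}X,\mathrm{T}_{1-\kappa}X)$ is exponentially close in total variation to an independent product $\nu^n\times\nu^n$. This is a global statement that cannot be recovered one coordinate at a time, precisely because each $P\in\spn(\mathcal{P})$ depends nontrivially on many coordinates when the rank is high. Once the character input and the low-degree input are decoupled in this way, the remaining task is a vanilla application of Theorem~\ref{thm:invariance_principle} to the $x'$-argument with $x$ frozen, where the $\phi$'s are constants and ordinary moment matching does the work. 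If you want to rescue a coordinate-by-coordinate scheme, you would need to first carry out the decoupling and only then replace the freed-up discrete coordinate by a Gaussian; but the decoupling step itself is not a local replacement, so the two stages cannot both be phrased as single-coordinate hybrids, and the high-rank hypothesis (which you never invoke) is the ingredient that makes the decoupling possible. Your closing remark about boundedness and renormalization is on the mark --- the paper handles this via ${\sf trunc}$ and the Lipschitz estimate in Fact~\ref{fact:trivial_lipshitz_pf} --- but that is a secondary issue compared to the missing decoupling step.
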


\begin{remark}\label{remark:reduction}
   In the above theorem, we use a non-standard notion of the functions that are `far from the dictator functions.' At this point, we do not know how to use this notion of dictatorship test in the actual (conditional) NP-hardness result (starting with the Rich $2$-to-$1$ Games Conjecture~\cite{BravermanKM21}). We believe that new techniques need to be designed for this goal, and we leave it as an open problem for future research. A related interesting problem, which may be easier, is deriving integrality gap instances from our dictatorship test. This problem is motivated by prior works, such as~\cite{Tulsiani,KV}, who were able to transform reductions (sometimes conditional on the Unique-Games Conjecture) to unconditional integrality gap for certain hierarchies of algorithms.
\end{remark}

\subsection{Further Remarks and Other Related Work}
\label{section:rel_work}
As discussed above, we view
the condition that the distribution $\mu$ has no
$(\mathbb{Z},+)$ embedding
as being of technical nature,
and expect the above argument
to have an analog as long as $\mu$
is pairwise connected. Here,
a $3$-ary distribution over
triples $(x,y,z)\in \Sigma^3$
is pairwise connected if the bipartite graph $G_{x,y} = ((\Sigma, \Sigma), E)$
where $E = \sett{(x,y)}{\exists z, (x,y,z)\in {\sf supp}(\mu)}$,
is connected, and similarly the
analogously defined
bipartite graphs $G_{x,z}$ and $G_{y,z}$
are also connected.

We expect that going beyond
pairwise connected distribution
would be a very significant challenge, in several ways. For predicates of arity $k=2$ it is not know how to handle SDP solutions that have disconnected local distributions. It is entirely possible that the best rounding algorithm needs to use richer rounding schemes.
For $k=3$, the class of distributions
that are not pairwise connected
is very rich, and a general inverse theorem for this entire class seems infeasible. It is likely that addressing it requires inverse theorems and accompanying analytical machinery for a special sub-class of distributions (such as pairwise connected distributions), as well as additional algorithmic machinery to handle disconnectedness. This class of distributions includes the density Hales-Jewett theorem for combinatorial lines of length $3$~\cite{DHJ1,DHJ2,DHJ3}, for which follow-up works have established ``reasonable bounds''~\cite{BKMcsp6,BKMcsp7,BKMcspDHJ}. For arity $k>3$,
the class of non pairwise connected distributions
captures difficult problems such as
multi-dimensional Szemer{\'e}di
theorems and their extensions~\cite{multidimensional}. While in some cases combinatorial proofs are known, in other cases, such as in the polynomial-version of Szemeredi's theorem, only ergodic theoretic proofs exist in the literature.

\paragraph{On Search vs Decision Algorithm.}
Our main result, Theorem~\ref{thm:main}, gives a polynomial time algorithm that distinguishes satisfiable instances of Max-$\mathcal{P}$-CSP from instances with value at most $\alpha_{\mathcal{P}}$. The approximation guarantee is matched with the soundness of the corresponding dictatorship test. At this point, we do not know how to convert this decision algorithm into a search algorithm. Namely, an algorithm that given a satisfiable instance as an input, {\em finds} an assignment with value at least $\alpha_{\mathcal{P}}$ in polynomial time. We believe that modifications of the techniques from Raghavendra-Steurer~\cite{RaghavendraS09} and Khot, Tulsiani, and Worah~\cite{KTW14} could provide the search algorithm, and we leave this as an open problem for the future.

\section{Techniques}
 \label{section:techniques}
	Is this section we outline the proof of Theorem~\ref{thm:plain_mixed_inv}. The formal proof
    appears in Sections~\ref{sec:reg_lemmas} and~\ref{sec:mixed_space}.

    \subsection{List Decoding}
    Consider an expectation
    of the form~\eqref{eq:intro_mixedinv} and suppose it is non-negligible in absolute value. We already
    know, using results of Mossel~\cite{Mossel}, that if the distribution $\mu$
    is connected, then each one of the functions $f_{i}$
    must be correlated with
    a low-degree function. But
    what happens in the case
    that $\mu$ is not connected? This question was asked in~\cite{BKMcsp1,BKMcsp2,BKMcsp4}, wherein under
    the condition that $\mu$ does not admit $(\mathbb{Z},+)$
    embedding, some conclusion regarding the structure of the functions $f_i$ was made. More precisely, the main result of~\cite{BKMcsp4}
    asserts that there is a
    constant size Abelian group $G_{{\sf master}}$ and
    embeddings $\sigma,\gamma,\phi\colon \Sigma\to G_{{\sf master}}$
    of $\mu$ into $G_{{\sf master}}$,
    such that if $f_1,f_2,f_3$
    are $O(1)$-bounded functions satisfying
    \[
    \card{\Expect{(x,y,z)\sim \mu^{\otimes n}}{f_1(x)f_2(y)f_3(z)}}\geq \eps,
    \]
    then function $f_1$ (and similarly $f_2,f_3$)
    must be correlated with
    a function of the form
    $\chi\circ \sigma \cdot L$
    where $\chi\in \widehat{G}_{{\sf master}}^n$
    and $L$ is a function of degree $O_{\eps}(1)$ and
    bounded $2$-norm. In other
    words, $f_1$ is correlated
    with a single function
    that appears on the right
    hand side of~\eqref{eq:decoupled_intro0}. Optimistically, one
    may hope that $\chi\circ \sigma \cdot L$
    ``explains'' all of the
    magnitude of $\Expect{(x,y,z)\sim \mu^{\otimes n}}{f_1(x)f_2(y)f_3(z)}$ in
    the sense that $f_1$ may be
    replaced by it without changing the value of the
    expectation by much. Unfortunately, this is incorrect; for once, $f_1$
    could be correlated with
    numerous functions of this
    type that are very far from each other. Besides, even if this was true, there is an additional issue: we do not wish to replace the function $f_1$ by a function that is not bounded (at least not at this point of the argument).

    A common strategy for addressing the first issue
    is by appealing to the notion of list decoding. As a first attempt, one may hope to collect all of the functions
    of the form $\chi\circ \sigma \cdot L$ that are
    correlated with $f_1$, and
    then replace $f_1$ with a
    weighted sum of them according to their correlation. This would have worked had that collection of functions been pairwise orthogonal, which is not true in our case. As a second attempt, one may find a single function $\chi\circ \sigma \cdot L$ that is correlated with $f_1$, and
    then iterate the argument
    with $f_1' = f_1 - \alpha \cdot \chi\circ \sigma \cdot L$
    where $\alpha = \inner{f_1}{\chi\circ \sigma \cdot L}$. This attempt also fails, this time due to the fact that $f_1'$ may not be $O(1)$-bounded,
    and hence the result of~\cite{BKMcsp4} no
    longer applies.

    To resolve this issue,
    we take inspiration from
    the invariance principle of~\cite{MOO,Mossel}
    and from the inverse Gowers' norms literature~\cite{green2007montreal}. In the former, instead of harsh truncations, one performs ``soft truncations'' by applying a noise operator
    that gives a bounded function that is close in $L_2$ distance to the low-degree part of the function. In the latter,
    one defines an averaging operator with respect to
    a sigma-algebra induced
    by the collection of functions that were already found to be correlated with
    $f_1$. We combine these two solutions, and the bulk of
    our effort is devoted to showing it works.

    \subsection{The Noise Operator, the Regularity Lemma, and the Approximating Formula}
    Let $\nu$ be the marginal distribution of $\mu$ on
    the first coordinate,
    and
    consider the function $f_1\colon(\Sigma^n,\nu^{\otimes n})\to \mathbb{C}$ above.
    Let $\mathcal{P} = \{P_1,\ldots,P_r\colon \Sigma^n\to H\}$ be a collection of functions into some discrete domain $H$, and let $\eps>0$.
    We define the function
    $\mathrm{T}_{\mathcal{P}, 1-\eps} f_1\colon \Sigma^n\to \mathbb{C}$ in the following way. First, for each $x\in \Sigma^n$, we define the distribution $\mathrm{T}_{\mathcal{P}, 1-\eps} x$, wherein a sample $y\sim \mathrm{T}_{\mathcal{P}, 1-\eps} x$
    is generated as follows: sample $I\subseteq [n]$ by including each element in it independently with probability $\eps$, then sample $y\sim \nu$ conditioned on $y_{\bar{I}} = x_{\bar{I}}$ and $P_i(y) = P_i(x)$ for all $i=1,\ldots,r$ (see Definition~\ref{def:noise_op_P}). The function $\mathrm{T}_{\mathcal{P}, 1-\eps} f_1$ is then defined by
    \[
    \mathrm{T}_{\mathcal{P}, 1-\eps} f_1(x)
    =\Expect{y\sim \mathrm{T}_{\mathcal{P}, 1-\eps}x}{f_1(y)}.
    \]
    To get some intuition to the
    operator $\mathrm{T}_{\mathcal{P}, 1-\eps}$ we recommend thinking of $\eps$ as very small. We begin by noting that the distribution $\nu$
    is a stationary distribution of $\mathrm{T}_{\mathcal{P}, 1-\eps}$, and that functions
    $f$ for which $f(x) = g(P_1(x),\ldots,P_r(x))$
    for some $g\colon H^r\to\mathbb{R}$ are eigenfunctions with eigenvalue $1$. We also
    note (though this is a bit
    more tricky to prove) that
    low-degree functions are
    nearly eigenfunctions of
    $\mathrm{T}_{\mathcal{P}, 1-\eps}$ of eigenvalue $1$,
    and more precisely that
    $\norm{(I-\mathrm{T}_{\mathcal{P}, 1-\eps})L}_2 = o(\norm{L}_2)$
    provided that $\eps$ is small compared to the degree of $L$. In other words, when constructed
    for an appropriate collection $\mathcal{P}$,
    the operator $\mathrm{T}_{\mathcal{P}, 1-\eps}$ almost does not affect
    functions of the form of
    the right-hand side of~\eqref{eq:decoupled_intro0}, and therefore it may be useful to detect such structures. We show that this is indeed the case.

    \subsubsection{The Regularity Lemma}
    With the new noise operator
    $\mathrm{T}_{\mathcal{P},1-\eps}$ in hand, we may attempt to execute the idea of the list decoding argument in a different way. More specifically, starting with $f=f_1$, $\mathcal{P} = \emptyset$ and $\tilde{f}_1 = 0$, once we find a function of the form $\chi\circ \sigma \cdot L$
    that $f$ is correlated with, we insert $P = \chi\circ \sigma$ into $\mathcal{P}$, take $\tilde{f}_1 = \mathrm{T}_{\mathcal{P},1-\eps} f$ and proceed the argument on $f = f_1 - \tilde{f}_1$.
    In the next step, we will
    find a new $\chi'\circ \sigma \cdot L'$ correlated with the updated $f$, insert $P' = \chi'\circ \sigma$ to $\mathcal{P}$, update $\tilde{f}_1 = \mathrm{T}_{\mathcal{P}, 1-\eps} f$ and proceed the argument on $f = f_1 - \tilde{f}_1$. When the argument
    terminates, we will have that
    \[
    \card{\Expect{(x,y,z)\sim \mu^{\otimes n}}{(f_1-\tilde{f}_1)(x)f_2(y)f_3(z)}} = o(1),
    \]
    meaning we can replace $f_1$
    with $\tilde{f}_1$.
    Our proof goes along these
    lines, but some care
    is required. The key issue is that if we inspect the previous
  inductive process closely,
    instead of updating $\tilde{f}_1$ at each step,
    we should have
    subtracted a new noised function. For instance, in
    the second step we would have subtracted $\mathrm{T}_{\{P'\}, 1-\eps}(f- \mathrm{T}_{\{P\}, 1-\eps} f)$ from $f- \mathrm{T}_{\{P\}, 1-\eps} f$ to get the function
    \[
    \tilde{f}_1 = f
    - \mathrm{T}_{\{P\}, 1-\eps} f
    -
    \mathrm{T}_{\{P'\}, 1-\eps}(f- \mathrm{T}_{\{P\}, 1-\eps} f),
    \]
    and at each iteration, the function we inspect gets even more complicated.
    After a large
    number of steps, the function we end up with is no longer $O(1)$-bounded, in which case the iterative process gets stuck. Fortunately, we show
    that instead of doing this,
    one may update the approximating function $\tilde{f}_1$ as explained
    above. We remark that in the formal argument we are
    required to modify the
    noise rate $\eps$ at each
    step, which contributes to
    several technical challenges. Once this process is done appropriately, we find a
    function $\tilde{f}_1 = \mathrm{T}_{\mathcal{P},1-\eps} f$ where $\card{\mathcal{P}}$ is constant and $\eps$ is bounded away from $0$, such that
    \[
    \card{
    \Expect{(x,y,z)\sim \mu^{\otimes n}}{f_1(x)f_2(y)f_3(z)}
    -
    \Expect{(x,y,z)\sim \mu^{\otimes n}}{\tilde{f}_1(x)f_2(y)f_3(z)}} = o(1).
    \]
    Similarly, we show that one may replace $f_2$ and $f_3$
    with noisy versions of them.

    \subsubsection{The Approximating Formula}
    The next step in
    the proof is to show that
    a function of the form
    $\mathrm{T}_{\mathcal{P},1-\eps} f_1$ may be approximated
     by a function of the form of the right-hand side of~\eqref{eq:decoupled_intro0} in $L_2$ distance.\footnote{We remark
    that in fact, such a formula is necessary
    for us even to make
    the previous regularity
    lemma go through.} We
    establish such a formula by
    fairly direct calculations, starting
    with
    \[
    \mathrm{T}_{\mathcal{P},1-\eps}f(x)
    =
    \Expect{I\subseteq_{\eps} [n]}
    {\frac{1}{A_I(x)}\cExpect{y\sim \nu}
    {y_I = x_I}{f(y)1_{P_i(y) = P_{i}(x)~\forall i}}},
    \]
    where
    $A_I(x) = \cProb{y\sim \nu}{y_{\bar{I}} = x_{\bar{I}}}{P_i(y) = P_i(x)~\forall i}$.
    Morally, we show that on most inputs, both $1/A_I(x)$
    and the expectation  above are close to
    functions of the form
    of the right-hand side
    of~\eqref{eq:decoupled_intro0}, except that the function
    $L_P$ is not a low-degree function but
    instead only depends on the coordinates from $\bar{I}$. The expectation over $I$ then turns these $L_P$ into low-degree functions (see Lemma~\ref{lem:avg_of_juntas_low_deg}).

    \subsection{Decoupling the Abelian Part and the Low Degree Part}
    With the previous
    steps done, we managed to replace
    each one of the functions
    $f_1,f_2,f_3$ in the
    expectation~\eqref{eq:intro_mixedinv} with the functions $\mathrm{T}_{\mathcal{P}, 1-\eps} f_1$,
    $\mathrm{T}_{\mathcal{Q}, 1-\eps} f_2$
    and
    $\mathrm{T}_{\mathcal{R}, 1-\eps} f_3$ respectively. Additionally,
    we have approximate
    formula for each one
    of these functions
    that seem to have the desired
    form. Namely,
    ignoring the
    aspect of this approximation for a
    moment, we have that
    \[
    \mathrm{T}_{\mathcal{P},1-\eps} f_1(x)=
    \sum\limits_{P\in \spn(\mathcal{P})} P(x) L_P(x)
    \]
    where $\spn(\mathcal{P})$ consists
    of all possible products
    of functions from $\mathcal{P}$, and
    each one of the functions $L_P$ is
    a low-degree function
    with bounded $2$-norm.
    We would now like to
    decouple the input $x$
    into two copies, one of
    which will be fed to
    the functions $P$ and
    the other one will be
    fed to the low-degree
    functions $L_P$ (indeed,
    this is the only difference between the formulas in~\eqref{eq:decoupled_intro0} and in~\eqref{eq:decoupled_intro}). It turns out
    that provided that the
    collection $\mathcal{P}$
    has a high rank, there is
    a coupling $X, (X',X'')$
    between $\nu^{\otimes n}$ and $\nu^{\otimes n}\times \nu^{\otimes n}$ such that
    \[
    \sum\limits_{P\in \spn(\mathcal{P})} P(X) L_P(X)
    =
    \sum\limits_{P\in \spn(\mathcal{P})} P(X') L_P(X'')
    +o(1)
    \]
    in $L_2$ distance.
    By high-rank,
    we mean that besides
    the all $1$
    function,
    every
    function in $\spn(\mathcal{P})$
    has a high Fourier analytic degree; in
    fact, this degree should be much higher than the degree of any of the
    functions $L_P$. We do not elaborate on this point further but remark that we show how to
    achieve this property via an appropriate clean-up process.

    The intuition for the coupling
    comes from the fact that if we look at the left-hand side above
    and expose $1-1/D$ of the
    coordinates of $X$, where $D$ is much higher than the degrees of the functions $L_P$ but much smaller than the rank of $\mathcal{P}$, then the values of
    the functions $L_P$
    are almost fully determined, whereas the
    values of the functions $P(X)$ still have the
    same distribution as
    the original one. This
    suggests that the behavior
    of the values of the functions $P(X)$ and the functions $L_P(X)$
    is almost independent of each other.

    \subsection{Deducing the Invariance Principle}
    In conclusion,
    we have shown that
    the values of the noised function $\mathrm{T}_{\mathcal{P},1-\eps} f_1$
    can be coupled with
    the values of
    decoupled function from~\eqref{eq:decoupled_intro} in a way
    that is close in $L_2$
    distance. Intuitively, that ought to imply that
    \[
    \card{
    \Expect{(x,y,z)\sim \mu^{\otimes n}}{f_1(x)f_2(y)f_3(z)}
    -
    \hspace{-3ex}\Expect{\substack{(x,y,z)\sim \mu^{\otimes n}\\ (x',y',z')\sim \mu^{\otimes n}}}{\tilde{f}_{1,{\sf decoupled}}(x,x')\tilde{f}_{2,{\sf decoupled}}(y,y')\tilde{f}_{3,{\sf decoupled}}(z,z')}}
    \]
    is $o(1)$.
    While this is morally the case, there are some issues one has to address. The first issue is of purely technical nature: the absolute value of the decoupled functions is not necessarily upper bounded by an absolute constant, so one should replace these values by some sort of truncation. The second issue is that above, we actually need a multi-variate version of our mixed invariance principle. While we show how to deduce such a result (in a relatively straightforward manner) from the univariate version of the mixed invariance, it requires the collections of product functions we have produced for each one of $f_1$, $f_2$ and $f_3$ to be aligned in a certain manner, and we defer the reader to Section~\ref{sec:mixed_invariance} for more details.

    With the above result, one
    may use the standard
    invariance principle
    to replace $x',y',z'$
    with Gaussian random
    variables. As for
    the $x,y,z$, one notes
    that the values of the
    functions only depend on
    $\sigma(x), \gamma(y)$
    and $\phi(z)$, and so
    they can be replaced with elements from $G_{{\sf master}}$ that add
    up to $0$. This finishes a proof sketch of Theorem~\ref{thm:plain_mixed_inv}.

	\section{Preliminaries}
	In this section, we collect a few basic notions and results that we need.

	\paragraph{Notations.}
   For a vector $x\in \Sigma^n$ and a subset $I\subseteq [n]$ of coordinates, we denote by
$x_I$ the vector in $\Sigma^{I}$ which results by dropping from $x$ all coordinates outside $I$. We denote by
$x_{\bar{I}}$ the vector in $\Sigma^{n-\card{I}}$ resulting from dropping from $x$ all coordinates from $I$.
For $i\in [n]$ we denote by $x_{-i}$ the vector in $\Sigma^{n-1}$ resulting from dropping the $i$th coordinate of $x$. For $I\subseteq[n]$, $a\in \Sigma^{I}$
and $b\in \Sigma^{n-\card{I}}$ we denote by $(x_I = a, x_{\bar{I}} = b)$ the point in $\Sigma^{n}$ whose $I$-coordinates
are filled according to $a$, and whose $\overline{I}$-coordinates are filled according to $b$. For two strings $x,y\in \Sigma^n$
we denote by $\Delta(x,y)$ the Hamming distance between $x$ and $y$, that is, the number of coordinates $i\in [n]$ such that $x_i\neq y_i$. We denote by ${\bm i}$ the complex root of $-1$, and by $\overline{a}$ the complex conjugate of the number $a\in\mathbb{C}$.
We denote by $I\subseteq_{\rho}[n]$ a
random subset of $[n]$ in which we include each $i\in[n]$ independently with probability $\rho$.

    We denote $A\lll B$ to refer to the fact that $A\leq C\cdot B$ for some absolute constant $C>0$,
and $A\ggg B$ to refer to the fact that $A\geq c \cdot B$ for some absolute constant $c>0$.
If this constant depends on some parameter, say $m$, the corresponding notation is $A\lll_m B$. We will also
use standard big-$O$ notations: we denote $A = O(B)$ if $A\lll B$, $A = \Omega(B)$ if $A\ggg B$; if
there is a dependency of the hidden constant on some auxiliary parameter, say $m$, we denote $A = O_m(B)$
and $A = \Omega_m(B)$.

We denote $0<A\ll B$ to refer
to the choice of $A$ and $B$ in
the way that $B$ is fixed, and
then $A$ is taken to be sufficiently small compared to $B$.

	\subsection{Discrete Fourier Analysis over Product Spaces}
	Let $(\Sigma, \nu)$ be a finite probability space, and consider the product space
	$(\Sigma^n, \nu^n)$. We will often consider the space $L_2(\Sigma^n,\nu^{n})$ of
	complex-valued functions $f\colon \Sigma^n\to\mathbb{C}$ equipped with the standard
	inner product
	\[
	\inner{f}{g}_{\nu} = \Expect{x\sim \nu^n}{f(x)\overline{g(x)}}.
	\]
	Often times, the underlying measure $\nu$ will be omitted from the notation, as it will be clear from context.
	The space $L_2(\Sigma^n,\nu^{n})$ admits several types of decompositions that we will use throughout this
	paper. The coarsest decomposition is the degree decomposition; the Efron-Stein decomposition is a refinement
	of the degree decomposition; finally, the Fourier decomposition is a refinement of the Efron-Stein decomposition.
	We next present each one of these decompositions.
	\subsubsection{The Degree Decomposition}
	We first define the notion of $d$-juntas and the space $V_{\leq d}(\Sigma^n)$.
	\begin{definition}
		For a subset $I\subseteq [n]$, a function $f\colon \Sigma^n\to\mathbb{C}$ is called a $I$-junta if there exists a function $f'\colon \Sigma^{I}\to\mathbb{C}$ such that $f(x) = f'(x_{I})$ for all $x\in\Sigma^n$. A function $f$ is called a $d$-junta
		for $d\in\mathbb{N}$ if it is an $I$-junta for $I$ of size at most $d$.
		We define the space $V_{\leq d}(\Sigma^n)$ to be the linear span of all $d$-juntas.
	\end{definition}
	The space $V_{\leq d}(\Sigma^n)$ is often referred to as the space of degree $d$ functions. Using our inner product,
	we may define the space of pure degree $d$ functions as follows.
	\begin{definition}
		Given a product space $(\Sigma^n,\nu^n)$, the space of pure degree $d$ functions is defined as
		\[
		V_{=d}(\Sigma^n,\nu^{n}) = V_{\leq d}(\Sigma^n)\cap V_{\leq d-1}(\Sigma^n)^{\perp}.
		\]
	\end{definition}
	It is easily seen that $L_2(\Sigma^n,\nu^{n})
	=V_{\leq n}(\Sigma^n)
	=\bigoplus_{d=0}^{n} V_{= d}(\Sigma^n,\nu^{n})$,
	and thus any function $f\colon \Sigma^n\to\mathbb{C}$ may be uniquely written as
	\[
	f(x) = \sum\limits_{d=0}^{n} f^{=d}(x),
	\qquad\qquad\text{where }f^{=d}\in V_{= d}(\Sigma^n,\nu^{n})~\forall d.
	\]
	This decomposition is called the degree decomposition of $f$, and the function $f^{=d}$ is
	referred to as the degree $d$ part of $f$.
	
	\subsubsection{The Efron-Stein Decomposition}
	The Efron-Stein decomposition is a refinement of the degree decomposition. For each $d\in\mathbb{N}$ and $S\subseteq [n]$
	of size $d$, one defines the space $V_{=S}(\Sigma^n,\nu^{n}) = V_{=d}(\Sigma^n,\nu^{n})\cap \set{\text{$S$-juntas}}$. It is a standard fact that $\bigoplus_{\card{S}=d} V_{=S} = V_{=d}$, and thus one may write $f^{=d}\in V_{=d}$ as
	\[
	f^{=d}(x) = \sum\limits_{\card{S} = d}{f^{=S}(x)}
	\qquad\qquad\text{where }f^{=d}\in V_{= S}(\Sigma^n,\nu^{n})~\forall S.
	\]
	Thus, one gets the decomposition of $f$ as
	\[
	f^{=d}(x) = \sum\limits_{S\subseteq[n]}{f^{=S}(x)},
	\]
	where $f^{=S}\in V_{=S}(\Sigma^n,\nu^{n})$ for all $S$. This decomposition
	is known as the Efron-Stein Decomposition.
        \begin{definition}
            The level $d$-weight of a function $f$ is defined as $W^{=d}[f] = \norm{f^{=d}}_2^2$, and by Parseval's equality we have that $W^{=d}[f] = \sum\limits_{\card{S}=d}\norm{f^{=S}}_2^2$.
            We also denote $W^{\geq d}[f] = \sum\limits_{i=d}^{n}W^{=i}[f]$.
        \end{definition}
       We will need the following lemma:
       \begin{lemma}\label{lem:avg_of_juntas_low_deg}
           Suppose that we have functions $f_I\colon \Sigma^n\to\mathbb{C}$, where each $f_I$ is
           an $\bar{I}$-junta, and define $f(x) = \Expect{I\subseteq_{\eps}[n]}{f_I(x)}$.
           \begin{enumerate}
               \item If each $f_I$ is $1$-bounded, then $f$ is $1$-bounded.
               \item If $\norm{f_I}_2\leq 1$ for each $I$, then for each $d\geq 1$ we have that $W^{\geq d}[f]\leq (1-\eps)^d$.
           \end{enumerate}
       \end{lemma}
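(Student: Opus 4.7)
The proof should be essentially a Fourier-analytic calculation once we observe how the $\bar I$-junta condition interacts with the Efron-Stein decomposition. Part (1) is immediate from the triangle inequality (or Jensen's): for any $x$, $|f(x)| \leq \mathbb{E}_I[|f_I(x)|] \leq 1$. All the content is in part (2).

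For part (2), the plan is to exploit the key fact that if $f_I$ is an $\bar I$-junta, then its Efron-Stein component $f_I^{=S}$ vanishes whenever $S \cap I \neq \emptyset$, since any $S$-pure function that is also an $\bar I$-junta must satisfy $S \subseteq \bar I$. By linearity of the Efron-Stein projection, this gives
\[
f^{=S}(x) = \Expect{I\subseteq_{\eps}[n]}{f_I^{=S}(x)\cdot \mathbf{1}[S \cap I = \emptyset]}.
\]
Now I would apply Minkowski's inequality in $L_2$ followed by Cauchy-Schwarz in the $I$-average, using the crucial observation that $\Pr_I[S \cap I = \emptyset] = (1-\eps)^{|S|}$ since each coordinate of $S$ lands outside $I$ independently with probability $1-\eps$. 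This yields
\[
\|f^{=S}\|_2 \;\leq\; \Expect{I}{\|f_I^{=S}\|_2\,\mathbf{1}[S\cap I =\emptyset]}
\;\leq\; (1-\eps)^{|S|/2}\sqrt{\Expect{I}{\|f_I^{=S}\|_2^2}},
\]
so $\|f^{=S}\|_2^2 \leq (1-\eps)^{|S|}\,\mathbb{E}_I[\|f_I^{=S}\|_2^2]$.

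Summing over all $S$ with $|S| \geq d$ and using Parseval's identity $\sum_S \|f_I^{=S}\|_2^2 = \|f_I\|_2^2 \leq 1$, I would bound
\[
W^{\geq d}[f] \;=\; \sum_{|S|\geq d}\|f^{=S}\|_2^2 \;\leq\; (1-\eps)^d \sum_{S}\Expect{I}{\|f_I^{=S}\|_2^2} \;=\; (1-\eps)^d\,\Expect{I}{\|f_I\|_2^2} \;\leq\; (1-\eps)^d,
\]
which is the desired conclusion. There is no real obstacle here — the only subtle step is noticing that the $\bar I$-junta property forces the indicator $\mathbf{1}[S \cap I = \emptyset]$ to appear for free inside the expectation defining $f^{=S}$, which is exactly what lets us trade between the Fourier degree and the noise factor $(1-\eps)$.
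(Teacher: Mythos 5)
Your proof is correct and follows essentially the same route as the paper: the key step in both is that the $\bar I$-junta property forces $f^{=S} = \Expect{I}{f_I^{=S}\mathbf{1}[S\subseteq\bar I]}$, after which an application of Cauchy--Schwarz (you insert a Minkowski step first, the paper applies Cauchy--Schwarz directly over $x$ and $I$, yielding the identical bound $\norm{f^{=S}}_2^2\leq(1-\eps)^{|S|}\Expect{I}{\norm{f_I^{=S}}_2^2}$) and Parseval finish the argument. No gaps.
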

       \begin{proof}
           The first item is clear by the triangle inequality.
           For the second item, note that for each $S\subseteq [n]$ we have that $f^{=S}(x) = \Expect{I\subseteq_{\eps}[n]}{f_I^{=S}(x)}$.
           As $f_I$ is an $\bar{I}$-junta, we have that
           $f_I^{=S}\equiv 0$ unless $S\subseteq \bar{I}$,
           and so $f^{=S}(x) = \Expect{I\subseteq_{\eps}[n]}{f_I^{=S}(x)1_{S\subseteq\bar{I}}}$. It follows from the Cauchy-Schwarz inequality that
           \begin{align*}
           \norm{f^{=S}}_2^2
           =\card{\Expect{x}{\Expect{I\subseteq_{\eps}[n]}{f_I^{=S}(x)1_{S\subseteq\bar{I}}}}}^2
           &\leq
           \Expect{x}{\Expect{I\subseteq_{\eps}[n]}
           {\card{f_I^{=S}(x)}^2}\Expect{I\subseteq_{\eps}[n]}{1_{S\subseteq\bar{I}}}}\\
           &=(1-\eps)^{\card{S}}\Expect{I\subseteq_{\eps}[n]}{\norm{f_I^{=S}}_2^2}.
           \end{align*}
           Thus,
           \[
           W^{\geq d}[f]
           =\sum\limits_{\card{S}\geq d}\norm{f^{=S}}_2^2
           \leq \sum\limits_{\card{S}\geq d}(1-\eps)^{\card{S}}\Expect{I\subseteq_{\eps}[n]}{\norm{f_I^{=S}}_2^2}
           \leq (1-\eps)^d
           \Expect{I\subseteq_{\eps}[n]}{\sum\limits_{\card{S}\geq d}\norm{f_I^{=S}}_2^2},
           \]
           which by Parseval is at most $(1-\eps)^{d}\Expect{I\subseteq_{\eps}[n]}{\norm{f_I}_2^2}\leq (1-\eps)^d$.
       \end{proof}
	\subsubsection{The Fourier Decomposition}
	The Fourier decomposition is the most refined and explicit decomposition among the decompositions discussed herein.
	One first looks at the base space, $L_2(\Sigma,\nu)$, and picks an orthonormal basis for it consisting of
	$m=\card{\Sigma}$ real-valued functions, $v_0,\ldots,v_{m-1}\colon \Sigma\to\mathbb{R}$. It is standard to take
	$v_0$ to be the all $1$ function, and we will do so here. With these notations, we may construct an
	orthonormal basis for $L_2(\Sigma^n,\nu^n)$ consisting of the functions $\{v_{\vec{\alpha}}\}_{\vec{\alpha}\in \{0,\ldots,m-1\}^{n}}$
	where
	\[
	v_{\vec{\alpha}}(x) = \prod\limits_{i=1}^{n} v_{\alpha_i}(x_i).
	\]
	Thus, any function $f\colon\Sigma^n\to\mathbb{C}$ admits a unique decomposition as $f(x) = \sum\limits_{\vec{\alpha}\in [m]^n}\widehat{f}(\vec{\alpha})v_{\vec{\alpha}}(x)$ where $\widehat{f}(\vec{\alpha}) = \inner{f}{v_{\vec{\alpha}}}$.
	We remark that it is easily seen that the space $V_{=d}$ is the span of all $v_{\vec{\alpha}}$ with $\card{{\sf supp}(\alpha)} = d$,
	and furthermore that $V_{=S}$ is the span of all $v_{\vec{\alpha}}$ with ${\sf supp}(\alpha) = S$.
	
	The Fourier definition presented herein can be somewhat arbitrary, as the functions $v_1,\ldots,v_{m-1}$ are not unique.
	Nevertheless, it will be useful in the presentation of the invariance principle~\cite{MOO} in Section~\ref{sec:invariance_principle}.
	
	\subsubsection{Hypercontractivity}
	\begin{definition}
		For $q\geq 2$ and $f\colon (\Sigma^n,\nu^{n})\to\mathbb{C}$, we define the $q$-norm of $f$ as $\norm{f}_q = \left(\Expect{x\sim\nu^n}{\card{f(x)}^q}\right)^{1/q}$.
	\end{definition}

    \begin{theorem}[{\cite[Chapter 10]{ODonnell}}]\label{thm:hypercontractivity}
		Suppose that $(\Sigma,\nu)$ is a probability space with $\card{\Sigma} = m$ such that $\nu(x)\geq \alpha$ for all
		$x\in\Sigma$, where $m\in\mathbb{N}$ and $\alpha>0$. For all $q\geq 2$ there exists $C = C(q,\alpha,m)>0$ such
		that if $f\colon (\Sigma^n,\nu^{n})\to\mathbb{C}$ is a degree $d$ function, then
		$\norm{f}_q\leq C^d\norm{f}_2$.
	\end{theorem}
	\subsubsection{The Noise Operator}
	We will need the standard noise operator on product spaces.
	\begin{definition}
		For a parameter $\rho\in [0,1]$ and an input $x\in\Sigma^n$, we define the distribution $\mathrm{T}_{\rho} x$
		over $\rho$-correlated inputs with $x$ in the following way. For each $i\in [n]$ independently, with probability
		$\rho$ pick $y_i = x_i$, and otherwise sample $y_i$ according to $\nu$.
	\end{definition}
	The process $\mathrm{T}_{\rho}$ is described as a Markov chain, and as is standard we may consider it as an averaging operator
	over functions. That is, we may consider it as a linear operator $\mathrm{T}_{\rho}\colon L_2(\Sigma^n,\nu^n)\to L_2(\Sigma^n,\nu^n)$
	defined as $\mathrm{T}_{\rho} f(x) = \Expect{y\sim \mathrm{T}_{\rho} x}{f(y)}$.
        We will also define an operator $\mathrm{T}_{I}$
        for each $I\subseteq[n]$, as follows:
        \begin{definition}
		 For a subset $I\subseteq [n]$, and for $x\in \Sigma^n$, we define the distribution of $\mathrm{T}_{I}x$ as: to sample $y\sim \mathrm{T}_I x$, set $y_i = x_i$ for $i\not\in I$, and $y_i\sim\nu$
         independently for each $i\in I$. Abusing notation, we will also think of $\mathrm{T}_I$ as an operator $\mathrm{T}_{I}\colon L_2(\Sigma^n,\nu^n)\to L_2(\Sigma^n,\nu^n)$ defined by
    $\mathrm{T}_{I}f(x) = \Expect{y\sim \mathrm{T}_{I}x}{f(y)}$.
	\end{definition}

	\subsubsection{Influences and Low-degree Influences}
	Next, we define the notions of influences and low-degree influences.
	\begin{definition}\label{def:influence}
		For a function $f\colon (\Sigma^n,\nu^n)\to\mathbb{C}$ and a coordinate $i\in [n]$, the influence of
		$i$ is defined as
		\[
		I_i[f] = \Expect{\substack{y\sim \nu^{n-1}\\a,b\sim \nu}}{\card{f(x_{-i} = y, x_i = a)-f(x_{-i} = y, x_i = b)}^2}.
		\]
	\end{definition}
	\begin{definition}
		For a function $f\colon (\Sigma^n,\nu^n)\to\mathbb{C}$, a parameter $d\in\mathbb{N}$ and a coordinate $i\in [n]$,
		the degree $d$ influence of $i$ is defined as $I_i^{\leq d}[f] = I_i[f^{\leq d}]$.
	\end{definition}
	\subsection{The Invariance Principle}\label{sec:invariance_principle}
	In this section, we present the invariance principle of~\cite{MOO}, and we begin with some set-up.
	Suppose that $\Sigma,\Gamma,\Phi$ are finite alphabets of sizes $m_1,m_2,m_3$ respectively, and
	$\mu$ is a probability measure over $\Sigma\times \Gamma\times \Phi$ in which the probability of
	each atom is at least $\alpha>0$. We set up Fourier bases for $(\Sigma,\mu_x)$,
	$(\Gamma,\mu_y)$ and $(\Phi,\mu_z)$ given by $v_0,\ldots,v_{m_1-1}$,
	$u_0,\ldots,u_{m_2-1}$ and $w_0,\ldots,w_{m_3-1}$. Consider the ensemble of random
	variables
	\[
	\mathcal{X} = \{v_1(x),\ldots,v_{m_1-1}(x),u_1(y),\ldots,u_{m_2-1}(y),w_1(z),\ldots,w_{m_3-1}(z)\}
	\]
	where $(x,y,z)\sim \mu$. We define the covariance matrix
	$P\in\mathbb{R}^{(m_1+m_2+m_3)\times(m_1+m_2+m_3)}$ whose rows and columns correspond
	to the functions in $\mathcal{X}$.
	For example, for $v_i$ and $u_j$,
	the corresponding entry is
	\[
	P(v_i,u_j) = \Expect{(x,y,z)\sim \mu}{v_i(x)\overline{u_j(y)}}.
	\]
	Let
	\[
	\mathcal{G} = \{G_{1,x},\ldots,G_{m_1-1, x},\ldots, G_{1,z},\ldots,G_{m_3-1,z}\}
	\]
	be an ensemble of centered Gaussian random variables with covariance matrix $P$. The invariance principle relates
	the behavior of low-influence, multi-linear polynomials over $\mathcal{X}$ and over $\mathcal{G}$. Below, we state
	the version that we need from~\cite{Mossel} specialized to our case of interest, but before that, we need a few definitions.
	
	Denote $q = m_1+m_2+m_3-3$, and let $M\colon \mathbb{C}^{q n}\to\mathbb{C}$ be a multi-linear polynomial given as
	\[
	M(a_{1,1},\ldots,a_{1,q},\ldots, a_{n,1},\ldots, a_{n,q})
	=\sum\limits_{T\subseteq [n]\times [q]}m_T \prod\limits_{(i,j)\in T} a_{i,j},
	\]
    where the coefficients $m_T$ are complex numbers and
    $a_{i,j}$ are thought of as complex-valued inputs.
	\begin{definition}
		The influence of variable $(i,j)$ on $M$ as above is defined as $I_{i,j}[M] = \sum\limits_{T\ni (i,j)}\card{a_{i,j}}^2$.
	\end{definition}
	We will also consider vector-valued multi-linear functions, which are functions
	$M\colon \mathbb{C}^{qn}\to\mathbb{C}^{k}$ wherein each $M_s$ is a multi-linear function.
	The influence of $(i,j)$ on $M$ is defined as $I_{i,j}[M] = \max_{s}I_{i,j}[M_s]$.

	Finally, define ${\sf trunc}\colon\mathbb{C}\to\mathbb{C}$ by ${\sf trunc}(a) = a$ if $\card{a}\leq 1$
	and ${\sf trunc}(a) = a/|a|$ otherwise.
	\begin{theorem}\label{thm:invariance_principle}
		For all $\alpha>0$, $k,m\in\mathbb{N}$, $d\in\mathbb{N}$, $C>0$ and $\eps>0$, there exists $\tau>0$
		such that the following holds. Suppose that $\card{\Sigma},\card{\Gamma},\card{\Phi}\leq m$,
		that $\mu$ is a distribution over $\Sigma\times\Gamma\times\Phi$ in which the probability
		of each atom is at least $\alpha$, and let $\mathcal{X}$ and $\mathcal{G}$ be the ensembles
		of random variables as above with the same covariance matrix. Let $M\colon\mathbb{C}^{qn}\to\mathbb{C}^k$
		be a multi-linear polynomial with $\max_{i,j}I_{i,j}[M]\leq \tau$ and $\norm{M}_2\leq C$.
		\begin{enumerate}
			\item If $\Psi\colon \mathbb{C}^{k}\to\mathbb{C}$ is differentiable three times and its third order derivatives
			are at most $C$ in absolute value, then
			\[
			\card{\Expect{}{\Psi(M(\mathcal{X}^n))}-\Expect{}{\Psi(M(\mathcal{G}^n))}}\leq \eps.
			\]
			\item Define $\truncerr\colon \mathbb{C}^k\to \mathbb{R}$ by $\truncerr(a_1,\ldots,a_k) = \sqrt{\sum\limits_{i=1}^{k}\card{{\sf trunc}(a_i) - a_i}^2}$.
			Then
			\[
			\card{\Expect{}{\truncerr(M(\mathcal{X}^n))}-\Expect{}{\truncerr(M(\mathcal{G}^n))}}\leq \eps.
			\]
		\end{enumerate}
	\end{theorem}
	
	Last, we need the following elementary fact.
	\begin{fact}\label{fact:trivial_lipshitz_pf}
		Consider the function $\truncerr\colon \mathbb{C}\to [0,\infty)$ defined as
		$\truncerr(a) = \card{{\sf trunc}(a) - a}$.
        \begin{enumerate}
            \item ${\sf trunc}$ is a $2$-Lipschitz function.
            \item $\truncerr$ is a $3$-Lipschitz function.
        \end{enumerate}
	\end{fact}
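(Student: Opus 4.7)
The plan is to verify both Lipschitz bounds by direct case analysis according to whether each input lies inside or outside the closed unit disk in $\mathbb{C}$. For part (1), I would split into three cases. If $|a|,|b|\leq 1$, then ${\sf trunc}(a)=a$ and ${\sf trunc}(b)=b$, so the bound holds trivially with constant $1$. If $|a|>1\geq |b|$, then by the triangle inequality
\[
|{\sf trunc}(a)-{\sf trunc}(b)|=\left|\tfrac{a}{|a|}-b\right|\leq \left|\tfrac{a}{|a|}-a\right|+|a-b|=(|a|-1)+|a-b|,
\]
and since $|a|-1\leq |a|-|b|\leq |a-b|$ by the reverse triangle inequality, this total is at most $2|a-b|$. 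Finally, in the both-outside case $|a|,|b|>1$, I would use the algebraic identity
\[
\frac{a}{|a|}-\frac{b}{|b|}=\frac{a|b|-b|a|}{|a|\,|b|}=\frac{a(|b|-|a|)+|a|(a-b)}{|a|\,|b|},
\]
which together with $\bigl||b|-|a|\bigr|\leq |a-b|$ yields $|{\sf trunc}(a)-{\sf trunc}(b)|\leq 2|a-b|/\min(|a|,|b|)\leq 2|a-b|$.

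For part (2), I would reduce to part (1) via the reverse triangle inequality. Writing $\truncerr(a)-\truncerr(b)=|{\sf trunc}(a)-a|-|{\sf trunc}(b)-b|$ and applying the reverse triangle inequality once gives
\[
|\truncerr(a)-\truncerr(b)|\leq \bigl|({\sf trunc}(a)-a)-({\sf trunc}(b)-b)\bigr|\leq |{\sf trunc}(a)-{\sf trunc}(b)|+|a-b|,
\]
and the bound from part (1) then caps this by $2|a-b|+|a-b|=3|a-b|$, as desired.

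The only step that is not completely routine is the both-outside subcase of part (1); the decomposition $a|b|-b|a|=a(|b|-|a|)+|a|(a-b)$ is the heart of the argument and is what controls the difference of the two unit-length projections $a/|a|$ and $b/|b|$. Everything else, including all of part (2), follows from case splitting and the (reverse) triangle inequality, so I do not anticipate any genuine obstacle.
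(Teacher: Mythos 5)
Your proof is correct and follows essentially the same route as the paper: the key decomposition $a|b|-b|a| = a(|b|-|a|)+|a|(a-b)$ in the both-outside case is exactly the paper's argument, and the deduction of the $\truncerr$ bound via the (reverse) triangle inequality matches as well. The only difference is organizational — you establish the ${\sf trunc}$ bound in all three cases and then deduce the $\truncerr$ bound uniformly, whereas the paper bounds $\truncerr$ directly in the two easy cases — which is if anything slightly cleaner.
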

	\begin{proof}
		Let $a,b\in\mathbb{C}$; we show that $\card{\truncerr(a) - \truncerr(b)} \leq 2\card{a-b}$.
		If $a,b$ are both at most $1$ in absolute value, then the left-hand side is
		$0$ and the claim is trivial. If exactly one of $a$ and $b$ is at most $1$
		in absolute value, say $a$ then
		\[
		\card{\truncerr(a) - \truncerr(b)} =
		\card{\truncerr(b)}
		=\card{b}-1
		\leq \card{b}-\card{a}
		\leq \card{b-a},
		\]
		and we are done. It remains to consider the case that both $a$ and $b$ are at least
		$1$ in absolute value. In that case, first by the triangle inequality
        $\card{\truncerr(a)-\truncerr(b)}
        \leq \card{{\sf trunc}(a) - a - ({\sf trunc}(b) -b)}\leq \card{{\sf trunc}(a)-{\sf trunc}(b)}+\card{a-b}$, and we next analyze ${\sf trunc}$. We have
		\begin{align*}
			\card{{\sf trunc(a) - {\sf trunc}(b)}}
			=\frac{\card{a\card{b} - b\card{a}}}{\card{a}\card{b}}
			\leq
			\frac{\card{a\card{b} - a\card{a}}+\card{a\card{a} - b\card{a}}}{\card{a}\card{b}}
			=
			\frac{\card{\card{b} - \card{a}}}{\card{b}}
			+
			\frac{\card{\card{a-b}}}{\card{b}}
		\end{align*}
        which is at most $
			2\card{a-b}$, and the proof is concluded.
	\end{proof}
	\subsection{The \texorpdfstring{$\mu$}{mu}-norm and the CSP Stability Result}
	\begin{definition}
		For a distribution $\mu$ over $\Sigma\times \Gamma\times \Phi$
		and a function $f\colon (\Sigma^n,\mu_x^{\otimes n})\to\mathbb{C}$,
		we define the $\mu$ semi-norm of $f$ as
		\[
		\norm{f}_{\mu}
		=\sup_{\substack{g\colon \Gamma^n\to\mathbb{C}\\ h\colon \Phi^n\to\mathbb{C}\\ \text{$1$-bounded}}}\card{\Expect{(x,y,z)\sim \mu^{\otimes n}}{f(x)g(y)h(z)}}.
		\]
	\end{definition}
	For general distributions $\mu$, $\norm{\circ}_{\mu}$ is actually a semi-norm; for instance, if the distribution $\mu$ is uniform over $\Sigma\times \Gamma\times \Phi$,
	then $\norm{f}_{\mu} = \card{\Expect{}{f(x)}}$. For most distributions we will be concerned with, though, this semi-norm will actually be a norm.
	
	In our applications, we will need to work with several distributions $\mu$ over triplets that have the same marginal distribution over $x$. A special collection of distributions $\mu$ that we care about is as follows:
	\begin{definition}
		For alphabets $\Sigma$, $\Gamma$ and $\Phi$, a distribution $\nu$ over $\Sigma$ and a parameter $\alpha>0$, define the collections
		\[
		M_{\nu} = \sett{\mu}{\text{pairwise connected distribution over $\Sigma\times \Gamma\times \Phi$ with no $(\mathbb{Z}, +)$-embedding}, \mu_x = \nu},
		\]
		\[
		M_{\alpha} = \sett{\mu}{\mu(x,y,z)\geq \alpha~\forall(x,y,z)\in {\sf supp}(\mu)},
		\]
		and $M_{\nu,\alpha} = M_{\nu}\cap M_{\alpha}$.
	\end{definition}
	\begin{definition}\label{def:semi_norm_nu}
		Let $\nu$ be a distribution over $\Sigma$, let $\Gamma$, $\Phi$ be alphabets and let $M$ be a collection of distributions over $\Sigma\times\Gamma\times \Phi$ such that $\mu_x = \nu$ for all $\mu\in M$. For a function $f\colon (\Sigma^n,\mu_x^{\otimes n})\to\mathbb{C}$,
        we define
        \[
        \norm{f}_{M,\nu} = \sum_{\mu\in M} \norm{f}_\mu.
        \]
        In the special case that $M = M_{\nu,\alpha}$, we refer to the associated norm $\norm{f}_{M,\nu}$ as the $\nu$ semi-norm of $f$, and denote it by
		\[
		\norm{f}_{\nu,\alpha}
		=\sup_{\mu\in M_{\nu,\alpha}}\norm{f}_{\mu}.
		\]
	\end{definition}
    \begin{remark}\label{remark:mismatch1}
    The results below apply to the more general notion
    of semi-norm $\norm{\circ}_{M,\nu}$ with suitable adaptations. However,
    stating it in this generality would complicate the statements
    (that are already quantifier heavy) further, and hence we
    state all of the results for
    $\norm{\circ}_{\nu,\alpha}$.
    \end{remark}
	It will be important for us to understand the type of functions $f$ that have a significant $\mu$-norm, and towards that end, we use the following result from~\cite{BKMcsp4}:
	\begin{theorem}\label{thm:csp4}
		For all $m\in\mathbb{N}$
		there is an Abelian group $G$ such that for all
		$\eps,\alpha>0$ there exists $\delta>0$ and $d\in\mathbb{N}$ such that the following holds.
		Suppose that $\Sigma$, $\Gamma$, $\Phi$ are alphabets of size at most $m$,
		and $\mu$ is a pairwise connected distribution over $\Sigma\times\Gamma\times \Phi$ with no $\mathbb{Z}$ embeddings
		in which the probability of each atom is at least $\alpha$.
        Then there exists an embedding $(\sigma,\gamma,\phi)$ of ${\sf supp}(\mu)$ into $G$, such that
		if $f\colon \Sigma^n\to\mathbb{C}$ is a $1$-bounded function with $\norm{f}_{\mu}\geq \eps$, then
		there is $\chi\in \hat{G}^{\otimes n}$ and $L\colon \Sigma^n\to\mathbb{C}$ of degree at most
		$d$ and $\norm{L}_2\leq 1$ such that
		\[
		\card{\inner{f}{L\cdot \chi\circ\sigma^{\otimes n}}}\geq \delta.
		\]
	\end{theorem}
     Throughout, a function of the type $\chi\circ \sigma^{\otimes n}$ will be referred to as a character function or a character embedding function, and we will often simply write such a function as $\chi\circ \sigma$.
	
	\subsection{Product Functions}\label{sec:setup}
	In this section we introduce some useful notions about embedding functions arising in Theorem~\ref{thm:csp4}.\begin{definition}\label{def:prod_fn}
        Let $\Sigma$ be an alphabet, $G$ be a group and $\sigma\colon \Sigma\to G$ be a map. The class of product functions $\mathcal{P}(\Sigma, G, \sigma)$ is defined as
		the collection of functions of the form $P\colon \Sigma^n\to\mathbb{C}$ for which
		there are $u_1,\ldots,u_n\in\widehat{G}$ and a root of unity $\theta\in\mathbb{C}$ of order at most $\card{G}$, such that for all $x\in\Sigma^n$,
		$P(x) = \theta\prod\limits_{i=1}^{n} u_i(\sigma(x_i))$. In the case that $\theta=1$ we also refer to $P$ as a character product function.
	\end{definition}
    We will often refer to functions
    as in Definition~\ref{def:prod_fn} as product functions, often without specifying the group $G$ or the map $\sigma$. The product function we deal with will often in fact be character product functions.
	\begin{definition}
		We say that a class of product functions $\mathcal{P}(\Sigma, G, \sigma)$
		is $\tau$-separated if for any univariate functions $u,v\colon \Sigma\to\mathbb{C}$
		in it, it either holds that $\card{\inner{u}{v}} = 1$ or else $\card{\inner{u}{v}}\leq 1-\tau$.
	\end{definition}
	
	We need to define a metric on product functions, which we refer to as the symbolic metric.
	\begin{definition}
    \label{def:symbolic_dist}
		Let $P, P'\colon \Sigma^n\to\mathbb{C}$ be product functions in $\mathcal{P}(\Sigma, G, \sigma)$.
		The symbolic metric $\Delta_{{\sf symbolic}}(P,P')$ is defined
		as the minimum number $k$ such that there are
		$u_1,\ldots,u_n,v_1,\ldots,v_n\in\widehat{G}$,
		for which $u_i = v_i$ for all but $k$ of the indices $i\in [n]$, and $\theta,\theta'\in\mathbb{C}$ of absolute value $1$ such that
		\[
		P(x_1,\ldots,x_n) = \theta\prod\limits_{i=1}^{n} u_i(\sigma(x_i)),
		\qquad\qquad
		P'(x_1,\ldots,x_n) = \theta'\prod\limits_{i=1}^{n} v_i(\sigma(x_i)).
		\]
	\end{definition}
	
	\begin{definition}
		For a set $\mathcal{P} = \set{P_1,\ldots,P_r\colon \Sigma^n\to\mathbb{C}}\subseteq \mathcal{P}(\Sigma,G,\sigma)$, we define
		\[
		\spn(\mathcal{P}) = \sett{\prod\limits_{i=1}^{r} P_i^{\alpha_i}}{\alpha_i\in\mathbb{N}, 0\leq \alpha_i\leq |G|~\forall i}.
		\]
	\end{definition}

	\begin{definition}
		The rank of $\mathcal{P}\subseteq\mathcal{P}(\Sigma,G,\sigma)$ is defined as
		\[
        {\sf rk}(\mathcal{P}) = \min_{P\in \spn(\mathcal{P}), P\neq 1}{\Delta_{{\sf symbolic}}\left(P, 1\right)}.
        \]
	\end{definition}
	
	We have the following simple fact, asserting that product functions with large symbolic distance act like
	high-degree monomials.
	\begin{lemma}\label{lemma:symbolic_to_decay}
        Suppose that $\mathcal{P}(\Sigma,G,\sigma)$ is $\tau$-separated and that $P\colon \Sigma^n\to\mathbb{C}$ is a product function from $\mathcal{P}(\Sigma,G,\sigma)$
        such that $\Delta_{{\sf symbolic}}(P,1)\geq 2$. Let $\nu$ be a distribution whose support is $\Sigma$ and in which the probability of each atom is at least $\alpha$. Then for all $c>0$ and $\xi\in [c,1]$ we have that
		\[
        \norm{\mathrm{T}_{1-\xi} P}_2\leq 2^{-\Omega_{\alpha,\tau,c}(\Delta_{{\sf symbolic}}(P,1))},
        \]
        where the $2$-norm and the operator $\mathrm{T}_{1-\xi}$
        are with respect to $\nu^n$.
	\end{lemma}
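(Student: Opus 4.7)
The plan is to exploit the tensor-product structure of both $P$ and the noise operator. Writing $P(x) = \theta\prod_{i=1}^{n}u_i(\sigma(x_i))$ with $u_i\in\widehat{G}$, the operator $\mathrm{T}_{1-\xi}$ acts independently on each coordinate, so
\[
\mathrm{T}_{1-\xi}P(x) \;=\; \theta\prod_{i=1}^{n}\bigl((1-\xi)U_i(x_i)+\xi\lambda_i\bigr),
\]
where $U_i = u_i\circ\sigma$ and $\lambda_i = \mathbb{E}_{y\sim\nu}[U_i(y)]$. Since $|U_i|\equiv 1$ (characters have unit modulus), a direct single-coordinate computation gives $\|\mathrm{T}_{1-\xi}^{(1)}U_i\|_2^2 = (1-\xi)^2 + (1-(1-\xi)^2)|\lambda_i|^2$, and hence
\[
\|\mathrm{T}_{1-\xi}P\|_2^2 \;=\; \prod_{i=1}^{n}\bigl((1-\xi)^2+(1-(1-\xi)^2)|\lambda_i|^2\bigr).
\]

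The next step is to identify enough coordinates where $|\lambda_i|$ is bounded away from $1$. Fix a representation of $P$ and of $1$ achieving the minimum in Definition~\ref{def:symbolic_dist}, and let $S$ be the set of $k := \Delta_{\sf symbolic}(P,1)$ coordinates where $u_i$ differs from the character $v_i$ appearing in the chosen representation of $1$. For $i\in S$, the univariate function $U_i$ cannot coincide with $v_i\circ\sigma$ up to a scalar of modulus one, as otherwise the phase could be absorbed into $\theta,\theta'$ and the symbolic distance would decrease, contradicting optimality. Consequently $|\langle U_i, v_i\circ\sigma\rangle|\neq 1$, and $\tau$-separatedness of $\mathcal{P}(\Sigma,G,\sigma)$ yields $|\lambda_i|\leq 1-\tau$ for these coordinates. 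For $i\notin S$ I use the trivial bound $\|\mathrm{T}_{1-\xi}^{(1)}U_i\|_2\leq 1$.

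Putting it together, each distinguished factor is at most $(1-\xi)^2+(1-(1-\xi)^2)(1-\tau)^2 = 1-(1-(1-\xi)^2)(1-(1-\tau)^2)$, and using $\xi\in[c,1]$ so that $1-(1-\xi)^2\geq c$ together with $1-(1-\tau)^2\geq \tau$, this is at most $1-c\tau$. Therefore
\[
\|\mathrm{T}_{1-\xi}P\|_2^2 \;\leq\; (1-c\tau)^k,
\]
and taking square roots delivers $\|\mathrm{T}_{1-\xi}P\|_2\leq 2^{-\Omega_{\tau,c}(k)}$ as required. The main obstacle to a clean writeup is the middle paragraph: one must carefully use the minimization in the definition of $\Delta_{\sf symbolic}$ to argue that the $k$ mismatched coordinates really correspond to characters $\tau$-separated from the trivial one in the $\nu$-inner product, rather than to characters $u_i\neq v_i$ in $\widehat{G}$ whose $\nu$-images happen to agree (such coincidences would allow repositioning the mismatches and lowering $k$, contradicting optimality). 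The parameter $\alpha$ enters only implicitly, ensuring that $\nu$ has full support so that character-level inequivalence translates to genuine $\nu$-inner-product gaps.
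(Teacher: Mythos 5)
Your proof is correct and takes essentially the same route as the paper: both tensorize the noise operator coordinatewise, use the minimality in the definition of $\Delta_{\sf symbolic}(P,1)$ to extract roughly $k$ coordinates on which $u_i\circ\sigma$ is genuinely nonconstant, invoke $\tau$-separatedness to bound the resulting per-coordinate norm away from $1$, and multiply. Your write-up is somewhat more explicit in the per-coordinate calculation (deriving $(1-\xi)^2+(1-(1-\xi)^2)|\lambda_i|^2$ and extracting all $k$ coordinates rather than the paper's conservative $k-1$), and the step you correctly flag as needing care — that each $v_i\circ\sigma$ in the representation of $1$ is a constant, so that $|\langle U_i,v_i\circ\sigma\rangle|=|\lambda_i|$ and a mismatched $u_i$ really forces $U_i$ nonconstant — is exactly what the paper compresses into its ``by minimality'' sentence.
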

    \begin{proof}
       By definition, we may write
       $P(x) = \prod\limits_{i=1}^{n}u_i(\sigma(x_i))$ where for at least $k :=\Delta_{{\sf symbolic}}(P,1)$ many $i$'s
       we have that $u_i\circ\sigma$ is not constantly $1$. By minimality, it follows that for at least $k-1$ of the $i$ we have that $u_i\circ \sigma$ is not constant, so by separatedness it has variance at least $\Omega_{\alpha,\tau}(1)$.
       For each such $i$ get that $\norm{\mathrm{T}_{1-\xi}u_i\circ \sigma}_2\leq 1-\Omega_{\alpha,\tau}(\xi)\leq 1-\Omega_{\alpha,\tau,\xi}(1)$, and the result follows as
       $\norm{\mathrm{T}_{1-\xi} P}_2=\prod\limits_{i=1}^{n}\norm{\mathrm{T}_{1-\xi} u_i\circ \sigma}_2$.
    \end{proof}

    Observe that if ${\sf rk}(\mathcal{P})$ is large, then the above lemma (when $\xi = 1$) implies that for every $1\neq P\in \spn(\mathcal{P})$, the average of $P$ is negligible.

	\section{The Noise Operator and Its Properties}
	In this section we introduce a variant of the standard noise operator over product spaces that will be crucial
	for our applications.
	\begin{definition}\label{def:noise_op_P}
		Let $\Sigma$ be a finite alphabet, 
		let $\mathcal{P} = \set{P_1,\ldots,P_r\colon \Sigma^n\to\mathbb{C}}$ be a collection of functions, let $\nu$ be a distribution over $\Sigma$,
        and let $I\subseteq [n]$.
		For each $x\in \Sigma^n$ we define the distribution $\mathrm{T}_{\nu, \mathcal{P}, I} x$
		as:
		\begin{enumerate}
			\item Sample $y\sim \nu^{\otimes n}$ conditioned on $y_{\overline{I}} = x_{\overline{I}}$ and $P_i(y) = P_i(x)$ for all $i$.
			\item Output $y$.
		\end{enumerate}
        For $\eps>0$, we define the distribution of $\mathrm{T}_{\nu,\mathcal{P},1-\eps} x$ by sampling $I\subseteq_{\eps}[n]$, and then outputting $y\sim \mathrm{T}_{\nu,\mathcal{P},I} x$.
	\end{definition}
	As usual, we will associate $\mathrm{T}_{\nu, \mathcal{P}, I},\mathrm{T}_{\nu, \mathcal{P}, 1-\eps}$ with an averaging operator over functions,
	which by abuse of notation we denote as $\mathrm{T}_{\nu, \mathcal{P}, I}, \mathrm{T}_{\nu, \mathcal{P}, 1-\eps} \colon L_2(\Sigma^n,\nu^{\otimes n})\to L_2(\Sigma^n,\nu^{\otimes n})$. We define
	\[
    \mathrm{T}_{\nu, \mathcal{P}, I}f(x)=\Expect{y\sim \mathrm{T}_{\nu, \mathcal{P}, I}x}{f(y)},
    \qquad
	\mathrm{T}_{\nu, \mathcal{P}, 1-\eps} f(x) =
    \Expect{I\subseteq_{\eps} [n]}{\Expect{y\sim \mathrm{T}_{\nu, \mathcal{P}, I} x}{f(y)}}
    =\Expect{I\subseteq_{\eps} [n]}{\mathrm{T}_{\nu, \mathcal{P}, I}f(x)}.
	\]
	When the distribution $\nu$ is clear from context, we will often drop $\nu$ from notation and simply write  $\mathrm{T}_{\mathcal{P},I}$
    and $\mathrm{T}_{\mathcal{P},1-\eps}$ in place of $\mathrm{T}_{\nu,\mathcal{P},I}$ and $\mathrm{T}_{\nu,\mathcal{P},1-\eps}$. We will use noise operators as a replacement for harsh truncations: as we shall see in $L_2$ it acts a truncation-like yet operator, which in addition preserves the boundedness of functions. The distribution $\nu$ will sometimes be omitted from the notation and will be clear from context.

    \subsection{Some Basic Properties of the Noise Operator}
    The goal of this section is to prove a few basic properties of the noise operator $\mathrm{T}_{\nu, \mathcal{P}, 1-\eps}$, and we fix a distribution $\nu$ henceforth.
	\subsubsection{A Stationary Distribution of \texorpdfstring{$\mathrm{T}_{\nu,\mathcal{P}, 1-\eps}$}{the noise operator}}

	We first prove that $\mathrm{T}_{\nu, \mathcal{P}, 1-\eps}$ is a Markov chain over $\Sigma^n$ and $\nu^{\otimes n}$ is a stationary distribution of it.\footnote{We remark that whenever the set $\mathcal{P}$ is non-empty, the Markov chain $\mathrm{T}_{\nu, \mathcal{P}, 1-\eps}$ is disconnected, so it has multiple stationary distributions. This will not be important for us.}
	\begin{fact}\label{fact:noise_op_basic_prop1}
        For all $I\subseteq[n]$ and $\mathcal{P}$, $\nu^{\otimes n}$
        is a stationary distribution of $\mathrm{T}_{\nu,\mathcal{P},I}$. Consequently $\nu^{\otimes n}$ is a stationary distribution of $\mathrm{T}_{\nu, \mathcal{P}, 1-\eps}$.	\end{fact}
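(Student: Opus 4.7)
The plan is to exhibit the noise operator $\mathrm{T}_{\nu,\mathcal{P},I}$ explicitly as a Markov kernel, observe that its transition rule is ``condition on an equivalence relation,'' and then use reversibility of such kernels with respect to $\nu^{\otimes n}$.

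First I would write down the transition kernel. By definition of $\mathrm{T}_{\nu,\mathcal{P},I}$, for any $x,y\in\Sigma^n$,
\[
K_I(x,y) \;=\; \frac{\nu^{\otimes n}(y)\cdot \mathbf{1}\bigl[y_{\overline{I}} = x_{\overline{I}},\; P_i(y)=P_i(x)\text{ for all }i\bigr]}{\nu^{\otimes n}\bigl(\{z: z_{\overline{I}} = x_{\overline{I}},\;P_i(z)=P_i(x)\text{ for all }i\}\bigr)}.
\]
The key structural point is that the relation $x \sim_I y$ defined by ``$y_{\overline{I}} = x_{\overline{I}}$ and $P_i(y) = P_i(x)$ for all $i$'' is an equivalence relation on $\Sigma^n$ (reflexivity, symmetry, and transitivity are immediate from equality of coordinates and of $P_i$-values). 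Therefore the denominator above depends only on the equivalence class $[x]_I$ of $x$, and I will write it as $Z([x]_I):=\nu^{\otimes n}([x]_I)$.

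Next, I would verify stationarity directly. If $y$ is drawn from $\nu^{\otimes n}$ and then $x$ is drawn from $\mathrm{T}_{\nu,\mathcal{P},I} y$, the probability of landing at $x$ is
\[
\sum_{y\in\Sigma^n} \nu^{\otimes n}(y)\, K_I(y,x)
\;=\; \sum_{y\in [x]_I} \nu^{\otimes n}(y)\cdot \frac{\nu^{\otimes n}(x)}{Z([x]_I)}
\;=\; \nu^{\otimes n}(x)\cdot \frac{Z([x]_I)}{Z([x]_I)}
\;=\; \nu^{\otimes n}(x),
\]
using that $y\sim_I x$ is the same as $y\in[x]_I$ since $\sim_I$ is an equivalence relation. (Equivalently, the kernel satisfies detailed balance, $\nu^{\otimes n}(x)K_I(x,y) = \nu^{\otimes n}(y)K_I(y,x)$, which is visible directly from the formula.) This proves the first assertion.

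For the second assertion, I observe that $\mathrm{T}_{\nu,\mathcal{P},1-\eps}$ is by definition the convex combination $\Expect{I\subseteq_\eps[n]}{\mathrm{T}_{\nu,\mathcal{P},I}}$ over the random set $I$. Since a mixture of kernels that all preserve the same distribution still preserves it, $\nu^{\otimes n}$ is also a stationary distribution of $\mathrm{T}_{\nu,\mathcal{P},1-\eps}$. No step here is really an ``obstacle''; the only place one must be a little careful is in confirming that $\sim_I$ is a bona fide equivalence relation, which is what makes the conditioning well-defined in a symmetric way and underwrites the one-line detailed-balance computation above.
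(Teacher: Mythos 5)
Your proof is correct and follows essentially the same approach as the paper: both arguments reduce to the observation that the kernel conditions $\nu^{\otimes n}$ on an event (the class $A=[x]_I$ determined by $x_{\overline I}$ and the $P_i$-values), so the forward mass into any fixed point $w$ is $\nu^{\otimes n}(A)\cdot\nu^{\otimes n}(w)/\nu^{\otimes n}(A)=\nu^{\otimes n}(w)$. Your presentation merely makes the equivalence-relation/detailed-balance structure and the ``mixture of stationary kernels is stationary'' step for $\mathrm{T}_{\nu,\mathcal{P},1-\eps}$ explicit, which the paper leaves implicit with the word ``Consequently.''
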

	\begin{proof}
		Fix $I$, write $\mathcal{P} = \{P_1,\ldots,P_r\}$ and fix $w\in \Sigma^n$. We  calculate the probability that sampling $x\sim\nu^{\otimes n}$ and $y\sim \mathrm{T}_{\nu,\mathcal{P},I} x$ we have $y=w$. Denoting
		$a_i = P_i(w)$, we have
		\[
		\Prob{x,y}{y = w}
		=\Prob{x,y}{P_i(x) = a_i\forall i, x_I = w_I}\cdot
		\cProb{x, y}{P_i(x) = a_i\forall i, x_I = w_I}{y=w}.
		\]
		Let $A$ be the set of $x\in \Sigma^n$ such that $P_i(x) = a_i$ for all $i$ and $x_I = w_I$, so that the first term is equal to $\nu^{\otimes n}(A)$. By definition of $\mathrm{T}_{\nu,\mathcal{P},I}$, $y$ is distributed according to $y\sim \nu^{\otimes n}$ conditioned on $y\in A$, so the second term is equal to $\nu^{\otimes n}(w)/\nu^{\otimes n}(A)$. Together we get $\Prob{x,y}{y = w}=\nu^{\otimes n}(w)$.
	\end{proof}
	
	\subsubsection{Relating Different Noise Operators}
	We next show that if $P'$ is close to $\spn(\mathcal{P})$, then the operators
	$\mathrm{T}_{\nu, \mathcal{P}, 1-\eps}$ and $\mathrm{T}_{\nu, \mathcal{P}\cup \{P'\}, 1-\eps}$ are close.
	\begin{fact}\label{fact:noise_op_basic_prop2}
		Let $\mathcal{P}$ be a collection of product functions, let $\eps>0$, let $P'\colon \Sigma^n\to\mathbb{C}$ be a product function and suppose that
		$k = \min_{P\in \spn(\mathcal{P})}\Delta(P,P')$. Then
		\begin{enumerate}
			\item There is a coupling of $(x, y, y')$ such that $(x,y)$ is distributed according to $(x, \mathrm{T}_{\nu, \mathcal{P}, 1-\eps} x)$,
			$(x,y')$ is distributed according to $(x, \mathrm{T}_{\nu, \mathcal{P}\cup\{P'\}, 1-\eps}x)$ and $\Prob{}{y\neq y'}\leq k\eps$.
			\item For any $1$-bounded function $f\colon \Sigma^n\to\mathbb{C}$,
			$\norm{\mathrm{T}_{\nu, \mathcal{P}\cup \{P'\}, 1-\eps} f - \mathrm{T}_{\nu, \mathcal{P}, 1-\eps} f}_2\leq 2\sqrt{k\eps}$.
		\end{enumerate}
	\end{fact}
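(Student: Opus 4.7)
The plan is to prove Part~1 by constructing an explicit coupling, then derive Part~2 from it via Cauchy--Schwarz.

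First, let $P\in \spn(\mathcal{P})$ attain the minimum, so that $\Delta_{{\sf symbolic}}(P,P') = k$. By Definition~\ref{def:symbolic_dist}, we may write
\[
P(x) = \theta \prod_{i=1}^{n} u_i(\sigma(x_i)),
\qquad
P'(x) = \theta' \prod_{i=1}^{n} v_i(\sigma(x_i)),
\]
with characters $u_i,v_i\in \widehat{G}$ and unit $\theta,\theta'\in\mathbb{C}$, where $u_i = v_i$ for all $i$ outside a set $B\subseteq [n]$ of size exactly $k$.

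The key observation is that for any $x\in\Sigma^n$ and any $I\subseteq [n]$ with $I\cap B = \emptyset$, the distributions $\mathrm{T}_{\nu,\mathcal{P},I}x$ and $\mathrm{T}_{\nu,\mathcal{P}\cup\{P'\},I}x$ coincide. Indeed, since $P = \prod_i P_i^{\alpha_i}$ is a product of elements of $\mathcal{P}$, the constraints $P_i(y) = P_i(x)$ for all $i$ imply $P(y) = P(x)$; combined with $y_{\bar I} = x_{\bar I}$, this gives $\prod_{i\in I} u_i(\sigma(y_i)) = \prod_{i\in I}u_i(\sigma(x_i))$. When $I\cap B = \emptyset$ we have $u_i = v_i$ on $I$, so the same identity holds for the $v_i$'s, which (again using $y_{\bar I} = x_{\bar I}$) is equivalent to $P'(y) = P'(x)$. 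Hence the extra constraint in $\mathrm{T}_{\nu,\mathcal{P}\cup\{P'\},I}$ is automatic, and both operators produce the same conditional distribution. We can therefore define the coupling: sample $I\subseteq_{\eps}[n]$; if $I\cap B = \emptyset$ sample $y=y'$ from the common conditional distribution, otherwise sample $y\sim \mathrm{T}_{\nu,\mathcal{P},I}x$ and $y'\sim\mathrm{T}_{\nu,\mathcal{P}\cup\{P'\},I}x$ independently. The marginals are correct by construction, and by a union bound
\[
\Prob{}{y\neq y'} \leq \Prob{}{I\cap B\neq\emptyset} \leq \eps\card{B} = k\eps,
\]
which proves Part~1.

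For Part~2, note that under the above coupling (conditional on $x$),
\[
\mathrm{T}_{\nu,\mathcal{P}\cup\{P'\},1-\eps}f(x) - \mathrm{T}_{\nu,\mathcal{P},1-\eps}f(x)
= \Expect{(y,y')}{f(y') - f(y)}.
\]
The integrand vanishes when $y=y'$ and has absolute value at most $2$ otherwise, since $f$ is $1$-bounded. Applying Cauchy--Schwarz (Jensen's inequality) pointwise in $x$ yields
\[
\card{\mathrm{T}_{\nu,\mathcal{P}\cup\{P'\},1-\eps}f(x) - \mathrm{T}_{\nu,\mathcal{P},1-\eps}f(x)}^2
\leq \Expect{(y,y')}{\card{f(y')-f(y)}^2}
\leq 4\Prob{}{y\neq y' \mid x}.
\]
Taking expectation over $x\sim\nu^{\otimes n}$ and using Part~1,
\[
\norm{\mathrm{T}_{\nu,\mathcal{P}\cup\{P'\},1-\eps}f - \mathrm{T}_{\nu,\mathcal{P},1-\eps}f}_2^2 \leq 4\Prob{}{y\neq y'}\leq 4k\eps,
\]
so $\norm{\mathrm{T}_{\nu,\mathcal{P}\cup\{P'\},1-\eps}f - \mathrm{T}_{\nu,\mathcal{P},1-\eps}f}_2\leq 2\sqrt{k\eps}$.

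The only conceptual step is the observation that, outside the ``differing'' set $B$, the additional constraint coming from $P'$ is already implied by the $\mathcal{P}$-constraints together with $y_{\bar I} = x_{\bar I}$; once this is in place, both parts are routine. I expect no further obstacle.
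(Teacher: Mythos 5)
Your proof is correct and follows essentially the same route as the paper: couple the two operators by using the same random set $I$, note that when $I$ misses the $k$ coordinates where $P$ and $P'$ differ the extra constraint from $P'$ is redundant, bound $\Pr[y\neq y']$ by $k\eps$, and deduce the $L_2$ bound via Jensen/Cauchy--Schwarz. The only difference is that you spell out explicitly why the $P'$-constraint is implied on such $I$ (a step the paper only asserts), which is a fine addition.
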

	\begin{proof}
		We begin with the first item. Let $P\in \spn(\mathcal{P})$ be the $P$ achieving the minimum, and let $K\subseteq [n]$ be the set of coordinates where $P$ and $P'$ differ. We note that if $K\subseteq\bar{I}$, then
        $\mathrm{T}_{\nu, \mathcal{P}\cup \{P'\}, I} = \mathrm{T}_{\nu, \mathcal{P}, I}$, and as $\Prob{I\subseteq_{\eps}[n]}{K\subseteq\bar{I}}\geq 1-k\eps$, we get that for each $x$
        the statistical distance between
        $\mathrm{T}_{\nu, \mathcal{P}\cup \{P'\}, 1-\eps}x$
        and
        $\mathrm{T}_{\nu, \mathcal{P}, 1-\eps}x$ is at most
        $k\eps$, and the first item follows.
		For the second item, fix the coupling $(x,y,y')$ so that we may write
		\[
		\norm{\mathrm{T}_{\nu, \mathcal{P}\cup \{P'\}, 1-\eps} f - \mathrm{T}_{\nu, \mathcal{P}, 1-\eps} f}_2^2
		=\Expect{x}{\card{\Expect{y, y'}{f(y) - f(y')}}^2}
		\leq 4\Expect{x}{1_{y\neq y'}}
		\leq 4k\eps.\qedhere
		\]
	\end{proof}

	\subsubsection{Nearly Preserving Low Degree Functions}
	We would like to argue that $\mathrm{T}_{\nu,\mathcal{P}, 1-\eps}$ roughly preserves low-degree functions.
	Toward this end, for each $I\subseteq[n]$ we introduce the function
	\[
	A_{\mathcal{P},I}(x) =
	\Prob{\substack{ x'\sim \mathrm{T}_Ix}}{P_i(x') = P_i(x)~\forall i}.
	\]
	We have the following fact:
	\begin{fact}\label{fact:P_val_stable}
		Suppose that $\Sigma$ is an alphabet of size at most $m$, $\mathcal{P}=\set{P_1,\ldots,P_r}\subseteq\mathcal{P}(\Sigma,G,\sigma)$ where $G$ has size at most $m$, and
		$\nu$ is a distribution over $\Sigma$. Then for all $I\subseteq[n]$ and $\tau>0$ we have that
		\[
		\Prob{x\sim \nu^{\otimes n}}{A_{\mathcal{P},I}(x)\leq \tau}\leq O_{r, m}(\tau).
		\]
	\end{fact}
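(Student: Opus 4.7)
The plan is to reduce Fact~\ref{fact:P_val_stable} to a straightforward counting argument over equivalence classes. First I would fix the frozen coordinates $x_{\bar I}$ and view $A_{\mathcal{P}, I}(\cdot, x_{\bar I})$ as a function of $x_I \in \Sigma^I$ alone. Since each $P_i$ is a product function, I can factor $P_i(x_{\bar I}, a) = c_i(x_{\bar I}) \cdot Q_i(a)$, where $c_i(x_{\bar I})$ depends only on the frozen coordinates and $Q_i(a) := \prod_{j \in I} u_{i,j}(\sigma(a_j))$ depends only on $a$. Consequently, the condition ``$P_i(y) = P_i(x)$ for all $i$'' (with $y_{\bar I} = x_{\bar I}$) is equivalent to $(Q_1(y_I), \ldots, Q_r(y_I)) = (Q_1(x_I), \ldots, Q_r(x_I))$. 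Letting $\sim$ be the equivalence relation on $\Sigma^I$ defined by $a \sim b \iff Q_i(a) = Q_i(b)~\forall i$, one obtains
\[
A_{\mathcal{P}, I}(x) \;=\; \nu^{\otimes I}\!\left([x_I]_{\sim}\right).
\]

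Next I would bound the number of $\sim$-classes by a constant $N = N(r,m)$ independent of $n$ and $I$. Each value $u_{i,j}(\sigma(a_j))$ is a $|G|$-th root of unity (because $u_{i,j} \in \widehat G$ and $\sigma(a_j)\in G$), so $Q_i(a)$ is as well, and hence the tuple $(Q_1(a),\ldots,Q_r(a))$ takes at most $|G|^r$ distinct values; by the convention inherited from Theorem~\ref{thm:csp4} the group $G$ has size bounded in terms of $m$, so $N \leq |G|^r = O_{r,m}(1)$. The partition of $\Sigma^I$ into $\sim$-classes therefore has at most $N$ parts.

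Finally, the tail bound follows by a union bound over small classes. Conditional on $x_{\bar I}$, sampling $x_I \sim \nu^{\otimes I}$ lands in any particular class $C$ with probability $\nu^{\otimes I}(C)$, so
\[
\Pr_{x_I}\!\left[A_{\mathcal{P}, I}(x) \leq \tau \;\middle|\; x_{\bar I}\right] \;=\; \sum_{C\,:\,\nu^{\otimes I}(C) \leq \tau}\!\!\nu^{\otimes I}(C) \;\leq\; N\tau.
\]
Averaging over $x_{\bar I}\sim \nu^{\otimes \bar I}$ yields the required bound $O_{r,m}(\tau)$. There is essentially no obstacle here; the only point requiring care is verifying that the class count $N$ is independent of $n$ and $|I|$, which is precisely what the product structure of the $P_i$ (combined with the $m$-dependent bound on $|G|$) provides.
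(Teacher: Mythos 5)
Your proof is correct, but it takes a genuinely different route from the paper. You exploit the product structure of the $P_i$'s directly: you factor each $P_i$ across $I$ and $\bar I$, observe that $A_{\mathcal{P},I}(x) = \nu^{\otimes I}\bigl([x_I]_\sim\bigr)$ is precisely the mass of the $\sim$-equivalence class of $x_I$ (the classes not depending on $x_{\bar I}$ at all), bound the number of classes by $|G|^r = O_{r,m}(1)$, and conclude by summing over small classes. The paper's proof never peeks inside the product structure: it works with the level sets $B_{\vec a}$ of $(P_1,\ldots,P_r)$ on all of $\Sigma^n$, restricts to the ``bad'' part $B'_{\vec a}$ where $A_{\mathcal{P},I}\leq\tau$, and then squeezes $\nu^{\otimes n}(B'_{\vec a})$ between two expressions for $\Pr_{x,\,y\sim\mathrm{T}_I x}[x,y\in B'_{\vec a}]$: the lower bound $\nu^{\otimes n}(B'_{\vec a})^2$ comes from self-adjointness and idempotence of $\mathrm{T}_I$ together with $\|g\|_2\ge\|g\|_1$, and the upper bound $\nu^{\otimes n}(B'_{\vec a})\,\tau$ comes from the definition of $B'_{\vec a}$. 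The upshot is that the paper's argument only needs the $P_i$ to have $O_m(1)$-size image and needs $A_{\mathcal{P},I}$ to be determined by $(x_{\bar I}, (P_i(x))_i)$, and so it would apply to non-product functions; your argument is more elementary (pure counting, no operator-theoretic step) and arguably easier to verify, but it leans harder on the product structure. Both give the same $O_{r,m}(\tau)$ bound, and both implicitly require $|G| = O_m(1)$, which you correctly identify as inherited from the ambient setup (Theorem~\ref{thm:csp4}).
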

	\begin{proof}
		For each $\vec{a}\in \prod\limits_{i=1}^{r} {\sf Image}(P_i)$ define
		\[
		B_{\vec{a}} = \sett{x\in \Sigma^n}{P_i(x) = a_i~\forall i=1,\ldots,r},
		\qquad
		B_{\vec{a}}' = \sett{x\in B_{\vec{a}}}{A_{\mathcal{P},I}(x)\leq \tau}.
		\]
        Note that for all $\vec{a}$,
		\begin{align*}
		\Prob{x\sim\nu^{\otimes n}, y\sim\mathrm{T}_{I}x}{x,y\in B_{\vec{a}}'}
		=\inner{1_{B_{\vec{a}}'}}{\mathrm{T}_{I}1_{B_{\vec{a}}'}}
		=\norm{\mathrm{T}_{I}1_{B_{\vec{a}}'}}_2^2
		&\geq
		\norm{\mathrm{T}_{I}1_{B_{\vec{a}}'}}_1^2
        =\nu^{\otimes n}(B_{\vec{a}}')^2.
		\end{align*}
        In the second transition we used the fact that $\mathrm{T}_{I}$ is self adjoint and $\mathrm{T}_{I}^2 = \mathrm{T}_{I}$.  On the other hand,
		\[
		\Prob{x\sim \nu^{\otimes n}, y\sim\mathrm{T}_{I}x}{x,y\in B_{\vec{a}}'}
		=\nu^{\otimes n}(B_{\vec{a}}')\cdot
        \cExpect{x\sim\nu^{\otimes n}}{x\in B_{\vec{a}}'}{A_{\mathcal{P},I}(x)}
		\leq
		\nu^{\otimes n}(B_{\vec{a}}')\tau.
		\]
		Combining the two inequalities we get that $\nu^{\otimes n}(B_{\vec{a}}')\leq \tau$. Defining $X_{{\sf bad}} = \bigcup_{\vec{a}\in \prod\limits_{i=1}^{r} {\sf Image}(P_i)} B_{\vec{a}}'$ we get that as the size of the image of each one of the $P_i$'s is $O_m(1)$, it holds that
		\[
		\Prob{x\sim \nu^{\otimes n}}{A_{\mathcal{P},I}(x)\leq \tau}\leq
		\nu^{\otimes n}(X_{{\sf bad}})\lll_{r,m} \tau.\qedhere
		\]
	\end{proof}
	
	\begin{claim}\label{claim:pres_low_deg_functions}
		Let $d,r,m\in\mathbb{N}$ and $\alpha,\eps>0$.
		Suppose that $\Sigma$ has size at most $m$, $\nu$ is a distribution over $\Sigma$
		in which the probability of each atom is at least $\alpha$, and
		$\mathcal{P} \subseteq\mathcal{P}(\Sigma, G, \sigma)$
		be a collection of size at most $r$, size of $G$ at most $m$, and $\sigma\colon \Sigma\to G$ is some map. Then for any $L\colon \Sigma^n\to\mathbb{C}$ of degree at most $d$
		we have that
		\[
		\norm{(I-\mathrm{T}_{\nu,\mathcal{P}, 1-\eps})L}_2^2\lll_{d,m,r,\alpha} \eps^{1/3}\norm{L}_2^2.
		\]
	\end{claim}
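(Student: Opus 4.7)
The plan is to compare the generalized noise operator $\mathrm{T}_{\mathcal{P},1-\eps}$ to the standard noise operator $\mathrm{T}_{1-\eps} = \E_{I\subseteq_\eps[n]}\mathrm{T}_I$, which (being diagonal in the Efron-Stein basis) preserves degree-$d$ functions up to $L_2$-error $O_d(\eps)$. Writing $\mathrm{T}_{\mathcal{P},1-\eps} = \E_I \mathrm{T}_{\mathcal{P},I}$, Jensen's inequality gives
\[
\norm{(I-\mathrm{T}_{\mathcal{P},1-\eps})L}_2^2 \leq \E_I\norm{(I-\mathrm{T}_{\mathcal{P},I})L}_2^2 \leq 2\E_I\norm{(I-\mathrm{T}_I)L}_2^2 + 2\E_I\norm{(\mathrm{T}_I-\mathrm{T}_{\mathcal{P},I})L}_2^2,
\]
so it suffices to bound the two averages separately.

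The first average is the standard calculation: by the Efron-Stein decomposition $L=\sum_{|S|\leq d}L^{=S}$, the operator $\mathrm{T}_I$ annihilates $L^{=S}$ whenever $S\cap I\neq\emptyset$, and hence
\[
\E_I\norm{(I-\mathrm{T}_I)L}_2^2 = \sum_{|S|\leq d}\Pr_{I\subseteq_\eps[n]}\bigl[S\cap I\neq\emptyset\bigr]\,\norm{L^{=S}}_2^2 \leq d\eps\,\norm{L}_2^2.
\]

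The second average is the crux of the argument. For a fixed $I$, split the integration over $x$ according to whether $A_{\mathcal{P},I}(x)\geq\tau$ for a threshold $\tau>0$ to be chosen. On the bad set $\{A_{\mathcal{P},I}(x)<\tau\}$, the probability bound on $A_{\mathcal{P},I}$ proved earlier gives $\nu^{\otimes n}$-measure $O_{r,m}(\tau)$; combining Cauchy--Schwarz with the pointwise Jensen inequality $|\mathrm{T}_{\mathcal{P},I}L|^4\leq \mathrm{T}_{\mathcal{P},I}|L|^4$, the stationarity of $\nu^{\otimes n}$ under $\mathrm{T}_{\mathcal{P},I}$, and hypercontractivity $\norm{L}_4\lll_{d,m,\alpha}\norm{L}_2$, bounds the bad-set contribution by $O_{d,m,r,\alpha}(\sqrt{\tau}\,\norm{L}_2^2)$, and analogously for $\mathrm{T}_IL$. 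On the good set $\{A_{\mathcal{P},I}(x)\geq\tau\}$, observe that with $\mu_x = \mathrm{T}_Ix$ and $E_x=\{y:P(y)=P(x)\}$ the elementary identity
\[
\mathrm{T}_{\mathcal{P},I}L(x) - \mathrm{T}_IL(x) = \frac{\mathrm{Cov}_{\mu_x}\bigl(L,1_{E_x}\bigr)}{\mu_x(E_x)}
\]
holds, so Cauchy--Schwarz gives $|\mathrm{T}_{\mathcal{P},I}L(x)-\mathrm{T}_IL(x)|^2 \leq \mathrm{Var}_{\mu_x}(L)/\mu_x(E_x) \leq \mathrm{Var}_{\mu_x}(L)/\tau$. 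Integrating in $x$ and then averaging over $I$, the identity $\E_x\mathrm{Var}_{\mu_x}(L) = \norm{(I-\mathrm{T}_I)L}_2^2$ combined with the Efron-Stein bound from the previous step controls the good-set contribution by $d\eps\norm{L}_2^2/\tau$.

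Combining, $\norm{(I-\mathrm{T}_{\mathcal{P},1-\eps})L}_2^2 \lll_{d,m,r,\alpha}(\sqrt{\tau} + \eps/\tau + \eps)\norm{L}_2^2$; choosing $\tau = \eps^{2/3}$ balances the first two terms and produces the claimed $\eps^{1/3}\norm{L}_2^2$ bound. The main obstacle is the good-set analysis: recognizing the covariance representation of $\mathrm{T}_{\mathcal{P},I}L-\mathrm{T}_IL$ and observing that the $I$-averaged variance $\E_I\E_x\mathrm{Var}_{\mu_x}(L)$ is itself equal to $\E_I\norm{(I-\mathrm{T}_I)L}_2^2$, which has already been controlled, is what ultimately yields the $\eps^{1/3}$ exponent after optimization in $\tau$.
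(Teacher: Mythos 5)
Your proof is correct and follows essentially the same route as the paper's: both split by whether $A_{\mathcal{P},I}(x)$ exceeds a threshold $\tau$, bound the small-$A_{\mathcal{P},I}$ contribution via Cauchy--Schwarz, the measure bound on $\{A_{\mathcal{P},I}\leq\tau\}$, and hypercontractivity (yielding $\sqrt{\tau}$), bound the large-$A_{\mathcal{P},I}$ contribution by $1/\tau$ times a quantity controlled by degree decay (yielding $\eps/\tau$), and balance at $\tau=\eps^{2/3}$. Your presentation is slightly cleaner in packaging the good-set analysis through the covariance identity $\mathrm{T}_{\mathcal{P},I}L-\mathrm{T}_IL=\mathrm{Cov}_{\mu_x}(L,1_{E_x})/\mu_x(E_x)$ and the exact equality $\E_x\mathrm{Var}_{\mu_x}(L)=\norm{(I-\mathrm{T}_I)L}_2^2$, where the paper instead writes $\mathrm{T}_{\mathcal{P},I}$ as a ratio of conditional expectations and bounds $\E_{x,I,y\sim\mathrm{T}_Ix}|L(y)-L(x)|^2=2-2\inner{L}{\mathrm{T}_{1-\eps}L}\leq 2d\eps$ directly; the two computations are equivalent up to constants.
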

	\begin{proof}
		Normalizing, we assume that $\norm{L}_2 = 1$. The left-hand side is equal to
		\begin{align}\label{eq1}
			\Expect{x\sim\nu^{\otimes n}}{\card{\Expect{y\sim \mathrm{T}_{\mathcal{P}, 1-\eps} x}{L(x)-L(y)}}^2}
            &\leq\Expect{x\sim\nu^{\otimes n}, I\subseteq_{\eps}[n]}
            {\Expect{y\sim \mathrm{T}_{\mathcal{P},I} x}{\card{L(x)-L(y)}^2}}.
		\end{align}
		Let $\tau>0$ be a parameter to be chosen. We break right-hand side of~\eqref{eq1} to
		$(\rom{1})+(\rom{2})$ where:
		\[
		(\rom{1}) = \Expect{\substack{x\sim\nu^{\otimes n}\\I\subseteq_{\eps}[n]}}{\Expect{y\sim \mathrm{T}_{\mathcal{P}, I} x}{\card{L(x)-L(y)}^2}1_{A_{\mathcal{P},I}(x)<\tau}},
        \]
	\[
		(\rom{2})
        =\Expect{\substack{x\sim\nu^{\otimes n}\\ I\subseteq_{\eps}[n]}}{\Expect{y\sim \mathrm{T}_{\mathcal{P}, I} x}{\card{L(x)-L(y)}^2}1_{A_{\mathcal{P},I}(x)\geq\tau}}.
		\]
		For $(\rom{1})$, using Cauchy-Schwarz twice we have
		\begin{align*}
		(\rom{1})
            &\leq
            \sqrt{\Expect{\substack{x\sim\nu^{\otimes n}\\ I\subseteq_{\eps}[n]}}{
            \Expect{y\sim \mathrm{T}_{\mathcal{P},I} x}{\card{L(x)-L(y)}^4}}}
            \sqrt{\Expect{I\subseteq_{\eps}[n]}{\Prob{x\sim \nu^{\otimes n}}{A_{\mathcal{P},I}(x)\leq \tau}}}\\
            &\lll
            \sqrt{\Expect{x\sim\nu^{\otimes n}}{\card{L(x)}^4}
            +\Expect{\substack{x\sim\nu^{\otimes n}\\ y\sim \mathrm{T}_{\mathcal{P},1-\eps}x}}{\card{L(y)}^4}}
            \sqrt{
        \Expect{I\subseteq_{\eps}[n]}{\Prob{x\sim \nu^{\otimes n}}{A_{\mathcal{P},I}(x)\leq \tau}}}
            \\
		&\lll \norm{L}_4^2\sqrt{
        \Expect{I\subseteq_{\eps}[n]}{\Prob{x\sim \nu^{\otimes n}}{A_{\mathcal{P},I}(x)\leq \tau}}},
		\end{align*}
        where we used Fact~\ref{fact:noise_op_basic_prop1} in
        the last inequality.
		By hypercontractivity $\norm{L}_4^2\lll_{\alpha,d}\norm{L}_2^2 = 1$, and combining with Fact~\ref{fact:P_val_stable} we conclude that $(\rom{1})\lll_{\alpha,m,d}\sqrt{\tau}$.
		For $(\rom{2})$, we have that
		\begin{align*}
			(\rom{2})
			=
            \Expect{\substack{x\sim\nu^{\otimes n}\\I\subseteq_{\eps}[n]}}{
\frac{\Expect{y\sim\mathrm{T}_I x}{\card{L(y) - L(x)}^2 1_{P_i(y) = P_i(x)~\forall i}}}{\mathcal{A}_{\mathcal{P},I}(x)}1_{A_{\mathcal{P},I}(x)\geq \tau}},		
\end{align*}
		which is at most
		\[
		\frac{1}{\tau} \Expect{\substack{x\sim\nu^{\otimes n}\\I\subseteq_{\eps}[n]}}{\Expect{y\sim \mathrm{T}_{I} x}{\card{L(y) - L(x)}^2}}
		=\frac{1}{\tau}\left(2-2\inner{L}{\mathrm{T}_{1-\eps}L}\right)
		\leq
		\frac{2}{\tau}\left(1-(1-\eps)^d\right),
		\]
		which is at most $\frac{2d}{\tau} \eps$. Combining, we get 		$\eqref{eq1}\lll_{\alpha,m,d} \sqrt{\tau} + \frac{\eps}{\tau}$,
		and choosing $\tau = \eps^{2/3}$ finishes the proof.
	\end{proof}

	\subsection{An Approximate Formula for \texorpdfstring{$\mathrm{T}_{\nu,\mathcal{P}, 1-\eps}f$}{the noise operator}}
	Note that any function $P\in \spn(\mathcal{P})$ is an eigenvector of $T_{\mathcal{P}, 1-\eps}$ of eigenvalue $1$. Thus, by Claim~\ref{claim:pres_low_deg_functions}
	functions of the form $P\cdot L$ are nearly preserved under the operator $T_{\mathcal{P}, 1-\eps}$ (provided that the degree of $L$ is sufficiently small compared to $1/\eps$).
	In this section, we show that for any function $f$, $\mathrm{T}_{\mathcal{P}, 1-\eps}f$ may be approximated by a linear combination of such functions. This statement, which is Lemma~\ref{lem:approx_formula_noise_op_fancy} below, will be used in two contexts:
	\begin{enumerate}
		\item First, in Section~\ref{sec:reg_lemmas} we will use this approximate formula in the proof of our $\mu$-regularity lemma.
		\item Second, in Section~\ref{sec:mixed_invariance} we will use it to state and prove our mixed invariance principle.
	\end{enumerate}
	
	Towards getting such an approximation, we need more basic properties of the function $A_{\mathcal{P},I}(x)$. In Claim~\ref{fact:P_val_stable} we proved that its values are almost
	always positive numbers bounded away from $0$. In the next fact, we show that $A_{\mathcal{P},I}(x)$ can be written as a linear combination of functions of the form $P\cdot L$ for
	bounded functions $L$ which are $\bar{I}$-juntas.
	\begin{fact}\label{fact:A_fn_close_to_pl}
		For all $r,m\in\mathbb{N}$,
        $I\subseteq[n]$ and any collection of $r$
        product functions $\mathcal{P}\subseteq \mathcal{P}(\Sigma, G, \sigma)$ with $|\Sigma|=m$ and $|G|= O_m(1)$,
		we may write
		\[
		A_{\mathcal{P},I}(x) = \sum\limits_{P\in \spn(\mathcal{P})} P\cdot \tilde{L}_{I,P},
		\]
		where
		for every $P$,  $\tilde{L}_{I,P}$ is
        a $\bar{I}$-junta, $O_{m,r}(1)$ bounded function.
	\end{fact}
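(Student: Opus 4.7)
The plan is to Fourier-expand the product of indicators $\prod_{i} 1_{P_i(x')=P_i(x)}$ on the cyclic group of values each $P_i$ takes, and then take expectation over $x'\sim\mathrm{T}_I x$. Since each $P_i\in\mathcal{P}(\Sigma,G,\sigma)$ is a product of a root of unity $\theta_i$ of order at most $|G|$ and of values of characters of $G$ (which are $|G|$-th roots of unity, since the exponent $e$ of $G$ divides $|G|$), the image of $P_i$ is contained in the cyclic group $C_N\subseteq\mathbb{C}^*$ of $N$-th roots of unity with $N=|G|=O_m(1)$. The standard Fourier expansion of the equality indicator on $C_N$ gives
\[
1_{P_i(x')=P_i(x)} \;=\; \frac{1}{N}\sum_{k=0}^{N-1} P_i(x')^{k}\,\overline{P_i(x)^{k}}.
\]

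Next I would multiply these identities over $i=1,\ldots,r$ and expand, writing $P_{\vec k}:=\prod_{i=1}^{r} P_i^{k_i}$ for $\vec k\in\{0,\ldots,N-1\}^r$. This yields $\prod_i 1_{P_i(x')=P_i(x)} = N^{-r}\sum_{\vec k} P_{\vec k}(x')\,\overline{P_{\vec k}(x)}$. Taking expectation over $x'\sim\mathrm{T}_I x$ — which freezes $x'_{\bar I}=x_{\bar I}$ and re-samples $x'_I$ independently from $\nu$ — then produces
\[
A_{\mathcal{P},I}(x) \;=\; \frac{1}{N^r}\sum_{\vec k}\overline{P_{\vec k}(x)}\,(\mathrm{T}_I P_{\vec k})(x).
\]
Each $\mathrm{T}_I P_{\vec k}$ is an $\bar I$-junta by construction of $\mathrm{T}_I$, and is $1$-bounded since $|P_{\vec k}|\equiv 1$.

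The final step is to rewrite each conjugate $\overline{P_{\vec k}}$ as a unit-modulus scalar times an element of $\spn(\mathcal{P})$. This is possible because the class of product functions is closed under conjugation: for any $u\in\widehat{G}$ one has $\overline{u}=u^{-1}=u^{e-1}$, and matching the character factors of $\overline{P_{\vec k}}$ to those of $\prod_i P_i^{\alpha_i}$ with $\alpha_i=(-k_i)\bmod e\in\{0,\ldots,|G|\}$ yields $\overline{P_{\vec k}}=\lambda_{\vec k}\,Q_{\vec k}$, where $|\lambda_{\vec k}|=1$ and $Q_{\vec k}=\prod_i P_i^{\alpha_i}\in\spn(\mathcal{P})$. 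Collecting the terms in the displayed formula according to the value of $Q_{\vec k}\in\spn(\mathcal{P})$ gives the claimed identity with
\[
\tilde L_{I,P}(x) \;=\; \frac{1}{N^r}\sum_{\vec k\,:\,Q_{\vec k}=P}\lambda_{\vec k}\,(\mathrm{T}_I P_{\vec k})(x),
\]
an $\bar I$-junta bounded in absolute value by $1=O_{r,m}(1)$.

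The main bookkeeping is in this last step, where the scalar $\lambda_{\vec k}$ arising because $\overline{P_{\vec k}}$ is not literally in $\spn(\mathcal{P})$ on the nose must be tracked and absorbed into $\tilde L_{I,P}$; the genuinely combinatorial content is the Fourier expansion of the equality indicator on $C_N$ in the first step.
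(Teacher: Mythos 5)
Your proof is correct and it takes a genuinely different route from the paper's. Where you expand the equality indicator via character orthogonality on a cyclic group of roots of unity, the paper instead constructs a bivariate polynomial $Q(z_1,z_2)$ by Lagrange interpolation on the finite set $\mathcal{I}=\bigcup_i \mathrm{Image}(P_i)$, writes $A_{\mathcal{P},I}(x)=\Expect{y\sim \mathrm{T}_I x}{\prod_i Q(P_i(y),P_i(x))}$, expands $Q$ into monomials, and regroups. The two approaches are morally the same decomposition, but the Lagrange route works for an arbitrary finite set of values (and so does not need the image to sit in a cyclic group) at the cost of opaque $O_{m,r}(1)$ coefficients, while your Fourier route is tighter (you get the clean bound $\|\tilde L_{I,P}\|_\infty\leq 1$) and makes the reindexing into $\spn(\mathcal{P})$ completely explicit via the tracked phase $\lambda_{\vec k}$, something the paper glosses over when it writes $P_1(x)^{k_1}\cdots P_r(x)^{k_r}$ and implicitly treats it as an element of $\spn(\mathcal{P})$ even when $k_i$ may exceed $|G|$.

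One small imprecision to tidy up: you justify the choice $N=|G|$ by asserting the image of each $P_i$ lies in $C_N$, but this is not quite right -- by Definition~\ref{def:prod_fn} the scalar $\theta_i$ is a root of unity of order \emph{at most} $|G|$, not necessarily dividing $|G|$, so $\mathrm{Image}(P_i)\not\subseteq C_{|G|}$ in general. What saves you is that the orthogonality identity $1_{\alpha=\beta}=\frac1N\sum_{k=0}^{N-1}(\alpha\overline\beta)^k$ only needs the \emph{ratio} $P_i(x')\overline{P_i(x)}=\prod_j u_{ij}(\sigma(x'_j))\,\overline{u_{ij}(\sigma(x_j))}$ to be an $N$-th root of unity, and there the phases $\theta_i$ cancel, leaving a product of character values, i.e., an $e$-th root of unity for $e\mid|G|$. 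So $N=|G|$ does work, just for a slightly different reason than stated. Alternatively you could take $N=\mathrm{lcm}(1,\dots,|G|)=O_m(1)$ and avoid the issue entirely. With that clarification the argument is fully rigorous.
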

	\begin{proof}
		Let $\mathcal{P} = \{ P_1, P_2, \ldots, P_r\}$. Note that as all of the functions $P_i$ get the values of characters over an Abelian group of size at most $O_m(1)$, we have that in
		the set $\mathcal{I} = \bigcup_{i}{\sf Image}(P_i)$ the distance between any two distinct points is $\Omega_{m}(1)$. Thus, we may find a bi-variate
		polynomial $Q(z_1,z_2)$ of degree at most $O_{m}(1)$ and coefficients that are at most $O_{m}(1)$ in absolute value
		such that $Q(z_1,z_2) = 1$ if $z_1 = z_2\in \mathcal{I}$, $Q(z_1,z_2) = 0$ if $z_1, z_2\in \mathcal{I}$ are distinct.\footnote{A polynomial $Q$ satisfying these properties
			may be constructed via Lagrange interpolation, for example.}Using $Q$, we may write
		\[
		A_{\mathcal{P},I}(x) = \Expect{y\sim \mathrm{T}_{I}x}{\prod\limits_{i=1}^{r}Q(P_i(y), P_i(x))}.
		\]
		Write $Q(z_1,z_2) = \sum\limits_{j,k=0}^{d} a_{j,k}z_1^jz_2^k$ for $d = O_{m}(1)$ and plug it in above to get that
		\begin{align*}
			A_{\mathcal{P},I}(x)
			&=
			\sum\limits_{j_1,\ldots,j_r,k_1,\ldots,k_r=0}^{d} a_{j_1,k_1}\cdots a_{j_r,k_r} P_1(x)^{k_1}\cdots P_r(x)^{k_r}
			\Expect{y\sim \mathrm{T}_{I}x}{P_1(y)^{j_1}\cdots P_r(y)^{j_r}}\\
			&=
			\sum\limits_{k_1,\ldots,k_r=0}^{d}  P_1(x)^{k_1}\cdots P_r(x)^{k_r}
			\mathrm{T}_{I}\left(
            \sum\limits_{j_1,\ldots,j_r=1}^d a_{j_1,k_1}\cdots a_{j_r,k_r}P_1^{j_1}\cdots P_r^{j_r}\right)(x).
		\end{align*}
	\end{proof}
	
	We will also want to show that the function $1/A_{\mathcal{P},I}(x)$ is close to being a linear combination of functions of the type $P\cdot L_I$, but
    some care is required for $x$ such that $A_{\mathcal{P},I}(x)$ is close to $0$.
	The following fact asserts that such approximation holds when we ignore this type of inputs.
	
	\begin{fact}\label{fact:approx_1_over}
		Consider the setup from Fact~\ref{fact:A_fn_close_to_pl}. For all $m, r\in\mathbb{N}$, $\tau,\xi>0$ and $I\subseteq [n]$, there is $A_I''\colon \Sigma^n\to\mathbb{C}$ with the following properties:
		\begin{enumerate}
			\item $A_I''$ may be written as $A_I''(x) = \sum\limits_{P\in \spn(\mathcal{P})} P\cdot \tilde{L}_{I,P}$
			where for all $P$, $\tilde{L}_{I,P}$ is a $\bar{I}$-junta which is $O_{m,\xi,\tau, r}(1)$-bounded.
			\item $\Expect{x}{\card{\frac{1}{A_{\mathcal{P},I}(x)} - A_I''(x)}^2 1_{A_{\mathcal{P},I}(x)\geq \tau}}\leq \xi$.
		\end{enumerate}
	\end{fact}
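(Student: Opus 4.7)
The plan is to approximate $1/t$ on $[\tau,1]$ by a polynomial and then \emph{compose} this polynomial with the expression for $A_{\mathcal{P},I}$ supplied by Fact~\ref{fact:A_fn_close_to_pl}. Concretely, by Weierstrass approximation (or an explicit Chebyshev construction on $[\tau,1]$), there is a polynomial $R(t)=\sum_{k=0}^{d} c_k t^k$ of degree $d=d(\tau,\xi)$ with $\max_k |c_k|\leq C(\tau,\xi)$ such that $|R(t)-1/t|\leq \sqrt{\xi}$ for every $t\in[\tau,1]$. Set
\[
A_I''(x)\;:=\;R(A_{\mathcal{P},I}(x)).
\]
Since $A_{\mathcal{P},I}(x)\in[0,1]$ always, property $(2)$ follows immediately: on the event $\{A_{\mathcal{P},I}(x)\geq \tau\}$ we have $\card{1/A_{\mathcal{P},I}(x)-A_I''(x)}^2\leq \xi$, so the expectation in question is bounded by $\xi$.

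For property $(1)$, apply Fact~\ref{fact:A_fn_close_to_pl} to write $A_{\mathcal{P},I}=\sum_{P\in\spn(\mathcal{P})} P\cdot \tilde{L}_{I,P}$, with each $\tilde{L}_{I,P}$ an $\bar{I}$-junta bounded by some $C_0=O_{m,r}(1)$. Raising to the $k$-th power and expanding multinomially,
\[
A_{\mathcal{P},I}(x)^{k}
=\sum_{(P_1,\ldots,P_k)\in \spn(\mathcal{P})^k}\Bigl(\prod_{j=1}^{k}P_j(x)\Bigr)\Bigl(\prod_{j=1}^{k}\tilde{L}_{I,P_j}(x)\Bigr).
\]
Because $\spn(\mathcal{P})$ is closed under pointwise multiplication (products of characters of $G$ are characters, and $P^{|G|}\equiv 1$), each $\prod_j P_j$ is again some $P\in \spn(\mathcal{P})$. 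Grouping the contributions by this $P$, across the sum $A_I''(x)=\sum_{k=0}^d c_k A_{\mathcal{P},I}(x)^k$, gives exactly the desired presentation
\[
A_I''(x)=\sum_{P\in \spn(\mathcal{P})} P(x)\cdot \tilde{L}_{I,P}^{\,\prime}(x),
\]
where each $\tilde{L}_{I,P}^{\,\prime}$ is a finite sum of products of $\bar{I}$-juntas, hence itself an $\bar{I}$-junta.

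All that remains is the boundedness bookkeeping. The number of tuples $(k,P_1,\ldots,P_k)$ contributing to a fixed $P$ is at most $\sum_{k=0}^{d}|\spn(\mathcal{P})|^{k}$, and $|\spn(\mathcal{P})|\leq (\card{G}+1)^r=O_{m,r}(1)$, so this count is $O_{m,r,\tau,\xi}(1)$. Each summand has sup-norm at most $\card{c_k}\cdot C_0^{k}\leq C(\tau,\xi)C_0^{d}=O_{m,r,\tau,\xi}(1)$, so each $\tilde{L}_{I,P}^{\,\prime}$ is $O_{m,r,\tau,\xi}(1)$-bounded as required. I do not anticipate any real obstacle; the only subtlety is to choose the approximation error $\sqrt{\xi}$ rather than $\xi$ (so that squaring in property $(2)$ gives the stated $\xi$), and to note that $R$ being a polynomial (as opposed to a rational function truncating at $\tau$) is what lets us preserve the $\spn(\mathcal{P})$-junta structure in passage to $A_I''$.
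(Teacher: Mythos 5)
Your proof is correct and follows essentially the same route as the paper: the paper approximates $1/t$ on $[\tau,1]$ by the truncated Neumann series $\sum_{k=0}^{T}(1-t)^k$ with $T=\log(2/\xi)/\tau$, which is just an explicit choice of the polynomial $R$ you invoke via Weierstrass/Chebyshev, and then expands and regroups exactly as you do. The only cosmetic difference is that the geometric-series truncation hands you the degree and coefficient bounds for free, whereas you need to note (as you do) that an explicit construction rather than the bare Weierstrass existence statement is required to get coefficient control.
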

	\begin{proof}
		Fix $x$ such that $A_{\mathcal{P},I}(x)\geq \tau$, and note that clearly $A_{\mathcal{P},I}(x)\leq 1$. Thus, we may write
		\[
		\frac{1}{A_{\mathcal{P},I}(x)} = \frac{1}{1-(1-A_{\mathcal{P},I}(x))}
		= \sum\limits_{k=0}^{\infty}(1-A_{\mathcal{P},I}(x))^k
		=\sum\limits_{k=0}^{T}(1-A_{\mathcal{P},I}(x))^k + {\sf err}(x),
		\]
        where $\card{{\sf err}(x)}\leq e^{-\tau T}$.
	We take $T = \log(2/\xi)/ \tau$ and define $A_I''(x) = \sum\limits_{k=0}^{T}(1-A_{\mathcal{P},I}(x))^k$; then the second item is clear. For the first item, we use the formula for $A_{\mathcal{P},I}$ from Fact~\ref{fact:A_fn_close_to_pl}, expand $A_I''$ and re-group terms to get that $A_I''$ may be written
		as a linear combination of functions of the form $P\cdot \tilde{L}_{I,P,1}\cdots \tilde{L}_{I,P,T+1}$ where $P\in\spn(\mathcal{P})$ and each
		$\tilde{L}_{I,P,j}$ is a $\bar{I}$-junta which is $O_{m,r}(1)$-bounded. In particular, we may write
		\[
		A_I''(x) = \sum\limits_{P\in \spn(\mathcal{P})}P\tilde{L}_{I,P}
		\]
		where $\tilde{L}_{I,P}$ is a $\bar{I}$-junta which is $O_{m,T,r}(1) = O_{m,\xi,\tau,r}(1)$-bounded.
	\end{proof}

       We next give a similar approximate
       formula for $\mathrm{T}_{\nu,\mathcal{P}, I}f$.
       \begin{lemma}\label{lem:approx_formula_noise_op_fancy_I}
		Fix $m,r\in\mathbb{N}$, $\alpha, \xi>0$, suppose that $\Sigma$ has size at most $m$, that $\nu$ is a distribution over $\Sigma$
		in which the probability of each atom is at least $\alpha$, and
		$\mathcal{P} = \set{P_1,\ldots,P_r}$ is a collection of product functions from $\mathcal{P}(\Sigma, G, \sigma)$
		where $G$ is a group of size at most $m$.
		Then for any $1$-bounded function $f\colon \Sigma^n\to\mathbb{C}$ and $I\subseteq [n]$, there exists a function
		$f_I'\colon\Sigma^n\to\mathbb{C}$ such that:
		\begin{enumerate}
			\item The function $f'_I$ approximates $\mathrm{T}_{\nu,\mathcal{P}, I}f$: $\norm{\mathrm{T}_{\nu,\mathcal{P}, I}f - f_I'}_2\leq \xi$.
			\item The function $f'_I$ can be written as
			$f_I'(x) = \sum\limits_{P\in \spn(\mathcal{P})}P(x)\cdot L_{I,P}(x)$,
			where for all $P$, $L_{I,P}$ is a $\bar{I}$-junta and is $O_{m,r,\alpha,\xi}(1)$-bounded.
		\end{enumerate}
	\end{lemma}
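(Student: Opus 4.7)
\medskip

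\noindent\textbf{Proof proposal.} The plan is to write $\mathrm{T}_{\mathcal{P},I}f(x) = N(x)/A_{\mathcal{P},I}(x)$ for the numerator
\[
N(x) \;=\; \Expect{y\sim \mathrm{T}_I x}{f(y)\prod_{i=1}^{r}\mathbf{1}_{P_i(y)=P_i(x)}},
\]
and then to approximate $N$ and $1/A_{\mathcal{P},I}$ separately by linear combinations of $P\cdot(\bar I\text{-junta})$, multiply, and regroup. The candidate approximating function will be $f_I'(x) := A_I''(x)\cdot N(x)$, where $A_I''$ is the approximant of $1/A_{\mathcal{P},I}$ supplied by Fact~\ref{fact:approx_1_over} for parameters $\tau,\xi_1>0$ to be chosen.

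To put $N$ into the desired form, I will use the same bivariate polynomial $Q(z_1,z_2)$ from the proof of Fact~\ref{fact:A_fn_close_to_pl}, with $Q(z,z)=1$ and $Q(z,z')=0$ for $z\neq z'$ in the image of any $P_i$. Writing $\prod_i\mathbf{1}_{P_i(y)=P_i(x)}=\prod_i Q(P_i(y),P_i(x))$, expanding $Q(z_1,z_2)=\sum_{j,k}a_{j,k}z_1^jz_2^k$, and pulling the $P_i(x)^{k_i}$ factors out of the expectation yields
\[
N(x) \;=\; \sum_{\vec k} \Bigl(\prod_i P_i(x)^{k_i}\Bigr)\underbrace{\sum_{\vec j}\Bigl(\prod_i a_{j_i,k_i}\Bigr)\Expect{y\sim\mathrm{T}_I x}{f(y)\prod_i P_i(y)^{j_i}}}_{=:\ \hat L_{I,\vec k}(x)}.
\]
Since $\mathrm{T}_I$ freezes the $\bar I$-coordinates, each $\hat L_{I,\vec k}$ is an $\bar I$-junta that is $O_{m,r}(1)$-bounded, while $\prod_i P_i^{k_i}\in\spn(\mathcal P)$. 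Combining with the representation $A_I''=\sum_{P\in\spn(\mathcal P)} P\cdot\tilde L_{I,P}$ from Fact~\ref{fact:approx_1_over} and using that $\spn(\mathcal P)$ is closed under multiplication modulo the orders of characters, the product $f_I'=A_I''\cdot N$ regroups into $\sum_{P\in\spn(\mathcal P)} P(x)\, L_{I,P}(x)$, where each $L_{I,P}$ is a finite sum of products of $\bar I$-juntas and hence an $O_{m,r,\xi_1,\tau}(1)$-bounded $\bar I$-junta. This gives the structural conclusion (2).

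For the $L_2$ error I will use $\mathrm{T}_{\mathcal{P},I}f - f_I' = N\cdot(1/A_{\mathcal{P},I}-A_I'')$ and split on whether $A_{\mathcal{P},I}(x)\geq\tau$ or not. On $\{A_{\mathcal{P},I}\geq\tau\}$ we have $|N|\leq A_{\mathcal{P},I}\leq 1$, so the contribution is at most $\Expect{}{|1/A_{\mathcal{P},I}-A_I''|^2\,\mathbf{1}_{A_{\mathcal{P},I}\geq\tau}}\leq \xi_1$ by Fact~\ref{fact:approx_1_over}. On the complementary bad set, both $|N/A_{\mathcal{P},I}|\leq 1$ and $|A_I''\cdot N|\leq \tau\cdot\|A_I''\|_\infty$; since $A_I''$ is a sum of $T+1=O(\log(1/\xi_1)/\tau)$ terms each pointwise in $[0,1]$, one gets $\|A_I''\|_\infty\leq T+1$, and therefore $|N/A_{\mathcal{P},I}-A_I''N|^2\lll \log^2(1/\xi_1)$. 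Combined with $\Prob{}{A_{\mathcal{P},I}<\tau}\lll_{r,m}\tau$ from Fact~\ref{fact:P_val_stable}, the bad-set contribution is $\lll_{r,m}\tau\log^2(1/\xi_1)$. Taking $\xi_1=\xi^2/2$ and then $\tau$ small enough as a function of $m,r,\xi$ completes the error estimate. The main obstacle is precisely this bad-set bookkeeping: the natural approximation of $1/A_{\mathcal{P},I}$ deteriorates exactly where the denominator degenerates, and one must observe that the numerator $N$ is simultaneously small there (because $|N|\leq A_{\mathcal{P},I}$) in order to compensate; once this is noticed, the quantitative choice of $\tau$ is routine.
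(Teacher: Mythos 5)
Your proof is correct, and it takes a genuinely simpler route than the paper's. The paper's proof chains through four intermediate approximants $F_1,\ldots,F_4$: it first introduces a hard cutoff $\mathbf{1}_{A_{\mathcal{P},I}\geq\tau'}$ (with a careful dyadic choice of $\tau'$ via pigeonhole so that the annulus $A_{\mathcal{P},I}\in[\tau+j\zeta,\tau+(j+1)\zeta]$ carries negligible weight), then replaces the hard cutoff by a smooth transition $h$, and finally replaces $h$ by a Weierstrass polynomial $W(A_{\mathcal{P},I})$ so that the cutoff factor can itself be absorbed into the $\sum_P P\cdot L_{I,P}$ form via Fact~\ref{fact:A_fn_close_to_pl}. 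You bypass this entirely by taking $f_I' = A_I''\cdot N$ directly and observing the pointwise inequality $\card{N(x)}\leq A_{\mathcal{P},I}(x)$; this single observation shows that the numerator compensates for the denominator wherever it degenerates, so that (i) on $\{A_{\mathcal{P},I}\geq\tau\}$ the factor $\card{N}\leq 1$ lets Fact~\ref{fact:approx_1_over} control the error, and (ii) on $\{A_{\mathcal{P},I}<\tau\}$ both $\card{N/A_{\mathcal{P},I}}\leq 1$ and $\card{A_I''N}\leq \card{A_I''}\cdot A_{\mathcal{P},I}\leq 1$ are $O(1)$ (your weaker bound $\tau\cdot\|A_I''\|_\infty\lll\log(1/\xi_1)$ also suffices), while the measure of the bad set is $O_{m,r}(\tau)$ by Fact~\ref{fact:P_val_stable}. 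Both arguments produce the structural form by the same computation (expanding the indicator $\prod_i\mathbf{1}_{P_i(y)=P_i(x)}$ through the Lagrange-interpolation polynomial $Q$ and regrouping in $\spn(\mathcal{P})$), so the substantive difference is only in how the cutoff/error is handled, and your approach, avoiding the smoothing and Weierstrass steps, is cleaner. One small caveat: you invoke the explicit construction $A_I''=\sum_{k=0}^T(1-A_{\mathcal{P},I})^k$ from inside the proof of Fact~\ref{fact:approx_1_over}, rather than just its stated conclusion; this is fine, but if one wants a black-box argument, the bound $\|A_I''\|_\infty\leq\card{\spn(\mathcal{P})}\cdot O_{m,\xi_1,\tau,r}(1)$ follows from the stated $O_{m,\xi_1,\tau,r}(1)$-boundedness of the $\tilde L_{I,P}$'s and serves just as well.
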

	\begin{proof}
		Take the parameters
		\[
		0 < \zeta \ll\tau \ll r^{-1}, m^{-1}, \alpha, \xi\leq 1.
		\]
		Let $\tau'\in [\tau,\tau+\sqrt{\zeta}]$ be a parameter to be determined, define $F_1(x) = \mathrm{T}_{\mathcal{P}, I}f(x) 1_{A_{\mathcal{P},I}(x)\geq \tau'}$ and note that
		\begin{equation}\label{eq4}
			\norm{\mathrm{T}_{\mathcal{P}, I}f - F_1}_2
			= \norm{\mathrm{T}_{\mathcal{P}, I}f(x) 1_{A_{\mathcal{P},I}(x)\leq \tau}}_2
			\leq \norm{1_{A_{\mathcal{P},I}(x)\leq \tau'}}_2
			\lll_{m,r}\sqrt{\tau'}
			\leq \tau^{1/8},
		\end{equation}
		where we used Fact~\ref{fact:P_val_stable}. Next, expanding the definition we get that
		\begin{align*}
			\mathrm{T}_{\mathcal{P}, I} f(x)
			&=
            \frac{1}{A_{\mathcal{P},I}(x)}
			\underbrace{\Expect{\substack{x'\sim\mathrm{T}_Ix}}{f(x') 1_{P_i(x') = P_i(x)~\forall i}}}_{(\rom{1})}.
		\end{align*}
		Take $A_I''(x)$ from Fact~\ref{fact:approx_1_over} with the parameter $\xi$ therein being $\tau$ in the current setting. Then
		\[
        \norm{A_I''1_{A_{\mathcal{P},I}\geq \tau'} - \frac{1}{A_{\mathcal{P},I}}1_{A_{\mathcal{P},I}\geq \tau'}}_2\leq \tau.
        \]
        Thus, as $(\rom{1})(x)$ is
		a $1$-bounded function of $x$ we conclude that for $F_2(x) =  (\rom{1})(x) A_I''(x) 1_{A_{\mathcal{P},I}(x)\geq \tau'}$
		\begin{equation}\label{eq5}
			\norm{F_1 - F_2}_2\leq \tau.
		\end{equation}
		Next, note that $\Expect{x\sim\nu^{\otimes n}}{\card{(\rom{1})(x)}^2\card{A''_I(x)}^2}\leq O_{\tau}(1)$, therefore there exists $1\leq j\leq 1/\sqrt{\zeta}$ such that
		\[
		\Expect{x}{(\rom{1})(x)^2 A_I''(x)^2 1_{A_{\mathcal{P},I}(x)\in [\tau + j\zeta, \tau + (j+1)\zeta]}}\leq O_{\tau}(\sqrt{\zeta})\leq \zeta^{1/4},
		\]
		and we fix such $j$. Define the continuous function $h\colon [0,1]\to [0,1]$ so that $h(t) = 0$ for $t\leq \tau + j\zeta$,
		$h(t) = 1$ for $t\geq \tau + (j+1)\zeta$ and we linearly interpolate between the two ranges. Also, take
		$\tau' = \tau + j\zeta$. Defining $F_3(x) = (\rom{1})(x) A_I''(x) h(A_{\mathcal{P},I}(x))$,
		we get that
		\begin{equation}\label{eq6}
			\norm{F_2 - F_3}_2
			\leq
			\sqrt{\Expect{x}{(\rom{1})(x)^2 A_I''(x)^2 1_{A_{\mathcal{P},I}(x)\in [\tau + j\zeta, \tau + (j+1)\zeta]}}}
			\leq \zeta^{1/4}.
		\end{equation}
		
		At this point, our approximating function $F_3$ almost fits the form as needed for $f'_I$, but we still need to study the structure of $(\rom{1})(x)$ and of $h(A_{\mathcal{P},I}(x))$ further.		
		For $h(A_{\mathcal{P},I}(x))$, by Weierstrass approximation theorem we may find a polynomial $W\colon [0,1]\to \mathbb{R}$
		such that $\card{W(t) - h(t)}\leq \zeta$ for all $t\in [0,1]$. We define $F_4(x) = (\rom{1})(x) A_I''(x) W(A_{\mathcal{P},I}(x))$ and get
		that
		\begin{equation}\label{eq7}
			\norm{F_3 - F_4}_2
			\leq
			\zeta \sqrt{\Expect{x}{(\rom{1})(x)^2 A_I''(x)^2}}
			\leq
			\zeta \sqrt{\Expect{x}{A_I''(x)^2}}
			\leq O_{\tau}(\zeta)
			\leq \sqrt{\zeta}.
		\end{equation}
		
		\paragraph{The Structure of $(\rom{1})$.}
		As in the proof of Fact~\ref{fact:A_fn_close_to_pl}, take a bi-variate polynomial $Q(z_1,z_2)$ of degree $O_{m,r}(1)$ whose coefficients are all bounded by $O_{m,r}(1)$
		in absolute value such that $Q(z_1,z_2) = 1_{z_1 = z_2}$ for every $z_1,z_2\in \bigcup_{i=1}^{r}{\sf Image}(P_i)$. Then
		\[
		(\rom{1})(x)
		=\Expect{x'\sim \mathrm{T}_{I} x}{f(x') \prod\limits_{i=1}^{r}Q(P_i(x'), P_i(x))}.
		\]
		Writing $Q(z_1,z_2) = \sum\limits_{i,j=0}^{d} a_{i,j} z_1^iz_2^j$ where $\card{a_{i,j}} = O_{m,r}(1)$, we get that
		\begin{align*}
			(\rom{1})(x)
			&=
			\sum\limits_{j_1,\ldots,j_r=0}^{d}P_1(x)^{j_1}\cdots P_r(x)^{j_r}
			\Expect{x'\sim\mathrm{T}_{I}x}{\sum\limits_{i_1,\ldots,i_r=0}^{d} a_{i_1,j_1}\cdots a_{i_r,j_r}
				f(x') P_1(x')^{i_1}\cdots P_r(x')^{i_r}}\\
			&=\sum\limits_{j_1,\ldots,j_r=0}^{d}P_1(x)^{j_1}\cdots P_r(x)^{j_r}  F_{j_1,\ldots,j_r}(x),
		\end{align*}
		where
        \begin{align*}
        F_{j_1,\ldots,j_r}(x)
        &= \mathrm{T}_{I}\left(\sum\limits_{i_1,\ldots,i_r=0}^{d} a_{i_1,j_1}\cdots a_{i_r,j_r} fP_1^{i_1}\cdots P_r^{i_r}\right)(x)\\
        &=\Expect{x'\sim \mathrm{T}_I x}{
        \sum\limits_{i_1,\ldots,i_r=0}^{d} a_{i_1,j_1}\cdots a_{i_r,j_r} f(x') P_1^{i_1}(x')\cdots P_r^{i_r}(x')
        }.
        \end{align*}
        Note each $F_{j_1,\ldots,j_r}$ is $O_{m,r}(1)$-bounded by the triangle inequality, and is also a $\bar{I}$-junta.
		
		\paragraph{Combining the Structures.}
		Write $W(t) = \sum\limits_{i=0}^{q} a_i t^i$, where $q, \card{a_i}\leq O_{\zeta}(1)$ for all $i$, so that
		\[
		W(A_{\mathcal{P},I}(x)) = \sum\limits_{i=0}^{q} a_i A_{\mathcal{P},I}(x)^i.
		\]
		Plugging in the formula for $A_{\mathcal{P},I}(x)$ from Fact~\ref{fact:A_fn_close_to_pl}, we get that
		$W(A_{\mathcal{P},I}(x))$ may be written as a linear combination of at most $O_q(1)$ may terms each
		of the form $P\cdot \tilde{L}_{I,P,1}(x)\cdots \tilde{L}_{I,P,q}(x)$, where
		for all $i$, $\tilde{L}_{I,P,i}$ is a $\bar{I}$ junta which is $O_{m,r}(1)$-bounded. Re-arranging gives
		\[
		W(A_{\mathcal{P}}(x)) = \sum\limits_{P\in \spn(\mathcal{P})}P \cdot \tilde{L}_{I,P},
		\]
		where $\tilde{L}_{I,P}$ is a $\bar{I}$-junta which is $O_{\zeta}(1)$-bounded. By choice of $A_I''$ from Fact~\ref{fact:approx_1_over}, we may write
		\[
		A_I''(x) = \sum\limits_{P\in \spn(\mathcal{P})} P\cdot \tilde{L}'_{I,P}
		\]
		where $\tilde{L}'_{I,P}$ is a $\bar{I}$-junta which is $O_{\tau}(1)$-bounded.
		
		Combining the formulas for $(\rom{1})$, $W(A_{\mathcal{P}}(x))$ and $A''(x)$ and multiplying out, we get that
		\[
		F_4(x) = \sum\limits_{P\in \spn(\mathcal{P})} P\cdot \tilde{L}_{I,P}'',
		\]
		where each $\tilde{L}_{P}''$ is a linear combination of at most $O_{m,r}(1)$ products of at most $3$ of the functions
		$F_{j_1,\ldots,j_r}$, $\tilde{L}_{I,Q'}$ and $\tilde{L}'_{I,Q''}$. Thus, $\tilde{L}_{P}''$ is a $\bar{I}$-junta which is $O_{\zeta}(1)$-bounded.
        Combining~\eqref{eq4},~\eqref{eq5},~\eqref{eq6} and~\eqref{eq7} gives the second item of the lemma.
	\end{proof}

	\begin{lemma}\label{lem:approx_formula_noise_op_fancy}
		For all $m,r\in\mathbb{N}$, $\alpha, \eps>0$ and $\xi>0$ there exist $C, D\in\mathbb{N}$ such that the following holds.
		Suppose that $\Sigma$ has size at most $m$, $\nu$ is a distribution over $\Sigma$
		in which the probability of each atom is at least $\alpha$, and
		$\mathcal{P} = \set{P_1,\ldots,P_r}$ is a collection of product functions from $\mathcal{P}(\Sigma, G, \sigma)$
		where $G$ is a group of size at most $m$.
		Then for any $1$-bounded function $f\colon \Sigma^n\to\mathbb{C}$, there exists a function
		$f'\colon\Sigma^n\to\mathbb{C}$ such that:
		\begin{enumerate}
			\item The function $f'$ approximates $\mathrm{T}_{\nu,\mathcal{P}, 1-\eps}f$: $\norm{\mathrm{T}_{\nu,\mathcal{P}, 1-\eps}f - f'}_2\leq \xi$.
			\item The function $f'$ can be written as
			\[
			f'(x) = \sum\limits_{P\in \spn(\mathcal{P})}P(x)\cdot L_{P}(x)
			\]
			where for all $P$, ${\sf deg}(L_P)\leq D$ and $\norm{L_P}_2\leq C$.
		\end{enumerate}
	\end{lemma}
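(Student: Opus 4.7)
\medskip

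\noindent\textbf{Proof plan.} The natural strategy is to integrate the pointwise approximation from Lemma~\ref{lem:approx_formula_noise_op_fancy_I} over the random set $I \subseteq_{\eps} [n]$ and then argue that the resulting averages over $I$ of the junta coefficients are (up to a small error) low-degree functions. First I would pick an auxiliary parameter $\xi' \ll \xi$ and apply Lemma~\ref{lem:approx_formula_noise_op_fancy_I} with accuracy $\xi'$ for every fixed $I \subseteq [n]$, obtaining an approximation
\[
f_I'(x) = \sum_{P \in \spn(\mathcal{P})} P(x)\, L_{I,P}(x),
\]
where each $L_{I,P}$ is an $\bar{I}$-junta bounded by some constant $B = B(m,r,\alpha,\xi')$. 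Since $\mathrm{T}_{\mathcal{P}, 1-\eps}f = \Expect{I \subseteq_{\eps}[n]}{\mathrm{T}_{\mathcal{P},I}f}$, averaging the approximation over $I$ and applying the triangle inequality yields
\[
\Big\| \mathrm{T}_{\mathcal{P}, 1-\eps}f - \sum_{P \in \spn(\mathcal{P})} P \cdot L_P \Big\|_2 \leq \xi',
\qquad \text{where } L_P(x) = \Expect{I \subseteq_{\eps}[n]}{L_{I,P}(x)}.
\]

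Next I would argue that each $L_P$ is essentially low-degree. Since $L_{I,P}$ is an $\bar{I}$-junta that is $B$-bounded, the function $L_P/B$ is an average of $\bar{I}$-juntas of $2$-norm at most $1$. Hence Lemma~\ref{lem:avg_of_juntas_low_deg} gives $W^{\geq D}[L_P/B] \leq (1-\eps)^D$ for every $D \geq 1$, and so $\|L_P - L_P^{\leq D-1}\|_2 \leq B (1-\eps)^{D/2}$. Setting $f'(x) = \sum_{P \in \spn(\mathcal{P})} P(x) \cdot L_P^{\leq D-1}(x)$, and using that each product function $P$ is $1$-bounded pointwise together with $|\spn(\mathcal{P})| \leq |G|^r = O_{m,r}(1)$, I obtain
\[
\Big\| \sum_{P \in \spn(\mathcal{P})} P \cdot L_P - f' \Big\|_2 \leq |\spn(\mathcal{P})| \cdot B \cdot (1-\eps)^{D/2}.
\]

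To conclude, I would choose $\xi' \leq \xi/2$ and then choose $D = D(m,r,\alpha,\eps,\xi)$ large enough that $|\spn(\mathcal{P})| \cdot B \cdot (1-\eps)^{D/2} \leq \xi/2$. Combining the two estimates gives $\|\mathrm{T}_{\mathcal{P}, 1-\eps} f - f'\|_2 \leq \xi$, proving the first item. For the second item, $L_P^{\leq D-1}$ has degree at most $D - 1$ by construction, and since $L_P$ is an average of $B$-bounded functions it is itself $B$-bounded, so $\|L_P^{\leq D-1}\|_2 \leq \|L_P\|_2 \leq B =: C$. The step I expect to be the main nuisance is purely bookkeeping: the constant $B$ from Lemma~\ref{lem:approx_formula_noise_op_fancy_I} depends on $\xi'$, and $D$ in turn must be chosen after $B$ is fixed, so one has to order the parameter choices as $\xi \gg \xi' \gg $ (implicit constants from $B$) $\gg $ (choice of $D$ depending on $\eps$); there is no genuine mathematical obstacle beyond this dependence chain.
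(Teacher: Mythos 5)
Your proposal is correct and takes essentially the same route as the paper: average the junta decompositions from Lemma~\ref{lem:approx_formula_noise_op_fancy_I} over $I\subseteq_{\eps}[n]$, invoke Lemma~\ref{lem:avg_of_juntas_low_deg} on the normalized averages $L_P/B$ to get weight decay, truncate to low degree, and tune $D$ (after $B$ is fixed) so the truncation loss is below $\xi/2$. The only cosmetic differences from the paper's argument are that it uses $\xi/2$ directly rather than an auxiliary $\xi'$ and truncates at degree $D$ rather than $D-1$; the parameter-ordering caveat you flag is exactly the bookkeeping the paper also handles.
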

    \begin{proof}
        For each $I\subseteq [n]$, apply Lemma~\ref{lem:approx_formula_noise_op_fancy_I} to
        get $f'_I$ as there such that
        $\norm{\mathrm{T}_{\mathcal{P},I} f-f'_I}\leq \xi/2$.
        Define $f'(x) = \Expect{I\subseteq_{\eps}[n]}{f'_I(x)}$.
        First, by the triangle inequality we have that
        \begin{equation}\label{eq:approx_formula_first}
        \norm{\mathrm{T}_{\mathcal{P}, 1-\eps}f - f'}_2
        \leq
        \Expect{I\subseteq_{\eps}[n]}
        {\norm{\mathrm{T}_{\mathcal{P}, I}f - f'_I}_2}
        \leq \frac{\xi}{2}.
        \end{equation}
        Next, write $f'_I(x) = \sum\limits_{P\in \spn(\mathcal{P})}P(x)\cdot L_{I,P}(x)$
        as in Lemma~\ref{lem:approx_formula_noise_op_fancy_I}.
        We get that
        \[
        f'(x)
        =\sum\limits_{P\in \spn(\mathcal{P})}P(x)
        \Expect{I\subseteq_{\eps}[n]}{L_{I,P}(x)}.
        \]
        Defining $L_P(x) = \Expect{I\subseteq_{\eps}[n]}{L_{I,P}(x)}$, we have by Lemma~\ref{lem:avg_of_juntas_low_deg}
        that $W^{\geq D}[L_P]\leq (1-\eps)^D\cdot C$,
        where $C = O_{m,r,\alpha,\xi}(1)$ is from Lemma~\ref{lem:approx_formula_noise_op_fancy_I}.
        Taking $\tilde{L}_{P} = L_P^{\leq D}$
        and $f''(x) = \sum\limits_{P\in \spn(\mathcal{P})}P(x) \tilde{L}_P(x)$, we get that
        \[
        \norm{f'-f''}_2
        \leq \sum\limits_{P\in \spn(\mathcal{P})}
        \norm{L_P-\tilde{L}_P}_2
        \leq O_{m,r}((1-\eps)^{D/2}\sqrt{C})
        \leq \frac{\xi}{2}
        \]
        for sufficiently large $D$. Combining this with~\eqref{eq:approx_formula_first} we
        get that $\norm{\mathrm{T}_{\mathcal{P},1-\eps}f-f''}_2\leq \xi$. As $\norm{\tilde{L}_P}_2\leq \norm{L_P}_2\leq C$, the statement follows.
    \end{proof}

	\section{The Regularity Lemma}\label{sec:reg_lemmas}
	In this section we prove a regularity lemma for the semi-norm $\norm{\circ }_{\nu,\alpha}$.
	\subsection{A Basic Version of the Regularity Lemma}
	We begin with a basic version of our regularity lemma that already contains all of the essential ideas.
	\begin{lemma}\label{lem:regularity_basic}
		For all $\alpha>0$, $m\in\mathbb{N}$ there is an Abelian group $G$ of size $O_m(1)$ such that for all $\xi>0$ there exist $\eps>0$ and $r\in\mathbb{N}$ such that the following holds.
		Let $\Sigma$ be an alphabet of size at most $m$, let $\nu$ be a distribution over $\Sigma$ in which the probability of
		each atom is at least $\alpha$ and let $f\colon \Sigma^n\to\mathbb{C}$ be a $1$-bounded function. Then there exist
        $\sigma\colon \Sigma\to G$ and a collection $\mathcal{P}\subseteq\mathcal{P}(\Sigma, G, \sigma)$ of size at most $r$ such that
		\[
		\norm{f - \mathrm{T}_{\mathcal{P}, 1-\eps} f}_{\nu,\alpha}\leq\xi.
		\]
	\end{lemma}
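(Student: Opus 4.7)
The plan is to prove the lemma by an iterative energy-increment scheme in which Theorem~\ref{thm:csp4} is invoked at each step to extract a new character product function to absorb into $\mathcal{P}$. Let $G_0$ denote the Abelian group produced by Theorem~\ref{thm:csp4} (depending only on $m$); since the embedding $\Sigma \to G_0$ produced there may depend on the particular distribution witnessing the $\nu,\alpha$-seminorm, I will work throughout in the master group $G := G_0^{N}$ with $N := |G_0|^{|\Sigma|}$, equipped with the composite embedding $\sigma \colon \Sigma \to G$ that stacks all possible maps $\Sigma \to G_0$. Any character product function $\chi_0 \circ \sigma_i$ arising from Theorem~\ref{thm:csp4} rewrites as a character product function in $\mathcal{P}(\Sigma, G, \sigma)$ by taking trivial characters outside the $i$-th block. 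Fix $\delta := \delta(\xi/4, m, \alpha)$ and $d := d(\xi/4, m, \alpha)$ from Theorem~\ref{thm:csp4}, set $r := \lceil 4/\delta^2 \rceil$ as the maximum number of iterations, and choose $\eps = \eps(\xi, m, \alpha) > 0$ small enough that Claim~\ref{claim:pres_low_deg_functions} with parameters $(d, m, r, \alpha)$ guarantees $\|(I - \mathrm{T}_{\mathcal{Q}, 1-\eps}) L\|_2 \leq \delta/10$ for every $\mathcal{Q} \subseteq \mathcal{P}(\Sigma, G, \sigma)$ of size at most $r$ and every $L$ of degree at most $d$ with $\|L\|_2 \leq 1$.

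Starting from $\mathcal{P}_0 = \emptyset$, at step $t$ set $g_t := \tfrac{1}{2}(f - \mathrm{T}_{\mathcal{P}_t, 1-\eps} f)$, which is $1$-bounded since $\mathrm{T}_{\mathcal{P}_t, 1-\eps}$ is an averaging operator. If $\|g_t\|_{\nu,\alpha} \leq \xi/2$ the iteration terminates; otherwise, Theorem~\ref{thm:csp4} applied to the distribution witnessing $\|g_t\|_{\nu,\alpha} > \xi/2$ returns a character $\chi_t \in \widehat{G_0}^{\otimes n}$, an embedding $\sigma_{i_t}$, and a function $L_t$ of degree at most $d$ with $\|L_t\|_2 \leq 1$ such that $|\langle g_t, L_t \cdot \chi_t \circ \sigma_{i_t}\rangle| \geq \delta$. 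Recasting $P'_t := \chi_t \circ \sigma_{i_t}$ as a product function in $\mathcal{P}(\Sigma, G, \sigma)$, I set $\mathcal{P}_{t+1} := \mathcal{P}_t \cup \{P'_t\}$.

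The energy functional is $\Phi_t := \langle f, \mathrm{T}_{\mathcal{P}_t, 1-\eps} f\rangle = \mathbb{E}_{I \subseteq_\eps [n]}[\|\mathrm{T}_{\mathcal{P}_t, I} f\|_2^2] \in [0, 1]$, where the second equality uses that each $\mathrm{T}_{\mathcal{P}_t, I}$ is a self-adjoint projection (conditional expectation onto the $\sigma$-algebra generated by $y_{\overline I}$ and the values of $\mathcal{P}_t$). Because $\mathcal{P}_{t+1}$ refines this $\sigma$-algebra, one has the per-$I$ Pythagorean identity $\|\mathrm{T}_{\mathcal{P}_{t+1}, I} f\|_2^2 - \|\mathrm{T}_{\mathcal{P}_t, I} f\|_2^2 = \|\mathrm{T}_{\mathcal{P}_{t+1}, I} f - \mathrm{T}_{\mathcal{P}_t, I} f\|_2^2$; averaging over $I$ and applying Jensen's inequality yields $\Phi_{t+1} - \Phi_t \geq \|\mathrm{T}_{\mathcal{P}_{t+1}, 1-\eps} f - \mathrm{T}_{\mathcal{P}_t, 1-\eps} f\|_2^2$. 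To lower bound this quantity, self-adjointness converts $|\langle g_t, P'_t L_t\rangle| \geq \delta$ into $|\langle f, P'_t L_t - \mathrm{T}_{\mathcal{P}_t, 1-\eps}(P'_t L_t)\rangle| \geq 2\delta$, while the factorization $\mathrm{T}_{\mathcal{P}_{t+1}, 1-\eps}(P'_t L_t) = P'_t \cdot \mathrm{T}_{\mathcal{P}_{t+1}, 1-\eps} L_t$ (since conditioning in $\mathcal{P}_{t+1}$ forces $P'_t(y) = P'_t(x)$) combined with Claim~\ref{claim:pres_low_deg_functions} yields $\|\mathrm{T}_{\mathcal{P}_{t+1}, 1-\eps}(P'_t L_t) - P'_t L_t\|_2 \leq \delta/10$. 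A triangle inequality followed by a final self-adjoint move gives $\|\mathrm{T}_{\mathcal{P}_{t+1}, 1-\eps} f - \mathrm{T}_{\mathcal{P}_t, 1-\eps} f\|_2 \geq \delta$, so $\Phi_{t+1} - \Phi_t \geq \delta^2$. Since $\Phi_t \in [0,1]$ the process must terminate within $r$ steps, and the terminal pair $(\mathcal{P}, \eps)$ then satisfies $\|f - \mathrm{T}_{\mathcal{P}, 1-\eps} f\|_{\nu,\alpha} \leq \xi$.

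The main subtlety is that Claim~\ref{claim:pres_low_deg_functions} requires $\eps$ to be chosen small in terms of both the uniform degree bound $d$ and the collection-size bound $r$, so we must commit to a single $\eps$ in advance using the \emph{a priori} iteration count $r = O(1/\delta^2)$; fortunately $r$, $\delta$, and $d$ all depend only on $\xi, m, \alpha$. A secondary subtlety is the non-uniqueness of the embedding across different witnessing distributions $\mu$, which the master-group construction of the first paragraph completely neutralizes.
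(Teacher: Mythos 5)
Your proof is correct, and it takes a genuinely different and substantially cleaner route than the paper's. The paper runs the increment with a cascade of decreasing noise rates $\eps_1 \gg \eps_2 \gg \cdots$, tracks the raw energy $\|\mathrm{T}_{\mathcal{P}_{i+1}, 1-\eps_{i+1}} f\|_2^2$, and to show it grows it must establish near-orthogonality of the increments $(I-\mathrm{T}_j) P_j L_j$ (Claim~\ref{claim:key_of_regularity}); that argument in turn needs the approximation formula (Lemma~\ref{lem:approx_formula_noise_op_fancy}) to control $\mathrm{T}_{i+2}\mathrm{T}_j P_j L_j$, with the cascading rates there precisely to absorb the degree blow-up it introduces. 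You instead track $\Phi_t = \langle f, \mathrm{T}_{\mathcal{P}_t, 1-\eps}f\rangle = \E_I\|\mathrm{T}_{\mathcal{P}_t, I} f\|_2^2$ and exploit the structural observation the paper never uses: each $\mathrm{T}_{\mathcal{P}_t, I}$ is the conditional expectation onto the $\sigma$-algebra $\sigma(y_{\bar I}, P_1(y),\ldots,P_t(y))$, and these are nested in $t$ for fixed $I$, so the Pythagorean identity gives the increment $\Phi_{t+1}-\Phi_t = \E_I\|\mathrm{T}_{\mathcal{P}_{t+1}, I}f - \mathrm{T}_{\mathcal{P}_t, I}f\|_2^2 \geq \|\mathrm{T}_{\mathcal{P}_{t+1}, 1-\eps}f - \mathrm{T}_{\mathcal{P}_t, 1-\eps}f\|_2^2$ for free. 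You then only need the self-adjointness of the noise operators together with $P'_t\in\mathcal{P}_{t+1}$ and Claim~\ref{claim:pres_low_deg_functions} to lower-bound the right-hand side by $\delta^2$. The payoff is threefold: a single fixed $\eps$ suffices (matching the quantifier order in the lemma statement more cleanly than the paper's cascade), the near-orthogonality computation disappears, and Lemma~\ref{lem:approx_formula_noise_op_fancy} is not needed at all for the basic regularity lemma. Your master-group construction is somewhat more generous than the paper's (which takes the product only over the finitely many $\mu\in M_{\nu,\alpha}$ rather than over all set-maps $\Sigma\to G_0$), but both keep $|G|=O_m(1)$ and both achieve the same uniformization of the embedding from Theorem~\ref{thm:csp4}, so that difference is cosmetic.
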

	The rest of this section is devoted to the proof of Lemma~\ref{lem:regularity_basic}.
	
	We look at all $\mu\in M_{\nu,\alpha}$ and apply Theorem~\ref{thm:csp4} on them with the parameters $\alpha,m$ as here
	and $\eps$ there being $\xi/100$. As $M_{\nu,\alpha}$ has size $O_{m}(1)$, we may
	take $\delta$ to be the minimum over all the $\delta$'s in Theorem~\ref{thm:csp4} and $d$ to be the maximum of all the $d$'s,
	take the Abelian group $G$ to be the product
	of all the Abelian group in the applications in Theorem~\ref{thm:csp4} and $\sigma$ to be the concatenation of all of $\sigma$'s
	from the applications of Theorem~\ref{thm:csp4}. We fix these parameters, set $R = \lceil\frac{100}{\delta^2}\rceil$ and further use the following parameters:
	\begin{equation}\label{eq9}
		0<d_R^{-1}\ll\zeta_R\ll \eps_R\ll\ldots\ll d_1^{-1}\ll\zeta_1\ll \eps_1\ll d_0^{-1}, \delta\ll\alpha,\xi,m^{-1}\leq 1.
	\end{equation}
	We now proceed with the following iterative process. Starting with $g_0 = \E[f]$, $f_0 = f-g_0$, $\mathcal{P}_0=\emptyset$ and $i=0$, if $f_i$ has correlation at least
	$\delta$ with a function of the form $P_i\cdot L_i$ for $P_i = \chi\circ \sigma$ and $L_i\colon \Sigma^n\to\mathbb{C}$ of $2$-norm
	$1$, we do the following:
	\begin{enumerate}
		\item
		Define $\mathcal{P}_{i+1} = \mathcal{P}_{i}\cup \{P_{i}\}$, $g_{i+1} = \mathrm{T}_{\mathcal{P}_{i+1}, 1-\eps_{i+1}} f$
		and $f_{i+1} = f - g_{i+1}$.
		\item Increase $i$ by $1$ and repeat.
	\end{enumerate}
	The following claim, stating that $\norm{g_{i+1}}_2^2\geq i\delta^2/2$, is the key in the proof of Lemma~\ref{lem:regularity_basic}. Once
	it is established, the proof is concluded quickly.
	\begin{claim}\label{claim:key_of_regularity}
		For all $i\leq R$ we have that $\norm{g_{i+1}}_2^2\geq i\delta^2/10$.
	\end{claim}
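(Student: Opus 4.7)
The plan is to prove by induction on $i$ that $\norm{g_{i+1}}_2^2 - \norm{g_i}_2^2 \geq \delta^2/5$, up to additive errors made negligible by the hierarchy~\eqref{eq9}; summing yields $\norm{g_{i+1}}_2^2 \geq i\delta^2/10$, and the base case $i=0$ is trivial. The starting point is the identity
\[
\norm{g_{i+1}}_2^2 - \norm{g_i}_2^2 \;=\; \norm{g_{i+1}-g_i}_2^2 \;+\; 2\,\mathrm{Re}\inner{g_i}{g_{i+1}-g_i},
\]
and I bound the two summands separately.

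\emph{First summand.} Since $P_i\in\mathcal{P}_{i+1}$ is preserved pointwise by every transition of $\mathrm{T}_{\mathcal{P}_{i+1},1-\eps_{i+1}}$, we have the pointwise factorization $\mathrm{T}_{\mathcal{P}_{i+1},1-\eps_{i+1}}(P_iL_i)=P_i\cdot \mathrm{T}_{\mathcal{P}_{i+1},1-\eps_{i+1}}L_i$. Because $L_i$ has degree at most $d$ and $\eps_{i+1}\ll d^{-1}$, Claim~\ref{claim:pres_low_deg_functions} gives $\norm{L_i-\mathrm{T}_{\mathcal{P}_{i+1},1-\eps_{i+1}}L_i}_2 = o(1)$. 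Using self-adjointness of the noise operator (which holds since $\mathrm{T}_{\mathcal{P}_{i+1},I}$ is a conditional expectation with respect to $\nu^{\otimes n}$),
\[
\inner{g_{i+1}}{P_iL_i} \;=\; \inner{f}{P_i\cdot \mathrm{T}_{\mathcal{P}_{i+1},1-\eps_{i+1}}L_i} \;\geq\; \inner{f}{P_iL_i}-o(1) \;\geq\; \inner{g_i}{P_iL_i}+\delta-o(1),
\]
where the last step uses $\inner{f_i}{P_iL_i}\geq\delta$ and $f=g_i+f_i$. Since $\norm{P_iL_i}_2=\norm{L_i}_2=1$, Cauchy-Schwarz gives $\norm{g_{i+1}-g_i}_2\geq \delta-o(1)$.

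\emph{Cross-term.} I will show $\mathrm{Re}\inner{g_i}{g_{i+1}-g_i}\geq -o(1)$. Apply Lemma~\ref{lem:approx_formula_noise_op_fancy} with a parameter $\xi_i\ll\delta^2$ to write $g_i=\tilde g_i+\varepsilon_i$, with $\norm{\varepsilon_i}_2\leq\xi_i$ and $\tilde g_i=\sum_{Q\in\spn(\mathcal{P}_i)}Q\cdot L_{Q,i}$ for bounded $L_{Q,i}$ of some degree $D_i$. Because $\spn(\mathcal{P}_i)\subseteq\spn(\mathcal{P}_{i+1})$, each $Q$ is an eigenfunction of $\mathrm{T}_{\mathcal{P}_{i+1},1-\eps_{i+1}}$ of eigenvalue $1$, and applying Claim~\ref{claim:pres_low_deg_functions} to each $L_{Q,i}$ (valid since $\eps_{i+1}\ll D_i^{-1}$ by~\eqref{eq9}) gives $\norm{\mathrm{T}_{\mathcal{P}_{i+1},1-\eps_{i+1}}g_i-g_i}_2=o(1)$. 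By self-adjointness,
\[
\inner{g_i}{g_{i+1}} \;=\; \inner{\mathrm{T}_{\mathcal{P}_{i+1},1-\eps_{i+1}}g_i}{f} \;=\; \inner{g_i}{f}\pm o(1) \;=\; \inner{f}{\mathrm{T}_{\mathcal{P}_i,1-\eps_i}f}\pm o(1).
\]
Each $\mathrm{T}_{\mathcal{P}_i,I}$ is a self-adjoint orthogonal projection (being a conditional expectation onto a sub-sigma-algebra of $L_2(\nu^{\otimes n})$), so $\mathrm{T}_{\mathcal{P}_i,1-\eps_i}$ is a convex combination of orthogonal projections and hence PSD with spectrum in $[0,1]$; in particular $\inner{f}{\mathrm{T}_{\mathcal{P}_i,1-\eps_i}f}\geq\inner{f}{\mathrm{T}_{\mathcal{P}_i,1-\eps_i}^2f}=\norm{g_i}_2^2$, which gives $\mathrm{Re}\inner{g_i}{g_{i+1}}\geq\norm{g_i}_2^2-o(1)$ as desired.

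Combining the two bounds yields $\norm{g_{i+1}}_2^2-\norm{g_i}_2^2\geq(\delta-o(1))^2-o(1)\geq\delta^2/5$, and iteration gives the claim. The main obstacle is careful bookkeeping: the degree bound $D_i$ produced by Lemma~\ref{lem:approx_formula_noise_op_fancy} depends on both $i$ and the chosen accuracy $\xi_i$, so one must verify that the successive parameters $\eps_{i+1},\zeta_{i+1},d_{i+1}^{-1}$ from~\eqref{eq9} are small enough that every $o(1)$ occurrence above is genuinely much smaller than $\delta^2$ throughout all $R$ iterations.
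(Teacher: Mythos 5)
Your proof is correct and takes a genuinely different route from the paper's. The paper proves the claim by a ``batch'' argument: it introduces the test quantity $(\rom{1}) = \inner{\mathrm{T}_{i+2}f}{\sum_{j=1}^{i+1}\theta_j(I-\mathrm{T}_j)P_jL_j}$, lower-bounds it by $\approx(i+1)\delta$ using that $\mathrm{T}_{i+2}$ nearly fixes each $(I-\mathrm{T}_j)P_jL_j$, and upper-bounds it via Cauchy--Schwarz by $\norm{\mathrm{T}_{i+2}f}_2\cdot\sqrt{2(i+2)}$ after showing the summands are nearly pairwise orthogonal; squaring and rearranging yields the bound. You instead run a telescoping argument tracking the increment $\norm{g_{i+1}}_2^2-\norm{g_i}_2^2$ step by step. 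Your two ingredients are both correct: (i) the lower bound $\norm{g_{i+1}-g_i}_2\geq\delta-o(1)$ follows from testing against the newly-found $P_iL_i$ exactly as you describe, and (ii) the cross-term is controlled using the observation that each $\mathrm{T}_{\mathcal{P}_i,I}$ is a conditional expectation onto the sub-sigma-algebra generated by $(x_{\bar I},P_1(x),\ldots,P_r(x))$, hence an orthogonal projection, so $\mathrm{T}_{\mathcal{P}_i,1-\eps_i}$ is a convex combination of orthogonal projections and therefore a PSD self-adjoint operator with spectrum in $[0,1]$, giving $\inner{f}{\mathrm{T}_if}\geq\norm{\mathrm{T}_if}_2^2$. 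That spectral observation is a nice structural fact that the paper does not exploit; in its place the paper uses near-orthogonality of the differences $(I-\mathrm{T}_j)P_jL_j$, which requires essentially the same approximate-eigenfunction estimates (via Lemma~\ref{lem:approx_formula_noise_op_fancy} and Claim~\ref{claim:pres_low_deg_functions}) that you invoke for the cross-term. Your version arguably has a cleaner conceptual structure (monotone energy increase), at the cost of needing the incremental degree/noise bookkeeping that you flag at the end; that bookkeeping is exactly what the hierarchy~\eqref{eq9} is designed to supply, since the degree $d_j$ coming from Lemma~\ref{lem:approx_formula_noise_op_fancy} with accuracy $\zeta_j$ and noise $\eps_j$ is precisely the $d_j$ in~\eqref{eq9} sitting above $\eps_{j+1}$, so every error term you absorb into $o(1)$ can be made arbitrarily smaller than $\delta^2$.

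One small caveat: for the base step you also need $i=0$, where $g_0=\E[f]$ is a constant rather than of the form $\mathrm{T}_{\mathcal{P}_0,1-\eps_0}f$; but there $\inner{g_0}{g_1}=\overline{\E[f]}\inner{1}{\mathrm{T}_{\mathcal{P}_1,1-\eps_1}f}=\overline{\E[f]}\inner{1}{f}=\norm{g_0}_2^2$ (since $\mathrm{T}_{\mathcal{P}_1,1-\eps_1}$ fixes constants), so the cross-term is exactly zero and your increment bound holds verbatim.
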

	\begin{subproof}
		Fix $i$, and for ease of notation denote $\mathrm{T}_{i} = \mathrm{T}_{\mathcal{P}_i, 1-\eps_i}$. For $j=1,\ldots,i+1$ we have that $\card{\inner{f_j}{P_j L_j}}\geq \delta$, and we  pick a complex number
		$\theta_j$ of absolute value $1$ such that $\inner{f_j}{\theta_j P_j L_j}\geq \delta$. Noting that
		$f_i = f-g_i = (I-\mathrm{T}_j) f$ and using the fact that $I-\mathrm{T}_j$ is self-adjoint, we conclude that
		\begin{equation}\label{eq10}
			\inner{f}{\theta_j (I-\mathrm{T}_{j}) P_j L_j}\geq \delta
			\qquad\qquad\forall j=1,\ldots,i+1.
		\end{equation}
		We next inspect that quantity $(\rom{1}) = \inner{\mathrm{T}_{i+2} f}{\sum\limits_{j=1}^{i+1}\theta_j (I-\mathrm{T}_{j}) P_j L_j}$,
		and prove an upper bound as well as a lower bound on it.
		\paragraph{The Lower Bound:} as $\mathrm{T}_{i+2}$ is self adjoint, we conclude that
		$(\rom{1}) = \inner{f}{\sum\limits_{j=1}^{i+1}\theta_j \mathrm{T}_{i+2}(I-\mathrm{T}_{j}) P_j L_j}$, and we next
		argue that $\mathrm{T}_{i+2}(I-\mathrm{T}_{j}) P_j L_j$ is very close to $(I-\mathrm{T}_{j}) P_j L_j$ for each $j$.
		Clearly,
		\[
		\mathrm{T}_{i+2}(I-\mathrm{T}_{j}) P_j L_j
		=\mathrm{T}_{i+2}P_j L_j - \mathrm{T}_{i+2}\mathrm{T}_{j} P_j L_j.
		\]
		For the first term on the right-hand side, note that $P_j\in \spn(\mathcal{P}_{i+2})$ and so
		$\mathrm{T}_{i+2}P_j L_j = P_j\mathrm{T}_{i+2}L_j$.
        Indeed, the value of
        $(\mathrm{T}_{i+2}P_j L_j)(x)$ is the average of the value of $P_j(y)L_j(y)$ over $y\sim \mathrm{T}_{i+2}x$, and by definition of the noise operator, for all such $y$'s we have that $P_j(y) = P_j(x)$.
        It follows from Claim~\ref{claim:pres_low_deg_functions} that
		\[
		\norm{\mathrm{T}_{i+2}P_j L_j - P_j L_j}_2^2
        =\norm{\mathrm{T}_{i+2} L_j - L_j}_2^2 =
		\norm{(I-\mathrm{T}_{i+2})L_j}_2^2\lll_{d,m,i,\alpha} \eps_{i+2}^{1/3},
		\]
		and so $\norm{\mathrm{T}_{i+2}P_j L_j - P_j L_j}_2\lll_{d,m,i,\alpha} \eps_{i+2}^{1/6}$.
		For the second term, namely for $\mathrm{T}_{i+2}\mathrm{T}_{j} P_j L_j$, first apply Lemma~\ref{lem:approx_formula_noise_op_fancy}
		to get that $\norm{\mathrm{T}_{j} P_j L_j - h_j}_2\leq \zeta_j$ where $h_j$ is a function of the form
		$\sum\limits_{P'\in {\sf spn}_{\mathbb{N}}(\mathcal{P}_j)} P' L_{P'}$ where $L_{P'}$ has degree at most $d_j$ and $\norm{L_{P'}}_2 = O_{m,\zeta_j,\alpha,j,\eps_j}(1)$.
		Combining with Claim~\ref{claim:pres_low_deg_functions} again it follows that
		\[
		\norm{\mathrm{T}_{i+2}\mathrm{T}_{j} P_j L_j - \mathrm{T}_{j} P_j L_j}_2
		\leq 2\zeta_j + \norm{\mathrm{T}_{i+2}h_j - h_j}_2
		=2\zeta_j + \norm{(I-\mathrm{T}_{i+2})h_j}_2
		\leq 2\zeta_j + O_{m,\zeta_j,\alpha,j,\eps_j,d_j}(\eps_{i+2}^{1/6}).
		\]

		Concluding, we get that $\norm{\mathrm{T}_{i+2}(I-\mathrm{T}_{j}) P_j L_j - (I-\mathrm{T}_{j}) P_j L_j}\leq 2\zeta_j+\eps_{i+2}^{1/7}$,
		and plugging into $(\rom{1})$ gives
		\[
		\card{(\rom{1})}
		\geq
		\inner{f}{\sum\limits_{j=1}^{i+1}\theta_j (I-\mathrm{T}_{j}) P_j L_j} -2\sum\limits_{j=1}^{i+1}\zeta_j- 2(i+1)\eps_{i+2}^{1/6}
		\geq (i+1)\delta         -2\sum\limits_{j=1}^{i+1}\zeta_j-2(i+1)\eps_{i+2}^{1/7},
		\]
		where in the last inequality we used~\eqref{eq10}. Thus, $(\rom{1})\geq 0.99(i+1)\delta$.
		
		\paragraph{The Upper Bound:} by Cauchy-Schwarz we have that
		\[
		\card{(\rom{1})}\leq \norm{\mathrm{T}_{i+2} f}_2\norm{\sum\limits_{j=1}^{i+1}\theta_j (I-\mathrm{T}_{j}) P_j L_j}_2,
		\]
		and we upper bound the second norm. Taking a square and expanding, we have that
		\begin{equation}\label{eq11}
			\norm{\sum\limits_{j=1}^{i+1}\theta_j (I-\mathrm{T}_{j}) P_j L_j}_2^2
			\leq\sum\limits_{j}\norm{(I-\mathrm{T}_{j}) P_j L_j}_2^2
			+2\sum\limits_{j'<j}\card{\inner{(I-\mathrm{T}_{j}) P_j L_j}{(I-\mathrm{T}_{j'}) P_{j'} L_{j'}}}.
		\end{equation}
		We bound the first sum on the right-hand side by the trivial bound of $2(i+1)$ (as each one of the norms individually is at most $2$),
		and next we show that the off-diagonal terms are negligible. Fix $j'<j$ and inspect the corresponding summand. Then by self-adjointness
		and Cauchy-Schwarz
		\[
		\card{\inner{(I-\mathrm{T}_{j}) P_j L_j}{(I-\mathrm{T}_{j'}) P_{j'} L_{j'}}}
		=\card{\inner{P_j L_j}{(I-\mathrm{T}_{j})(I-\mathrm{T}_{j'}) P_{j'} L_{j'}}}
		\leq
		\norm{(I-\mathrm{T}_{j})(I-\mathrm{T}_{j'}) P_{j'} L_{j'}}_2.
		\]
		Using the same argument as in the upper bound section, we have that
		$\norm{\mathrm{T}_{j}(I-\mathrm{T}_{j'}) P_{j'} L_{j'} - (I-\mathrm{T}_{j'}) P_{j'} L_{j'}}\leq \eps_{j}^{1/6}$,
		implying that $\norm{(I-\mathrm{T}_{j})(I-\mathrm{T}_{j'}) P_{j'} L_{j'}}_2\leq \eps_j^{1/6}$. Plugging this into~\eqref{eq11}
		gives that
		\[
		\eqref{eq11}\leq 2(i+1) + (i+1)^2\eps_1^{1/6}
		\leq 2(i+2),
		\]
		and so $(\rom{1})\leq \norm{\mathrm{T}_{i+2} f}_2\sqrt{2(i+2)}$.
		
		\paragraph{Combining the Upper and Lower Bounds:} combining the upper and lower bounds for $(\rom{1})$, we conclude that
		$\norm{\mathrm{T}_{i+2} f}^2_2\cdot 2(i+2)\geq 0.99^2(i+1)^2\delta^2$, and simplifying we get that
		$\norm{\mathrm{T}_{i+2} f}^2_2\geq i\delta^2/10$.
	\end{subproof}
	
	Using Claim~\ref{claim:key_of_regularity}, the process we designed terminates within at most $R$ steps at step $i\leq R$, at which
	point we have that $f_i$ has correlation at most $\delta$ with any function of the form $P\cdot L$. Applying Theorem~\ref{thm:csp4}
	and the fact that $f_i$ is $2$-bounded we conclude that $\norm{f_i}_{\nu,\alpha}\leq 2\cdot \frac{\xi}{100}\leq \xi$, concluding the
	proof.
	\qed
	
	\subsection{A Version of the Regularity Lemma Allowing Noise Modification}
	In this section, we state and prove a variant of Lemma~\ref{lem:regularity_basic} in which we have flexibility
	in picking the noise parameter $\eps$. For the standard noise operator $\mathrm{T}_{1-\eps}$,
    if we have that $\mathrm{T}_{1-\eps} f$ is close to $f$ in $2$-norm, then $\mathrm{T}_{1-\eps'} f$ would be close to $f$ for any $0\leq \eps'\leq \eps$. With this analogy,
	one expects that the first item in Lemma~\ref{lem:regularity_basic} to hold not only for $\eps$ but rather for any $0<\eps'<\eps$. While we do not know
	how to make such an argument go through, we show a different argument that essentially achieves this extra flexibility.
	
	In the formulation of the lemma below, we will use a decay function $w\colon [0,1]\to[0,1]$, meaning a function
	$w$ satisfying that $w(\eps)\leq \eps$. The reader should have in mind $w$ of the form
	$w(\eps) = 2^{-1/\eps}$ or a more rapidly decaying function.
	\begin{lemma}\label{lem:regularity_fancy}
       For all $\alpha>0$, $m\in\mathbb{N}$ there is an Abelian group $G$ of size $O_m(1)$ such that for all $\xi>0$ there exist $\eps>0$ and $r\in\mathbb{N}$ such that for any decay function $w\colon[0,1]\to[0,1]$ there is $\eps_0$ such that the following holds.
		Let $\Sigma$ be an alphabet of size at most $m$, let $\nu$ be a distribution over $\Sigma$ in which the probability of
		each atom is at least $\alpha$ and let $f\colon \Sigma^n\to\mathbb{C}$ be a $1$-bounded function. Then there exist
        $\sigma\colon \Sigma\to G$, a collection $\mathcal{P}\subseteq\mathcal{P}(\Sigma, G, \sigma)$ of size at most $r$ and $\eps\geq \eps_0$ such that
		\[
		\norm{f - \mathrm{T}_{\mathcal{P}, 1-\eps'} f}_{\nu,\alpha}\leq\xi
		\]
		for all $w(\eps)\leq \eps'\leq \eps$.
	\end{lemma}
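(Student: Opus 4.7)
The plan is to iterate the argument of Lemma~\ref{lem:regularity_basic} across a decreasing chain of scales, using pigeonhole across scales to obtain the additional flexibility in the noise parameter. Concretely, fix the same Abelian group $G$, the parameters $\delta, d_0$, and the iteration bound $R\lll \delta^{-2}$ implicit in Lemma~\ref{lem:regularity_basic}. Pick $\eta_0$ small enough (depending only on $\alpha,\xi,m$) so that the parameter chain analogous to~\eqref{eq9} works at level $0$, declare this to be the outer $\eps$ in the statement, and then, given $w$, inductively define a decreasing sequence $\eta_0>\eta_1>\ldots>\eta_R$ in which each $\eta_{i+1}$ is taken much smaller than $w(\eta_i)$, rapidly enough that even in the worst case $\eps'_i = w(\eta_i)$ (rather than $\eps_i$ being an arbitrarily small free parameter as in~\eqref{eq9}) the remainder of the chain~\eqref{eq9} continues to hold at each level. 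Set $\eps_0 = \eta_R$.

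Now run the iterative process. Start with $\mathcal{P}_0=\emptyset$. At iteration $i$, test whether $\norm{f - \mathrm{T}_{\mathcal{P}_i, 1-\eps'} f}_{\nu,\alpha}\leq \xi$ holds for every $\eps'\in[w(\eta_i),\eta_i]$. If yes, output $\mathcal{P}=\mathcal{P}_i$ and $\eps=\eta_i$, which gives the lemma since $[w(\eps),\eps]=[w(\eta_i),\eta_i]$ and $\eta_i\geq \eta_R=\eps_0$. Otherwise, pick some $\eps'_i\in[w(\eta_i),\eta_i]$ witnessing the failure; by the $2$-boundedness of $f - \mathrm{T}_{\mathcal{P}_i, 1-\eps'_i} f$ and Theorem~\ref{thm:csp4} applied to a distribution in $M_{\nu,\alpha}$ nearly attaining the $(\nu,\alpha)$-semi-norm, we obtain a character product function $P_i\in\mathcal{P}(\Sigma,G,\sigma)$ and a function $L_i$ with $\norm{L_i}_2\leq 1$ and ${\sf deg}(L_i)\leq d_0$ such that $\card{\inner{f - \mathrm{T}_{\mathcal{P}_i, 1-\eps'_i} f}{P_i L_i}}\geq \delta$. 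Set $\mathcal{P}_{i+1}=\mathcal{P}_i\cup\{P_i\}$ and repeat.

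It remains to bound the number of iterations by $R$, and for this we reuse the energy argument of Claim~\ref{claim:key_of_regularity} almost verbatim. The operator $\mathrm{T}_{i+2}=\mathrm{T}_{\mathcal{P}_{i+2},1-\eps^*}$ appearing there should be understood here as a purely analytic device, with $\eps^*$ chosen much smaller than $\eta_R$ and smaller than the reciprocals of all accumulated degree bounds. Claim~\ref{claim:pres_low_deg_functions} then implies that $\mathrm{T}_{i+2}$ is close to the identity on each $P_j L_j$, and Lemma~\ref{lem:approx_formula_noise_op_fancy} together with Claim~\ref{claim:pres_low_deg_functions} implies the same for each $\mathrm{T}_{\mathcal{P}_j,1-\eps'_j}P_j L_j$. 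The self-adjointness of $I-\mathrm{T}_{\mathcal{P}_j,1-\eps'_j}$, combined with the scale separation $\eta_0\gg\eta_1\gg\ldots$, yields matching upper and lower bounds on $\card{\inner{\mathrm{T}_{i+2} f}{\sum_{j}\theta_j(I-\mathrm{T}_{\mathcal{P}_j,1-\eps'_j}) P_j L_j}}$ as in the two-sided bounding in Claim~\ref{claim:key_of_regularity}, giving $i\delta^2 \lll 1$ and hence $i\leq R$. The main technical obstacle is making the parameter chain~\eqref{eq9} robust to the fact that each $\eps'_i$ is only lower-bounded by $w(\eta_i)$ rather than being independently tiny; this forces the gap between $\eta_i$ and $\eta_{i+1}$ to be dictated jointly by $w$ and by the degree and $L_2$-norm bounds produced by Lemma~\ref{lem:approx_formula_noise_op_fancy} at level $i$, and is precisely what causes $\eps_0$ to depend on $w$.
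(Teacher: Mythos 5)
Your proposal is correct and follows essentially the same approach as the paper's: run the iterative process of Lemma~\ref{lem:regularity_basic}, and whenever a termination step still admits a noise rate $\eps'\in[w(\eta_i),\eta_i]$ at which Theorem~\ref{thm:csp4} produces a correlated $P\cdot L$, feed that witness back into the energy argument of Claim~\ref{claim:key_of_regularity}. What you spell out --- that the pre-set parameter chain~\eqref{eq9} must be chosen robustly against the worst case $\eps_i'=w(\eta_i)$ at every level, which is precisely what forces $\eps_0$ to depend on $w$ --- is exactly the content the paper compresses into its remark that ``$\eps_i'$ still satisfies the same requirements from the parameters as $\eps_i$.''
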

	\begin{proof}
		We run the same process as in the proof of Lemma~\ref{lem:regularity_basic}. Once the process terminates,
		say at step $i$, we know that $(I-\mathrm{T}_{\mathcal{P}_i, 1-\eps})f$ has correlations at most $\delta$
		with any function of the form $P\cdot L$ for $\eps = \eps_i$. If this holds for all $\eps\in (w(\eps_i), \eps_i)$,
		we are done as then by Theorem~\ref{thm:csp4} it follows that $\norm{(I-\mathrm{T}_{\mathcal{P}_i, 1-\eps})f}_{\nu,\alpha}\leq \xi$
		for all $\eps\in (w(\eps_i), \eps_i)$. Otherwise, we may find $\eps_i'\in (w(\eps_i), \eps_i)$ such that
		$(I-\mathrm{T}_{\mathcal{P}_i, 1-\eps_i'})f$ has correlation at least $\delta$ with some function $P_{i+1} L_{i+1}$, and
		we may continue the argument therein by modifying $\eps_i$ to be $\eps_i'$ ($\eps_i'$ still satisfies the same requirements
		from the parameters as $\eps_i$). Carrying out the same analysis as in Lemma~\ref{lem:regularity_basic}, we conclude
		that this modified process also terminates within $R$ steps, and the proof is concluded.
	\end{proof}

 The analysis of our dictatorship test in Section~\ref{section:dict_test_actual} requires an analogous statement in the setting of multiple functions. More precisely, in this setting we have a collection of functions $f_{j, \ell}$ for $j=1,\ldots, L$ and $\ell= 1, \ldots, q$, where for each fixed $j$ the domain of $f_{j,\ell}$ is the same. We show that we can find collection of product functions and noise rates for all functions simultaneously, and furthermore that for each $j$, the collections for $f_{j,1}, f_{j,2}, \ldots, f_{j,q}$ are identical.
    \begin{lemma}\label{lem:regularity_fancy_mult}
       For all $\alpha>0$, $L,q, m\in\mathbb{N}$ there is an Abelian group $G$ of size $O_m(1)$ such that for all $\xi>0$ there exist $\eps>0$ and $r\in\mathbb{N}$ such that for any decay function $w\colon[0,1]\to[0,1]$ there is $\eps_0$ such that the following holds.
		Let $\Sigma_1,\ldots,\Sigma_L$ be alphabets of size at most $m$, let $\nu_j$ be a distribution over $\Sigma_j$ in which the probability of
		each atom is at least $\alpha$ and let $f_{j, \ell}\colon (\Sigma_j^n,\nu_j^n)\to\mathbb{C}$ be a $1$-bounded function. Then there exist
        $\sigma_j\colon \Sigma_j\to G$, a collection $\mathcal{P}_j\subseteq\mathcal{P}(\Sigma_j, G, \sigma_j)$ of size at most $r$ and $\eps\geq \eps_0$ such that
		\[
		\norm{f_{j, \ell} - \mathrm{T}_{\mathcal{P}_j, 1-\eps'} f_{j, \ell}}_{\nu_j,\alpha}\leq\xi
		\]
		for all $j=1,\ldots,L$, $\ell = 1, \ldots, q$ and $w(\eps)\leq \eps'\leq \eps$.
	\end{lemma}
	\begin{proof}
    We run the process from Lemma~\ref{lem:regularity_basic} on all of the functions $f_{j,\ell}$ together. The main difference is that for each fixed $j$, the collection of product functions $\mathcal{P}_j$ is the same for all $\ell$, and that the parameter $\eps'$ works for all of the $f_{j,\ell}$ simultaneously.

    More precisely, start with $g_{j,\ell}^{(0)} = \E[f_{j,\ell}]$, $f_{j,\ell}^{(0)} = f_{j,\ell}-g_{j,\ell}^{(0)}$, $\mathcal{P}_j^{(0)}=\emptyset$ for every $j$ and $\ell$, and then $i_1,\ldots,i_L=i_{{\sf tot}}=0$. If $f_{j,\ell}$ has correlation at least
	$\delta$ with a function of the form $P^{(j)}_{i_j}\cdot L^{(j)}_{i_j}$ for $P^{(j)}_{i_j} = \chi\circ \sigma_j$ and $L^{(j)}_{i_j}\colon \Sigma_j^n\to\mathbb{C}$ of $2$-norm
	$1$, we do the following:
	\begin{enumerate}
		\item
		Define $\mathcal{P}_j^{(i_j+1)} = \mathcal{P}_j^{(i_j)}\cup \{P^{(j)}_{i_j}\}$ and
        $i_{\sf tot} = 1+\sum\limits_{j'}i_{j'}$.
        \item Define $g^{(i_j+1)}_{j, \ell} = \mathrm{T}_{\mathcal{P}_j^{(i_j+1)}, 1-\eps_{i_{\sf tot}}} f_{j,\ell}$
		and $f_{j,\ell}^{(i_j+1)} = f_{j,\ell} - g^{(i_j+1)}_{j, \ell}$ for every $\ell$.
        \item For $j'\neq j$ redefine $g^{(i_{j'})}_{j', \ell} = \mathrm{T}_{\mathcal{P}_{j'}^{(i_{j'})}, 1-\eps_{i_{\sf tot}}} f_{j',\ell}$
		and $f_{j',\ell}^{(i_{j'})} = f_{j',\ell} - g^{(i_{j'})}_{j', \ell}$ for every $\ell$.
		\item Increase $i_j$ by $1$ and repeat.
	\end{enumerate}
    Similar to the proof of Claim~\ref{claim:key_of_regularity}, we can show that $\sum_{\ell} \norm{g^{i_j+1}_{j, \ell}}_2^2\geq i_j\delta^2/2$, and we continue the argument as in the proof of Lemma~\ref{lem:regularity_fancy} to establish that the process stops after performing at most $qR$ steps for each $j$. When the process stops, we have $\norm{f_{j,\ell}^{(i_j)}}_{\nu_j, \alpha} \leq \xi$ for all $j$ and $\ell$.
	\end{proof}

	\subsection{A Multi-Variate Regularity Lemma with High Rank}
	 We finish this section off by presenting a strengthening of Lemma~\ref{lem:regularity_fancy} in the setting of multi-variate distributions and multiple functions. We will use the following convenient notation:
    \begin{definition}
        For a product function $P = P_1\cdots P_n\colon\Sigma^n\to\mathbb{C}$ and a subset $I\subseteq [n]$, we denote by $P|_{I}\colon \Sigma^n\to\mathbb{C}$ the function $P|_{I}(x) = \prod\limits_{i\in I}P_i(x_i)$.
    \end{definition}

     \begin{lemma}\label{lem:regularity_fancy_highrank}
         For all $\alpha>0$, $m\in\mathbb{N}$ and $\xi>0$ there exists $r\in\mathbb{N}$ such that
		for any decay function $w\colon [0,1]\to [0,1]$ there is $\eps_0>0$ for which the following holds.
        Let $\Sigma,\Gamma,\Phi$ be finite alphabets of size at most $m$ and let $\mu$ be a distribution over $\Sigma\times\Gamma\times \Phi$ in which the probability of each atom is at least $\alpha$. Then there exists a group $G$ whose size depends only on $m$, such that for all $1$-bounded functions $f\colon \Sigma^n\to\mathbb{C}$,
        $g\colon \Gamma^n\to\mathbb{C}$
        and $h\colon \Phi^n\to\mathbb{C}$
        there are collections
        $\mathcal{P}$,
        $\mathcal{Q}$,
        $\mathcal{R}$
       of embedding functions with $\card{\mathcal{P}}, \card{\mathcal{Q}}, \card{\mathcal{R}}\leq r$  and $\eps\geq \eps_0$ such that:
        \begin{enumerate}
            \item
            $\norm{f-\mathrm{T}_{\mathcal{P},1-\eps'}f}_{\mu_x,\alpha}\leq\xi$,
             $\norm{g-\mathrm{T}_{\mathcal{Q},1-\eps'}g}_{\mu_y,\alpha}\leq\xi$,
             and  $\norm{h-\mathrm{T}_{\mathcal{R},1-\eps'}h}_{\mu_z,\alpha}\leq\xi$
            for all $\eps'\in (w(\eps),\eps)$.
            \item Considering the collection $\mathcal{D} = \sett{D_P,D_Q,D_R}{P\in \mathcal{P},Q\in\mathcal{Q},R\in\mathcal{R}}$ of product functions over $\supp(\mu)$ where $D_P(x,y,z) = P(x)$,
            $D_Q(x,y,z) = Q(y)$
            and $D_{R}(x,y,z) = R(z)$, we have that ${\sf rk}(\mathcal{D})\geq 1/w(\eps)$.
        \end{enumerate}
     \end{lemma}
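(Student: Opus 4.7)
The approach has two stages. First, I would invoke the preceding multi-function regularity lemma (Lemma~\ref{lem:regularity_fancy}) to produce initial collections $\mathcal{P}, \mathcal{Q}, \mathcal{R}$ satisfying item~1 with some slack, and then iteratively clean them up to enforce the high-rank condition of item~2 while preserving item~1. I would fix auxiliary parameters $\xi_1 \ll \xi$ and a decay function $w_1$ dominated by $w$, chosen so that cumulative losses over $O_m(r)$ cleanup iterations fit inside the gaps $\xi - \xi_1$ and between $w$ and $w_1$. Applying Lemma~\ref{lem:regularity_fancy} with these strict parameters to the triple $(f,g,h)$ over their respective marginals $\mu_x, \mu_y, \mu_z$ yields collections of size at most $r$ and a noise parameter $\eps \geq \eps_0$ for which item~1 holds with slack $\xi_1$ over the range $(w_1(\eps),\eps)$.

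The cleanup loop proceeds as follows. At each iteration, check whether ${\sf rk}(\mathcal{D}) \geq 1/w(\eps)$. If so, output. Otherwise, there exists a non-trivial product
\[
D(x,y,z) = P(x)\, Q(y)\, R(z), \qquad P \in \spn(\mathcal{P}),\ Q \in \spn(\mathcal{Q}),\ R \in \spn(\mathcal{R}),
\]
with $\Delta_{{\sf symbolic}}(D,1) < 1/w(\eps)$. At least one of $P, Q, R$ is non-constant; assume without loss of generality that $P$ is, and that $P$ is a non-trivial word in the characters of $\mathcal{P}$. The sparse relation then exhibits some generator $P_0 \in \mathcal{P}$ as expressible, away from a sparse set of coordinates, in terms of the remaining characters of $\mathcal{P}$ together with characters from $\mathcal{Q}, \mathcal{R}$ whose values are tied to the joint distribution $\mu$. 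I would remove $P_0$ from $\mathcal{P}$ and iterate. Termination is guaranteed because the multiplicative dimension of the combined collection modulo $G$ strictly decreases each iteration and is bounded by $O_m(r)$.

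The principal obstacle is to verify that removing $P_0$ preserves item~1, i.e.\ that
\[
\norm{f - \mathrm{T}_{\mathcal{P}\setminus\{P_0\},1-\eps'}f}_{\mu_x,\alpha} - \norm{f - \mathrm{T}_{\mathcal{P},1-\eps'}f}_{\mu_x,\alpha}
\]
is negligible uniformly over $\eps' \in (w(\eps),\eps)$. The intuition is that the sparse relation makes $P_0$ almost redundant in the conditioning that defines $\mathrm{T}_{\mathcal{P},1-\eps'}$, since its value on $\supp(\mu)$ is essentially determined by the remaining characters on a full-measure set of coordinates. To make this quantitative I would expand both noise operators using the approximate formula of Lemma~\ref{lem:approx_formula_noise_op_fancy}, writing $\mathrm{T}_{\mathcal{P},1-\eps'}f$ and $\mathrm{T}_{\mathcal{P}\setminus\{P_0\},1-\eps'}f$ as combinations of product functions times bounded low-degree multipliers. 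The difference then decomposes into terms involving product functions of small symbolic distance from $1$, whose $\mu_x,\alpha$-semi-norm is small by Lemma~\ref{lemma:symbolic_to_decay} in combination with Theorem~\ref{thm:csp4}. Careful bookkeeping of the errors accumulated across the $O_m(r)$ cleanup iterations, and of the quantification over all $\mu' \in M_{\mu_x,\alpha}$ implicit in the semi-norm, will dictate the precise choice of $\xi_1$ and $w_1$ relative to $\xi$ and $w$, and produce the required threshold $\eps_0$.
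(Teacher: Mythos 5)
The paper's strategy is to \emph{restrict} every product function to $\bar J$, where $J$ is a bounded set of coordinates carved out by a pigeonhole argument, while your plan is to \emph{delete} a generator $P_0$ from $\mathcal{P}$ whenever a low‑rank witness $D = PQR$ appears. These are genuinely different surgeries, and the deletion route has a gap in the step you identify as ``the principal obstacle.''

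The intuition that ``the sparse relation makes $P_0$ almost redundant in the conditioning that defines $\mathrm{T}_{\mathcal{P},1-\eps'}$'' does not hold. The low-rank witness $D=PQR$ with $\Delta_{\sf symbolic}(D,1)$ small is a statement about how $P(x)Q(y)R(z)$ degenerates on $\supp(\mu)^n$; but $\mathrm{T}_{\mathcal{P},1-\eps'}$ conditions only on the values $P_i(x)$ for the $x$-variable. The characters $Q,R$ live on $\Gamma^n$ and $\Phi^n$ and are simply not visible to that conditioning, so $P_0(x)$ being ``determined by $Q(y)R(z)$ on $\supp(\mu)$'' does not make $P_0(x)$ a redundant constraint on $x$ alone. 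Fact~\ref{fact:noise_op_basic_prop2} is the only tool for bounding $\|\mathrm{T}_{\mathcal{P},1-\eps'}f - \mathrm{T}_{\mathcal{P}\setminus\{P_0\},1-\eps'}f\|_2$, and it charges $O(\sqrt{k\eps'})$ with $k = \min_{P\in\spn(\mathcal{P}\setminus\{P_0\})}\Delta(P,P_0)$; this $k$ can be as large as $n$. Concretely, over $\mathbb{F}_2$ with $\supp(\mu)=\{x+y+z=0\}$, take $\mathcal{P}=\{(-1)^{x},(-1)^{x_1}\}$, $\mathcal{Q}=\{(-1)^{y}\}$, $\mathcal{R}=\{(-1)^{z}\}$: the product $P=(-1)^{x+x_1}\in\spn(\mathcal{P})$ gives $\Delta_{\sf symbolic}(PQR,1)=1$, yet the generator $(-1)^x$ is at symbolic distance $n-1$ from $\spn(\{(-1)^{x_1}\})$, so deleting it changes the operator by $\Theta(\sqrt{n\eps'})$. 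Your later sentence ``the difference then decomposes into terms involving product functions of small symbolic distance from $1$'' is therefore false: the terms that enter the difference are exactly those $P\in\spn(\mathcal{P})$ involving $P_0$, and those can have symbolic distance $\Theta(n)$ as functions on $\Sigma^n$; since $\mu$ itself may lie in $M_{\mu_x,\alpha}$, the $\mu$‑norm of $P\cdot L_P$ with $g\approx Q^{-1},h\approx R^{-1}$ can be bounded away from zero, so Lemma~\ref{lemma:symbolic_to_decay} does not save the argument. (In your example you would of course want to delete the $1$‑junta $(-1)^{x_1}$ instead, but the proposal gives no rule for choosing a ``safe'' generator and no reason one must exist.)

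The paper avoids this by never changing which generators are present. Each $D\in S_{\sf bad}$ is a product function with few non‑trivial coordinates, so the union $J$ of those coordinates over the $O(1)$ bad products is small. Replacing every $P\in\mathcal{P}$ (and $Q,R$) by $P|_{\bar J}$ changes each generator in at most $|J|$ coordinates, so Fact~\ref{fact:noise_op_basic_prop2} applies with $k\le|J|$, and $|J|\,\eps'$ is controlled because of the large multiplicative gap between the top of the noise range and the interval $I_i$ missed by the pigeonhole; meanwhile the surviving $D\in\spn(\mathcal{D}')$ still have symbolic distance $\ge w_{3i+3}(\eps)^{-1}-|J|$, giving the rank lower bound. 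If you want to rescue a deletion‑based proof, you would need to show that whenever ${\sf rk}(\mathcal{D})$ is small, there is some generator (in $\mathcal{P}\cup\mathcal{Q}\cup\mathcal{R}$) whose removal both decreases the obstruction set and is at bounded symbolic distance from the span of the remainder — and that is not guaranteed; restricting coordinates sidesteps this entirely.
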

     \begin{proof}
     Fix $\alpha,m,\xi$, take $r$ from Lemma~\ref{lem:regularity_fancy} and
     let $w'\colon [0,1]\to[0,1]$ be a decay function to be determined (depending only on $\alpha,m,\xi,r$ and the decay function $w$). Applying Lemma~\ref{lem:regularity_fancy} on $f,g,h$ gives collections $\mathcal{P},\mathcal{Q},\mathcal{R}$ and $\eps_0$ satisfying the first item with $w'$ in place of $w$, and we next explain how to get the other two items. Let $\eps\geq \eps_0$.

    Note that the size of $S = \spn(\mathcal{D})$  is $A = O_{m,\alpha,r,\xi}(1)$.
    For each $i=1,\ldots,6A-1$
    define $w_{i+1}(a) = \frac{1}{2A}w_i(w(a))$
    and $w_1(a) = \frac{1}{2A}w(a)$.
    Then we pick the decay function $w' = w_{6A}$.
    For $i=1,\ldots,2A-1$ consider the interval $I_i=[w_{3i}(\eps)^{-1},w_{3i+3}(\eps)^{-1})$. Then
    we may find $i$ such that
    $\Delta_{{\sf symbolic}}(D,1)\not\in I_i$
    for all $D\in S$. Define
    $S_{{\sf bad}}\subseteq S$ as the collection of $D\in S$ such that
    $\Delta_{{\sf symbolic}}(D,1)\leq w_{3i}(\eps)^{-1}$, and let
    \[
    J = \sett{i\in [n]}{\exists D = D_1\cdots D_n\in S_{{\sf bad}}, D_i\not\equiv 1}.
    \]
    Then $\card{J}\leq A\cdot w_{3i}(\eps)^{-1}$. Define
    the collections
    \[
    \mathcal{P}' = \sett{P|_{\bar{J}}}{P\in\mathcal{P}},
    ~~~
    \mathcal{Q}' = \sett{Q|_{\bar{J}}}{Q\in\mathcal{Q}},
    ~~~
    \mathcal{R}' = \sett{R|_{\bar{J}}}{R\in\mathcal{R}},
    ~~~
    \mathcal{D}'= \sett{D|_{\bar{J}}}{D\in\mathcal{D}}.
    \]
    We note that each one of $\mathcal{P}',\mathcal{Q}',\mathcal{R}',\mathcal{D}'$
    has rank
    \[
    w_{3i+3}(\eps)^{-1} - \card{J}\geq \frac{1}{2}w_{3i+3}(\eps)^{-1}\geq \frac{1}{w(w_{3i+2}(\eps))},
    \]
    so the second item holds. The next claim establishes the first item, thereby finishing the proof:
    \begin{claim}\label{claim:remains_small_norm}
        For $\eps'\in (w_{3i+2}(\eps),w_{3i+1}(\eps))$ we have
    $\norm{f - \mathrm{T}_{\mathcal{P}', 1-\eps'} f}_{\mu_x, \alpha}\leq 2\eta$,
    $\norm{g - \mathrm{T}_{\mathcal{Q}', 1-\eps'} g}_{\mu_y, \alpha}\leq 2\eta$ and
    $\norm{h - \mathrm{T}_{\mathcal{R}', 1-\eps'} h}_{\mu_z, \alpha}\leq 2\eta$.
    \end{claim}
    \begin{subproof}
    We prove the inequality for $f$, and the other two follow in the same way.
    Recall that by choice of $\mathcal{P}$ we have
    $\norm{f - \mathrm{T}_{\mathcal{P}, 1-\eps'} f}_{\mu_x, \alpha}\leq \eta$, and we next argue that
    $\norm{\mathrm{T}_{\mathcal{P}, 1-\eps'} f-
    \mathrm{T}_{\mathcal{P}', 1-\eps'} f}_2\leq 2\card{\mathcal{P}}
    \sqrt{\card{J}\eps'}$.
    Indeed, write $\mathcal{P} = \{P_1,\ldots,P_r\}$, and
    for $0\leq i\leq r$ define
    $\mathcal{P}(i) = \{P_1|_{\bar{J}},\ldots,P_{i}|_{\bar{J}}, P_{i+1},\ldots, P_r\}$.
    Then by the triangle inequality for each $i$, we have that
    \begin{align}\label{eq:get_virt_rank}
    \norm{\mathrm{T}_{\mathcal{P}(i), 1-\eps'} f-
    \mathrm{T}_{\mathcal{P}(i+1), 1-\eps'} f}_2
    &\leq
    \norm{\mathrm{T}_{\mathcal{P}(i), 1-\eps'} f-
    \mathrm{T}_{\mathcal{P}(i)\cup \{P_{i+1}|_{\bar{J}}\}, 1-\eps'} f}_2\notag\\
    &+\norm{\mathrm{T}_{\mathcal{P}(i)\cup\{P_{i+1}|_{\bar{J}}\}, 1-\eps'} f-
    \mathrm{T}_{\mathcal{P}(i+1), 1-\eps'} f}_2.
    \end{align}
    Note that the first expression on the right-hand side is at most $\sqrt{\eps'\card{J}}$
    by Fact~\ref{fact:noise_op_basic_prop2}. As for the second expression, we note that
    $\mathcal{P}(i)\cup \{P_{i+1}|_{\bar{J}}\} = \mathcal{P}(i+1)\cup \{P_{i+1}\}$, so it is also at most $\sqrt{\eps'\card{J}}$
    by Fact~\ref{fact:noise_op_basic_prop2}. Summing up~\eqref{eq:get_virt_rank} over $i$ and using the triangle inequality gives that
    $\norm{\mathrm{T}_{\mathcal{P}, 1-\eps'} f-
    \mathrm{T}_{\mathcal{P}', 1-\eps'} f}_2\leq 2\card{\mathcal{P}}
    \sqrt{\card{J}\eps'}\leq \eta$.
    Thus, $\norm{\mathrm{T}_{\mathcal{P}, 1-\eps'} f-
    \mathrm{T}_{\mathcal{P}', 1-\eps'} f}_{\mu_x,\alpha}\leq \eta$, and the result follows from the triangle inequality.
    \end{subproof}
     \end{proof}

	\section{Moving to the Mixed Space}\label{sec:mixed_space}
	In this section we prove Theorem~\ref{thm:plain_mixed_inv},
    and we first show how to associate with a given product space $(\Sigma^n,\nu^{n})$ a mixed space $((\mathbb{R}^{\card{\Sigma}-1})^{n}\times \Sigma^{n}, \gamma^{n}\times\nu^n)$, where $\gamma$ is a jointed distribution over $\card{\Sigma}-1$ correlated Gaussian random variables.
	We then show a multi-variate
    version of this association.

	\subsection{Univariate Decoupling}\label{sec:decoupled_fn}
	Suppose we have a $1$-bounded function $f\colon (\Sigma^n,\nu^n)\to\mathbb{C}$
    and we found a collection $\mathcal{P}$ as in Lemma~\ref{lem:regularity_fancy_highrank}.
    In particular:
	\begin{enumerate}
		\item $\card{\mathcal{P}} = O_{m,\alpha,\eta}(1)$.
		\item $\norm{f - \mathrm{T}_{\mathcal{P}, 1-\eps} f}_{\nu, \alpha}\leq \eta$ where $\eps$ is arbitrarily small compared to
		$\eta$.
		\item ${\sf rk}(\mathcal{P})\geq M$ where $M\gg 1/\eta,1/\eps, 1/\alpha, m$.
	\end{enumerate}
	For simplicity of notation we denote $f' = \mathrm{T}_{\mathcal{P}, 1-\eps} f$.
	By Lemma~\ref{lem:approx_formula_noise_op_fancy} we may find a function $\tilde{f}$ of the form
	\[
	\tilde{f}(x) = \sum\limits_{P\in\spn(\mathcal{P})} P(x) \cdot L_P(x)
	\]
	such that
    $\norm{f' - \tilde{f}}_2\leq \zeta$, ${\sf deg}(L_P)\leq d = O_{m,\alpha,\eta,\zeta,\eps}(1)$ and $\norm{L_P}_2\leq O_{m,\alpha,\eta,\zeta,\eps}(1)$. We define $\tilde{f}_{{\sf decoupled}}\colon (\Sigma^n\times \Sigma^n, \nu^{n}\times \nu^{n})\to\mathbb{C}$ by
	\[
	\tilde{f}_{{\sf decoupled}}(x,x') = \sum\limits_{P\in\spn(\mathcal{P})} P(x) \cdot L_P(x').
	\]
	In words, we take two independent copies of the input of $\tilde{f}$, plug one copy of it to the embedding
	functions and another copy to the low-degree functions.
	We wish to prove a relation between the functions $\tilde{f}_{{\sf decoupled}}$ and $\tilde{f}$.
	Intuitively, the distribution of
	$\tilde{f}_{{\sf decoupled}}(x,x')$ when $x,x'$ are sampled from $\nu^{n}$ independently
	is close to the distribution of $\tilde{f}(X)$ when $X\sim \nu^{n}$. To see that, inspect the formula of $\tilde{f}(X)$, and note that the values of the low-degree functions $L_P$ are
	mostly determined after we expose $1-\delta/d$ of the coordinates of $X$. On the other hand, since $\mathcal{P}$ has a high
	rank, conditioning on exposing these coordinates of $X$, the distribution of $(P(X))_{P\in \spn(\mathcal{P})}$ is still mostly the same. Thus, the values of $(L_P(X))_{P\in \spn(\mathcal{P})}$ and $(P(X))_{P\in \spn(\mathcal{P})}$
	are almost independent of each other, hence it is close to what happens in $\tilde{f}_{{\sf decoupled}}$.
	
	To prove this formally we start with a few auxiliary facts. The first of which asserts that given a high rank collection of product functions $\mathcal{P} = \{P_1,\ldots,P_r\}$, we have that for each $a_i\in {\sf Image}(P_i)$, the probability over $x\sim\nu^{\otimes n}$ that $P_i(x) = a_i$ for all $i$ is either essentially $0$, or else it is bounded away from $0$.
	\begin{fact}\label{fact:high_rank_large_sets}
		For all $m,r\in\mathbb{N}$, for sufficiently large $M\in\mathbb{N}$, if
		$\mathcal{P} = \{P_1,\ldots,P_r\colon \Sigma^n\to\mathbb{C}\} \subseteq \mathcal{P}(\Sigma,G, \sigma)$ has $\card{G},\card{\Sigma} \leq m$ and ${\sf rk}(\mathcal{P})\geq M$,
        and $\nu$ is a distribution whose support is $\Sigma$ and in which the probability of each atom is at least $\alpha$, then for all $a_i\in {\sf Image}(P_i)$, one of the following holds:
        \begin{enumerate}
            \item $\Prob{x\sim\nu^{n}}{P_i(x) = a_i~\forall i} \leq 2^{-\Omega_{m,r,\alpha}(M)}$.
            \item
            $\Prob{x\sim\nu^{n}}{P_i(x) = a_i~\forall i} \geq \Omega_{m,r}(1)$.
        \end{enumerate}
	\end{fact}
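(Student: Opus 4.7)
The plan is to analyze, via Fourier analysis on a finite Abelian group, the distribution $\pi$ of $P(x) := (P_1(x), \ldots, P_r(x))$ when $x \sim \nu^n$. Each $P_i$ takes values in the group $C_{|G|}$ of $|G|$-th roots of unity, so $\pi$ lives on $H := C_{|G|}^r$, a finite Abelian group of size $|H| = |G|^r = O_{m,r}(1)$. I will show that $\pi$ is within $\ell_\infty$ distance $2^{-\Omega(M)}$ of the uniform distribution $\pi^\ast$ on a certain subgroup $V \leq H$. Since $\pi^\ast(a) \in \{0, 1/|V|\}$ and $1/|V| \geq 1/|H| = \Omega_{m,r}(1)$, the dichotomy follows immediately once $M$ is large enough that the error drops below $1/(2|H|)$.

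To compute the Fourier coefficients, write a character $\chi = (\chi_1, \ldots, \chi_r) \in \hat H$ as $\chi_i(z) = z^{k_i}$ for integers $0 \leq k_i < |G|$, so that
\[
\hat\pi(\chi) = \mathbb{E}_{x \sim \nu^n}\Big[\prod_i P_i(x)^{k_i}\Big] = \mathbb{E}[Q_\chi], \qquad \text{where } Q_\chi := \prod_i P_i^{k_i} \in \spn(\mathcal{P}).
\]
If $Q_\chi \equiv 1$, then $\hat\pi(\chi) = 1$; otherwise the hypothesis ${\sf rk}(\mathcal{P}) \geq M$ gives $\Delta_{\sf symbolic}(Q_\chi, 1) \geq M$, and Lemma~\ref{lemma:symbolic_to_decay} applied at $\xi = 1$ (for which $\mathrm{T}_0 Q_\chi$ is the constant $\mathbb{E}[Q_\chi]$) yields $|\hat\pi(\chi)| \leq 2^{-\Omega_{m,r,\alpha}(M)}$. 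The $\tau$-separation hypothesis of that lemma holds with $\tau = \tau(m, \alpha) > 0$, since the univariate functions $\chi \circ \sigma$ for $\chi \in \hat G$ comprise a finite set of size $O_m(1)$, and since every atom of $\nu$ has mass at least $\alpha$, any two that are unequal as functions have inner product bounded away from $1$.

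Now let $K := \{\chi \in \hat H : Q_\chi \equiv 1\}$, a subgroup of $\hat H$, and let $V = K^\perp \leq H$ be its annihilator, so that the uniform distribution $\pi^\ast$ on $V$ has Fourier coefficients $\hat{\pi^\ast}(\chi) = \mathbf{1}_{\chi \in K}$. By Fourier inversion on $H$,
\[
|\pi(a) - \pi^\ast(a)| = \frac{1}{|H|}\Big|\sum_{\chi \notin K} \hat\pi(\chi) \overline{\chi(a)}\Big| \leq 2^{-\Omega_{m,r,\alpha}(M)}
\]
for every $a \in H$. For $M$ sufficiently large, this error is below $1/(2|H|)$, splitting $H$ into points with $\pi(a) \leq 2^{-\Omega(M)}$ (those outside $V$) and points with $\pi(a) \geq 1/(2|H|) = \Omega_{m,r}(1)$ (those inside $V$), which is exactly the claimed dichotomy. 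The argument is largely routine; the only mild obstacle is tracking that all implicit constants depend only on $m, r, \alpha$, which follows since the group $H$ and the set of relevant characters are determined by $G$ (whose size is $O_m(1)$) and the embedding $\sigma$.
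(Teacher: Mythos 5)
Your proof is correct and, at bottom, it is the paper's proof in a different notation. Both arguments expand the indicator $1_{P_i(x)=a_i\,\forall i}$ as an $O_{m,r}(1)$-term linear combination of expectations $\E_x\bigl[\prod_i P_i(x)^{\alpha_i}\bigr]$, separate the terms according to whether $\prod_i P_i^{\alpha_i}\equiv 1$, and invoke Lemma~\ref{lemma:symbolic_to_decay} (with the rank hypothesis) to bound the nontrivial terms by $2^{-\Omega_{m,r,\alpha}(M)}$. The paper implements the expansion with an explicit Lagrange interpolation polynomial and tracks the constant $C\cdot C'$; you implement it as Fourier inversion on a finite Abelian group $H$ of size $O_{m,r}(1)$ and track the nearby uniform measure on the annihilator subgroup $V=K^\perp$. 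These are the same decomposition written two ways, and in both the dichotomy falls out because the ``main term'' takes only $O_{m,r}(1)$ distinct values of which the nonzero ones are $\Omega_{m,r}(1)$. Your phrasing does buy a slightly cleaner structural statement (the pushforward measure $\pi$ is within $2^{-\Omega(M)}$ of uniform on the subgroup $V$ in $\ell_\infty$), which the paper proves implicitly but does not record.

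One small technicality to watch: you assert that $P_i$ takes values in $C_{|G|}$, but Definition~\ref{def:prod_fn} allows a scalar $\theta_i$ that is a root of unity of order at most $|G|$, not necessarily dividing $|G|$, so the image of $P_i$ is a coset $\theta_i\cdot C_{|G|}$ rather than $C_{|G|}$ itself. This is easily repaired — either renormalize to $\tilde P_i = P_i/\theta_i$ (which preserves the rank and turns the question about $P_i(x)=a_i$ into the same question about $\tilde P_i(x)=a_i/\theta_i$), or enlarge $H$ to $C_N^r$ with $N = \mathrm{lcm}(1,\dots,|G|) = O_m(1)$ — but as written the claim ``$P_i$ takes values in $C_{|G|}$'' is not literally correct, and you rely on it for $K$ being a subgroup of $\widehat H$.
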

	\begin{proof}
		We write
		$1_{P_i(x) = a_i} = \prod\limits_{b_i\in {\sf Image}(P_i), b_i\neq a_i}\frac{b_i-P_i(x)}{b_i-a_i}$, so that
		the left-hand side is equal to
		\[
		C\Expect{x}{\prod\limits_{i}\prod\limits_{b_i\in {\sf Image}(P_i)\setminus\set{a_i}}(b_i-P_i(x))},
		\]
		where $C=\prod\limits_{i}\prod\limits_{b_i\in {\sf Image}(P_i)\setminus\set{a_i}}\frac{1}{b_i-a_i}$.
		To compute the expectation, we expand it out and get that
		\[
		\Expect{x}{\prod\limits_{i}\prod\limits_{b_i\in {\sf Image}(P_i)\setminus\set{a_i}}(b_i-P_i(x))}
		=\sum\limits_{\alpha_1,\ldots,\alpha_r}C(\alpha_1,\ldots,\alpha_r)\Expect{x}{\prod\limits_{i=1}^{r}P_i(x)^{\alpha_i}},
		\]
		where each one of the coefficients $C(\alpha_1,\ldots,\alpha_r)$ is a signed sum of products of elements in the images of the functions $P_i$. Defining
		\[
		\mathcal{A} = \sett{(\alpha_1,\ldots,\alpha_r)}{\prod\limits_{i=1}^{r}P_i(x)^{\alpha_i}\not\equiv 1},
        \qquad
        C' = \sum\limits_{(\alpha_1,\ldots,\alpha_r)\not\in \mathcal{A}}C(\alpha_1,\ldots,\alpha_r),
		\]
        we get that
        $\Prob{x\sim\nu^{n}}{P_i(x) = a_i~\forall i} = C\cdot C' + C\sum\limits_{(\alpha_1,\ldots,\alpha_r)\not\in \mathcal{A}}C(\alpha_1,\ldots,\alpha_r)\Expect{x}{\prod\limits_{i=1}^{r}P_i(x)^{\alpha_i}}$. Note that by the triangle inequality,
        \[
        \card{C\sum\limits_{(\alpha_1,\ldots,\alpha_r)\not\in \mathcal{A}}C(\alpha_1,\ldots,\alpha_r)\Expect{x}{\prod\limits_{i=1}^{r}P_i(x)^{\alpha_i}}}
        \lll_{m,r}
        \max_{(\alpha_1,\ldots,\alpha_r)\not\in \mathcal{A}}\card{\Expect{x}{\prod\limits_{i=1}^{r}P_i(x)^{\alpha_i}}}
        \lll_{m,r}
        2^{-\Omega_{m,r,\alpha}(M)},
        \]
        where the last inequality is by Lemma~\ref{lemma:symbolic_to_decay} applied with $\xi=1$. As $M$ is sufficiently large, we conclude that
        \[
        \card{C\cdot C'}-2^{-\Omega_{m,r}(M)}
        \leq
        \Prob{x\sim\nu^{n}}{P_i(x) = a_i~\forall i}\leq \card{C\cdot C'}+2^{-\Omega_{m,r}(M)}.
        \]
        Because the image of each $P_i$ is a root of unity of order at most $\card{G}\leq m$, we conclude that $\card{C}\geq \Omega_{m,r}(1)$. As $C'$ is a signed sum of at most $r^r$ products, each product consisting of elements from images of the functions $P_i$, we get that either $C' = 0$ or else $\card{C'}\geq \Omega_{m,r}(1)$. In conclusion, it is either the case that $\card{C\cdot C'} = 0$, in which the first item holds, and otherwise $\card{C\cdot C'}\geq \Omega_{m,r}(1)$, in which case the second item holds.
	\end{proof}
	
	Next, suppose that $\mathcal{P} = \{P_1,\ldots,P_r\}$ product functions, and
	for each $i$ take $a_i\in {\sf Image}(P_i)$ and consider the set $S = \sett{x\in\Sigma^n}{P_i(x) = a_i~\forall i=1,\ldots,r}$. The next result shows
	that the function $\mathrm{T}_{1-\kappa} 1_S$ is close to a constant function.
	\begin{fact}\label{fact:high_rank_mixes}
		For all $r,m\in\mathbb{N}$, $\xi>0$ the following holds for sufficiently large $M\in\mathbb{N}$.
		Suppose $\mathcal{P} = \{P_1,\ldots,P_r\colon \Sigma^n\to\mathbb{C}\} \subseteq \mathcal{P}(\Sigma,G, \sigma)$ where $\card{G},\card{\Sigma} \leq m$ has ${\sf rk}(\mathcal{P})\geq M$ and
		let $a_i\in {\sf Image}(P_i)$ for $i=1,\ldots,r$. Then for
		\[
		S = \sett{x\in\Sigma^n}{P_i(x) = a_i~\forall i=1,\ldots,r}
		\]
		we have $\left\|\mathrm{T}_{1-\kappa} 1_S - \nu^n(S)\right\|_2\leq 2^{-\Omega_{m,\kappa,r}(M)}$.
	\end{fact}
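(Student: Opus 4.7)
The plan is to expand the indicator $1_S$ as a linear combination of product functions in $\spn(\mathcal{P})$, then split the sum into a ``trivial'' part (summands whose product function is identically $1$) and a ``nontrivial'' part. The trivial part will contribute a constant approximately equal to $\nu^n(S)$, while on each nontrivial summand we apply Lemma~\ref{lemma:symbolic_to_decay} to obtain the exponential decay in $M$.

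First, since each $P_i$ takes values in roots of unity of order at most $\card{G}\leq m$, the set $\bigcup_i {\sf Image}(P_i)$ consists of $O_m(1)$ points, pairwise separated by $\Omega_m(1)$ in $\mathbb{C}$. I would construct, via Lagrange interpolation, univariate polynomials $Q_i(z)$ of degree $O_m(1)$ with coefficients of magnitude $O_m(1)$ such that $Q_i(P_i(x)) = 1_{P_i(x)=a_i}$. Multiplying these out and regrouping exactly as in the proof of Fact~\ref{fact:high_rank_large_sets} yields
\begin{equation*}
1_S(x) \;=\; \prod_{i=1}^r Q_i(P_i(x)) \;=\; \sum_{\alpha\in\{0,\ldots,\card{G}-1\}^r} C(\alpha)\, P^{\alpha}(x),
\end{equation*}
where $P^{\alpha}:=\prod_i P_i^{\alpha_i}\in\spn(\mathcal{P})$ and each $\card{C(\alpha)}=O_{m,r}(1)$. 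Partition the index set as $\mathcal{A}_{\sf triv}\cupdot\mathcal{A}_{\sf nt}$ according to whether $P^{\alpha}\equiv 1$ or not, and set $C_0=\sum_{\alpha\in\mathcal{A}_{\sf triv}}C(\alpha)$.

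For each $\alpha\in\mathcal{A}_{\sf nt}$ we have $\Delta_{{\sf symbolic}}(P^{\alpha},1)\geq {\sf rk}(\mathcal{P})\geq M$ by definition of rank, so Lemma~\ref{lemma:symbolic_to_decay} (using that the class $\mathcal{P}(\Sigma,G,\sigma)$ is $\Omega_m(1)$-separated, since the inner products of distinct characters of $G$ are bounded away from $1$ in magnitude) gives
\begin{equation*}
\norm{\mathrm{T}_{1-\kappa} P^{\alpha}}_2 \;\leq\; 2^{-\Omega_{m,\kappa}(M)}.
\end{equation*}
Applying $\mathrm{T}_{1-\kappa}$ termwise and using the triangle inequality together with $\mathrm{T}_{1-\kappa} P^{\alpha}=P^{\alpha}\equiv 1$ for $\alpha\in\mathcal{A}_{\sf triv}$,
\begin{equation*}
\norm{\mathrm{T}_{1-\kappa} 1_S - C_0}_2 \;\leq\; \sum_{\alpha\in\mathcal{A}_{\sf nt}}\card{C(\alpha)}\cdot \norm{\mathrm{T}_{1-\kappa} P^{\alpha}}_2 \;\leq\; O_{m,r}(1)\cdot 2^{-\Omega_{m,\kappa}(M)} \;\leq\; 2^{-\Omega_{m,r,\kappa}(M)}.
\end{equation*}

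To close the loop, it remains to compare $C_0$ with $\nu^n(S)$. Taking the expectation of the expansion of $1_S$ gives $\nu^n(S)=C_0+\sum_{\alpha\in\mathcal{A}_{\sf nt}}C(\alpha)\,\E[P^{\alpha}]$; but Lemma~\ref{lemma:symbolic_to_decay} applied with $\xi=1$ yields $\card{\E[P^{\alpha}]}\leq \norm{\mathrm{T}_0 P^{\alpha}}_2 \leq 2^{-\Omega_{m}(M)}$ for each $\alpha\in\mathcal{A}_{\sf nt}$, so $\card{C_0-\nu^n(S)}\leq 2^{-\Omega_{m,r}(M)}$. Combining with the previous display and one more triangle inequality completes the proof. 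No step is truly an obstacle; the whole argument is the natural ``with noise operator'' analogue of Fact~\ref{fact:high_rank_large_sets}. The only mild care is in verifying that the decay constant in Lemma~\ref{lemma:symbolic_to_decay} depends on $\kappa$ in the right way (one invokes it with $c=\kappa$), and in bookkeeping the separation constant for the ambient class of product functions.
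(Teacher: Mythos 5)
Your proof is correct and takes essentially the same approach as the paper: expand $1_S$ via Lagrange interpolation (as in Fact~\ref{fact:high_rank_large_sets}) into a constant plus a linear combination of nontrivial elements of $\spn(\mathcal{P})$ with $O_{m,r}(1)$-bounded coefficients, apply Lemma~\ref{lemma:symbolic_to_decay} to the nontrivial terms under $\mathrm{T}_{1-\kappa}$, and compare the constant term with $\nu^n(S)$ via the $\xi=1$ case of the same lemma. The only cosmetic remark is that the separation constant $\tau$ for $\mathcal{P}(\Sigma,G,\sigma)$ (and hence the hidden constants) in general also depends on the minimum atom probability $\alpha$ of $\nu$, not only on $m$ — a dependence that both you and the paper's statement leave implicit.
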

	\begin{proof}
		Expanding $1_S$ as in the proof of Fact~\ref{fact:high_rank_large_sets}, we see that
		$1_S = C + \sum\limits_{P\in \spn(\mathcal{P}), P\neq 1} C(P) P$ where $C(P)$ are constants satisfying $\card{C(P)} = O_{m,r}(1)$.
		Taking expectation of both sides as in Fact~\ref{fact:high_rank_large_sets} gives $\nu^n(S) = C + 2^{-\Omega_{m,r}(M)}$,
        so $\card{\nu^n(S)-C}\leq 2^{-\Omega_{m,r}(M)}$.
		We get from the triangle inequality that
		\begin{align*}
			\left\|\mathrm{T}_{1-\kappa} 1_S - \nu^n(S)\right\|_2
            &\leq
            2^{-\Omega_{m,r}(M)}+
            \left\|\mathrm{T}_{1-\kappa} 1_S - C\right\|_2\\
			&\leq
			2^{-\Omega_{m,r}(M)}
			+O_{m,r}\left(\max_{P\in \spn(\mathcal{P}), P\neq 1} \norm{\mathrm{T}_{1-\kappa} P}_2\right)\\
			&\leq
			2^{-\Omega_{m,r}(M)}
			+O_{m,r}\left(\max_{P\in \spn(\mathcal{P}), P\neq 1} 2^{-\Omega_{m,r,\kappa}(M)}\right)\\
			&\leq
			2^{-\Omega_{m,r,\kappa}(M)},
		\end{align*}
		where we used Lemma~\ref{lemma:symbolic_to_decay}.
	\end{proof}
	
	We are now ready to prove the main technical ingredient in the relation between $\tilde{f}$ and $\tilde{f}_{{\sf decoupled}}$.
	\begin{lemma}\label{lem:coupling_decoupled}
		Let $r,m\in\mathbb{N}$, $\alpha,\kappa,\xi>0$, let  $M\in\mathbb{N}$ be sufficiently large
        and let 
        \[
        \mathcal{P} = \{P_1,\ldots,P_r\colon \Sigma^n\to\mathbb{C}\} \subseteq \mathcal{P}(\Sigma,G, \sigma)
        \]
        be a collection of product functions satisfying $\card{G},\card{\Sigma} \leq m$ and ${\sf rk}(\mathcal{P})\geq M$.
        Let $\nu$ be a distribution whose support is $\Sigma$ in which the probability of each atom is at least $\alpha$, and consider the joint distribution of $(x,x')$ and $X$
		sampled as:
        \begin{enumerate}
            \item Sample $X\sim \nu^{n}$.
            \item Independently sample $x\sim \mathrm{T}_{\mathcal{P},0} X$ and $x'\sim \mathrm{T}_{1-\kappa}X$.
        \end{enumerate}
        Then
		then distribution of $(x,x')$ is $2^{-\Omega_{r,m,\kappa}(M)}$ close to $\nu^{n}\times \nu^{n}$ in statistical distance.
	\end{lemma}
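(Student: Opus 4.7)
The plan is to compute the statistical distance directly by a Bayes rearrangement of the coupling, which reduces the question to bounding $L_1$-norms of the functions $\mathrm{T}_{1-\kappa}1_{S_{\vec{a}}}$ for $S_{\vec{a}} = \{X : P_i(X)=a_i\ \forall i\}$, to which Fact~\ref{fact:high_rank_mixes} applies.

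First I would set up the Bayes inversion. By Fact~\ref{fact:noise_op_basic_prop1}, the marginal of $x$ under the coupling is $\nu^n$. Writing $\mathcal{E}_X = (P_1(X),\ldots,P_r(X))$, a direct application of Bayes' rule shows that, given $x=y$, the conditional distribution of $X$ is simply $\nu^n$ restricted to $S_{\mathcal{E}_y}$ (and renormalized). Since $x' \sim \mathrm{T}_{1-\kappa} X$ conditionally independently of $x$ given $X$, and since the noise operator $\mathrm{T}_{1-\kappa}$ is reversible with respect to $\nu^n$, a short computation gives
\[
\Pr[x' = y' \mid x = y]
=\frac{1}{\nu^n(S_{\mathcal{E}_y})}\sum_{X_0\in S_{\mathcal{E}_y}}\nu^n(X_0)K(X_0,y')
= \frac{\nu^n(y')\cdot \mathrm{T}_{1-\kappa}1_{S_{\mathcal{E}_y}}(y')}{\nu^n(S_{\mathcal{E}_y})},
\]
where $K(\cdot,\cdot)$ denotes the transition kernel of $\mathrm{T}_{1-\kappa}$.

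Next I would compute the statistical distance directly. Writing the SD as a sum over $(y,y')$, collapsing the inner sum over $y$ with fixed $\mathcal{E}_y = \vec{a}$, and using that the conditional above depends on $y$ only through $\mathcal{E}_y$, I obtain
\[
\mathrm{SD}\big((x,x'),\ \nu^n\times\nu^n\big)
= \tfrac{1}{2}\sum_{\vec{a}} \big\|\mathrm{T}_{1-\kappa}1_{S_{\vec{a}}} - \nu^n(S_{\vec{a}})\big\|_1,
\]
where the $L_1$-norm is with respect to $\nu^n$. This formulation removes the combinatorial structure of the coupling and reduces the claim to an explicit analytic estimate.

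Finally I would apply Fact~\ref{fact:high_rank_mixes}. The number of tuples $\vec{a} \in \prod_i \mathrm{Image}(P_i)$ is at most $m^r = O_{m,r}(1)$. For each such $\vec{a}$, Cauchy--Schwarz in $L_2(\nu^n)$ gives $\|\cdot\|_1 \leq \|\cdot\|_2$, and Fact~\ref{fact:high_rank_mixes} (which crucially uses the high-rank hypothesis ${\sf rk}(\mathcal{P})\geq M$) yields $\|\mathrm{T}_{1-\kappa}1_{S_{\vec{a}}} - \nu^n(S_{\vec{a}})\|_2 \leq 2^{-\Omega_{m,r,\kappa}(M)}$. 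Summing over the $O_{m,r}(1)$ values of $\vec{a}$ gives the claimed bound $2^{-\Omega_{r,m,\kappa}(M)}$. There is no serious obstacle here: all the hard work is inside Fact~\ref{fact:high_rank_mixes}, and the present lemma is really just the observation that the coupling being close to product is literally equivalent to the functions $\mathrm{T}_{1-\kappa}1_{S_{\vec{a}}}$ being close to constant.
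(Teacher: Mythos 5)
Your proof is correct and follows the same core strategy as the paper: Bayes-invert the coupling using that the marginal of $x$ is $\nu^n$ (Fact~\ref{fact:noise_op_basic_prop1}), use reversibility of $\mathrm{T}_{1-\kappa}$ to express the conditional of $x'$ given $x$ in terms of $\mathrm{T}_{1-\kappa}1_{S_{\vec a}}$, and then invoke the high-rank mixing estimate. The difference is purely in the bookkeeping, and yours is slightly cleaner. The paper pulls the factor $1/\nu^n(S_w)$ outside the absolute value, which forces it to invoke Fact~\ref{fact:high_rank_large_sets} and separately handle the ``bad'' $w$'s (those with $\nu^n(S_w)$ exponentially small) before replacing $1/\nu^n(S_w)$ by $O_{m,r}(1)$ on the good $w$'s. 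You instead notice that collapsing the sum over $y$ with $\mathcal{E}_y=\vec a$ contributes exactly the weight $\nu^n(S_{\vec a})$, which cancels the denominator and yields the exact identity
\[
\mathrm{SD}\bigl((x,x'),\,\nu^n\times\nu^n\bigr) = \tfrac12\sum_{\vec a}\bigl\|\mathrm{T}_{1-\kappa}1_{S_{\vec a}}-\nu^n(S_{\vec a})\bigr\|_1.
\]
After that, $\|\cdot\|_1\le\|\cdot\|_2$ plus Fact~\ref{fact:high_rank_mixes} and the $O_{m,r}(1)$ count of tuples $\vec a$ finish it off. This makes Fact~\ref{fact:high_rank_large_sets} and the good/bad case-split unnecessary for this particular lemma, which is a modest but genuine simplification; no gap.
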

	\begin{proof}
		Fix $w,w'\in\Sigma^n$, and set $q_{w,w'} = \Prob{X, x, x'}{x = w, x' = w'}$,
		$q_{w'|w} = \cProb{X, x, x'}{x = w}{x' = w'}$. By Fact~\ref{fact:noise_op_basic_prop1}
		the marginal distribution of $x$ is $\nu^{n}$ and so $q_{w,w'} = \nu^{n}(w)q_{w'|w}$.
        For $a_i\in {\sf Image}(P_i)$
        define
		\[
		S_{a_1,\ldots,a_r} = \sett{u\in\Sigma^n}{P_i(u) = a_i~\forall i},
		\]
        and let $S_w = S_{P_1(w),\ldots,P_r(w)}$.
        By Fact~\ref{fact:high_rank_large_sets}, for each $\vec{a} = (a_1,\ldots,a_r)$ we either have that $\nu^n(S_{\vec{a}})\leq 2^{-\Omega_{m,r,\alpha}(M)}$ and otherwise $\nu^n(S_{\vec{a}})\geq \Omega_{m,r}(1)$, and we say $w$ is bad if
        $\nu^n(S_{w})\leq 2^{-\Omega_{m,r,\alpha}(M)}$.
        Note that
        \begin{equation}\label{eq:decoupling1}
        \sum\limits_{\substack{w,w'\\ w\text{ is bad}}}q_{w,w'}
        =\sum\limits_{w\text{ is bad}}\nu^n(w)
        =\sum\limits_{\substack{\vec{a}\\ \nu^{n}(S_{\vec{a}})\leq 2^{-\Omega_{m,r,\alpha}(M)}}}\nu^n(S_{\vec{a}})
        \leq 2^{-\Omega_{m,r,\alpha}(M)},
        \end{equation}
        where in the last inequality we used the fact that the total number of $\vec{a}$ is $O_{m,r}(1)$. Similarly,
        \begin{equation}\label{eq:decoupling2}
        \sum\limits_{\substack{w,w'\\ w\text{ is bad}}}\nu^{n}(w)\nu^{n}(w')
        =\sum\limits_{w\text{ is bad}}\nu^n(w)
        \leq 2^{-\Omega_{m,r,\alpha}(M)},
        \end{equation}
        We next analyze good $w$.
		Note that conditioned on $x = w$, the distribution of $X$ is $\nu$ conditioned on $S_w$:
		\[
		\cProb{X,x,x'}{x=w}{X=u} = \nu^n(u) \frac{1_{S_w}(u)}{\nu^n(S_w)}.
		\]
		Thus,
		\[
		q_{w'|w}
		= \sum\limits_{u} \nu^n(u) \frac{1_{S_w}(u)}{\nu^n(S_w)} \Prob{u'\sim \mathrm{T}_{1-\kappa} u}{u' = w'}
		= \sum\limits_{u} \nu^n(u) \frac{1_{S_w}(u)}{\nu^n(S_w)} \mathrm{T}_{1-\kappa} 1_{w'}(u),
		\]
		where $1_{w'}(v) = 1_{w' = v}$. Writing this last expression as an inner product, we get that
		\[
		q_{w'|w}
		= \inner{\frac{1_{S_w}}{\nu^n(S_w)}}{\mathrm{T}_{1-\kappa} 1_{w'}}
		= \inner{\mathrm{T}_{1-\kappa}\frac{1_{S_w}}{\nu^n(S_w)}}{1_{w'}}
		= \nu^n(w') \mathrm{T}_{1-\kappa}\frac{1_{S_w}}{\nu(S_w)}(w'),
		\]
		and so $q_{w,w'} = \nu^n(w)\nu^n(w')\mathrm{T}_{1-\kappa}\frac{1_{S_w}}{\nu^n(S_w)}(w')$. Computing, we get that
		\begin{align*}
			\sum\limits_{\substack{w,w'\\ w\text{ not bad}}}\card{q_{w,w'} - \nu^n(w)\nu^n(w')}
			&=
			\sum\limits_{\substack{w,w'\\ w\text{ not bad}}}\nu^n(w)\nu^n(w')\card{\mathrm{T}_{1-\kappa}\frac{1_{s_w}}{\nu^n(S_w)}(w') - 1}\\
            &\lll_{m,r}
            \sum\limits_{w,w'}\nu^n(w)\nu^n(w')\card{\mathrm{T}_{1-\kappa}1_{S_w}(w') - \nu^n(S_w)}\\
			&=
			\Expect{w\sim\nu^{n}}{\left\|\mathrm{T}_{1-\kappa}1_{S_w}-\nu^n(S_w)\right\|_1},
		\end{align*}
        which is at most $2^{-\Omega_{m,r,\alpha}(M)}$ by Fact~\ref{fact:high_rank_mixes}. Combining this with~\eqref{eq:decoupling1} and~\eqref{eq:decoupling2} gives the lemma.
	\end{proof}
	
	We can now state and prove the relation between the functions
	$\tilde{f}$ and $\tilde{f}_{{\sf decoupled}}$.
	\begin{lemma}\label{lem:decoupled_1}
		Let $m,r,d\in\mathbb{N}$, and $M\in\mathbb{N}$.
		Let $\mathcal{P} = \{P_1,\ldots,P_r\colon \Sigma^n\to\mathbb{C}\}$ be a collection product functions, and consider the joint distribution of $(x,x')$ and $X$
		from Lemma~\ref{lem:coupling_decoupled}. Then
		\[
		\Expect{x,x',X}{\card{\tilde{f}_{{\sf decoupled}}(x,x') - \tilde{f}(X)}^2}\lll_{d,m,r} \kappa.
		\]
	\end{lemma}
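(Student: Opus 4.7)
The plan is to exploit the specific structure of the coupling from Lemma~\ref{lem:coupling_decoupled}. Since $x$ is sampled from $\nu^{n}$ conditioned on $P_i(x)=P_i(X)$ for all $i=1,\ldots,r$, we have $P(x)=P(X)$ almost surely for every $P\in\spn(\mathcal{P})$. Substituting into the definitions of $\tilde{f}$ and $\tilde{f}_{{\sf decoupled}}$, this collapses the difference to
\[
\tilde{f}_{{\sf decoupled}}(x,x')-\tilde{f}(X)=\sum_{P\in\spn(\mathcal{P})}P(X)\bigl(L_P(x')-L_P(X)\bigr),
\]
isolating the entire discrepancy to the low-degree factors evaluated at two $(1-\kappa)$-correlated inputs.

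Since each $P\in\spn(\mathcal{P})$ is a product of characters of $G$ and hence pointwise satisfies $|P(X)|=1$, Cauchy--Schwarz over the $N:=|\spn(\mathcal{P})|\leq|G|^{r}=O_{m,r}(1)$ summands yields
\[
\bigl|\tilde{f}_{{\sf decoupled}}(x,x')-\tilde{f}(X)\bigr|^{2}\leq N\sum_{P\in\spn(\mathcal{P})}\bigl|L_P(x')-L_P(X)\bigr|^{2}.
\]
The remaining task is to bound $\Expect{X,x'}{|L_P(x')-L_P(X)|^{2}}$ for each fixed $P$. Marginally, $(X,x')$ is distributed as a $(1-\kappa)$-correlated pair under $\nu^{n}$, so using the Efron--Stein decomposition of $L_P$ together with the eigenvalue action of $\mathrm{T}_{1-\kappa}$,
\[
\Expect{X,x'}{\bigl|L_P(x')-L_P(X)\bigr|^{2}}=2\sum_{|S|\leq d}\bigl(1-(1-\kappa)^{|S|}\bigr)\bigl\|L_P^{=S}\bigr\|_{2}^{2}\leq 2d\kappa\,\|L_P\|_{2}^{2},
\]
where in the last step we used $1-(1-\kappa)^{|S|}\leq \kappa d$ for $|S|\leq d$. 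Summing over $P$ and using the $L_{2}$ bounds on the $L_P$ from the setup of Section~\ref{sec:decoupled_fn}, we obtain a total bound of $O_{m,r,d}(\kappa)$, which is what is claimed.

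There is no substantive obstacle in this argument: the hard part was already completed in Lemma~\ref{lem:coupling_decoupled}, whose coupling was engineered precisely so that the character-valued parts coincide on the nose. What remains is the short computation above, in which the exact character cancellation reduces matters to a textbook noise-stability estimate for low-degree functions.
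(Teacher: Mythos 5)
Your proof is correct and matches the paper's approach: both use that $P(x)=P(X)$ under the coupling to reduce the difference to $\sum_P P(X)(L_P(x')-L_P(X))$, apply Cauchy--Schwarz over the $O_{m,r}(1)$ summands, and bound the remaining term via noise stability of a degree-$\leq d$ function under a $(1-\kappa)$-correlated pair. The only difference is that you spell out the Efron--Stein calculation explicitly, whereas the paper compresses it into the single expression $\norm{L_P-\mathrm{T}_{1-\kappa}L_P}_2^2$ (which, strictly speaking, is not identical to $\Expect{}{|L_P(x')-L_P(X)|^2}$ but is bounded by the same quantity up to a constant); your version is the more careful one.
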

	\begin{proof}
		Writing $\tilde{f}(x) = \sum\limits_{P\in \spn(\mathcal{P})}{P(x) L_P(x)}$ where ${\sf deg}(L_P)\leq d$, we get that
		the left-hand side in the lemma is equal to
        (as $P(x)=P(X)$ for $x\sim \mathrm{T}_{\mathcal{P},0} X$):
		\begin{align*}
			\Expect{x,x',X}{\card{\sum\limits_{P\in \spn(\mathcal{P})}{P(x) L_P(x') - P(X) L_P(X)}}^2}
			=
			\Expect{x,x',X}{\card{\sum\limits_{P\in \spn(\mathcal{P})}P(X) (L_P(x') - L_P(X))}^2}.
		\end{align*}
		By Cauchy-Schwarz, we may upper bound the last quantity by
		\[
		\lll_{m,r}
		\sum\limits_{P\in \spn(\mathcal{P})}\Expect{x,x',X}{\card{L_P(x') - L_P(X)}^2}
		=\sum\limits_{P\in \spn(\mathcal{P})}\norm{L_P - \mathrm{T}_{1-\kappa}L_P}_2^2
		\lll_{m,r} d\kappa.
		\qedhere
		\]
	\end{proof}

    \subsection{Multi-Variate Decoupling}\label{sec:multi_variate_dec}

    In this section we generalize the results of the previous section to the multi-variate setting. In the multi-variate setting we have alphabets $\Sigma,\Gamma,\Phi$ each of size $m$ and a distribution $\mu$ over $\Sigma\times \Gamma\times \Phi$. The distribution $\mu$ will not be fully supported over $\Sigma\times \Gamma\times \Phi$, and when we discuss product functions over $\Sigma\times \Gamma\times \Phi$ we always mean that their domain is only ${\sf supp}(\mu)$. The following result is the multi-variate analog of Lemma~\ref{lem:coupling_decoupled}:
    \begin{lemma}\label{lem:coupling_decoupled_multi}
       Let $\Sigma,\Gamma,\Phi$ be alphabets of size at most $m$, let $\mu$ be a distribution over $\Sigma\times\Gamma\times\Phi$ in which the probability of each atom is at least $\alpha$, and let $\mathcal{D}\subseteq \mathcal{P}({\sf supp}(\mu), G, \sigma)$ be a collection of product functions with $\card{G}\leq m$ and ${\sf rk}(\mathcal{D})\geq M$.
       Consider the coupling
        $((x,y,z), (x',y',z'),(X,Y,Z))$ defined as follows:
        \begin{enumerate}
            \item Sample $(X,Y,Z)\sim \mu^{n}$.
            \item Sample
            $(x,y,z)\sim \mathrm{T}_{\mathcal{D},0} (X,Y,Z)$.
            \item Sample $(x',y',z')\sim \mathrm{T}_{1-\kappa}(X,Y,Z)$.
        \end{enumerate}
        Then
		then distribution of $((x',y',z'),(x,y,z))$ is $2^{-\Omega_{r,m,\kappa}(M)}$ close to $\mu^{n}\times \mu^{n}$ in statistical distance.
	\end{lemma}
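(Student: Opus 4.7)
The plan is to reduce this lemma directly to the univariate Lemma~\ref{lem:coupling_decoupled} by treating the support of $\mu$ as a single ambient alphabet. First I would set $\tilde{\Sigma} = {\sf supp}(\mu) \subseteq \Sigma\times\Gamma\times\Phi$ and view $\mu$ as a distribution on $\tilde{\Sigma}$ with full support, noting that $|\tilde{\Sigma}|\leq m^3$ and every atom of $\mu$ has probability at least $\alpha$. Under this identification, $\mu^{\otimes n}$ is a genuine product distribution on $\tilde{\Sigma}^n$, the noise operator $\mathrm{T}_{1-\kappa}$ used in the statement is precisely the standard (Definition~\ref{def:noise_op_P}-style) noise operator on $(\tilde{\Sigma}^n,\mu^{\otimes n})$, and $\mathrm{T}_{\mathcal{D},0}(X,Y,Z)$ is exactly $\mathrm{T}_{\mu,\mathcal{D},0}(X,Y,Z)$, i.e.\ resampling from $\mu^{\otimes n}$ subject to matching every $D\in\mathcal{D}$.

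Next I would recast $\mathcal{D}$ as a collection inside $\mathcal{P}(\tilde{\Sigma},\tilde{G},\tilde{\sigma})$ for appropriate $\tilde{G}$ and $\tilde{\sigma}$. If $D_P(x,y,z)=P(x)=\prod_i u_i(\sigma(x_i))$ and similarly for $D_Q,D_R$ in groups $G$, one can take $\tilde{G}=G^3$ and $\tilde{\sigma}(a,b,c)=(\sigma(a),\gamma(b),\phi(c))$; every $D\in\mathcal{D}$ is then $\prod_i \tilde{u}_i\circ\tilde{\sigma}$ for characters $\tilde{u}_i$ of $\tilde{G}$ that are trivial on the coordinates of $\tilde{G}$ irrelevant to that particular $D$. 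Because the symbolic distance is intrinsic to the multiplicative product structure and a coordinate $i$ is active in the new description exactly when it was active in the old one, the hypothesis ${\sf rk}(\mathcal{D})\geq M$ carries over verbatim, as does the bound $|\tilde{G}|=O_m(1)$.

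With these identifications in place, I would simply invoke Lemma~\ref{lem:coupling_decoupled} with the substitutions $\Sigma\leftarrow\tilde{\Sigma}$, $\nu\leftarrow\mu$, $\mathcal{P}\leftarrow\mathcal{D}$, $G\leftarrow\tilde{G}$, and with the size parameter $m$ replaced by $m^3$ (which only inflates the hidden constants in the exponent $\Omega_{r,m,\kappa}(M)$). The coupling $((X,Y,Z),(x,y,z),(x',y',z'))$ defined here is exactly the coupling in the statement of Lemma~\ref{lem:coupling_decoupled} when read in the alphabet $\tilde{\Sigma}$, so its conclusion yields that the marginal of $((x,y,z),(x',y',z'))$ is $2^{-\Omega_{r,m,\kappa}(M)}$-close in statistical distance to $\mu^{\otimes n}\times\mu^{\otimes n}$, which is the desired bound.

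I do not foresee a genuine obstacle: the whole content is in unpacking the bookkeeping. The only step that requires a moment of thought is checking that the rank hypothesis and the product-function structure transfer correctly after merging three alphabets into one and enlarging the group to $G^3$, but since both symbolic distance and the noise operator $\mathrm{T}_{1-\kappa}$ depend only on the underlying product-measure structure on $\tilde{\Sigma}^n$, this transfer is immediate and no quantitative loss beyond the replacement $m\rightsquigarrow m^3$ is incurred.
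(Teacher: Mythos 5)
Your proposal is correct and takes exactly the approach the paper intends: the paper's own proof is the single line ``Immediate from Lemma~\ref{lem:coupling_decoupled},'' which is precisely the reduction you spell out (identify the alphabet with ${\sf supp}(\mu)$, the distribution with $\mu$, and the collection with $\mathcal{D}$, and note the hypotheses on $\nu,\alpha,G,{\sf rk}$ transfer with $m\rightsquigarrow m^3$). The one small simplification you could have noticed is that the detour through $\tilde{G}=G^3$ is already unnecessary here, since the lemma statement directly hands you $\mathcal{D}\subseteq\mathcal{P}({\sf supp}(\mu),G,\sigma)$ with $|G|\leq m$, so the hypotheses of Lemma~\ref{lem:coupling_decoupled} are met verbatim without repackaging the group.
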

    \begin{proof}
        Define the alphabet
        $\widetilde{\Sigma} = \supp(\mu)$, and consider the distribution $\mu$ over it and the collection of product functions $\mathcal{D}\subseteq \mathcal{P}(\widetilde{\Sigma},G,\sigma)$. In these notations, the distribution in the lemma can be viewed as sampling $W\sim \mu^{n}$, $w\sim \mathrm{T}_{\mathcal{D},0} W$ and $W'\sim \mathrm{T}_{1-\kappa} W$. This is precisely the distribution in Lemma~\ref{lem:coupling_decoupled}, so as ${\sf rk}(\mathcal{D})\geq M$, the statement follows.
    \end{proof}

    We next state a version of Lemma~\ref{lem:decoupled_1}
    for the coupling in Lemma~\ref{lem:coupling_decoupled_multi}. For that, given a functions $1$-bounded functions $f\colon(\Sigma^n,\mu_x^n)\to\mathbb{C}$,
    $g\colon(\Gamma^n,\mu_y^n)\to\mathbb{C}$
    and $h\colon(\Phi^n,\mu_z^n)\to\mathbb{C}$ we find collections of product functions $\mathcal{P},\mathcal{Q},\mathcal{R}$ as in Lemma~\ref{lem:regularity_fancy_highrank}, and then set up the functions $\tilde{f},\tilde{f}_{{\sf decoupled}}$ and analogously
    $\tilde{g},\tilde{g}_{{\sf decoupled}}$ and
    $\tilde{h},\tilde{h}_{{\sf decoupled}}$
    as in Section~\ref{sec:decoupled_fn}.
    \begin{lemma}\label{lem:decoupled_2}
		In the above setting, consider the joint distribution
		from Lemma~\ref{lem:coupling_decoupled_multi}. Then each one of the expectations
		\[
		\Expect{}{\card{\tilde{f}_{{\sf decoupled}}(x,x') - \tilde{f}(X)}^2},
        \Expect{}{\card{\tilde{g}_{{\sf decoupled}}(y,y') - \tilde{g}(Y)}^2},\Expect{}{\card{\tilde{h}_{{\sf decoupled}}(z,z') - \tilde{h}(Z)}^2}
		\]
        is $\lll_{d,m,r} \kappa$.
	\end{lemma}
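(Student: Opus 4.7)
The plan is to reduce the three bounds to essentially the same calculation as in Lemma~\ref{lem:decoupled_1}, after observing that the multi-variate coupling restricts on each coordinate to precisely the univariate coupling. I will describe the argument for $\tilde{f}$; the arguments for $\tilde{g}$ and $\tilde{h}$ are identical.

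The first step is to expand
\[
\tilde{f}(X) = \sum_{P\in\spn(\mathcal{P})} P(X)\,L_P(X),
\qquad
\tilde{f}_{\sf decoupled}(x,x') = \sum_{P\in\spn(\mathcal{P})} P(x)\,L_P(x'),
\]
so that their difference is $\sum_{P\in\spn(\mathcal{P})} \left( P(x)L_P(x') - P(X)L_P(X) \right)$. The key observation is that the collection $\mathcal{D}$, by construction in Lemma~\ref{lem:regularity_fancy_highrank}, contains the lifted versions $D_P(x,y,z) = P(x)$ for every $P\in\mathcal{P}$, and consequently for every $P\in\spn(\mathcal{P})$. Hence, when we sample $(x,y,z)\sim \mathrm{T}_{\mathcal{D},0}(X,Y,Z)$, the values of all such $P$'s are preserved, i.e.\ $P(x) = P(X)$ almost surely. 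This lets me rewrite the difference as $\sum_{P} P(X)(L_P(x') - L_P(X))$, which is exactly the shape that appeared in Lemma~\ref{lem:decoupled_1}.

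Next, I apply Cauchy--Schwarz over the (constantly many, in terms of $m,r$) terms in $\spn(\mathcal{P})$ and use $|P(X)|\leq 1$ to arrive at
\[
\Expect{}{\card{\tilde{f}_{\sf decoupled}(x,x') - \tilde{f}(X)}^2}
\lll_{m,r}
\sum_{P\in\spn(\mathcal{P})} \Expect{}{\card{L_P(x') - L_P(X)}^2}.
\]
Here the essential feature of the coupling I need is that the marginal distribution of $(X,x')$ matches $(X, \mathrm{T}_{1-\kappa}X)$ under $\mu_x^{\otimes n}$; this is immediate from step (1) and (3) of the coupling in Lemma~\ref{lem:coupling_decoupled_multi}, since the coordinate-wise $\mathrm{T}_{1-\kappa}$ noise applied to $(X,Y,Z)$ projects on the first coordinate to the ordinary $\mathrm{T}_{1-\kappa}$ noise on $X$. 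Therefore each summand equals $\|L_P - \mathrm{T}_{1-\kappa}L_P\|_2^2$, which for a degree-$d$ function is at most $O(d\kappa)$ by standard properties of the noise operator.

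Summing over the at most $O_{m,r}(1)$ elements of $\spn(\mathcal{P})$ yields the desired $O_{d,m,r}(\kappa)$ bound. The other two inequalities follow by symmetry, using that $\mathcal{D}$ likewise contains the lifted versions of $\mathcal{Q}$ and $\mathcal{R}$, and that the $(Y,y')$ and $(Z,z')$ marginals of the coupling are also standard $\mathrm{T}_{1-\kappa}$ couplings. No essentially new obstacle arises beyond the univariate case; the only subtlety to verify carefully is the first observation about $P(x) = P(X)$, which is built into Lemma~\ref{lem:regularity_fancy_highrank}'s definition of $\mathcal{D}$.
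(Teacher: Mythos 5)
Your proposal is correct and takes essentially the same approach as the paper, whose own proof is a one-line reduction to Lemma~\ref{lem:decoupled_1}: you have unfolded that reduction by identifying the two features that actually drive the argument, namely that $P(x)=P(X)$ almost surely for every $P\in\spn(\mathcal{P})$ because $D_P\in\mathcal{D}$, and that the marginal of $(X,x')$ is the $(1-\kappa)$-correlated pair, after which the Cauchy--Schwarz step and the degree-$d$ noise estimate are as in Lemma~\ref{lem:decoupled_1}.
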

	\begin{proof}
		Identical to the proof of Lemma~\ref{lem:decoupled_1}, as the marginal distribution on $(x, x', X)$ in the coupling from Lemma~\ref{lem:coupling_decoupled_multi} is identical to the distribution on $(x, x', X)$ in the coupling from Lemma~\ref{lem:coupling_decoupled}.
	\end{proof}

    One technical issue with the coupling is that it only guarantees closeness in statistical distance. To remedy that, we have the following result:

    \begin{lemma}\label{lem:decoupled_3}
		In the setting of Lemma~\ref{lem:coupling_decoupled_multi} there is a coupling $((X',Y',Z'),(X'',Y'',Z''))$, $(X,Y,Z)$
        between $\mu^n\times \mu^n$
        and $\mu^n$ such that
		\[
		\Expect{X,X',X''}{\card{\tilde{f}_{{\sf decoupled}}(X',X'') - \tilde{f}(X)}^2}
		\leq O_{m,r,d}(\kappa) + 2^{-\Omega_{m,r,d,\alpha,\eta,\kappa}(M)},
		\]
        and similarly for $g$ and $h$.
	\end{lemma}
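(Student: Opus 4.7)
The plan is to take the coupling $\pi$ produced by Lemma~\ref{lem:coupling_decoupled_multi} and \emph{repair} its marginal on $((x,y,z),(x',y',z'))$ so that it becomes exactly $\mu^n\times\mu^n$ rather than merely $\delta$-close in total variation, where $\delta := 2^{-\Omega_{m,r,\kappa}(M)}$. I would let $\tilde\pi$ denote the marginal of $\pi$ on the pair $((x,y,z),(x',y',z'))$ and form the maximal coupling between $\tilde\pi$ and $\mu^n\times\mu^n$, producing a joint distribution on $((x,y,z),(x',y',z')),((X',Y',Z'),(X'',Y'',Z''))$ for which $((X',Y',Z'),(X'',Y'',Z''))\sim \mu^n\times\mu^n$ and the event
\[
E := \{((X',Y',Z'),(X'',Y'',Z''))=((x,y,z),(x',y',z'))\}
\]
holds with probability at least $1-\delta$. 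Drawing $(X,Y,Z)$ from the conditional of $\pi$ given $((x,y,z),(x',y',z'))$ then yields a joint distribution of $(((X',Y',Z'),(X'',Y'',Z'')),(X,Y,Z))$ with marginal $\mu^n\times\mu^n$ on the first pair and $\mu^n$ on $(X,Y,Z)$, which is the required coupling.

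Next I would split the target expectation by conditioning on $E$. On $E$ one has $(X',X'')=(x,x')$, so that contribution is at most $\Expect{\pi}{\card{\tilde f_{{\sf decoupled}}(x,x')-\tilde f(X)}^2}$, which Lemma~\ref{lem:decoupled_2} controls by $O_{m,r,d}(\kappa)$. On $\bar E$ I would apply $(a-b)^2\leq 2a^2+2b^2$ to reduce to bounding $\Expect{}{\card{\tilde f_{{\sf decoupled}}(X',X'')}^2 1_{\bar E}}$ and $\Expect{}{\card{\tilde f(X)}^2 1_{\bar E}}$ separately. Using that $(X',X'')\sim \mu^n\times\mu^n$ and $X\sim \mu^n$ under the new coupling, Cauchy--Schwarz bounds these pieces by $\norm{\tilde f_{{\sf decoupled}}}_4^2 \sqrt{\Pr[\bar E]}$ and $\norm{\tilde f}_4^2 \sqrt{\Pr[\bar E]}$ respectively.

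The main obstacle is certifying that this ``disagreement'' contribution is genuinely negligible rather than merely $O(1)$, because $\tilde f_{{\sf decoupled}}$ is only $L^2$-bounded on its face and swapping probability mass in a naive way would multiply a $1/\delta$-sized term by a $\delta$-sized weight. This is precisely where hypercontractivity (Theorem~\ref{thm:hypercontractivity}) must enter: since each $L_P$ has degree at most $d$ and $\norm{L_P}_2$ is bounded in terms of $m,r,d,\alpha,\eta$, hypercontractivity yields $\norm{L_P}_4\lll_{m,\alpha,d}\norm{L_P}_2$, and combined with $\card{P(\cdot)}=1$ and the triangle inequality over the $O_{m,r}(1)$-sized set $\spn(\mathcal{P})$ one obtains $\norm{\tilde f_{{\sf decoupled}}}_4,\norm{\tilde f}_4 \lll_{m,r,d,\alpha,\eta}1$. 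Since $\Pr[\bar E]\leq \delta = 2^{-\Omega_{m,r,\kappa}(M)}$, the disagreement contribution is at most $2^{-\Omega_{m,r,d,\alpha,\eta,\kappa}(M)}$ once $M$ is large enough to absorb the hypercontractivity constants. The bounds for $g$ and $h$ follow by the symmetric application of Lemma~\ref{lem:decoupled_2} using the very same coupling.
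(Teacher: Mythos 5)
Your proposal is correct and takes essentially the same approach as the paper: both repair the marginal by a maximal coupling, invoke Lemma~\ref{lem:decoupled_2} for the agreement part (contributing $O_{m,r,d}(\kappa)$), and control the low-probability disagreement part by Cauchy--Schwarz plus hypercontractivity applied to the bounded fourth moments of $\tilde f$ and $\tilde f_{{\sf decoupled}}$. The only difference is cosmetic bookkeeping --- the paper adds and subtracts $\tilde f_{{\sf decoupled}}(x,x')$ and treats the residual, whereas you condition on the agreement event directly, but both land on the identical fourth-moment estimate.
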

	\begin{proof}
		Let $((x,y,z),(x',y',z')$ and $(X,Y,Z)$ be a coupling as in Lemma~\ref{lem:coupling_decoupled_multi}. As the statistical distance between
		$(x,x')$ and $\nu^n\times \nu^n$ is at most $2^{-\Omega_{r,m,\kappa}(M)}$, we may couple $((x,y,z),(x',y',z'))$ with $((X',Y',Z'),(X'',Y'',Z''))$
        distributed according to $\mu^{n}\times \mu^n$
		such that \[
        \Prob{}{(x,y,z,x',y',z')\neq (X',Y',Z',X'',Y'',Z'')}\leq 2^{-\Omega_{r,m,\kappa}(M)}.\]
        We prove that
        the coupling $((X',Y',Z'),(X'',Y'',Z''))$ and $(X,Y,Z)$ satisfies the lemma. Note that the left-hand side of the lemma is at most
		\[
		\lll
		\Expect{}{\card{\tilde{f}_{{\sf decoupled}}(x,x') - \tilde{f}(X)}^2}
		+
		\Expect{}{\card{\tilde{f}_{{\sf decoupled}}(x,x') - \tilde{f}_{{\sf decoupled}}(X',X'')}^2}.
		\]
		The first expectation above is at most $O_{r,m,d}(\kappa)$ by Lemma~\ref{lem:decoupled_2}. For the second expectation, let $E$ be the event that
		$(x,y,z,x',y',z')\neq (X',Y',Z',X'',Y'',Z'')$. Then by Cauchy-Schwarz
		\begin{align*}
			\Expect{}{\card{\tilde{f}_{{\sf decoupled}}(x,x') - \tilde{f}_{{\sf decoupled}}(X',X'')}^2 1_E}^2
			&\leq
			\Expect{}{\card{\tilde{f}_{{\sf decoupled}}(x,x') - \tilde{f}_{{\sf decoupled}}(X',X'')}^4}\Prob{}{E}\\
			&\hspace{-1ex}\lll
			\Expect{}{\card{\tilde{f}_{{\sf decoupled}}(x,x')}^4 +\card{\tilde{f}_{{\sf decoupled}}(X',X'')}^4}
			\Prob{}{E}.
		\end{align*}
		   We upper bound the first expectation above. By the triangle inequality,
		$\card{\tilde{f}_{{\sf decoupled}}(x,x')}\leq \sum\limits_{P}\card{L_P(x')}$, and hence by Holder's inequality
		and hypercontractivity, we get that
		\[
		\Expect{x,x'}{\card{\tilde{f}_{{\sf decoupled}}(x,x')}^4}
		\lll_{r}\sum\limits_{P}\norm{L_P}_4^4
		\lll_{r,d}\sum\limits_{P}\norm{L_P}_2^4
		\lll_{r,m,d,\alpha,\eta} 1.
		\]
		The expectation of $\card{\tilde{f}_{{\sf decoupled}}(X',X'')}^4$ can be bounded in the same way by $O_{r,m,d,\alpha,\eta}(1)$.
		As $\Prob{}{E}\leq 2^{-\Omega_{r,m,\kappa}(M)}$ we conclude that
		\[
		\Expect{x,x',X,X',X''}{\card{\tilde{f}_{{\sf decoupled}}(x,x') - \tilde{f}_{{\sf decoupled}}(X',X'')}^2}
		\lll_{r,m,d,\alpha,\eta} 2^{-\Omega_{r,m,\kappa}(M)}.
		\qedhere
		\]
	\end{proof}

	\subsection{The Mixed Invariance Principle}\label{sec:mixed_invariance}
	In this section, we prove the mixed invariance principle. The notion of shifted low-degree influences will be important for us:
    \begin{definition}\label{def:low_influence_mixed_invariance}
		We say that a function $\tilde{f}$
		of the form above has $\tau$-small
		shifted low-degree influences if
		for every $i\in [n]$ and every $P\in \spn(\mathcal{P})$ it holds that $I_i[L_P]\leq \tau$.
	\end{definition}
	
	Let $\Psi\colon \mathbb{C}^3\to\mathbb{C}$ be any smooth function such that $\Psi(a,b,c) = abc$ for $a,b,c$
	that have absolute value at most $1$, which additionally satisfies that
	\begin{equation}\label{eq:smooth_def}
	\card{\Psi(a,b,c) - \Psi(a',b',c')}\lll\sqrt{|a-a'|^2 + |b-b'|^2 + |c-c'|^2}
	\end{equation}
	for all complex numbers $a,b,c\in\mathbb{C}$.
	Denote $m_1 = \card{\Sigma}$, $m_2 = \card{\Gamma}$ and $m_3 = \card{\Phi}$.
	Given $1$-bounded functions
	$f\colon (\Sigma^n,\nu_x^n)\to\mathbb{C}$,
	$g\colon (\Gamma^n,\nu_y^n)\to\mathbb{C}$
	and
	$h\colon (\Phi^n,\nu_z^n)\to\mathbb{C}$,
    collections $\mathcal{P},\mathcal{Q},\mathcal{R}$ of product functions, we define $f',\tilde{f},\tilde{f}_{{\sf decoupled}}$ as in Section~\ref{sec:decoupled_fn}
    and similarly
    $g',\tilde{g},\tilde{g}_{{\sf decoupled}}$ and
    $h',\tilde{h},\tilde{h}_{{\sf decoupled}}$. Define the functions
	$F\colon \Sigma^n\times \mathbb{R}^{(m_1-1)n}\to\mathbb{C}$,
	$G\colon \Gamma^n\times \mathbb{R}^{(m_2-1)n}\to\mathbb{C}$,
	$H\colon \Phi^n\times \mathbb{R}^{(m_3-1)n}\to\mathbb{C}$
	by
	\begin{align*}
		&F(x,G_x) = {\sf trunc}(\tilde{f}_{{\sf decoupled}}(x,G_x)),
		\qquad
		G(y,G_y) = {\sf trunc}(\tilde{g}_{{\sf decoupled}}(y,G_y)),\\
		&\qquad\qquad\qquad\qquad
		H(z,G_z) = {\sf trunc}(\tilde{h}_{{\sf decoupled}}(z,G_z)),
	\end{align*}
	where ${\sf trunc}\colon \mathbb{C}\to\mathbb{C}$ is as in Section~\ref{sec:invariance_principle}.
    We are now ready to prove Theorem~\ref{thm:plain_mixed_inv}, formally stated below:
	\begin{theorem}\label{thm:basic_mixed_invariance}
        Let $m\in\mathbb{N}$, $\alpha>0$, let $\Sigma,\Gamma,\Phi$ be alphabets of size at most $m$, and let $\mu$ be a distribution over $\Sigma\times\Gamma\times\Phi$ with no $\mathbb{Z}$-embeddings in which the probability of each atom is at least $\alpha$. Then for every
		$\xi>0$ there exist $M$ and $\tau>0$
		such that the following holds.
		Let $f\colon \Sigma^n\to\mathbb{C}$,
		$g\colon \Gamma^n\to\mathbb{C}$
		and $h\colon \Phi^n\to\mathbb{C}$
		be $1$-bounded functions,
		consider collections $\mathcal{P},\mathcal{Q},\mathcal{R}$ and the functions $F, G, H$
		defined as above, and suppose that
        \begin{enumerate}
            \item $\tilde{f},\tilde{g},\tilde{h}$
		have $\tau$-small shifted low-degree
		influences.
        \item The collection $\mathcal{D}$ as defined in Lemma~\ref{lem:regularity_fancy_highrank} satisfies that ${\sf rk}(\mathcal{D})\geq M$.
        \end{enumerate}
		Then for a smooth $\Psi\colon\mathbb{C}^3\to
        \mathbb{C}$ satisfying~\eqref{eq:smooth_def} with bounded third order derivatives we have
        \begin{equation}\label{eq:main_goal_invriance}
			\card{
				\Expect{(x,y,z)\sim \mu^n}{\Psi(f(x),g(y),h(z))}
				-
				\Expect{
    \substack{(x,y,z)\sim \mu^n\\ (G_x,G_y,G_z)\sim \mathcal{G}^n}}{\Psi(F(x,G_x),G(y,G_y),H(z,G_z))}
			}\leq \xi.
		\end{equation}
	\end{theorem}
	\begin{proof}
		We take the parameters:
		\begin{equation}\label{eq:mixed_inv_params}
			0<\tau,
			M^{-1}
			\ll
			\kappa
			\ll
			d^{-1}
			\ll
			\eta,
			r^{-1}
			\ll
			m^{-1}, \alpha ,\xi<1.
		\end{equation}
        Here, $m,\alpha,\xi$ are as in the statement, $r$ is an upper bound on the size of the collections $\mathcal{P},\mathcal{Q},\mathcal{R}$. The parameter $\eta$ is the $L_2$-closeness parameter between $f',\tilde{f}$ and similarly $g',\tilde{g}$ and the $\norm{}_{\mu_x,\alpha}$-closeness parameter between $f,f'$ and similarly $g,g'$ and $h,h'$. The parameter $d$ is the degree of the functions $L_P,L_Q,L_R$, $\kappa$ is the parameter for the coupling, $M$ is a lower bound on the rank of $\mathcal{D}$ and $\tau$ is a bound on the influences. It can be seen that the parameters can be taken as~\eqref{eq:mixed_inv_params} using a suitable decay function in Lemma~\ref{lem:regularity_fancy_highrank}.

		First, as $\norm{f - f'}_{\mu_x, \alpha}\leq \eta$, $\norm{g - g'}_{\mu_y, \alpha}\leq \eta$ and $\norm{h - h'}_{\mu_z, \alpha}\leq \eta$, we get that
		\begin{equation}\label{eq18}
			\card{
				\Expect{(x,y,z)\sim \mu^n}{\Psi(f(x),g(y),h(z))}
				-
				\Expect{(x,y,z)\sim \mu^n}{\Psi(f'(x),g'(y),h'(z))}
			}\lll \eta.
		\end{equation}
		By the smoothness of $\Psi$ it follows that
		\begin{align}\label{eq19}
			&\card{
				\Expect{(x,y,z)\sim \mu^n}{\Psi(f'(x),g'(y),h'(z))}
				-
				\Expect{(x,y,z)\sim \mu^n}{\Psi(\tilde{f}(x),\tilde{g}(y),\tilde{h}(z))}
			}\notag\\
			&\qquad\qquad\lll
			\Expect{(x,y,z)\sim \mu^n}
			{
				\sqrt{
					\card{f'(x) - \tilde{f}(x)}^2
					+
					\card{g'(y) - \tilde{g}(y)}^2
					+
					\card{h'(z) - \tilde{h}(z)}^2
			}}\notag\\
			&\qquad\qquad\lll \eta.
		\end{align}
        In the last transition, we first use Cauchy-Schwarz,
        and then the fact that $f'$ and $\tilde{f}$ are $\eta$-close in $\ell_2$ distance to get that $\Expect{(x,y,z)\sim\mu^{n}}{\card{f'(x) - \tilde{f}(x)}^2}\leq\eta$.
        Similarly $g',\tilde{g}$
		and $h',\tilde{h}$.
        Using the coupling from Lemma~\ref{lem:decoupled_3} and the smoothness of $\Psi$ it follows that:
		\begin{align}\label{eq15}
			\Bigg|
				&\Expect{(X,Y,Z)}{\Psi(\tilde{f}(X),\tilde{g}(Y),\tilde{h}(Z))}\notag\\
				&-
				\Expect{\substack{(X,Y,Z)\\ (X',Y',Z')\\ (Z'',Y'',Z'')}}
				{\Psi(\tilde{f}_{{\sf decoupled}}(X',X''),\tilde{g}_{{\sf decoupled}}(Y',Y''),\tilde{h}_{{\sf decoupled}}(Z',Z''))}
			\Bigg|\notag\\
			&~~\lll
			\Expect{\substack{(X,Y,Z)\\ (X',Y',Z')\\ (X'',Y'',Z'')}}
			{
				\sqrt{
					\card{\Delta_1(X,X',X'')}^2
					+
					\card{\Delta_2(Y,Y',Y'')}^2
					+
					\card{\Delta_3(Z,Z',Z'')}^2
			}}.
		\end{align}
		Here, $\Delta_1(X,X',X'') = \tilde{f}(X) - \tilde{f}_{{\sf decoupled}}(X',X'')$ and $\Delta_2,\Delta_3$ are defined
		analogously for $g$ and $h$.
        By Cauchy-Schwarz and Lemma~\ref{lem:decoupled_3}, we get that $\eqref{eq15}\lll_{d,m,r}\kappa$, and so $\eqref{eq15}\leq\eta$.
        Denote
        \[
        (\rom{1}) = \Expect{\substack{ (x,y,z)\sim\mu^{n}\\ (x',y',z')\sim\mu^{n}}}
				{\Psi(\tilde{f}_{{\sf decoupled}}(x',x),\tilde{g}_{{\sf decoupled}}(y',y),\tilde{h}_{{\sf decoupled}}(z',z))},
        \]
        so that the second expectation on the left hand side of~\eqref{eq15} is equal to $(\rom{1})$.
        Fix $x,y,z$ in $(\rom{1})$, and note that the $2$-norm of $\tilde{f}_{{\sf decoupled}}(x',x)$ over the choice
		of $x'$ is at most $O_{m,r,\eta,d}(1)$ and the influences are at most $O(\tau)$.
		We apply Theorem~\ref{thm:invariance_principle} and average over $(x,y,z)\sim\mu^{n}$, to get that, provided that $\tau$ is small
		enough
		\begin{align}\label{eq16}
			\Big|
			(\rom{1})
			-
			\Expect{\substack{(x',y',z')\sim \mu^n\\ (G_x,G_y,G_z)\sim \mathcal{G}^n}}
			{\Psi(\tilde{f}_{{\sf decoupled}}(x',G_x),\tilde{g}_{{\sf decoupled}}(y',G_y),\tilde{h}_{{\sf decoupled}}(z',G_z))
			}\Big|
			\leq \eta.
		\end{align}
		Next, by the smoothness of $\Psi$ it follows that
		\begin{align}\label{eq17}
			\Big|
			&\Expect{\substack{(x',y',z')\sim \mu^n\\ (G_x,G_y,G_z)\sim \mathcal{G}^n}}
			{\Psi(\tilde{f}_{{\sf decoupled}}(x',G_x),\tilde{g}_{{\sf decoupled}}(y',G_y),\tilde{h}_{{\sf decoupled}}(z',G_z))
			}\notag\\
			&-
			\Expect{\substack{(x',y',z')\sim \mu^n\\ (G_x,G_y,G_z)\sim \mathcal{G}^n}}
			{\Psi(F(x',G_x),G(y',G_y),H(z',G_z))
			}\Big|
			\leq E,
		\end{align}
		where
		\[
		E = \Expect{\substack{(x',y',z')\sim \mu^n\\ (G_x,G_y,G_z)\sim \mathcal{G}^n}}
		{\sqrt{\truncerr(\tilde{f}_{{\sf decoupled}}(x',G_x)) + \truncerr(\tilde{g}_{{\sf decoupled}}(y',G_y)) + \truncerr(\tilde{h}_{{\sf decoupled}}(z',G_z))}},
		\]
		and we recall the function $\truncerr(a_1,\ldots,a_s) = \sqrt{\sum\limits_{i}\card{{\sf trunc}(a_i) - a_i}^2}$.
		\begin{claim}\label{claim:finish_mixed_inv}
			$E\lll \sqrt{\eta}$.
		\end{claim}
		\begin{subproof}
			By Cauchy-Schwarz, $E\leq \sqrt{E_1+E_2+E_3}$ where
			\begin{align*}
				&E_1 = \Expect{\substack{(x',y',z')\sim \mu^n\\ (G_x,G_y,G_z)\sim \mathcal{G}^n}}{\truncerr(\tilde{f}_{{\sf decoupled}}(x',G_x))},
				\qquad E_2 = \Expect{\substack{(x',y',z')\sim \mu^n\\ (G_x,G_y,G_z)\sim \mathcal{G}^n}}{\truncerr(\tilde{g}_{{\sf decoupled}}(y',G_y))},\\
				&\qquad\qquad\qquad\qquad E_3 = \Expect{\substack{(x',y',z')\sim \mu^n\\ (G_x,G_y,G_z)\sim \mathcal{G}^n}}{\truncerr(\tilde{h}_{{\sf decoupled}}(z',G_z))},
			\end{align*}
			and we upper bound each one of $E_1,E_2$ and $E_3$ separately. As the arguments are identical, we show it only for $E_1$.
			By Theorem~\ref{thm:invariance_principle}, provided that $\tau$ is small enough
			\[
			E_1
			\leq
			\Expect{\substack{(X',Y',Z')\sim \mu^n\\ (X'',Y'',Z'')\sim \mu^n}}{\truncerr(\tilde{f}_{{\sf decoupled}}(X',X''))}
			+\eta.
			\]
			By Fact~\ref{fact:trivial_lipshitz_pf} the function $\truncerr$ is $O(1)$-Lipschitz,
			and so
			\[
			\Expect{X',X''}{\truncerr(\tilde{f}_{{\sf decoupled}}(X',X''))}
			\lll
			\Expect{X,X',X''}{\truncerr(f'(X)) + \card{\tilde{f}_{{\sf decoupled}}(X',X'') - f'(X)}}.
			\]
			Note that as $f'$ is $1$-bounded, $\truncerr(f'(X)) = 0$. Also, note that
			$f'$ is $\eta$-close in $\ell_2$ distance to $\tilde{f}$, and
			so we get that
			\[
			E_1
			\leq
			\Expect{X,X',X''}{ \card{\tilde{f}_{{\sf decoupled}}(X',X'') - \tilde{f}(X)}}
			+O(\sqrt{\eta}).
			\]
			Applying Cauchy-Schwarz and Lemma~\ref{lem:decoupled_2} gives that $E_1\lll\sqrt{\eta}$, as desired.
		\end{subproof}
		Combining all of the inequalities~\eqref{eq18},~\eqref{eq19},~\eqref{eq15},~\eqref{eq16} and~\eqref{eq17}
		and Claim~\ref{claim:finish_mixed_inv}
		proves~\eqref{eq:main_goal_invriance}.
	\end{proof}

	\section{Proof of Theorem~\ref{thm:main}}
	\label{sec:CSP}
Following Raghavendra~\cite{Rag08} (see also~\cite{BKMcsp1}), a typical way to design a dictatorship for an optimization problem $\mathcal{P}$ is based on
considering its SDP relaxation and solving it. On satisfiable instances, an SDP solution gives to each constraint a local distribution on its satisfying assignments, which then can be used in constructing the queries of the dictatorship test. We note that even if the SDP value is $1$, there is no guarantee that a given local distribution is fully supported on $P^{-1}(1)$. In particular, this local distribution may not be pairwise connectedness and could admit $\mathbb{Z}$-embeddings even if the predicate $P$ does not. In this case we cannot apply the analytical lemma from~\cite{BKMcsp4}, hence we cannot analyze the soundness of the dictatorship test.
	
	We circumvent this issue by demonstrating the use of our mixed invariance principle on {\symm} predicates. These predicates have sufficient symmetry to guarantee that any SDP solution with value $1$ can be converted into another SDP solution with value $1$ in which the support of all local distributions is pairwise connectedness and has no $\mathbb{Z}$-embeddings.

	\paragraph{Notations:} fix a collection $\mathcal{P}$ of {\symm} predicates. An instance of Max-$\mathcal{P}$-CSP, $\inst = (\calV, \calC)$, consists of a variable set $\calV$ which take values from $\Sigma$, and a  distribution $\calC$ on the constraint set. We associate $\calV$ with the set $[N] = \{1,\cdots, N \}$ for $N = |\calV|$. Each constraint $C\in {\sf supp}(\calC)$ is over a tuple of $3$ variables, denoted by $\calV(C) = (s_1, s_2, s_3)$, and consists of a predicate $P_C: \Sigma^3 \rightarrow \{0,1\}$ from $\mathcal{P}$. An assignment $(x, y, z)$ to the tuple $\calV(C)$ satisfies the constraint $C$ iff $(x, y, z) \in P_C^{-1}(1)$. For ease of notation, we often use $C$ to refer to both the constraint and the underlying predicate $P_C$, and simply write $C(x, y, z) =1$ or $(x, y, z)\in C^{-1}(1)$ if $P_C(x, y, z) = 1$. Given a set $T$, the set of all the distributions on $T$ is denoted by $\simplex(T)$.

\subsection{The SDP Program}

	The basic semidefinite programming relaxation of an instance $\inst = (\calV, \calC)$ is given in Figure~\ref{fig:basicsdp1}.   The SDP formulation consists of vectors $\{\V b_{i,a}\}_{i\in \calV, a\in \Sigma}$, distributions $\{\mu_{C}\}_{C\in \supp(\calC)}$ over local assignments (i.e., on $\Sigma^{\calV(C)}$) and a unit vector $\V b_0$. We denote by $\val(\V V, \V \mu)$ the objective value of the solution $(\V V, \V \mu)$. For every $\eta>0$, the SDP can be solved up to an additive accuracy of $\eta$ in time ${\sf poly}(n, \log(1/\eta))$ (see, for instance~\cite{GrotschelLS12}). We will ignore this issue of approximation and assume that the SDP can be solved optimally in polynomial time, and in Remark~\ref{remark:SDP_approximate_solution} we explain the modifications necessary to accommodate for this.
	
	\begin{figure}[t]
		\fbox{
			
			\parbox{558pt}{

                          				\begin{align}
					\mbox{maximize}\quad &\E_{C\sim \calC} \E_{x\in \mu_{C}} [C(x)] \nonumber\\
					\mbox{subject to } \quad &  \langle \V b_{i,a}, \V b_{j,b}\rangle = \Pr_{x\sim \mu_{C}}[x_i = a, x_j = b] &\forall C\in \supp(\calC),\quad i,j\in \calV(C), \quad a,b\in \Sigma \label{eq:inner_p_SDP1}\\
					&\langle \V b_{i,a}, \V b_0 \rangle = \|\V b_{i,a}\|_2^2& \forall i\in \calV, a\in \Sigma\\
					& \|\V b_0 \|_2^2 = 1\\
					&\mu_{C} \in \simplex(\Sigma^{\calV(C)}) & \forall C\in\supp(\calC)
                                                        \end{align}

		}}
		\caption{Basic SDP relaxation of a Max-CSP instance $\inst = (\calV, \calC)$.}
		\label{fig:basicsdp1}
	\end{figure}

During the execution of our algorithm, we will modify the SDP by imposing additional conditions on the local distributions. A key feature of these modification will be that they preserve valid {\em integral solutions}:
 \begin{definition}
     Fix any assignment $\V \alpha$ to $\inst$. The vector assignment
     \begin{align*}
         b_{i, a} = \begin{cases}
             \V b_0 & a = \alpha|_i,\\
             0 & \mbox{otherwise},
         \end{cases}
     \end{align*}
     along with $\V \mu$ where for every $C\in \supp(\calC)$, $\mu_C(\alpha|_{\calV(C)}) = 1$ and $\mu_C(\V d) = 0$  for every $\V d\neq \alpha|_{\calV(C)}$, is called an integral assignment to the SDP. Such an assignment is called a valid integral solution if it is a feasible solution to the SDP.
 \end{definition}
\subsection{Setting up a System of Linear Equations}
\newcommand{\gesystem}{{\sf GE\ System}}
\newcommand{\gesolver}{{\sf GE\ Solver}}
Once the SDP relaxation is solved, we construct an initial system of linear equations over a certain Abelian group. In this section, we describe an algorithm for formulating this system of linear equations. For convenience, we call this system of linear equations {\gesystem} (i.e., the Gaussian Elimination System).

Fix an arbitrary SDP solution $(\V V, \V \mu)$ with value $1$. The solution induces {\em local distributions} $\mu_C$ over $\Sigma^{\calV(C)}$ where $C\in {\sf supp}(\calC)$. Towards showing the optimality of our algorithm, we require that for every  $C\in {\sf supp}(\calC)$, the support of $\mu_C$ is pairwise connected and has no $\mathbb{Z}$-embedding. This condition is easy to achieve if the predicates in the instance are {\symm} and we will show this towards the end. For the following discussion, we assume that the SDP solution satisfies this condition.

\subsubsection{Setting up the Variables associated to \texorpdfstring{$v\in \calV$}{v} in \texorpdfstring{{\gesystem}}{GESystem} }
In our {\gesystem}, there will be many variables associated with a given  $v\in \calV$ from the CSP instance $\inst$. Here, we describe a polynomial-time procedure that first constructs a matrix $M_v$ with $|\Sigma|$ rows associated with the variable $v$. The columns of the matrix are all the embedding functions associated with all the constraints $v$ is involved in.
In order to be consistent across different embeddings, we will need to work with embeddings that assign the identity element to a special element from $\Sigma$. It will be convenient to treat $\Sigma$ as $[q] =\{1, 2, \ldots, q\}$ where $q=|\Sigma|$ and let $w^\star = 1$.

\begin{definition}
    An embedding $\sigma_1, \sigma_2, \sigma_3: \Sigma \rightarrow G$ of a subset $S\subseteq \Sigma\times \Sigma\times \Sigma$ is called a standard embedding if $\sigma_1(w^\star) = \sigma_2(w^\star) = \sigma_3(w^\star) = 0_G$ and there exists $g\in G$ such that for every $(x, y, z)\in S$, $\sigma_1(x)+\sigma_2(y)+\sigma_3(z) = g$. We will denote such embeddings by $((\sigma_1, \sigma_2, \sigma_3), g)$.
\end{definition}

An embedding $(\sigma_1, \sigma_2, \sigma_3)$ as in Definition~\ref{def:lin_embed1} can be converted into a standard embedding 
by replacing $\sigma_i$ by the map $\sigma_i-\sigma_i(w^{\star})$ for each $i$. 

\skipi

\noindent An informal description of the algorithm (presented in Algorithm~\ref{alg:generating_matrix_variables}) for computing the matrix $M_v$ is as follows. Fix a variable $v\in \calV$. The matrix $M_v$ is generated in the following three steps.
\begin{itemize} \setlength{\itemindent}{1.5em}
    \item [Step 1.] For every constraint $C$ such that $v\in \calV(C)$, and every embedding $(\sigma_1, \sigma_2, \sigma_3, g)$ of $\supp(\mu_{C})$ into an Abelian group $G$, we add columns corresponding to the evaluations of characters of $G$ on $(\sigma_j(x))_{x\in \Sigma}$ where $j$ is the index of $v$ in $C$.
    \item [Step 2.] Next, for every subset $S$ of columns in the matrix $M_v$ generated in Step 1, we add a column, which is a pointwise multiplication of columns $S$. If we treat each complex number in the polar form as $e^{{\bf i}\theta}$, then in this step, we add all the linear combinations of columns when viewed as vectors of exponents.
    \item [Step 3.] Finally, we add rows that are pointwise multiplication of the subsets of rows of $M_v$ generated in Step 2. Similarly to Step 2 we add all the linear combinations of the exponents of the rows.
\end{itemize}
This completes the informal description of the algorithm for computing $M_v$.

In the formal algorithm below, we also keep track of various other objects. First, we keep track of the group elements that are in the images of the embeddings. This is stored in the variable $\{r^v_\ell\}_{\ell\in\Sigma}$, where $r^v_{\ell}$ is the tuple of group elements that $\ell$ is mapped to under the various embeddings of $v$.
Second, we define the group $G^v_{\sf master}$, which is the product of all the groups arising in the step $1$ above.
In a sense, this is a large group which encompasses all of the Abelian embeddings $v$ may participate in. We denote by $H_v^\star$ a subgroup of $G^v_{{\sf master}}$ generated by $\{ r^v_\ell\}_{\ell\in \Sigma}$. In a sense $H_v^\star$ is the sub-group generated by feasible values of $v$ in the system of equations.

\begin{algorithm}[!ht]
\caption{Computing the matrix $M_v$}
\label{alg:generating_matrix_variables}

Start with $q \times 1$ matrix $M_v$ with $\vec{1}$ as a column\;
Instantiate $r^v_\ell$ to be an empty tuple for every $\ell\in [q]$. Set $H$ to be the trivial group\;
Suppose $v$ is involved in the constraints $C_1, C_2, \ldots, C_t$\;

\For{$i\gets 1$ \KwTo $t$}{
    Let $j\in \{1,2,3\}$ be the index of $v$ in $\calV(C_i)$\;
    
    \For{each embedding $((\sigma_1,\sigma_2,\sigma_3), g)$ of $\supp(\mu_{C_i})$ into an Abelian group $G$}{
        For every $\chi\in \hat{G}$ such that $\chi\not\equiv 1$, add the column
        $(\chi(\sigma_j(x)))_{x\in \Sigma}$ to $M_v$ if it is not already there\;
        
        $H \leftarrow H \times G$\;
        
        $r^v_\ell \leftarrow (r^v_\ell, \sigma_j(\ell))$ for every $\ell\in [q]$\tcp*{$r^v_\ell \in H$}
    }
}

Set $G^v_{\sf master} \leftarrow H$\;
Let $L'$ be the number of columns in $M_v$\;

\For(\tcp*[f]{Adding more columns}){every subset $S\subseteq [L']$}{
    \If{$\circ_{i\in S} M_v[.][i]$ is not present as a column in $M_v$}{
        Add a column $\circ_{i\in S} M_v[.][i]$ to $M_v$\;
    }
}

Let $H^\star_v$ be the group generated by $\{r^v_\ell\}_{\ell\in [q]}$\;
Let $i=q$\;

\For(\tcp*[f]{Adding more rows}){each $h\in H^\star_v$}{
    
    Suppose $h = \sum_{\ell\in S} r^v_\ell$ where $S$ is a multi-subset of $[q]$ and addition is a group operation in $G^v_{\sf master}$\;
    
    Add a row $\circ_{\ell\in S} M_v[\ell][.]$ to $M_v$ and set $i\leftarrow i+1$\;
    
    Set $r^v_i = h$\;
}

\end{algorithm}

In the algorithm below, in steps $4$ to $10$ we find all Abelian embeddings of local distributions $\mu_{C}$ that a variable $v\in\mathcal{V}$ participates in. Formally speaking, these are embeddings and characters over the group $G$, but it will be convenient to think of them as characters of the larger group $G^v_{{\sf master}}$.
This is done by identifying a character $\chi \in \hat{G}$ in step $7$ with the character $(1,\ldots, 1, \chi,1,\ldots, 1)\in \widehat{G^v_{\sf master}}$.

The following claim analyzes the run-time of Algorithm~\ref{alg:generating_matrix_variables}.
\begin{claim}
Algorithm~\ref{alg:generating_matrix_variables} runs in time ${\sf poly}(O_{m}(1),t)$, and the matrix $M_v$ output in the end has dimension $O_{m}(1)$.
\end{claim}
\begin{proof}
    Since there are $O_{m}(1)$
    many options for supports of $3$-ary distributions over alphabets of size $m$, and each one of which that has no $\mathbb{Z}$-embeddings has at most $O_m(1)$ Abelian embeddings, the number of different columns after steps 4-11 is at most $O_{m}(1)$, and the runtime of these lines is polynomial in $O_m(1)$ and $t$. It follows that the runtime of steps 14-18 is $O_{m}(1)$. Lastly, the group $H_v^{\star}$ in step 19 has size at most $O_{m}(1)$, so lines 20-24 also run in time $O_{m}(1)$.
\end{proof}
\begin{claim}
    \label{claim:Alg_Masterchar}
    The columns of the matrix $M_v$ contain evaluations of all the characters of $G^v_{\sf master}$ on a subgroup generated by $\{r^v_1, r^v_2, \ldots, r^v_q\}$.
\end{claim}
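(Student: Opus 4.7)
The plan is to exploit two standard structural facts about characters of finite Abelian groups. First, since $G^v_{{\sf master}}$ is built as a direct product $G_1 \times \cdots \times G_K$ of the Abelian groups arising from the embeddings of constraints touching $v$, the dual group $\widehat{G^v_{{\sf master}}}$ factors as $\widehat{G_1} \times \cdots \times \widehat{G_K}$; in particular, every character of $G^v_{{\sf master}}$ is a pointwise product of ``coordinate characters'' $(1,\ldots,1,\chi_k,1,\ldots,1)$ for $\chi_k \in \widehat{G_k}$, together with the trivial character. Second, evaluation of a character is multiplicative in its group argument: $\chi(g_1+g_2) = \chi(g_1)\chi(g_2)$. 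These two facts reduce the claim to matching the algorithm's pointwise-multiplication operations on columns (Steps~13--17) and rows (Steps~20--24) with the corresponding group-theoretic operations on characters and group elements.

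Turning to the columns, observe that when the algorithm processes the embedding producing the $k$-th factor $G_k$, Step~7 inserts a column $(\chi(\sigma_j(\ell)))_{\ell \in [q]}$ for every nontrivial $\chi \in \widehat{G_k}$; identifying $\chi$ with its trivial extension to $\widehat{G^v_{{\sf master}}}$, and using the fact that the $k$-th coordinate of $r^v_\ell$ is precisely $\sigma_j(\ell)$ by construction, this column equals $(\chi(r^v_\ell))_{\ell\in[q]}$. The trivial character is covered by the initial column $\vec{1}$ of Step~1. Steps~13--17 then close the column set under arbitrary finite pointwise products, so by the first structural fact each character $\chi \in \widehat{G^v_{{\sf master}}}$ is realized as some column of $M_v$, whose entries at rows $\ell \in [q]$ are $(\chi(r^v_\ell))_{\ell\in[q]}$ by multiplicativity of evaluation.

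Turning to the rows, Steps~20--24 iterate over each $h \in H^\star_v = \langle r^v_1,\ldots,r^v_q \rangle$, select a multiset representation $h = \sum_{\ell \in S} r^v_\ell$, and append the row $\circ_{\ell\in S} M_v[\ell][\cdot]$ while bookkeeping $r^v_i = h$. By multiplicativity of characters, the entry of this new row in the column associated to $\chi$ equals $\prod_{\ell \in S}\chi(r^v_\ell) = \chi\bigl(\sum_{\ell\in S} r^v_\ell\bigr) = \chi(h)$, depending only on $h$ and not on the choice of $S$. Once Step~24 has been executed for every $h$, the rows of $M_v$ are indexed (via the enlarged family $r^v_i$) by all of $H^\star_v$, and for each $\chi \in \widehat{G^v_{{\sf master}}}$ the corresponding column carries the full evaluation vector $(\chi(h))_{h \in H^\star_v}$, which is the assertion of the claim. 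The argument is essentially structural bookkeeping; the only mildly subtle point is the non-uniqueness of the representation $h = \sum_{\ell\in S} r^v_\ell$, but multiplicativity of $\chi$ renders that ambiguity harmless, so I do not anticipate any real technical obstacle.
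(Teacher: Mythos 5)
Your proof is correct and takes essentially the same approach as the paper's: both rely on the factorization $\widehat{G^v_{\sf master}} \cong \widehat{G_1}\times\cdots\times\widehat{G_K}$ so that Steps 13--17 realize every character as a pointwise product of coordinate characters, and both use multiplicativity of characters to show Steps 20--24 extend the evaluations over all of $H_v^\star$. Your explicit remark that the non-uniqueness of the representation $h = \sum_{\ell\in S} r^v_\ell$ is harmless by multiplicativity is a small bit of care the paper's proof leaves implicit.
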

\begin{proof}
In steps 14-18, we add columns that correspond to every other character from $\widehat{G^v_{\sf master}}$ evaluated at $\{r^v_\ell\}_{\ell\in [q]}$. This follows as the character $\vec{\chi} :=(\chi_1, \chi_2, \ldots, \chi_t)\in \widehat{G^v_{\sf master}}$ is, by definition, $\vec{\chi} (\V a)= \prod_{i=1}^t \chi_i(a_i)$, and step 16 is precisely adding such columns.

Finally, in lines 20-24, we are adding rows that correspond to the evaluations of all the characters from $\widehat{G^v_{\sf master}}$ on every element of $G^v_{\sf master}$ generated by $\{r^v_1, r^v_2, \ldots, r^v_q\}$. This follows as $\vec{\chi} \in \widehat{G^v_{\sf master}}$ is a group homomorphism $\vec{\chi} : G^v_{\sf master}\rightarrow  \mathbb{C}$, i.e.,  $\vec{\chi}(\V a_1 + \V a_2) = \vec{\chi}(\V a_1)\cdot \vec{\chi}(\V a_2)$ for every $\V a_1, \V a_2 \in G^v_{\sf master}$.
\end{proof}

The variables in \gesystem~associated to $v\in \calV$ are $\{y_v^{\vec{\chi}}\}$ where $\vec{\chi}\in \widehat{G^v_{\sf master}}$. It would be convenient to think of the variable set $\{y_v^{\vec{\chi}}\}$ taking values that correspond to a row of the matrix $M_v$.

\subsubsection{Adding Equations to the \texorpdfstring{{\gesystem}}{GESystem}}\label{sec:add_eq}
Our next step is to set up a system of linear equations. Ideally, we would have liked to assign a group element from $\{r^v_1, r^v_2, \ldots, r^v_q\}\subseteq H_v^\star$ to a variable $v$ (since they directly correspond to elements from $\Sigma$). However, we do not know how to find such an assignment by an efficient algorithm, let alone by a set of linear equations. We thus relax this requirement and settle for an assignment from $H_v^\star$.
To do so we add to the linear system {\gesystem} the following two types of equations:
\begin{enumerate}
    \item {\bf Valid character constraints.} These equations enforce that the variables $\{y_v^{\vec{\chi}}\}_{\vec{\chi}\in \widehat{G^v_{\sf master}}}$ correspond to a row of the matrix $M_v$, and hence, the vector assignment corresponds to a group element from $H_v^\star$. Towards this, we add the following set of equations for every $v\in \calV$,
    \begin{itemize}
        \item $y_v^{\sf triv} = 1$, where ${\sf triv}$ corresponds to the first column of the matrix $M_v$.
        \item For every $\vec{\chi}, \vec{\chi'}, \vec{\chi''}$ such that $\vec{\chi''} = \vec{\chi}\cdot \vec{\chi'}$, we add the equation $y_v^{\vec{\chi''}} = y_v^{\vec{\chi}}\cdot y_v^{\vec{\chi'}}$.
        \item As $H^\star_v$ may be a proper subgroup of $G^v_{\sf master}$, there will be certain columns in $M_v$ that are constant. If the column corresponding to $\vec{\chi}$ is constant $c$, then add the equation $y_v^{\vec{\chi}} = c$.
    \end{itemize}
    \item {\bf Valid satisfying assignments constraints.} These equations ensure that any solution to {\gesystem} is consistent with the image of the support of local distributions under various embeddings into Abelian groups. Towards this, we add the following set of equations
    \begin{itemize}
        \item For every constraint $C\in {\sf supp}(\calC)$ such that $\calV(C) = (s_1, s_2, s_3)$, and for every embedding $((\sigma_1, \sigma_2, \sigma_3), g)$ of ${\sf supp}(\mu_{C})$ in $G$, we add the equation $y_{s_1}^{\vec{\chi}_1}\cdot y_{s_2}^{\vec{\chi}_2}\cdot y_{s_3}^{\vec{\chi}_3} = \chi(g)$ where $\chi\in \hat{G}$ and $\vec{\chi}_1, \vec{\chi}_2, \vec{\chi}_3$ are the respective columns added in step $7$ in  $M_{s_1}, M_{s_2}, M_{s_3}$, for the embedding $((\sigma_1, \sigma_2, \sigma_3), g)$ and the character $\chi$.
    \end{itemize}

\end{enumerate}

The equations are linear equations over the circle group $\mathbb{T} = \{ z\in \mathbb{C} \mid |z| = 1\}$ and hence a solution to the {\gesystem} can be found in polynomial time.
The justification for these constraints is given by the following observation, stating that a valid integral solution to the SDP gives a solution to the {\gesystem}.

    \begin{observation}
        \label{claim:justify_GESysytem}
        Suppose $\V x \in [q]^N$ is a satisfying assignment to the instance $\inst$ and suppose that this assignment survives in the SDP solution $(\V V, \V \mu)$, i.e., for every $C\in {\sf supp}(\calC)$, the local distribution $\mu_C$ assigns a non-zero probability mass to $\V x|_{\calV(C)}$. Then  assigning the $\V x_v^{th}$ row from $M_v$ to the variables $\{y_v^{\vec{\chi}}\}$ satisfies all the equations from the {\gesystem}.
    \end{observation}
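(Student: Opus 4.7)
The plan is to verify directly that the proposed integral assignment, where $y_v^{\vec{\chi}} = M_v[x_v][\vec{\chi}]$, satisfies each of the two families of equations introduced in Section~\ref{sec:add_eq}. The key observation enabling this verification is the interpretation, given by Claim~\ref{claim:Alg_Masterchar}, of the rows and columns of $M_v$: column $\vec{\chi} \in \widehat{G^v_{\sf master}}$ contains evaluations of $\vec{\chi}$ on group elements, row $\ell \in [q]$ corresponds to $r^v_\ell \in H^\star_v$, and therefore $M_v[\ell][\vec{\chi}] = \vec{\chi}(r^v_\ell)$. Under this identification, the proposed assignment becomes $y_v^{\vec{\chi}} = \vec{\chi}(r^v_{x_v})$, and the verification reduces to the fact that characters are group homomorphisms and that embeddings respect the algebraic relations imposed by satisfying assignments.

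For the first family of constraints (valid character constraints), I will check the three sub-items in order. The equation $y_v^{\sf triv} = 1$ holds because the first column of $M_v$ is the all-ones column $\vec{1}$ by initialization in Algorithm~\ref{alg:generating_matrix_variables}. The multiplicative relations $y_v^{\vec{\chi''}} = y_v^{\vec{\chi}} \cdot y_v^{\vec{\chi'}}$ whenever $\vec{\chi''} = \vec{\chi}\cdot\vec{\chi'}$ follow from the pointwise-product structure of the new columns added in steps 14--18 of Algorithm~\ref{alg:generating_matrix_variables}, combined with the homomorphism property $\vec{\chi''}(r^v_{x_v}) = \vec{\chi}(r^v_{x_v})\cdot\vec{\chi'}(r^v_{x_v})$. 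The constant-column constraints are automatic: if the $\vec{\chi}$-column is constantly $c$, then in particular its entry in row $x_v$ is $c$, so $y_v^{\vec{\chi}} = c$.

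For the second family (valid satisfying assignments constraints), fix a constraint $C$ with $\calV(C) = (s_1, s_2, s_3)$ and an embedding $((\sigma_1, \sigma_2, \sigma_3), g)$ of $\supp(\mu_C)$ into $G$. Because $\V x$ satisfies the instance $\inst$ and survives in the SDP solution, the tuple $(x_{s_1}, x_{s_2}, x_{s_3})$ lies in $\supp(\mu_C)$, so by definition of the embedding we have $\sigma_1(x_{s_1}) + \sigma_2(x_{s_2}) + \sigma_3(x_{s_3}) = g$ in $G$. Applying any character $\chi \in \hat{G}$ to both sides,
\begin{equation*}
\chi(\sigma_1(x_{s_1})) \cdot \chi(\sigma_2(x_{s_2})) \cdot \chi(\sigma_3(x_{s_3})) = \chi(g).
\end{equation*}
But by construction (step 7 of Algorithm~\ref{alg:generating_matrix_variables}), the column $\vec{\chi}_j$ added to $M_{s_j}$ for the chosen embedding and character evaluates to $\chi(\sigma_j(\cdot))$ on $\Sigma$, so $y_{s_j}^{\vec{\chi}_j} = \chi(\sigma_j(x_{s_j}))$. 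Substituting gives exactly the required equation $y_{s_1}^{\vec{\chi}_1} \cdot y_{s_2}^{\vec{\chi}_2} \cdot y_{s_3}^{\vec{\chi}_3} = \chi(g)$.

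No step is genuinely hard; the only real care needed is notational, namely keeping straight the identification of $\vec{\chi}\in\widehat{G^v_{\sf master}}$ with characters that are only nontrivial on the factor of $G^v_{\sf master}$ coming from the relevant constraint/embedding. Once this identification is made explicit, both families of equations reduce to (i) the homomorphism property of characters, or (ii) the defining relation of the embedding, which is guaranteed by the hypothesis that $\V x|_{\calV(C)}$ lies in $\supp(\mu_C)$.
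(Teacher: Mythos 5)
Your proof is correct. The paper presents this statement as an \emph{observation} without a written proof, so there is no official argument to compare against; your verification is the natural one the authors evidently had in mind. You correctly identify the key fact underlying the whole check — namely that $M_v[\ell][\vec{\chi}] = \vec{\chi}(r^v_\ell)$ as established by Claim~\ref{claim:Alg_Masterchar} — and then discharge both families of equations by elementary arguments: the character constraints reduce to the homomorphism property of characters (or, equivalently, the pointwise-product structure of columns/rows built in Algorithm~\ref{alg:generating_matrix_variables}), and the satisfying-assignment constraints reduce to applying a character $\chi$ to the defining relation $\sigma_1(x_{s_1})+\sigma_2(x_{s_2})+\sigma_3(x_{s_3})=g$, which holds precisely because the survival hypothesis guarantees $\V x|_{\calV(C)}\in\supp(\mu_C)$. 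One point worth flagging: the constant-column case is indeed trivially true for any row, as you note, but its inclusion in the {\gesystem} is not vacuous — it constrains \emph{general} solutions, not just the integral ones; your proof correctly only needs that the chosen row matches the constant. The argument is complete and correct.
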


    There is a natural way in which one can view a solution to the system of equations above as an element of $\prod_{v\in \calV} H_v^\star$. With this map, we denote the set of all the solutions $\mathcal{T}$ to the {\gesystem} as a subset of $\prod_{v\in \calV} H_v^\star$.

\subsection{Getting Consistent SDP and {\gesystem} Solutions}
\newcommand{\calT}{\mathcal{T}}
Jumping ahead, in the soundness analysis of our dictatorship test
we are interested in computing the expectation of the form
\[
\E_{C(s_1, s_2, s_3) \sim \calC}\left[\E_{(x, y, z)\sim \mu_C^R}[F_{s_1}(\sigma_1(x), x) \cdot F_{s_2}(\sigma_2(y), y) \cdot F_{s_3}(\sigma_3(z), z)]\right]
\]
for functions for functions $F_{s}: (H_s^\star)^R\times \Sigma^R\rightarrow \mathbb{C}$, where the first input goes to the product functions from the decomposition of $F_{s_i}$, and the second input goes to the low-degree functions. During the rounding procedure, we want to replace the local distribution $(\sigma_1(x),\sigma_2(y), \sigma_3(z))$ by some global distribution. In our analysis, the global distribution will be a random solution to our {\gesystem}.

One important technical condition that is required to show the optimality of the algorithm\footnote{Optimality with respect to the soundness of the dictatorship test.} is that when we make this switch, the embedding functions coming from the functions $F_{s_i}$ should satisfy the following property.
\begin{itemize}
    \item[$\blacklozenge$] For every constraint $C$ over $(s_1,s_2,s_3)$ and tuple of embedding functions $P$, $Q$ and $R$ coming from the decompositions of $F_{s_1}, F_{s_2}$ and $F_{s_3}$, respectively,  $PQR\equiv 1$ under the local distribution $\mu_C$ iff $PQR\equiv 1$ under the distribution from ${\gesystem}$.
\end{itemize}

To guarantee this condition we define below a condition called {\em consistent solutions}, which can be achieved via our hybrid algorithm and is sufficient to guarantee the above condition.

 For a variable $s\in \calV$, let $\sigma_{s}$ be the map $\sigma_{s} : \Sigma\rightarrow G^{s}_{\sf master}$ given by $\sigma_{s}(a) = r^{s}_a$, where $r^{s}_a$ is the group element from Algorithm~\ref{alg:generating_matrix_variables} associated with the variable $s$ and $a\in \Sigma$.
We now define the span of the local distribution $\mu_C$ under the embeddings $(\sigma_{s_1}, \sigma_{s_2}, \sigma_{s_3})$.
\begin{definition}
    For a constraint $C\in \calC$ with $\calV(C) = (s_1, s_2, s_3)$ and the embedding maps $\sigma_{s_i}$ to $G^{s_{i}}_{\sf master}$ for $1\leq i\leq 3$, define ${\sf span}({\sf supp}(\mu_C))$ to be the subgroup of $G^{s_{1}}_{\sf master}\times G^{s_{2}}_{\sf master}\times G^{s_{3}}_{\sf master}$ generated by $\{(\sigma_{s_1}(a), \sigma_{s_2}(b), \sigma_{s_3}(c)) \mid (a, b, c)\in {\sf supp}(\mu_C)\}$.
\end{definition}

We are now ready to define the notion of consistent solutions.
\begin{definition}[Consistent solutions]
    \label{def:tight_system}
   The solution to the semidefinite program $(\V V, \V \mu)$ is \emph{consistent} with a solution space $\mathcal{T}\subseteq \prod_{v\in \calV} H_v^\star$ to the {\gesystem} if the following condition holds:
  for every constraint $C\in \calC$ with $\calV(C) = (s_1, s_2, s_3)$, we have
  $${\sf span}({\sf supp}(\mu_C)) = \calT|_{C},$$
  where $\calT|_{C}:= \{ (a_{s_1}, a_{s_2}, a_{s_3}) \mid \V a\in \calT\}$.
\end{definition}

The utility of consistent solutions is given by the following lemma, which is crucial in the soundness analysis of the dictatorship test to achieve the condition $\blacklozenge$ mentioned above.
\begin{lemma}
\label{lemma:PQR_local_global}
    Fix a constraint $C\in {\sf supp}(\calC)$, let $\calV(C) = (s_1, s_2, s_3)$ and let $\sigma_{s_1}: \Sigma\rightarrow G^{s_1}_{\sf master}$, $\sigma_{s_2}: \Sigma\rightarrow G^{s_2}_{\sf master}$ and $\sigma_{s_3}: \Sigma\rightarrow G^{s_3}_{\sf master}$ be as above. For $\chi\in \widehat{G^{s_1}_{\sf master}}$, $\chi'\in \widehat{G^{s_2}_{\sf master}}$ and $\chi'' \in \widehat{G^{s_3}_{\sf master}}$, consider the functions $P(x) = \chi(\sigma_{s_1}(x))$, $Q(y) = \chi'(\sigma_{s_2}(y))$ and $R(z) = \chi''(\sigma_{s_3}(z))$ and the functions $P'(a) = \chi(a)$, $Q'(b) = \chi'(b)$ and $R'(c) = \chi''(c)$. If the SDP solution $(\V V, \V \mu)$ is consistent with the set of solutions $\calT$ to the {\gesystem}, then $PQR\equiv 1$ under ${\sf supp}(\mu_C)$ iff $P'Q'R'\equiv 1$ in the support of $\calT|_{C}$.
\end{lemma}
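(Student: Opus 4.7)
The plan is to rephrase both sides of the claimed equivalence in terms of a single character on the ambient product group, and then observe that consistency of the SDP with $\calT$ identifies the local and global trivialization loci of that character. Concretely, I will introduce the character $\psi \in \widehat{G^{s_1}_{\sf master}\times G^{s_2}_{\sf master}\times G^{s_3}_{\sf master}}$ defined by
\[
\psi(a,b,c) \;=\; \chi(a)\,\chi'(b)\,\chi''(c).
\]
Then $P'(a)Q'(b)R'(c) = \psi(a,b,c)$ directly, while $P(x)Q(y)R(z) = \psi\bigl(\sigma_{s_1}(x),\sigma_{s_2}(y),\sigma_{s_3}(z)\bigr)$ by construction of $P,Q,R$. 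With this framing, the statement $PQR \equiv 1$ on $\supp(\mu_C)$ is exactly the statement that $\psi$ vanishes on the image set $S := \{(\sigma_{s_1}(a),\sigma_{s_2}(b),\sigma_{s_3}(c)) : (a,b,c) \in \supp(\mu_C)\}$, and $P'Q'R' \equiv 1$ on $\calT|_C$ is the statement that $\psi$ vanishes on $\calT|_C$. By the consistency hypothesis, $\calT|_C = {\sf span}({\sf supp}(\mu_C))$, which by definition is the subgroup of the product group generated by $S$.

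For the forward direction, assume $\psi \equiv 1$ on $S$. Since $\psi$ is a group homomorphism into $\mathbb{C}^{\times}$, the set $\ker(\psi)$ is a subgroup of $G^{s_1}_{\sf master}\times G^{s_2}_{\sf master}\times G^{s_3}_{\sf master}$. A subgroup containing a generating set of $\langle S\rangle$ contains $\langle S\rangle$, so $\psi \equiv 1$ on $\langle S\rangle = \calT|_C$, proving $P'Q'R' \equiv 1$ on $\calT|_C$. For the backward direction, assume $\psi \equiv 1$ on $\calT|_C$. Since $S \subseteq \langle S\rangle = \calT|_C$, the character $\psi$ vanishes on $S$ a fortiori, which unwinds to $P(x)Q(y)R(z) \equiv 1$ for every $(x,y,z)\in \supp(\mu_C)$. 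This completes both directions.

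I do not anticipate any substantive obstacle: the entire argument is a bookkeeping exercise that leverages two facts, namely that a character trivial on a generating set is trivial on the generated subgroup, and that the consistency hypothesis in Definition~\ref{def:tight_system} is precisely what equates $\calT|_C$ with the subgroup generated by the embedded support. The only place one has to be slightly careful is interpreting $P(x)Q(y)R(z)$: since $\chi,\chi',\chi''$ are characters on the respective master groups while $\sigma_{s_i}$ maps into these master groups, the identity $PQR = \psi\circ(\sigma_{s_1},\sigma_{s_2},\sigma_{s_3})$ must be stated with the correct product-group character $\psi$, which is where the block structure from Algorithm~\ref{alg:generating_matrix_variables} is used implicitly.
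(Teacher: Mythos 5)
Your proposal is correct and follows essentially the same route as the paper: both use consistency to identify $\calT|_C$ with the subgroup generated by the embedded support, and both use multiplicativity of characters (your phrasing via $\ker\psi$ being a subgroup is a mild repackaging of the paper's explicit expansion of a group element as a sum of generators). The one small point where you are more precise than the paper's terse "the other direction is proved in the same way" is in noting that the backward direction is genuinely simpler — it only needs the containment $S\subseteq\calT|_C$, not the generation argument — but that is a cosmetic difference, not a new idea.
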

\begin{proof}
    To prove the lemma in one direction, suppose $PQR\equiv 1$ in the support of ${\sf supp}(\mu_C)$.
Take any $(\V a, \V b, \V c)\in \calT|_{C}$.
Since ${\sf span}({\sf supp}(\mu_C)) = \calT|_{C}$ there are $(x_p,y_p,z_p)\in {\sf supp}(\mu_C)$  such that
$(\V a, \V b, \V c) = \sum\limits_{p=1}^{L} (\sigma_{s_1}(x_p),\sigma_{s_2}(y_p), \sigma_{s_3}(z_p))$.
It follows from the multiplicativity of characters that
\[
    P'(\V a) Q'(\V b) R'(\V c) = \prod\limits_{p=1}^{L}P'(\sigma_{s_1}(x_p))\cdot
    \prod\limits_{p=1}^{L}Q'(\sigma_{s_2}(y_p))
    \prod\limits_{p=1}^{L}R'(\sigma_{s_3}(z_p))
     = \prod\limits_{p=1}^{L}P(x_p)\cdot
    Q(y_p)
    \cdot R(z_p),
\]
which is equal to $1$ as $PQR\equiv 1$
in the support of $\mu_C$ and $(x_p,y_p,z_p)\in{\sf supp}(\mu_C)$
for $p=1,\ldots,L$.

The other direction of the lemma is proved in the same way.
\end{proof}
We next explain how to achieve consistent solutions. Towards this end, we iteratively modify our SDP solution, as well as the {\gesystem}, as stated in Algorithm~\ref{alg:massage_sdp_ge}.
This algorithm consists of two sub-routines, which we present next:
\begin{enumerate}
    \item $\mathtt{ModifyGESystem}$: In this procedure, if ${\sf span}({\sf supp}(\mu_C)) \subsetneq \calT|_{C}$, we modify the {\gesystem} to reduce $\calT|_{C}$.
    \item $\mathtt{ModifySDP}$: In this procedure, if ${\sf span}({\sf supp}(\mu_C)) \supsetneq \calT|_{C}$, we modify the SDP formulation to reduce ${\sf span}({\sf supp}(\mu_C))$.
\end{enumerate}

\setcounter{AlgoLine}{0}
\begin{algorithm}[t]
\caption{\textsc{ModifyGESystem}}
\label{alg:modifyGEsystem}

\For{every $C\in {\sf supp}(\calC)$}{
    Let $\calV(C) = (s_1, s_2, s_3)$\;
    
    \If{${\sf span}({\sf supp}(\mu_C)) \subsetneq \calT|_{C}$}{
        For all $(\chi_1,\chi_2,\chi_3)\in 
        \widehat{G^{s_{1}}_{\sf master}}\times 
        \widehat{G^{s_{2}}_{\sf master}}\times 
        \widehat{G^{s_{3}}_{\sf master}}$ that evaluate to the constant $1$ on 
        ${\sf span}({\sf supp}(\mu_C))$, add the equation
        $y_{s_1}^{\chi_1}\cdot y_{s_2}^{\chi_2}\cdot y_{s_3}^{\chi_3} = 1$
        to the {\gesystem}\;
    }
}
\end{algorithm}

The following claim shows the correctness of the procedure $\mathtt{ModifyGESystem}$ given in Algorithm~\ref{alg:modifyGEsystem}.
\begin{claim}
\label{claim:pqr_eq1_modifyGE}
    Fix a constraint $C\in \calC$ with $\calV(C) = (s_1, s_2, s_3)$. If ${\sf span}({\sf supp}(\mu_C)) \subsetneq \calT|_{C}$, then the {\gesystem}, as modified in Algorithm~\ref{alg:modifyGEsystem}, satisfies $ {\sf span}({\sf supp}(\mu_C))\supseteq \calT|_{C} $.
\end{claim}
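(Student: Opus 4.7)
The plan is to use Pontryagin duality for finite Abelian groups, which says that for a subgroup $H$ of a finite Abelian group $G$ with annihilator $H^{\perp} := \{\chi \in \widehat{G} : \chi|_H \equiv 1\}$, we have
\[
H = \{g \in G : \chi(g) = 1\ \forall \chi \in H^{\perp}\}.
\]
The whole argument is then a bookkeeping exercise that translates the added linear equations into character equations and invokes this fact.

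First, I will unpack what a solution to the {\gesystem} looks like locally at $C$. The valid character constraints introduced in Section~\ref{sec:add_eq} force, for each variable $s$, the vector $(y_s^{\vec{\chi}})_{\vec{\chi}\in \widehat{G^s_{\sf master}}}$ to be a row of $M_s$; by Claim~\ref{claim:Alg_Masterchar}, every row of $M_s$ corresponds to an element of $H_s^{\star}\subseteq G^s_{\sf master}$, and the value of $y_s^{\vec{\chi}}$ on that row is $\vec{\chi}(\V a_s)$ for the associated $\V a_s\in H_s^{\star}$. Consequently, after these constraints the set $\calT|_C$ can be identified with a subset of $G := G^{s_1}_{\sf master}\times G^{s_2}_{\sf master}\times G^{s_3}_{\sf master}$, where $(\V a_{s_1},\V a_{s_2},\V a_{s_3})$ denotes a triple coming from a solution.

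Second, I will translate the new equations added in Algorithm~\ref{alg:modifyGEsystem}. Set $H := {\sf span}({\sf supp}(\mu_C)) \leq G$, and let $H^{\perp}\leq \widehat{G}$ denote its annihilator. The procedure adds the equation $y_{s_1}^{\chi_1}\cdot y_{s_2}^{\chi_2}\cdot y_{s_3}^{\chi_3}=1$ for every $(\chi_1,\chi_2,\chi_3)\in H^{\perp}$. Under the identification above, this equation is precisely
\[
\chi_1(\V a_{s_1})\,\chi_2(\V a_{s_2})\,\chi_3(\V a_{s_3}) \;=\; (\chi_1,\chi_2,\chi_3)(\V a_{s_1},\V a_{s_2},\V a_{s_3}) \;=\; 1.
\]
Hence, for every $(\V a_{s_1},\V a_{s_2},\V a_{s_3})\in \calT|_C$ after the modification, every character in $H^{\perp}$ vanishes on $(\V a_{s_1},\V a_{s_2},\V a_{s_3})$.

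Third, I apply Pontryagin duality: since $(H^{\perp})^{\perp}=H$ inside the finite Abelian group $G$, the condition above forces $(\V a_{s_1},\V a_{s_2},\V a_{s_3})\in H = {\sf span}({\sf supp}(\mu_C))$. This gives the containment $\calT|_C \subseteq {\sf span}({\sf supp}(\mu_C))$ as required. There is no substantial obstacle here; the only point that requires care is the identification of $y_s^{\vec{\chi}}$ with $\vec{\chi}(\V a_s)$, which the valid character constraints (multiplicativity in $\vec{\chi}$ and pinning down the constant columns corresponding to $G^s_{\sf master}/H_s^{\star}$) make unambiguous, so that products of the $y$-variables translate faithfully into evaluations of characters of $G$ on triples in $G$.
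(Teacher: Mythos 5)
Your proof is correct and takes essentially the same route as the paper: both identify the added equations as exactly the character constraints coming from the annihilator of ${\sf span}({\sf supp}(\mu_C))$ and then conclude the containment. The paper states the final step as an assertion, while you make explicit the Pontryagin-duality fact $(H^{\perp})^{\perp}=H$ and the identification of $y_s^{\vec{\chi}}$ with $\vec{\chi}(\V a_s)$ that underlies it, which is a welcome clarification but not a different argument.
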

\begin{proof}
    Fix a constraint $C\in \supp(\calC)$ and let $\calV(C) = (s_1, s_2, s_3)$. Consider the collection of all the characters $\{(\chi^i_1, \chi^i_2, \chi^i_3)\}_i$ of the group $G^{s_{1}}_{\sf master}\times G^{s_{2}}_{\sf master}\times G^{s_{3}}_{\sf master}$ that evaluate to the constant $1$ on the subgroup ${\sf span}({\sf supp}(\mu_C))$. Using these characters, we are adding the following equations to the {\gesystem}:  $y_{s_1}^{\chi^i_1}\cdot y_{s_2}^{\chi^i_2}\cdot y_{s_3}^{\chi^i_3} = 1$ for all $i$. These equations enforce that $\{ (a_{s_1}, a_{s_2}, a_{s_3}) \mid \V a\in \calT\}$ is a subset of ${\sf span}({\sf supp}(\mu_C))$ as required.
\end{proof}

The above claim fixes the problem in one direction towards getting consistent solutions. We also require that ${\sf span}({\sf supp}(\mu_C)) \subseteq \calT|_{C}$. This is done in Algorithm~\ref{alg:modifySDP} below by simply excluding the assignments from a local distribution not implied by the {\gesystem}.

\setcounter{AlgoLine}{0}
\begin{algorithm}[t]
\caption{\textsc{ModifySDP}}
\label{alg:modifySDP}

\For{every $(a,b,c)\in \supp(\mu_C)$ such that 
$(r_a^{s_1}, r_b^{s_2}, r_c^{s_3})\notin \calT|_{C}$ where $\calV(C)=(s_1,s_2,s_3)$}{
    Augment SDP($\inst$) by adding the constraint $\mu_C(a,b,c)=0$\;
}

\end{algorithm}

The following claim shows the correctness of the procedure $\mathtt{ModifySDP}$ given in Algorithm~\ref{alg:modifySDP}.
\begin{claim}
\label{claim:pqr_eq1_modifySDP}
    Fix a constraint $C\in \calC$ with $\calV(C) = (s_1, s_2, s_3)$. Suppose  ${\sf span}({\sf supp}(\mu_C)) \supsetneq \calT|_{C}$. Then the SDP, as modified in Algorithm~\ref{alg:modifySDP}, satisfies $ {\sf span}({\sf supp}(\mu_C))\subseteq \calT|_{C}$.
\end{claim}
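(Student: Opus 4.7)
The claim is essentially a direct consequence of the design of Algorithm~\ref{alg:modifySDP}, so my plan is a short three-step verification.

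First, I will observe that after $\mathtt{ModifySDP}$ has run, every feasible SDP solution $(\V V,\V \mu)$ of the modified program must satisfy $\mu_C(a,b,c)=0$ whenever $(r_a^{s_1},r_b^{s_2},r_c^{s_3})\notin\calT|_C$; this is precisely the set of new constraints introduced by the loop body. Using the definition $\sigma_{s_i}(a)=r_a^{s_i}$, this implies that for every $(a,b,c)\in {\sf supp}(\mu_C)$, the corresponding generator $(\sigma_{s_1}(a),\sigma_{s_2}(b),\sigma_{s_3}(c))$ of ${\sf span}({\sf supp}(\mu_C))$ lies in $\calT|_C$.

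Second, I will verify that $\calT|_C$ is itself a subgroup of $G^{s_1}_{\sf master}\times G^{s_2}_{\sf master}\times G^{s_3}_{\sf master}$. The valid character constraints give a bijective identification of each $\V y\in\calT$ with a tuple $(h_v)_{v\in\calV}\in\prod_{v\in\calV}H_v^\star$ satisfying $y_v^{\vec\chi}=\vec\chi(h_v)$. Under this identification, each valid satisfying-assignment constraint becomes a character identity that, by the multiplicativity of characters, is preserved under coordinate-wise group operations on the tuple $(h_v)_{v\in\calV}$. Projecting to the three coordinates indexed by $C$ then shows that $\calT|_C$ is closed under the group operation of $G^{s_1}_{\sf master}\times G^{s_2}_{\sf master}\times G^{s_3}_{\sf master}$, and is therefore a subgroup.

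Third, the desired containment is then immediate: ${\sf span}({\sf supp}(\mu_C))$ is by definition the smallest subgroup of $G^{s_1}_{\sf master}\times G^{s_2}_{\sf master}\times G^{s_3}_{\sf master}$ containing its generators, and, by the first step, those generators lie in the subgroup $\calT|_C$ (established in the second step); hence ${\sf span}({\sf supp}(\mu_C))\subseteq\calT|_C$. The main technical subtlety I anticipate is in Step~2: the constraints $y_{s_1}^{\vec\chi_1}y_{s_2}^{\vec\chi_2}y_{s_3}^{\vec\chi_3}=\chi(g)$ carry a possibly nontrivial right-hand side $\chi(g)$, so verifying subgroup closure for $\calT|_C$ in spite of these offsets needs a careful direct argument using character identities; however, I do not expect this to require any genuinely new ideas beyond those already used in Claim~\ref{claim:pqr_eq1_modifyGE}.
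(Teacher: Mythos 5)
Your proposal has the same overall shape as the paper's own (very brief) proof: after $\mathtt{ModifySDP}$ runs, every surviving tuple $(a,b,c)\in{\sf supp}(\mu_C)$ has $(\sigma_{s_1}(a),\sigma_{s_2}(b),\sigma_{s_3}(c))\in\calT|_C$, and therefore the subgroup these tuples generate lies inside $\calT|_C$. The difference is that you have made explicit the step the paper leaves unstated: the final inference requires $\calT|_C$ to be closed under the group operation, i.e.\ to be a \emph{subgroup} (not merely a subset) of $G^{s_1}_{\sf master}\times G^{s_2}_{\sf master}\times G^{s_3}_{\sf master}$, since ${\sf span}({\sf supp}(\mu_C))$ is by definition the subgroup generated, and the span of a set can escape an arbitrary set containing it.

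You have also correctly flagged the weak point of your own Step 2, and it is a genuine one. Under the identification $\calT\hookrightarrow\prod_v H_v^\star$, the valid character constraints are indeed multiplicative, but the valid satisfying-assignment constraints $y_{s_1}^{\vec\chi_1}y_{s_2}^{\vec\chi_2}y_{s_3}^{\vec\chi_3}=\chi(g)$ have RHS $\chi(g)$ which in general is not $1$ (e.g.\ any standard embedding with $g\neq 0$, which occurs whenever $(w^\star,w^\star,w^\star)\notin{\sf supp}(\mu_C)$, as in the vector-valued punctured $3$-Lin example). When $\chi(g)\neq 1$, these constraints are \emph{not} preserved under coordinate-wise group multiplication: if $(h_v)$ and $(h'_v)$ both satisfy the constraint, their product satisfies it with RHS $\chi(g)^2$, not $\chi(g)$. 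So as written, Step 2 does not establish that $\calT$ (and hence $\calT|_C$) is a subgroup rather than a coset, and the paper's proof silently relies on the same fact without addressing it. Closing this gap seems to require an additional normalization argument --- e.g.\ shifting by the {\gesystem} solution induced by a fixed surviving satisfying assignment $\alpha^{(0)}$ (whose existence is guaranteed by Claim~\ref{claim:preserve_sat}) to homogenize the system, and then tracking how both ${\sf span}$ and $\calT|_C$ transform under this shift --- rather than a direct "multiplicativity of characters'' argument; your optimistic assessment that Claim~\ref{claim:pqr_eq1_modifyGE}-style reasoning suffices is not obviously correct, since that claim adds only RHS-$1$ constraints and never has to confront the inhomogeneity.
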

\begin{proof}
    Suppose  ${\sf span}({\sf supp}(\mu_C)) \supsetneq \calT|_{C}$. This implies that there exists $(a, b, c)\in \supp(\mu_C)$ such that $(r_a^{s_1}, r_b^{s_2}, r_c^{s_3})\notin \calT|_{C}$. Any solution to the modified SDP from Algorithm~\ref{alg:modifySDP} ensures that $\mu_{C}(a, b, c) = 0$ for such tuples. Therefore, any SDP solution to the modified SDP satisfies $ {\sf span}({\sf supp}(\mu_C))\subseteq \calT|_{C}$ for every constraint $C$.
\end{proof}

   Thus, we run the procedures $\mathtt{ModifyGESystem}$ and $\mathtt{ModifySDP}$ towards achieving consistent solutions as defined in Definition~\ref{def:tight_system}.
   Note that the procedures depend on the initial SDP solution  $(\V V, \V \mu)$, and as discussed, we will want to preserve all satisfying assignments in it.
   In more detail, we need to ensure that, after running the subroutines on a satisfiable instance, the final SDP and the {\gesystem} still have all the satisfying assignments preserved.
   We formally define this below for the SDP solution.

   \begin{definition} [Solution preserving all the integral solutions]
       \label{def:sdp_solution_preserves_all}
       We say an SDP solution $(\V V, \V \mu)$ preserves all the valid integral solutions, if for every satisfying assignment $\V \alpha \in \Sigma^n$ to the instance and any constraint $C\in \supp(\calC)$, $\alpha|_{\calV(C)}\in \supp(\mu_C)$.
   \end{definition}

    Algorithm~\ref{alg:massage_sdp_ge} stated below calls the two subroutines iteratively in a specific manner so that subsequent SDP formulations preserve all valid integral solutions.

\setcounter{AlgoLine}{0}
    \begin{algorithm}[!ht]
\DontPrintSemicolon
\KwIn{An instance $\inst(\calV, \calC)$ of Max-$\mathcal{P}$-CSP.}

Let SDP($\inst$) be the basic semidefinite program from Figure~\ref{fig:basicsdp1}\;
Let $(\V{V}, \V{\mu})$ be 
the SDP solution to SDP($\inst$) preserving all the integral solutions (we explain how to achieve this in Lemma~\ref{lemma:includes_all_sat_assignments})\;

\If{$\val(\V V, \V \mu) \neq 1$}{
    \textbf{Abort}\tcp*{$\inst$ is not satisfiable}
}

\If{there is $C\in {\sf supp}(\calC)$ such that ${\sf supp}(\mu_C)$ is either pairwise disconnected or has a $\mathbb{Z}$-embedding}{
    \textbf{Abort}\tcp*{$\inst$ is not satisfiable or $C$ is not {\symm}}
}
\Else{
    Set up a {\gesystem} using the SDP solution $(\V V, \V \mu)$ and solve it\;
    Let $\calT\subseteq \prod_{v=1}^N H_v^\star$ be the set of solutions to the {\gesystem}\;
    
    \If{$\exists\,(a,b,c)\in \supp(\mu_C)$ such that $(r_a^{s_1}, r_b^{s_2}, r_c^{s_3})\notin \calT|_{C}$ where $\calV(C)=(s_1,s_2,s_3)$}{
        Run $\mathtt{ModifySDP}$\;
    }
    \Else{
        Run $\mathtt{ModifyGESystem}$\;
    }
    
    \If{the SDP $\mathcal{S}$ or the {\gesystem} is modified}{
        Repeat from step 3 above\;
    }
    \Else{
        \Return{$\mathcal{S}$ and the {\gesystem}}\;
    }
}
\caption{Massaging the SDP solution and the {\gesystem} consistent}
\label{alg:massage_sdp_ge}





\end{algorithm}

The following section shows that step $3$ of the algorithm can be done in polynomial time. In Section~\ref{sec:getting_consistent_solutions}, we prove that the algorithm returns consistent solutions in polynomial time.

\subsubsection{Preserving All of the Integral Solutions}
To make sure that we do not exclude any satisfying assignment from subsequent SDP formulations during the execution of Algorithm~\ref{alg:massage_sdp_ge}, we will ensure that the SDP solution $ (\V V, \V \mu)$ in step $3$ preserves all integral solutions. Lemma~\ref{lemma:includes_all_sat_assignments} below states that such a solution $(\V V, \V \mu)$ can always be found in polynomial time. To state and prove the lemma, we need the following claim.

\begin{claim}
\label{claim:combineSDP}
Suppose $(\V V, \V \mu)$ and $(\V V', \V \mu')$ are two SDP solutions with value $1$. Then there is a SDP solution $(\V V'', \V \mu'')$ with value $1$ such that for every constraint $C\in \supp(\calC)$,  $\supp(\mu''_C) = \supp(\mu_C) \cup \supp(\mu'_C)$.
\end{claim}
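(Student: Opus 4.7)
The natural plan is to take a direct-sum based convex combination of the two given SDP solutions. Specifically, I would define $\V b''_0 = \tfrac{1}{\sqrt 2}(\V b_0 \oplus \V b'_0)$ and $\V b''_{i,a} = \tfrac{1}{\sqrt 2}(\V b_{i,a} \oplus \V b'_{i,a})$, where $\oplus$ denotes the embedding of two vectors into orthogonal subspaces of a larger Euclidean space. For the distributions, I would set $\mu''_C = \tfrac{1}{2}(\mu_C + \mu'_C)$ for every $C \in {\sf supp}(\calC)$. Because the two addends come from disjoint subspaces, inner products decompose cleanly: for any pair $(i,a), (j,b)$,
\[
\langle \V b''_{i,a}, \V b''_{j,b}\rangle = \tfrac{1}{2}\bigl(\langle \V b_{i,a}, \V b_{j,b}\rangle + \langle \V b'_{i,a}, \V b'_{j,b}\rangle\bigr),
\]
which is immediately seen to equal $\tfrac{1}{2}\bigl(\Pr_{\mu_C}[x_i=a, x_j=b] + \Pr_{\mu'_C}[x_i=a, x_j=b]\bigr) = \Pr_{\mu''_C}[x_i=a, x_j=b]$, giving constraint~\eqref{eq:inner_p_SDP1}.

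Next I would check the remaining SDP constraints. For the norm, $\|\V b''_0\|_2^2 = \tfrac{1}{2}(\|\V b_0\|_2^2 + \|\V b'_0\|_2^2) = 1$. For the inner-product-with-$\V b_0$ constraint,
\[
\langle \V b''_{i,a}, \V b''_0\rangle = \tfrac{1}{2}\bigl(\langle \V b_{i,a},\V b_0\rangle + \langle \V b'_{i,a},\V b'_0\rangle\bigr) = \tfrac{1}{2}\bigl(\|\V b_{i,a}\|_2^2 + \|\V b'_{i,a}\|_2^2\bigr) = \|\V b''_{i,a}\|_2^2,
\]
so the self-loop constraint is preserved. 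Each $\mu''_C$ is clearly a probability distribution on $\Sigma^{\calV(C)}$, and its support is exactly ${\sf supp}(\mu_C)\cup{\sf supp}(\mu'_C)$, as required.

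Finally, for the objective value, since both input solutions have value $1$, each $\mu_C$ and $\mu'_C$ is supported on $P_C^{-1}(1)$, hence so is their average $\mu''_C$. Therefore $\E_{x\sim \mu''_C}[C(x)] = 1$ for every $C$, giving $\val(\V V'', \V \mu'') = 1$. No step here is a serious obstacle; the only minor subtlety worth flagging is keeping track that the direct-sum embedding is what allows the inner-product relations to average cleanly, which is precisely why the two SDP constraints involving $\V b_0$ continue to hold after the construction.
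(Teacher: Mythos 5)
Your proof is correct and is essentially the same argument as the paper's: the paper writes the construction as $\V b''_{i,a} = \tfrac{1}{\sqrt 2}(\V e_1\otimes \V b_{i,a}) + \tfrac{1}{\sqrt 2}(\V e_2\otimes \V b'_{i,a})$, which is exactly your direct-sum embedding into orthogonal subspaces, and the verification of the SDP constraints and of $\supp(\mu''_C)=\supp(\mu_C)\cup\supp(\mu'_C)$ proceeds identically.
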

\begin{proof}
The proof is based on a standard tensorization argument. Suppose that the SDP solution $(\V V, \V \mu)$ consists of vectors $\{\V b_{i,a}\}\cup \{ \V b_0\}$ and the SDP solution $(\V V', \V \mu')$ consists of vectors $\{\V b'_{i,a}\}\cup \{ \V b'_0\}$ from $\mathbb{R}^{qn+1}$. Let $\V e_1 = (1,0)$ and $\V e_2 = (0,1)$ be the vectors in $\mathbb{R}^2$. Consider the vectors
\[
\V b''_{i,a} = \frac{1}{\sqrt{2}} (\V e_1\otimes \V b_{i, a}) + \frac{1}{\sqrt{2}} (\V e_2 \otimes \V b'_{i,a}), \quad\quad\quad \V b''_0 = \frac{1}{\sqrt{2}} (\V e_1 \otimes \V b_0) + \frac{1}{\sqrt{2}} (\V e_2 \otimes \V b'_0).
\]
It can be easily verified that
\begin{align*}
    \langle \V b''_{i,a}, \V b''_0\rangle =\langle\frac{1}{\sqrt{2}} (\V e_1\otimes \V b_{i, a}) + \frac{1}{\sqrt{2}} (\V e_2\otimes \V b'_{i,a}), \V b''_0\rangle
    &= \frac{1}{\sqrt{2}} \langle  \V e_1\otimes \V b_{i, a}, \V b''_0\rangle + \frac{1}{\sqrt{2}}\langle  \V e_2\otimes \V b'_{i,a}, \V b''_0\rangle\\
    &= \frac{1}{2} \|\V b_{i, a}\|^2 + \frac{1}{2} \|\V b'_{i,a}\|^2\\
    & = \|\V b''_{i,a}\|^2,
\end{align*}
Similarly, $\| \V b''_0\|^2 = \frac{1}{2}\| \V b_0\|^2 + \frac{1}{2} \| \V b'_0\|^2 = 1$. Consider $\mu_C'' = \frac{1}{2} \mu_{C} + \frac{1}{2}\mu'_C$ for every $C\in \supp(\calC)$. We have
\begin{align*}
    \langle \V b''_{i,a}, \V b''_{j,b}\rangle &=\langle\frac{1}{\sqrt{2}} (\V e_1\otimes \V b_{i, a}) + \frac{1}{\sqrt{2}} (\V e_2\otimes \V b'_{i,a}), \frac{1}{\sqrt{2}} (\V e_1\otimes \V b_{j, b}) + \frac{1}{\sqrt{2}} (\V e_2\otimes \V b'_{j,b})\rangle  \\
    &= \frac{1}{2} \langle  \V b_{i, a}, \V b_{j, b}\rangle + \frac{1}{2}\langle  \V b'_{i,a}, \V b'_{j,b}\rangle\\
    & = \Pr_{x\sim \mu''_{C}}[x_i = a, x_j = b]
\end{align*}
Therefore, the vectors $\V V''= \{\V b''_{i,a}\}\cup \{ \V b''_0\}$ along with the local distributions $\V \mu''$ satisfy all the SDP constraints. Furthermore, we have $\supp(\mu''_C) = \supp(\mu_C) \cup \supp(\mu'_C)$ for every $C\in \supp(\calC)$. Finally, the SDP value of the solution $(\V V'', \V \mu'')$ is $1$ as $\mu_C''$ is supported on the set of satisfying assignments to $C$ for every $C\in \supp(\calC)$.
\end{proof}

\begin{lemma}
\label{lemma:includes_all_sat_assignments}
    Fix a satisfiable instance $\inst(\calV, \calC)$ of a Max-$\mathcal{P}$-CSP and a SDP formulation SDP($\inst$) such that every satisfiable assignment to $\inst$ is a valid integral solution to SDP($\inst$).

    There is a polynomial-time algorithm that returns an SDP solution $(\V V, \V \mu)$ with value $1$ such that for every satisfiable assignment $\V \alpha$ to the instance $\inst$, and for every $C\in \supp(\calC)$, $\mu_C(\alpha|_{\calV(C)}) >0$.
\end{lemma}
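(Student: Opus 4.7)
The plan is to reduce the task to polynomially many applications of Claim~\ref{claim:combineSDP}. The key observation is that there are at most $|\supp(\calC)|\cdot |\Sigma|^3$ pairs $(C,\tau)$ with $C\in \supp(\calC)$ and $\tau\in \Sigma^{\calV(C)}$. For each such pair we will either produce an SDP solution of value $1$ whose $C$-local distribution supports $\tau$, or certify that no such solution exists. Combining all produced solutions via Claim~\ref{claim:combineSDP} then yields the desired $(\V V,\V \mu)$, since combining is monotone in the per-constraint supports.

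Concretely, for each pair $(C,\tau)$ with $\tau\in P_C^{-1}(1)$, form the modified semidefinite program SDP$(\inst,C,\tau)$ by adding the linear constraints $\mu_C(\tau)=1$ and $\mu_C(\tau')=0$ for $\tau'\neq \tau$. Solve SDP$(\inst,C,\tau)$ in polynomial time; if its optimum equals $1$, extract a feasible solution $(\V V_{C,\tau},\V \mu_{C,\tau})$, and otherwise discard the pair. Finally, combine all extracted solutions iteratively using Claim~\ref{claim:combineSDP}. To avoid exponential blowup from the tensoring in the combining step, after each combine we project the resulting vectors onto the linear span of $\{\V b''_{i,a}\}\cup\{\V b''_0\}$, whose dimension is at most $n|\Sigma|+1$; this preserves all inner products and hence SDP feasibility.

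For correctness, fix any satisfying assignment $\alpha$ of $\inst$ and any $C\in \supp(\calC)$, and set $\tau=\alpha|_{\calV(C)}$. By the hypothesis of the lemma, the integral vector assignment induced by $\alpha$ is a valid integral solution of SDP$(\inst)$; it has $\mu_C(\tau)=1$ and achieves objective value $1$, so it is feasible for SDP$(\inst,C,\tau)$ with value $1$. Hence our procedure extracts some $(\V V_{C,\tau},\V \mu_{C,\tau})$ containing $\tau\in \supp(\mu_{C,\tau,C})$, and by monotonicity of Claim~\ref{claim:combineSDP} the final $(\V V,\V \mu)$ satisfies $\mu_C(\tau)>0$, as required.

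I expect the argument to be essentially routine given Claim~\ref{claim:combineSDP}. The only conceptually interesting point is that we do not need to enumerate satisfying assignments (which would be intractable); rather, we enumerate the polynomially many candidate local tuples $(C,\tau)$ and let the SDP relaxation itself certify which ones are ``locally extendable to value $1$''. The SDP relaxation may accept some $\tau$ that do not come from any global satisfying assignment, but this only adds extra elements to $\supp(\mu_C)$, which is harmless since the lemma only lower-bounds $\supp(\mu_C)$. The main bookkeeping issue — controlling the vector dimension under iterated combines — is easily handled by the projection described above.
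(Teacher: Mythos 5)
Your proof is correct and takes essentially the same approach as the paper: enumerate all local tuples $(C,\tau)$, constrain the SDP to force $\mu_C(\tau)=1$, keep those with SDP value $1$, combine via Claim~\ref{claim:combineSDP}, and control the dimension blowup by an inner-product-preserving linear map. The only cosmetic differences are the redundant constraints $\mu_C(\tau')=0$ (already implied by $\mu_C$ being a probability distribution) and the phrasing of the dimension reduction.
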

\begin{proof}
To prove this lemma, we will combine different SDP solutions using Claim~\ref{claim:combineSDP}. For every constraint $C$ and every $d = (a_1, a_2, a_3)\in \Sigma^3$ such that $C(d) = 1$, we augment the relaxation SDP($\inst$) with a constraint $\mu_C(d) = 1$; call this new relaxation SDP$_{C,d}$($\inst$). Let $(\V V_{C, d}, \V \mu_{C, d})$ be any arbitrary solution to SDP$_{C,d}$($\inst$). We then combine all the solutions $(\V V_{C, d}, \V \mu_{C, d})$ with value $1$ iteratively using Claim~\ref{claim:combineSDP} to get a solution $(\V V, \V \mu)$. Note that each application of Claim~\ref{claim:combineSDP} blows up the number of coordinates of the vectors by a multiplicative factor of $2$. However, since the SDP solution consists of $qn+1$ vectors, once the number of coordinates goes beyond $qn+1$, we can find a linear transformation that maps every vector to a vector with only the first $qn+1$ coordinates nonzero, while preserving all the inner products. We keep doing this dimension reduction after each application of Claim~\ref{claim:combineSDP}, and hence the overall process takes polynomial time. We now show that the final solution satisfies the required guarantee. To see this, fix any satisfiable assignment $\V \alpha$ to the instance $\inst$ and fix any $C\in \supp(\calC)$. As the solution $(\V V_{C, \alpha|_{\calV(C)}}, \V \mu_{C, \alpha|_{\calV(C)}})$ has $\mu_{C, \alpha|_{\calV(C)}}(\alpha|_{\calV(C)}) = 1$, we have $\mu_C(\alpha|_{\calV(C)}) >0$ using Claim~\ref{claim:combineSDP}. Furthermore, Claim~\ref{claim:combineSDP} guarantees that the value of the solution $(\V V, \V \mu)$ is $1$, proving this lemma.
\end{proof}

\begin{remark}  \label{remark:SDP_approximate_solution}
As mentioned at the beginning of this section, for every $\eta>0$, the SDP can be solved up to an additive accuracy of $\eta$ in time ${\sf poly}(n, \log(1/\eta))$ (see, for instance,~\cite{GrotschelLS12}). However,  Claim~\ref{claim:pqr_eq1_modifyGE} and Claim~\ref{claim:pqr_eq1_modifySDP} rely on solving the SDP program exactly as they rely on the exact support of $\mu_C$. One way to get around this issue is as follows. The solution constructed by Lemma~\ref{lemma:includes_all_sat_assignments} has the property that each valid entry where we want $\mu_C(\alpha|_{\calV(C)})>0$ can indeed be forced to have  $\mu_C(\alpha|_{\calV(C)})\geq 2^{-n^c}$ for a fixed constant $c$ depending on the number of constraints and the number of satisfying assignments to a constraint. This is because each iteration reduces the original weight in the distribution by $\frac{1}{2}$, and there will be at most $n^c$ iterations. Thus, if we solve the SDP with accuracy $2^{-n^C}$ for $C\gg c$ in time ${\sf poly}(n)$, then we can treat every weight $\mu_C(\alpha|_{\calV(C)}) \leq 2^{-n^C}$ as zero towards applying  Claim~\ref{claim:pqr_eq1_modifyGE} and Claim~\ref{claim:pqr_eq1_modifySDP}. In this way, we will end up preserving the local views of all the satisfying assignments in the SDP solution.
\end{remark}

The following claim shows that at each iteration in Algorithm~\ref{alg:massage_sdp_ge}, the modified SDP has all the satisfying assignments as valid integral solutions.
\begin{claim}
\label{claim:preserve_sat}
   If $\V \alpha\in \Sigma^N$ is a satisfying assignment to instance $\inst$, then it remains a valid integral solution to the SDP $\mathcal{S}$ that we get after running Algorithm~\ref{alg:massage_sdp_ge}.
\end{claim}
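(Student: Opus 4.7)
The plan is to prove the claim by induction on the number of iterations of the outer loop in Algorithm~\ref{alg:massage_sdp_ge}. The inductive invariant we will maintain is the following: \emph{at every iteration, the current SDP formulation $\mathcal{S}$ admits every satisfying assignment $\V \alpha$ of $\inst$ as a valid integral solution, and the corresponding tuple of group elements induced by $\V \alpha$ belongs to $\calT$.} The base case is immediate: the initial SDP formulation is the basic relaxation SDP$(\inst)$ from Figure~\ref{fig:basicsdp1}, and Lemma~\ref{lemma:includes_all_sat_assignments} produces in polynomial time a feasible solution $(\V V, \V \mu)$ with value $1$ satisfying $\mu_C(\alpha|_{\calV(C)})>0$ for every constraint $C$ and every satisfying assignment $\V \alpha$.

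For the inductive step, fix a satisfying assignment $\V \alpha$ and suppose the invariant holds at the start of an iteration. I will argue separately for the two subroutines. For $\V \alpha$ define the candidate {\gesystem}-solution obtained by setting, for each $v\in \calV$, the variables $\{y_v^{\vec{\chi}}\}$ to the $\alpha_v^{\,th}$ row of the matrix $M_v$; by Observation~\ref{claim:justify_GESysytem} this satisfies every equation of the {\gesystem} provided $\alpha|_{\calV(C)} \in \supp(\mu_C)$ for all $C$, which holds by the inductive hypothesis. In particular, the tuple $(r^{s_1}_{\alpha_{s_1}},r^{s_2}_{\alpha_{s_2}},r^{s_3}_{\alpha_{s_3}})$ lies in $\calT|_C$ for every constraint $C$ with $\calV(C)=(s_1,s_2,s_3)$. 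Consequently, the constraint added by $\mathtt{ModifySDP}$ in Algorithm~\ref{alg:modifySDP}, which only sets $\mu_C(a,b,c)=0$ for triples with $(r^{s_1}_a,r^{s_2}_b,r^{s_3}_c)\notin \calT|_C$, never kills the local integral assignment $\alpha|_{\calV(C)}$. Hence the integral vector solution associated with $\V\alpha$ (as in Definition~\ref{def:sdp_solution_preserves_all}) remains feasible for the modified SDP.

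For $\mathtt{ModifyGESystem}$ (Algorithm~\ref{alg:modifyGEsystem}), the new equations are of the form $y_{s_1}^{\chi_1}\cdot y_{s_2}^{\chi_2}\cdot y_{s_3}^{\chi_3}=1$ for all characters $(\chi_1,\chi_2,\chi_3)$ that vanish identically on ${\sf span}({\sf supp}(\mu_C))$. Because $\alpha|_{\calV(C)}\in\supp(\mu_C)$ by the inductive hypothesis, the tuple $(\sigma_{s_1}(\alpha_{s_1}),\sigma_{s_2}(\alpha_{s_2}),\sigma_{s_3}(\alpha_{s_3}))=(r^{s_1}_{\alpha_{s_1}},r^{s_2}_{\alpha_{s_2}},r^{s_3}_{\alpha_{s_3}})$ lies in ${\sf span}(\supp(\mu_C))$, so every such character evaluates to $1$ on it. Therefore the $\V\alpha$-induced assignment to $\{y_v^{\vec\chi}\}$ continues to satisfy every newly added equation, and the invariant is preserved after modifying the {\gesystem}.

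Finally, at each iteration step $3$ of Algorithm~\ref{alg:massage_sdp_ge} re-invokes Lemma~\ref{lemma:includes_all_sat_assignments} on the \emph{current} SDP formulation to obtain an SDP solution that again supports all valid integral solutions. By the previous paragraphs every satisfying assignment of $\inst$ is still a valid integral solution of the current SDP, so Lemma~\ref{lemma:includes_all_sat_assignments} applies and the inductive invariant propagates. The main (mildly) delicate point is to ensure this re-invocation is legitimate, i.e.\ that $\mathtt{ModifySDP}$ does not exclude a feasible integral assignment between two applications of Lemma~\ref{lemma:includes_all_sat_assignments}; but the argument above shows exactly that, and the routine termination of the algorithm (which is proved separately in Section~\ref{sec:getting_consistent_solutions}) then gives the conclusion. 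I anticipate that the only subtlety is bookkeeping: one must be careful that the polynomially small slack in solving the SDP (Remark~\ref{remark:SDP_approximate_solution}) does not cause a genuine satisfying assignment to be declared to have zero weight during $\mathtt{ModifySDP}$, which is handled by running Lemma~\ref{lemma:includes_all_sat_assignments} with sufficiently small accuracy as described therein.
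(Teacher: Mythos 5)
Your proof is correct and follows essentially the same strategy as the paper's: maintain the two mutually supporting invariants (that every satisfying assignment $\V\alpha$ remains a valid integral SDP solution, and that the induced tuple of group elements remains in $\calT$), and verify that $\mathtt{ModifySDP}$ preserves the former because it only zeroes out tuples outside $\calT|_C$, while $\mathtt{ModifyGESystem}$ preserves the latter because the newly added characters vanish on ${\sf span}(\supp(\mu_C))$ which contains the image of $\alpha|_{\calV(C)}$. One small imprecision: in your $\mathtt{ModifyGESystem}$ paragraph you invoke $\alpha|_{\calV(C)}\in\supp(\mu_C)$ ``by the inductive hypothesis,'' but your stated invariant only gives feasibility of the integral vector solution, not positive mass of $\alpha|_{\calV(C)}$ in the \emph{current} $\V\mu$; the latter comes from the re-invocation of Lemma~\ref{lemma:includes_all_sat_assignments} at step~3 of the outer loop, which you correctly note in your final paragraph but should cite at the point of use rather than after the fact. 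With that bookkeeping fix the argument matches the paper.
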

\begin{proof}
We will show that the procedure maintains the following two invariants: 1) every satisfying assignment from $\V \alpha\in [q]^N$ is also a valid satisfying assignment (after interpreting $\alpha_s$ with $r^s_{\alpha_s}$) to the {\gesystem} after the modification, and 2) every satisfying assignment from $\V \alpha\in [q]^N$ is a valid integral solution to SDP. These invariants are clearly satisfied at the beginning.

When we run $\mathtt{ModifySDP}$, the modified SDP has all the satisfying assignments to the instance $\inst$ as valid integral solutions. This follows as $\calT$ has the embedding of every satisfying assignment to start with, and we only excluded assignments from the local distributions that are not implied by the {\gesystem}.

Next, as the solution from step $3$ is guaranteed to keep $\alpha|_{\calV(C)}$ for every satisfying assignment $\V \alpha$ and $C$, every satisfying assignment $\V \alpha \in [q]^N$ (after interpreting $\alpha_s$ with $r^s_{\alpha_s}$) is also a valid satisfying assignment to the {\gesystem} after the modification. To see this, the elements $(r^{s_1}_{\alpha_{s_1}}, r^{s_2}_{\alpha_{s_2}}, r^{s_3}_{\alpha_{s_3}})$ satisfy all the additional equations of the from $y_{s_1}^{\chi_1}\cdot y_{s_2}^{\chi_2}\cdot y_{s_3}^{\chi_3} = 1$ added to the {\gesystem} related to the constraint $C$ for every $C$ with $\calV(C) = (s_1, s_2, s_3)$. Hence, $(r^s_{\alpha_s})_{s\in \calV}$ is in the support of $\calT$ even after modification of the {\gesystem}.
\end{proof}

\subsubsection{Getting Consistent Solutions to the SDP and the {\gesystem}}
\label{sec:getting_consistent_solutions}

We now have the following important lemma that shows that applying Algorithm~\ref{alg:massage_sdp_ge} on a CSP instance where every predicate is {\symm} returns an SDP solution and a system of linear equations that are consistent with respect to each other.

\begin{lemma}
\label{lemma:sdp_noZ_local_globalembed}
    Applying Algorithm~\ref{alg:massage_sdp_ge} on a satisfiable CSP instance $\inst(\calV, \calC)$ where every predicate $C\in \calC$ is {\symm} will return an SDP solution $(\V V^\star, \V \mu^\star)$ and a system of linear equations {\gesystem}$^\star$ that are consistent with each other.
    Furthermore, for every constraint $C\in \supp(\calC)$, the $\supp(\mu^\star_C)$ is pairwise connected and does not have a $(\mathbb{Z},+)$-embedding.
\end{lemma}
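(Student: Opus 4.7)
The plan is to verify three things: that Algorithm~\ref{alg:massage_sdp_ge} terminates in polynomial time, that neither abort line ever triggers on a satisfiable {\symm} instance, and that whenever the algorithm exits via line~17 the returned pair is consistent. Throughout, I would lean on the invariant established by Claim~\ref{claim:preserve_sat}: every satisfying assignment of $\inst$ remains a valid integral solution to the current SDP and yields a valid solution to the current {\gesystem} at every iteration. Without this invariant, none of the subsequent steps would go through.

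For termination, I would argue that every execution of the outer loop adds at least one new permanent object: \texttt{ModifySDP} adjoins at least one constraint $\mu_C(a,b,c)=0$ to the SDP, while \texttt{ModifyGESystem} adjoins at least one new linear equation to the {\gesystem}. Both pools of possible additions are bounded by $|\calC|\cdot O_m(1)$, since there are only $|\Sigma|^3=O_m(1)$ possible exclusions per constraint and only $O_m(1)$ character triples per constraint. Each iteration solves an SDP (polynomial via Lemma~\ref{lemma:includes_all_sat_assignments}, with Remark~\ref{remark:SDP_approximate_solution} handling the accuracy technicality) and a linear system over a finite Abelian group (polynomial via Gaussian elimination). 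Moreover, each re-solve in line~3 is feasible with value $1$: the SDP constraints ever added by \texttt{ModifySDP} only kill tuples whose embedding falls outside $\calT|_C$, and by Observation~\ref{claim:justify_GESysytem} every satisfying assignment's embedding lies in $\calT$, so Lemma~\ref{lemma:includes_all_sat_assignments} continues to produce the required value-$1$ solution preserving all integral ones.

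For the abort conditions, line~5 never fires because the instance is satisfiable. For line~7, I would fix any satisfying assignment $\V\alpha$ of $\inst$ and any constraint $C$ with $\calV(C)=(s_1,s_2,s_3)$. By Claim~\ref{claim:preserve_sat}, $\alpha|_{\calV(C)}\in\supp(\mu_C)$. Condition~(1) of Definition~\ref{def:symm_p} says that applying each action $\tau_i$ coordinate-wise to $\V\alpha$ produces another satisfying assignment of $\inst$, so the full orbit $\set{(\tau_i(\alpha_{s_1}),\tau_i(\alpha_{s_2}),\tau_i(\alpha_{s_3}))\mid i\in[\ell]}$ also lies inside $\supp(\mu_C)$. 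Condition~(2) says this orbit admits no non-trivial $(\mathbb{Z},+)$-embedding; since any non-trivial $(\mathbb{Z},+)$-embedding of a superset restricts (using the same maps) to one of a subset, $\supp(\mu_C)$ itself admits no non-trivial $(\mathbb{Z},+)$-embedding. Pairwise connectedness is then automatic by contrapositive: were the bipartite graph $G_{x,y}$ of $\supp(\mu_C)$ disconnected with a proper component $(S_x,S_y)$, the maps $\sigma_1=\mathbf{1}_{S_x}$, $\sigma_2=-\mathbf{1}_{S_y}$, $\sigma_3\equiv 0$ would form a non-trivial $(\mathbb{Z},+)$-embedding --- for every $(x,y,z)\in\supp(\mu_C)$, the edge $(x,y)$ either lives in $(S_x,S_y)$, giving sum $1-1+0$, or in a different component, forcing $x\notin S_x$ and $y\notin S_y$ and again giving sum $0$ --- contradicting what was just proved. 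Together these arguments dispose of line~7 and simultaneously furnish the ``furthermore'' clause of the lemma.

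Finally, at termination (line~17), both \texttt{ModifySDP} and \texttt{ModifyGESystem} left everything unchanged in the last iteration. By the fixed-point versions of Claim~\ref{claim:pqr_eq1_modifySDP} and Claim~\ref{claim:pqr_eq1_modifyGE}, these outcomes give ${\sf span}(\supp(\mu^\star_C))\subseteq\calT^\star|_C$ and $\calT^\star|_C\subseteq{\sf span}(\supp(\mu^\star_C))$ respectively for every $C$, hence ${\sf span}(\supp(\mu^\star_C))=\calT^\star|_C$, which is exactly Definition~\ref{def:tight_system}. The main technical subtlety across the entire argument is tracking the preservation-of-satisfying-assignments invariant as the SDP, the {\gesystem}, the matrices $M_v$, and the embeddings $r^v_a$ all evolve together in tandem --- this is precisely what Claim~\ref{claim:preserve_sat} delivers, and it is the conceptual linchpin without which the orbit-based argument excluding $(\mathbb{Z},+)$-embeddings would collapse.
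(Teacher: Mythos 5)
Your proof takes essentially the same route as the paper's: termination via Lemma~\ref{lemma:alg_runtime}; ruling out the abort steps via Claim~\ref{claim:preserve_sat} (together with Lemma~\ref{lemma:includes_all_sat_assignments}) plus the orbit argument drawing on both conditions of Definition~\ref{def:symm_p}; and consistency at termination via the fixed-point behavior of $\mathtt{ModifySDP}$ and $\mathtt{ModifyGESystem}$ as captured by Claims~\ref{claim:pqr_eq1_modifySDP} and~\ref{claim:pqr_eq1_modifyGE}. The one thing you do that the paper does not is spell out \emph{why} the absence of a $(\mathbb{Z},+)$-embedding forces pairwise connectedness --- the paper simply asserts that $Z_{\V\alpha,C}$ (and hence $\supp(\mu^\star_C)$) is pairwise connected and has no $(\mathbb{Z},+)$-embedding as a single consequence of Property~2 of Definition~\ref{def:symm_p} --- and your indicator-function embedding of a disconnected component is a correct and useful justification of that implicit step.
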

\begin{proof}
    We will rule out that the algorithm will never reach the Abort steps (steps $5$ and $8$). Lemma~\ref{lemma:alg_runtime} below shows that the algorithm terminates in polynomial time.  Using Claim~\ref{claim:pqr_eq1_modifyGE} and Claim~\ref{claim:pqr_eq1_modifySDP}, eventually, the algorithm will return an SDP solution and a {\gesystem} that are consistent with each other.

    Using Claim~\ref{claim:preserve_sat}, the algorithm will never reach step $5$. Next, we rule out that the algorithm will never reach step $8$. Fix a satisfying assignment $\V \alpha \in \Sigma^N$ to the instance. Consider the maps $\{ \tau_i:\Sigma\rightarrow \Sigma\mid 1\leq i\leq \ell\}$ from the Definition~\ref{def:symm_p} of the predicates being {\symm}. Using Property 1. from the Definition~\ref{def:symm_p}, we have that for all $1\leq i\leq \ell$, $\V \alpha^{(i)} \in \Sigma^N$, where $\alpha^{(i)}_j = \tau_i(\alpha_j)$ for $1\leq j\leq N$, is also a satisfying assignment to the instance.  Using this and Lemma~\ref{lemma:includes_all_sat_assignments}, for every $C\in \supp(\calC)$, the local distribution $\mu^\star_C$ in the SDP solution has in its support the set $Z_{\V \alpha, C}:=\{(\tau_i(\sigma_1), \tau_i(\sigma_2), \tau_i(\sigma_3)) \mid i\in [\ell]\}\subseteq \Sigma^3$ for $\sigma = \V \alpha|_C$. Now using Property 2. from the Definition~\ref{def:symm_p} of the predicates being {\symm}, the set $Z_{\V \alpha, C}$ (and hence $\supp(\mu^\star_C)$) is pairwise connected and does not have a $(\mathbb{Z},+)$-embedding. Hence, the algorithm will never reach step $8$.
\end{proof}

Finally, we compute the running time of Algorithm~\ref{alg:massage_sdp_ge}
\begin{lemma}
\label{lemma:alg_runtime}
    Algorithm~\ref{alg:massage_sdp_ge} on a satisfiable CSP instance $\inst(\calV, \calC)$ where every predicate $C\in \calC$ is {\symm} runs in time $O_{|\Sigma|}(1)\cdot|\calV|^{O(1)}$.
\end{lemma}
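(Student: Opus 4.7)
The plan is to bound the number of iterations of the main loop in Algorithm~\ref{alg:massage_sdp_ge}, and then observe that each iteration takes polynomial time. The only non-trivial task is the former.

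At each iteration the algorithm calls exactly one of the subroutines $\mathtt{ModifySDP}$ or $\mathtt{ModifyGESystem}$. I will bound how many times each of these can be invoked over the whole execution. For $\mathtt{ModifySDP}$: whenever it is called (lines 12--13), at least one new constraint of the form $\mu_C(a,b,c)=0$ is added to SDP($\inst$), because the condition triggering the call guarantees the existence of some $(a,b,c)\in\supp(\mu_C)$ with $(r^{s_1}_a,r^{s_2}_b,r^{s_3}_c)\notin \mathcal{T}|_C$, and the inner loop of Algorithm~\ref{alg:modifySDP} adds a constraint for each such tuple. Since once a constraint $\mu_C(a,b,c)=0$ is added it remains in every subsequent SDP formulation, and there are at most $|\calC|\cdot |\Sigma|^3$ such possible constraints, $\mathtt{ModifySDP}$ is called at most $|\calC|\cdot |\Sigma|^3$ times in total.

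For $\mathtt{ModifyGESystem}$: whenever it is called (line 14), some $C$ satisfies ${\sf span}(\supp(\mu_C))\subsetneq \mathcal{T}|_C$, and Algorithm~\ref{alg:modifyGEsystem} adds equations that (by Claim~\ref{claim:pqr_eq1_modifyGE}) force $\mathcal{T}|_C\subseteq {\sf span}(\supp(\mu_C))$, thus strictly shrinking $\mathcal{T}|_C$ for at least one $C$. Each $\mathcal{T}|_C$ is a subgroup of $G^{s_1}_{\sf master}\times G^{s_2}_{\sf master}\times G^{s_3}_{\sf master}$, which has size $O_{|\Sigma|}(1)$; consequently, any strictly decreasing chain of subgroups of this product has length $O_{|\Sigma|}(1)$. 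Moreover, $\mathcal{T}|_C$ can only shrink (never grow) over the execution, because (i) $\mathtt{ModifyGESystem}$ only adds equations and (ii) the modifications done by $\mathtt{ModifySDP}$, together with the subsequent recomputation of the SDP solution in step 3 (which, by Lemma~\ref{lemma:includes_all_sat_assignments} and Claim~\ref{claim:preserve_sat}, still preserves every integral solution), only shrinks each $\supp(\mu_C)$ and hence can only shrink each ${\sf span}(\supp(\mu_C))$; when combined with the $\mathtt{ModifyGESystem}$ steps the solution space $\mathcal{T}|_C$ is monotonically non-increasing across iterations. Therefore $\mathtt{ModifyGESystem}$ is called at most $O_{|\Sigma|}(|\calC|)$ times.

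Combining the two bounds, the main loop executes at most $O_{|\Sigma|}(|\calC|\cdot |\Sigma|^3)=O_{|\Sigma|}(|\calV|^{O(1)})$ times. Each iteration performs the following tasks, all in polynomial time: solving the SDP (by a standard interior point method, with the small caveat noted in Remark~\ref{remark:SDP_approximate_solution}), applying Lemma~\ref{lemma:includes_all_sat_assignments} to obtain an SDP solution preserving all integral solutions (which itself requires $|\calC|\cdot |\Sigma|^3$ invocations of the SDP solver), constructing the matrices $M_v$ via Algorithm~\ref{alg:generating_matrix_variables} (of size $O_{|\Sigma|}(1)\cdot|\calV|$), setting up and solving the GE system over the circle group (a system of linear equations of polynomial size), and scanning all $(C,(a,b,c))$ pairs to decide which subroutine to call. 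Multiplying the per-iteration polynomial cost by the polynomial number of iterations yields the claimed $O_{|\Sigma|}(1)\cdot |\calV|^{O(1)}$ bound. The main subtlety, and the one step I would double-check most carefully, is the monotonicity of $\mathcal{T}|_C$ across iterations in the presence of the re-solve at step 3, since the SDP solution itself is not uniquely determined; however the invariant that integral solutions are preserved (Claim~\ref{claim:preserve_sat}) guarantees that only the already-forbidden tuples are removed, which is what makes the counting argument go through.
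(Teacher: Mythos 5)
Your proof follows the same high-level approach as the paper — bound the number of loop iterations, then observe each iteration is polynomial — and the bound on the number of $\mathtt{ModifySDP}$ calls is exactly the paper's argument (each call strictly shrinks some $\supp(\mu_C)$, of which there are at most $|\calC|\cdot|\Sigma|^3$). Where you go beyond the paper is in explicitly bounding the number of $\mathtt{ModifyGESystem}$ calls: the paper's own proof only argues that each such call takes polynomial time, and never actually states a bound on the call count. Your chain-of-subgroups argument fills that gap and is the right idea.

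One point to tighten, which you already flag as the delicate step. You assert that $\mathcal{T}|_C$ is monotonically non-increasing over the entire execution, treating it as a subgroup of a fixed group $G^{s_1}_{\sf master}\times G^{s_2}_{\sf master}\times G^{s_3}_{\sf master}$. But each time the loop returns to step~3 after a $\mathtt{ModifySDP}$ call, the SDP solution — and hence each $\supp(\mu_C)$ — is recomputed, and step~10 regenerates the {\gesystem} by re-running Algorithm~\ref{alg:generating_matrix_variables}. Since the columns of $M_v$ and the group $G^v_{\sf master}$ are built from the Abelian embeddings of the current supports, shrinking $\supp(\mu_C)$ can create \emph{more} embeddings and thus enlarge the ambient product group. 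So $\mathcal{T}|_C$ does not live in a fixed group across the whole run, and a single global subgroup chain is not quite the right object. The fix is easy and does not change your conclusion: between any two consecutive $\mathtt{ModifySDP}$ calls the ambient group is frozen, so within such a phase your chain argument gives at most $O_{|\Sigma|}(|\calC|)$ $\mathtt{ModifyGESystem}$ calls; multiplying by the $O_{|\Sigma|}(|\calC|)$ bound on the number of phases still yields $|\calV|^{O(1)}\cdot O_{|\Sigma|}(1)$ iterations. You should make this per-phase structure explicit rather than claiming global monotonicity of $\mathcal{T}|_C$.
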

\begin{proof}
    First, the procedure $\mathtt{ModifySDP}$ runs in time  $O_{|\Sigma|}(1)\cdot|\calV|^{O(1)}$. As every modification to the SDP program strictly reduces the support of one of the local distributions, there can be at most $poly(|\calV|, |\Sigma|)$ calls to the $\mathtt{ModifySDP}$ procedure through the execution of Algorithm~\ref{alg:massage_sdp_ge}.

    In the procedure $\mathtt{ModifyGESystem}$, for a constraint $C$ such that ${\sf span}({\sf supp}(\mu_C)) \subsetneq \calT|_{C}$, then the {\gesystem}, as modified in Algorithm~\ref{alg:modifyGEsystem}, satisfies $ {\sf span}({\sf supp}(\mu_C))\supseteq \calT|_{C}$. The number of embedding functions is $O_{|\Sigma|}(1)$, and therefore, every call to $\mathtt{ModifyGESystem}$ takes at most $O_{|\Sigma|}(1)\cdot|\calV|^{O(1)}$ time.

    Therefore, the overall running time of Algorithm~\ref{alg:massage_sdp_ge} is $O_{|\Sigma|}(1)\cdot|\calV|^{O(1)}$
\end{proof}

\newcommand{\sme}{{\sigma^\star}}

\subsection{The Hybrid Approximation Algorithm}
\label{sec:hyb_approx}

 In this section, we give our hybrid algorithm for satisfiable instances of Max-$\mathcal{P}$-CSP where $\mathcal{P}$ is a collection of {\symm} predicates. \\

 \noindent {\bf Hybrid Algorithm ($\alg$):} the algorithm is given below.

\begin{algorithm}[t]
\caption{Hybrid Approximation Algorithm for {\symm} Predicates}
\label{alg:hybrid_algorithm}

\KwIn{An instance $\inst(\calV,\calC)$ of $\mathrm{Max}\text{-}\mathcal{P}\text{-}\mathrm{CSP}$ where $\mathcal{P}$ is a collection of $3$-ary {\symm} predicates.}
\KwOut{Accept or Reject.}

Perform the operations described in Algorithm~\ref{alg:massage_sdp_ge}.\;

\eIf{the procedure outputs an SDP solution $(\mathbf{V}^\star,\boldsymbol{\mu}^\star)$ and a system of linear equations ${\gesystem}^\star$ that are consistent with each other}{
    Accept.\;
}{
    Reject.\;
}

\end{algorithm}

It can be checked easily that Lemma~\ref{lemma:alg_runtime} implies the algorithm runs in polynomial time. We define the following quantity $ \alpha^\alg_{\mathcal{P}}$, and by definition, the approximation guarantee of the hybrid algorithm is $ \alpha^\alg_{\mathcal{P}}$.

\begin{align*}
    \alpha^\alg_{\mathcal{P}} = {\sf inf}_\beta \left\{ \begin{array}{l}
         \mbox{$\exists$ an instance $\inst$ of Max-$\mathcal{P}$-CSP such that}\\
         \mbox{1. OPT($\inst$) $\leq \beta$},\\
         \mbox{2. SDP value $= 1$},\\
         \mbox{3. The hybrid algorithm accepts}.
    \end{array}\right.
\end{align*}

\begin{theorem}
\label{thm:approx_alg}
    For any collection of {\symm} predicates $\mathcal{P}$, the algorithm $\alg$ for Max-$\mathcal{P}$-CSP distinguishes between the following two cases.
    \begin{enumerate}
        \item The input instance is satisfiable.
        \item The input instance has value at most $\alpha^\alg_{\mathcal{P}}$.
    \end{enumerate}
\end{theorem}
\begin{proof}
    If the instance is satisfiable, then by Lemma~\ref{lemma:sdp_noZ_local_globalembed}, $\alg$ accepts. If the instance has value strictly less than $\alpha_\alg$, then by definition, $\alg$ rejects such instances.
\end{proof}

 	In Section~\ref{section:dict_test_actual}, we study a dictatorship test for Max-$\mathcal{P}$-CSP whose soundness matches with the approximation guarantee of the hybrid algorithm. In other words, we design a test where the test accepts according to the predicates $\mathcal{P}$, has perfect completeness, and has soundness $\alpha^\alg_{\mathcal{P}}+\eps$, for every constant $\eps>0$.

 \subsection{The Dictatorship Test}
	\label{section:dict_test_actual}
	
\newcommand{\amap}{\varsigma}
\newcommand{\constraint}{C}

 Fix an $\inst = (\calV, \calC)$ of a Max-$\mathcal{P}$-CSP such that the algorithm $\alg$ accepts $\inst$. Let $\Sigma$ be the alphabet of the CSP, and identify it with the set $[q] = \{1, 2, \ldots, q\}$ where $q=|\Sigma|$. Let $(\V V, \V \mu)$ be a solution for the final SDP relaxation of $\inst$.

 Given a function $F: \Sigma^R\rightarrow \Sigma$, in our dictatorship test $\dict_{\V V, \V \mu}$, we will sample three queries according to the distribution $\mu_\constraint^{R}$ for $\constraint\sim \calC$. For the test $\dict_{\V V, \V \mu}$, there is no single natural choice of probability measure $\mu^n$ on $\Sigma^n$ using which we can define the function $F$ to be `far-from-dictator' required (as in Definition~\ref{def:low_influence_mixed_invariance}) for the application of our mixed invariance principle.  Therefore, we need to define quasirandomness of a function appropriately, and we do so in the next section.

\subsubsection{The Notion of Quasirandom Functions}

For each variable $s\in \calV$ in the original Max-$\mathcal{P}$-CSP instance $\inst = (\calV, \calC)$, there is a corresponding probability space $\Omega_s = (\Sigma, \mu_s)$. Therefore, we will define the relative notion of ``quasirandom with respect to $(\V V , \V \mu)$''. Roughly speaking, we shall call a function ``quasirandom'' if all its ``influences'', after removing the high-degree contributions, are low under all of the probability distributions corresponding to variables $s_i \in\calV$. We make this formal in the following definition.

\begin{definition}\label{def:quasirandom_function} (Quasirandom function w.r.t. $(\V V, \V \mu)$)
		Fix a function $F: \Sigma^n \rightarrow \Sigma$. Let $\alpha$ be the minimum probability of the atoms in the SDP solution. Consider the functions $(F_{s,1}, F_{s,2}, \ldots, F_{s,q})$, where for every $j\in[q]$, $F_{s,j}: ([q]^n, \mu_s^{\otimes n}) \rightarrow \mathbb{R}$ is defined as $F_{s,j}(\V x):=\mathbf{1}_{F(\V x) = j}$. Fix the group $G^s_{\sf master}$ for each $s\in \calV$ from the hybrid algorithm. Suppose that for each $s$ we have a collection of product functions $\calP_{s}$ over $G^s_{\sf master}$ such that there are functions $F'_{s,j}$ of the form
		$F'_{s,j}(x) = \sum\limits_{P\in\spn(\mathcal{P}_{s})} P(x) \cdot L_{P,s,j}(x)$
		satisfying:
		\begin{enumerate}
			\item $\norm{F_{s,j} - F'_{s,j}}_{\mathcal{D}_{s}, \alpha}\leq \eps$, where $\mathcal{D}_{s}=\{\mu_C\mid s\in \calV(C)\}$.
			\item $\card{\mathcal{P}_s}$, ${\sf deg}(L_{P,s,j})$ and $\norm{L_{P,s,j}}_2$ are all at most $d = O_{\alpha, q, \eps}(1)$.
            \item For every $C\in {\sf supp}(\calC)$ with $\calV(C) = (s_1, s_2, s_3)$, considering the collection of product functions $\mathcal{D_C} = \sett{D_P,D_Q,D_R}{P\in \mathcal{P}_{s_1},Q\in\mathcal{P}_{s_2},R\in\mathcal{P}_{s_3}}$ over $\supp(\mu_C)$ where $D_P(x,y,z) = P(x)$,
            $D_Q(x,y,z) = Q(y)$
            and $D_{R}(x,y,z) = R(z)$, we have that ${\sf rk}(\mathcal{D}_C)\geq T$.
   \end{enumerate}

		Then, $F$ is called $(d, \tau)$-quasirandom w.r.t. $(\V V, \V \mu)$ if for every $\eps>0$, there is arbitrary large enough $T\gg \eps^{-1}, d, q, \alpha^{-1}$ such that every $s\in \calV$, $j\in [q]$ the above properties hold and furthermore for every $P\in \spn(\calP_{s})$ and $j\in [q]$, $\max_{i\in [n]}I_i[L_{P,s,j}]\leq \tau$.
	\end{definition}

	\begin{remark}\label{remark:quasirandom_function}
	The standard notion of $(d, \tau)$-quasirandomness that is often used in hardness of approximation is that $I_i[F_{s,j}^{\leq d}]\leq \tau$ for all $j\in [q]$ and $i\in [n]$, where $F_{s,j}^{\leq d}$ is the truncation of $F_{s,j}$ up to degree $d$. In particular, the high-degree component of $F_{s,j}$ is often ignored since it can often be shown not to contribute much. In our case, since we care about perfect completeness, the high degree part of $F_{s,j}$ cannot be simply ignored. Thus, the above notion of quasirandomness can be seen as a natural analog of the usual one, where we enforce small influence conditions on all the low-degree components $L_{P,s,j}$ in $F_{s,j}$s. At this point, we do not know how to use this notion of quasirandomness to convert a dictatorship test to a (conditional) hardness result, and we leave this as an open challenge for future research.
	\end{remark}
	
	One may wonder if it is even possible to find the functions $F'_{s,j}$ such that all the above (except for the low influence property) properties hold for small enough $\eps$ and large $T$. Lemma~\ref{lem:regularity_fancy_mult} and Lemma~\ref{lem:regularity_fancy_highrank} already show the existence of product functions $\calP_{s}$ for individual $F_{s,j}$ satisfying the above conditions in some setting. The following lemma generalizes that argument to our current setting.
	\begin{lemma}\label{lem:collection_highrank_triples}
		For all $\alpha>0$, $m, M\in\mathbb{N}$ and $\xi>0$ there exist $r\in\mathbb{N}$ and $\eps_0>0$ such that
		the following holds for any decay function $w\colon [0,1]\to[0,1]$.
		Let $\Sigma$ be an alphabet of size at most $m$, and let $\mathcal{M}$ be a family of subsets of $\{1, 2, \ldots, M\}$ of size $3$. Let $\{\mu_{(s_1, s_2, s_3)}~\mid~ (s_1, s_2, s_3)\in \mathcal{M}\}$ be distributions over $\Sigma^3$ in which the probability of
		each atom is at least $\alpha$ and the distribution $\mu_{(s_1, s_2, s_3)}$ is pairwise connected and has no $\mathbb{Z}$ embedding for all $(s_1, s_2, s_3)\in \mathcal{M}$. Suppose the marginals of $\mu_{(s_1, s_2, s_3)}$ are $\mu_{s_1}$, $\mu_{s_2}$, $\mu_{s_3}$ on the three coordinates, respectively. Then there are
        Abelian groups $G^s$ whose sizes depend only on $|\Sigma|$, such that for any $1$-bounded functions $F_{s,j}\colon (\Sigma^n, \mu_s^n)\to \mathbb{C}$, for $1\leq s\leq M$ and $1\leq j\leq |\Sigma|$,
		there is $\eps\geq \eps_0$ and a collection $\mathcal{P}_s$ of product functions of size at most $r$ over the group $G^s$ such that:
		\begin{enumerate}
			\item For all $\eps'\in (w(\eps),\eps)$ we have that
			$\norm{F_{s,j} - \mathrm{T}_{\calP_s, 1-\eps'} F_{s,j}}_{\mathcal{D}_{s},\alpha}\leq\xi$, where  $\mathcal{D}_s = \{\mu_{(s_1, s_2, s_3)}~|~s_1=s, s_2=s\text{ or }s_3=s\}$.

            \item For every $(s_1, s_2, s_3)\in \mathcal{M}$, considering the collection of product functions
            $$\mathcal{D}_{(s_1, s_2, s_3)} = \sett{D_P,D_Q,D_R}{P\in \mathcal{P}_{s_1},Q\in\mathcal{P}_{s_2},R\in\mathcal{P}_{s_3}}$$ over $\supp(\mu_{(s_1, s_2, s_3)})$ where $D_P(x,y,z) = P(x)$,
            $D_Q(x,y,z) = Q(y)$
            and $D_{R}(x,y,z) = R(z)$, we have that ${\sf rk}(\mathcal{D}_{(s_1, s_2, s_3)})\geq  \frac{1}{w(\eps)}$.
		\end{enumerate}
        Furthermore, if the collection of distributions  $\{\mu_{(s_1, s_2, s_3)}~\mid~ (s_1, s_2, s_3)\in \mathcal{M}\}$ are coming from the valid SDP solution from Lemma~\ref{lemma:sdp_noZ_local_globalembed}, then the groups $G^s$ can be taken to be the groups $G^s_{\sf master}$ from the hybrid algorithm.
	\end{lemma}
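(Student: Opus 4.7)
The plan is to combine the multi-variate, multi-function regularity lemma (the $L,q$-version of Lemma~\ref{lem:regularity_fancy} already stated in the excerpt) with the high-rank cleanup used in Lemma~\ref{lem:regularity_fancy_highrank}, suitably adapted so that the rank condition holds simultaneously across every triple in $\mathcal{M}$.

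First I would apply the multi-function regularity lemma to the family $\{F_{s,j}\}_{s \in [M], j \in [|\Sigma|]}$, using the semi-norm $\norm{\cdot}_{\mathcal{D}_s,\alpha}$ in place of the stronger $\norm{\cdot}_{\mu_s,\alpha}$ (as sanctioned by Remark~\ref{remark:mismatch1}), with an auxiliary decay function $w'$ to be specified later. This produces, for each $s$, a collection $\mathcal{P}_s$ of at most $r$ product functions over an Abelian group $G^s$ whose size depends only on $|\Sigma|$, together with a common noise level $\eps\geq\eps_0$ such that $\norm{F_{s,j} - \mathrm{T}_{\mathcal{P}_s,1-\eps'} F_{s,j}}_{\mathcal{D}_s,\alpha} \leq \xi$ for every $\eps'\in(w'(\eps),\eps)$. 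The group $G^s$ arises from Theorem~\ref{thm:csp4} applied to the distributions in $\mathcal{D}_s$; for the furthermore clause, it suffices to observe that the hybrid algorithm's $G^s_{\sf master}$ already contains every relevant embedding, so we may identify $G^s$ with $G^s_{\sf master}$.

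Next I would enforce the rank condition by a pigeonhole cleanup. Let $S = \bigcup_{(s_1,s_2,s_3)\in\mathcal{M}} \spn(\mathcal{D}_{(s_1,s_2,s_3)})$; then $|S| \leq |\mathcal{M}|\cdot A$ where $A = O_{m,r}(1)$ bounds the size of each individual span. Following the proof of Lemma~\ref{lem:regularity_fancy_highrank}, set $w_{i+1}(a) = \tfrac{1}{2} w_i(w(a))$ and $w_1 = \tfrac{1}{2} w$, and take $w' = w_{6|S|}$. Among the $2|S|$ disjoint intervals $I_i = [w_{3i}(\eps)^{-1}, w_{3i+3}(\eps)^{-1})$ for $i=1,\ldots,2|S|-1$, pigeonhole yields an $i$ with $\Delta_{\sf symbolic}(D,1)\notin I_i$ for every $D\in S$. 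Let $S_{\sf bad}$ collect those $D\in S$ with $\Delta_{\sf symbolic}(D,1)\leq w_{3i}(\eps)^{-1}$, and let $J\subseteq[n]$ consist of the coordinates at which some $D\in S_{\sf bad}$ acts non-trivially, so $|J|\leq |S|\cdot w_{3i}(\eps)^{-1}$. Replacing each $\mathcal{P}_s$ by $\mathcal{P}_s' = \{P|_{\bar J} : P\in \mathcal{P}_s\}$ then gives, for every triple simultaneously, rank at least $\tfrac{1}{2} w_{3i+3}(\eps)^{-1} \geq 1/w(w_{3i+2}(\eps))$, settling the second item.

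Finally I would verify that this restriction preserves the norm bound for $\eps'\in(w_{3i+2}(\eps), w_{3i+1}(\eps))$, via the same telescoping argument as in Claim~\ref{claim:remains_small_norm}: swapping $P\in\mathcal{P}_s$ for $P|_{\bar J}$ inside the noise operator costs at most $2\sqrt{|J|\eps'}$ in $L_2$ per swap by Fact~\ref{fact:noise_op_basic_prop2}, and $|J|\eps'$ is negligible by our choice of $w'$; summing over the at most $r$ members of $\mathcal{P}_s$ and applying the triangle inequality, followed by the triangle inequality against the original bound, yields $\norm{F_{s,j} - \mathrm{T}_{\mathcal{P}_s',1-\eps'} F_{s,j}}_{\mathcal{D}_s,\alpha} \leq 2\xi$, which is the desired conclusion after absorbing the factor of $2$ into $\xi$. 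The main obstacle is that the pigeonhole needs a common gap across all of $S$, whose size scales with $|\mathcal{M}|$ and hence with $M$; this forces the depth of iteration of $w$ in the definition of $w'$ to depend on $M$, which in turn makes $\eps_0$ and $r$ dependent on $M$. Since $M$ is an allowed parameter of the lemma, this is acceptable, but it requires choosing $w'$ in advance (knowing $M$, $m$, $r$, $\xi$) before invoking the multi-function regularity lemma -- a dependency order that matches the quantifier arrangement in the statement.
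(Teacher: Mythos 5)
Your proposal matches the paper's proof essentially step-for-step: apply the multi-function version of Lemma~\ref{lem:regularity_fancy} with a decay function $w'$ tuned in advance, then run the same pigeonhole cleanup with the iterated decay functions $w_i$ across the union of all $\spn(\mathcal{D}_{(s_1,s_2,s_3)})$, restrict each $\calP_s$ to $\bar J$, and conclude via the telescoping argument of Claim~\ref{claim:remains_small_norm} (the paper uses $A = \sum |\spn(\mathcal{D}_{(s_1,s_2,s_3)})|$ where you use $|S|$ from the union, but these differ only by constants). Your remark on the quantifier dependency of $w'$ on $M$ and $r$ is accurate and matches what the paper handles implicitly.
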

	\begin{proof}
		Fix a decay function $w': [0,1] \rightarrow [0,1]$ depending on $w$ (to be determined). We first apply Lemma~\ref{lem:regularity_fancy_mult} to get a collection of product functions $\calP_s$ such that for all $s,j$ and $\eps'\in (w'(\eps), \eps)$,
        \[
        \|F_{s,j} - T_{\calP_s, 1-\eps'}F_{s,j}\|_{\mathcal{D}_s, \alpha} \leq \xi.
        \]
        Inspecting the proof of Lemma~\ref{lem:regularity_fancy_mult}, we see that because we apply it with the collection $\mathcal{D}_s$, the collection $\mathcal{P}_s$ consists of product functions over $G^{s}_{{\sf master}}$.
        Let $r = \max_{s}\card{\mathcal{P}_s}$, and note that $r = O_{\alpha,m,M,\xi}(1)$.
        Let $\eps_0$ be from Lemma~\ref{lem:regularity_fancy_mult} and let $\eps\geq \eps_0$.

        Next, we modify collection $\calP_s$
        as in Lemma~\ref{lem:regularity_fancy_highrank} to achieve the second item of the lemma. Denote $A = \sum_{(s_1,s_2,s_3)\in\mathcal{M}}\card{\spn(\mathcal{D}_{(s_1,s_2,s_3)})}$, and note that $A = O_{\alpha,m,M,\xi}(1)$. Define the decay functions $w_1(z) = 0.5w(z)$
        and $w_{i+1}(z) = 0.5 w(w_{i}(z))$ for $i\geq 1$.
        Choose $w'(z) = w_{6A}(z)$ and inspect the intervals
        $I_i = (w_{3i}(\eps)^{-1},w_{3(i+1)}(\eps)^{-1})$ for $i=1,\ldots, 2A-1$.
        By the pigeonhole principle, we may find $i$ such that for each $(s_1,s_2,s_3)\in\mathcal{M}$ and
        $D\in \spn(\mathcal{D}_{s_1,s_2,s_3})$ we have that $\Delta_{{\sf symbolic}}(D,1)\not\in I_i$.
        We take
        \[
        J = \sett{\ell\in [n]}
        {\exists (s_1,s_2,s_3)\in\mathcal{M},D\in \spn(\mathcal{D}_{s_1,s_2,s_3})
        \text{ with }
        \Delta_{{\sf symbolic}}(D,1)\leq w_{3i}(\eps)^{-1}, D_{\ell}\neq 1
        }
        \]
        and set $\mathcal{P}_s' = \sett{P|_{\bar{J}}}{P\in\mathcal{P}_s}$ and $\mathcal{D}'_{s_1,s_2,s_3}$ accordingly. Then for all $(s_1,s_2,s_3)\in\mathcal{M}$ and
        $D\in \spn(\mathcal{D}'_{s_1,s_2,s_3})$ we either have $D\equiv 1$ or
        \[
        \Delta_{{\sf symbolic}}(D,1)
        \geq w_{3(i+1)}(\eps)^{-1}-\card{J}
        \geq w_{3(i+1)}(\eps)^{-1}
        -w_{3i}(\eps)^{-1}A
        \geq \frac{1}{w(w_{3i+2}(\eps))},
        \]
        so the second item holds.
        Applying the argument in the proof of Claim~\ref{claim:remains_small_norm} gives the first item.
 	\end{proof}
Thus, Lemma~\ref{lem:collection_highrank_triples} shows that given any function $f: \Sigma^n\rightarrow \Sigma^n$, we can find decompositions of $f$ guaranteeing properties 1., 2. and 3.~listed in  Definition~\ref{def:quasirandom_function}. Next, we will design a dictatorship test with perfect completeness, such that for a given $f$, if any such decompositions happen to satisfy the quasirandom property (namely, it additionally satisfies the low-influence property), the test passes with probability at most the optimal value of the instance. If for any such decompositions of $f$, some $L_P$ has an influential variable, then we hope such a structure can be used in the (conditional) hardness reductions, specifically in the {\em decoding} procedure of the soundness analysis of a reduction.

\subsubsection{The Dictatorship Test for {\symm} Predicates}
We are now ready to state the dictatorship test and prove its completeness and soundness. Let $(\V V, \V \mu)$ be a solution for the SDP relaxation of $\inst$ and $\mathcal{T} \subseteq \prod_{v\in \calV} H_v^\star$ be the subspace of the set of satisfying assignments of the instance $\inst(\calV, \calC)$ coming from the {\gesystem} that satisfy the conditions from Lemma~\ref{lemma:sdp_noZ_local_globalembed}. Let $\alpha>0$ be a lower bound on the non-zero probabilities from the distributions $\mu_C$; we treat $\alpha$ as a constant as the instance size is fixed.

 In Figure~\ref{fig:dict_test_noise}, we give the dictatorship test $\dict_{\V V, \V \mu}$ for functions $F: \Sigma^R \rightarrow\Sigma$ for large enough $R$ compared to $|\calV|, \alpha^{-1}, |\Sigma|$.
	\begin{figure}[!ht]
		\fbox{
			
			\parbox{460pt}{
				
				\begin{enumerate}
					\item Let $(\V V, \V \mu)$ be a solution for the basic SDP relaxation of $\inst$ that satisfy the conditions from Lemma~\ref{lemma:sdp_noZ_local_globalembed}.
					
					\item Sample a payoff $\constraint\sim \calC$. Let $\calV(\constraint) = \{s_1,s_2, s_3\}$.
					
					\item Sample $\V z_{\constraint} = \{ \V z_{s_1}, \V z_{s_2}, \V z_{s_3}\}$ from the product distribution $\mu_{\constraint}^{R}$, i.e.,  independently for each $i\in [R]$, $(\V z_{s_1}^{(i)}, \V z_{s_2}^{(i)}, \V z_{s_3}^{(i)}) \sim \mu_{\constraint}$.
					
					\item Query the function values $F({\V z}_{s_1}), F({\V z}_{s_2}), F({\V z}_{s_3})$.
					
					\item Accept iff $\constraint(F({\V z}_{s_1}), F({\V z}_{s_2}), F({\V z}_{s_3})) = 1$.
				\end{enumerate}
		}}\\
		\caption{SDP integrality gap to a dictatorship test $\dict_{\V V, \V \mu}$.}
		\label{fig:dict_test_noise}
	\end{figure}

        \noindent{\bf Completeness:} a function $F:\Sigma^R \rightarrow \Sigma$ is called a dictator function if $F(\V z) = \V z^{(i)}$ for some $i\in [R]$. The completeness of the test is defined as follows,
$$\mbox{Completeness}(\dict_{\V V, \V \mu}) = \min_{\substack{i\in [R],\\ F\mbox{ is the $i^{th}$ dictator}}} \Pr[F\mbox{ passes }\dict_{\V V, \V \mu}].$$

As the distribution $\mu_C$ is supported on the satisfying assignment of the constraint $C$, the test passes with probability $1$ when $F$ is a dictator function. Therefore, $\mbox{Completeness}(\dict_{\V V, \V \mu})=1$.\skipi

\noindent{\bf Soundness:} The soundness of the test is the maximum probability with which it accepts quasirandom functions. More formally, define the soundness of the test as follows.
 $$\mbox{Soundness}_{(d, \tau)}(\dict_{\V V, \V \mu})  = \sup_{\substack{F: \Sigma^R \rightarrow \Sigma\\ F \mbox{ is } (d, \tau)\mbox{-quasirandom w.r.t.} (\V  V, \V \mu)}} \Pr[F \mbox{ passes }\dict_{\V V, \V \mu}].$$

 We now state our main theorem regarding the soundness of the dictatorship test $\dict_{\V V, \V \mu}$.

	\begin{theorem}\label{thm:dictatorship_test}
		Fix any collection of {\symm} predicates $\mathcal{P}$. Given an instance $\inst = (\calV, \calC)$ of a Max-$\mathcal{P}$-CSP such that the algorithm $\alg$ accepts $\inst$, the test $\dict_{\V V, \V \mu}$ has completeness $1$ and soundness $$\mbox{Soundness}_{(d, \tau)}(\dict_{\V V, \V \mu}) = {\sf OPT}(\inst) + o_{\tau}(1),$$ where ${\sf OPT}(\inst)$ is the optimal value of the instance $\inst$ and $o_{\tau}(1)\rightarrow 0$ as $\tau\rightarrow 0$.
	\end{theorem}
	
	The following corollary follows from the above theorem and the approximation guarantee of our hybrid algorithm.

 \begin{corollary}
     \label{corr:dict_test}
     Fix any collection of {\symm} predicates $\mathcal{P}$. For every $\eps>0$, there exists a dictator vs.~quasirandom test with completeness $1$ and soundness $\alpha^\alg_\mathcal{P}+\eps$, and the accepting criterion of the test is from the set of predicates $\mathcal{P}$.
 \end{corollary}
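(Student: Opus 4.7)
The plan is to show that for any $(d,\tau)$-quasirandom function $F\colon\Sigma^R\to\Sigma$, the acceptance probability of $\dict_{\V V,\V\mu}$ is at most ${\sf OPT}(\inst)+o_\tau(1)$; the completeness claim is immediate because $\mu_C$ is supported on $C^{-1}(1)$ when the SDP value is $1$. Writing the acceptance probability as
\[
\Pr[F\text{ passes}]
=\Expect{C\sim\calC}{\Expect{(\V z_{s_1},\V z_{s_2},\V z_{s_3})\sim\mu_C^R}{C(F(\V z_{s_1}),F(\V z_{s_2}),F(\V z_{s_3}))}},
\]
I expand $C$ in its multilinear representation as a bounded linear combination of terms $\mathbf{1}_{F(\V z_{s_1})=j_1}\mathbf{1}_{F(\V z_{s_2})=j_2}\mathbf{1}_{F(\V z_{s_3})=j_3}$. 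This reduces the analysis to controlling expectations of triple-products of the indicator functions $F_{s,j}(\V z):=\mathbf{1}_{F(\V z)=j}$ against $\mu_C^R$.

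Next, I use the quasirandomness hypothesis together with Lemma~\ref{lem:collection_highrank_triples} applied to all variables $s\in\calV$ simultaneously, with the collection of distributions $\{\mu_C:C\in\calC\}$. This yields collections $\mathcal{P}_s\subseteq \mathcal{P}(\Sigma,G^s_{{\sf master}},\sigma_s)$ and structured approximants $F'_{s,j}(\V z)=\sum_{P\in\spn(\mathcal{P}_s)}P(\V z)L_{P,s,j}(\V z)$ such that $\norm{F_{s,j}-F'_{s,j}}_{\mathcal{D}_s,\alpha}\leq\eps$, each decoupled function has $\tau$-small shifted low-degree influences (by Definition~\ref{def:quasirandom_function}), and the product-function collection $\mathcal{D}_C$ over each $\supp(\mu_C)$ has rank at least $M$ for $M$ sufficiently large relative to $\tau^{-1}$. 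Replacing each $F_{s,j}$ by $F'_{s,j}$ in each summand introduces total error at most $O_{|\Sigma|}(\eps)$ by the definition of $\norm{\cdot}_{\mathcal{D}_s,\alpha}$.

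Now I invoke the mixed invariance principle, Theorem~\ref{thm:basic_mixed_invariance}, on each triple term. This replaces $\E_{\mu_C^R}[F'_{s_1,j_1}F'_{s_2,j_2}F'_{s_3,j_3}]$ by the expectation over the mixed space
\[
\Expect{\substack{(\V z_1,\V z_2,\V z_3)\sim\mu_C^R\\ (G_1,G_2,G_3)\sim\mathcal{G}_C^R}}{\tilde F_{s_1,j_1}(\sigma_{s_1}(\V z_1),G_1)\tilde F_{s_2,j_2}(\sigma_{s_2}(\V z_2),G_2)\tilde F_{s_3,j_3}(\sigma_{s_3}(\V z_3),G_3)},
\]
up to an additive error $o_\tau(1)$, where the $\tilde F$'s are the truncated decoupled functions and $\mathcal{G}_C$ is the Gaussian ensemble with the covariance matrix matching $\mu_C$. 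The crucial ingredient, and the step I expect to be most delicate, is that the $(\sigma_{s_1}(\V z_1),\sigma_{s_2}(\V z_2),\sigma_{s_3}(\V z_3))$ marginal can then be replaced by a uniform sample from $\calT|_C^R$. This replacement is justified by the consistency of the SDP and {\gesystem} solutions guaranteed by Lemma~\ref{lemma:sdp_noZ_local_globalembed}, combined with Lemma~\ref{lemma:PQR_local_global}: only the characters $P'Q'R'$ that are identically $1$ on $\supp(\mu_C)$ contribute to the expansion of each triple product, and by consistency these are exactly the characters that vanish on $\calT|_C$, so the two distributions induce the same expectations on all products of product-function characters appearing in the expansion.

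Finally, I connect this mixed-space expression to the randomized rounding algorithm from Section~\ref{sec:analysis_of_hybrid}. Define a rounding scheme that, for each variable $v\in\calV$, samples $R$ independent Gaussians $\bg^{(1)},\ldots,\bg^{(R)}\in\mathbb{R}^m$, forms the vector $\V z_v=(\langle\V b_{v,a},\bg^{(\ell)}\rangle)_{\ell,a}$, samples $R$ independent uniform $\V\sigma^{(1)},\ldots,\V\sigma^{(R)}\in\calT$, forms $\V w_v=(\V\sigma^{(\ell)}(y_v))_\ell$, and outputs $A(v)$ according to a rounding distribution $f(\V z_v,\V w_v)\in\simplex(\Sigma)$ derived from the $\tilde F_{v,j}$. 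Summing over $C\sim\calC$ and $(j_1,j_2,j_3)\in C^{-1}(1)$ reconstitutes $\E_C[C(A(s_1),A(s_2),A(s_3))]$, i.e.\ the expected fraction of constraints of $\inst$ that $A$ satisfies; this is at most ${\sf OPT}(\inst)$. Combining the three chains of approximations yields $\Pr[F\text{ passes}]\leq {\sf OPT}(\inst)+o_\tau(1)$, which together with the matching completeness establishes the theorem. The main obstacle is managing the three sources of error simultaneously (the regularity approximation in $\norm{\cdot}_{\mathcal{D}_s,\alpha}$, the mixed-invariance error in $L_1$, and the truncation error from ${\sf trunc}$) while ensuring the rank parameter $M$ and noise parameters $\eps$ are taken consistently across all variables; this is what necessitates the multivariate regularity lemma with high rank.
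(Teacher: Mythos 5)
Your proposal is essentially a re-derivation of Theorem~\ref{thm:dictatorship_test} rather than of the corollary itself; the corollary follows from that theorem by instantiating $\inst$ as an integrality-gap instance that the hybrid algorithm accepts with $\mathrm{OPT}(\inst)\le \alpha^{\alg}_{\mathcal{P}}+\eps/2$ (such an instance exists by the infimum in the definition of $\alpha^{\alg}_{\mathcal{P}}$) and then choosing $\tau$ small enough so that $o_\tau(1)\le \eps/2$; you never pick this gap instance, so your argument on its own yields soundness $\mathrm{OPT}(\inst)+o_\tau(1)$ but does not say which $\inst$ to use to get the advertised $\alpha^{\alg}_{\mathcal{P}}+\eps$. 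That omitted but easy step aside, your sketch of the soundness analysis follows the paper's route closely: regularity via Lemma~\ref{lem:collection_highrank_triples} to get structured approximants, the mixed invariance principle (Theorem~\ref{thm:basic_mixed_invariance}), the switch from the local distribution $\tilde\mu_C^R$ to $\mathcal{D}_{\mathcal{T}}$ via consistency (Lemmas~\ref{lemma:PQR_local_global} and~\ref{lem:bound_sd_mixed_inv_csp}), and the comparison to the rounding scheme to cap the value by $\mathrm{OPT}(\inst)$. One genuine difference: you expand $\constraint$ into a linear combination of triple products of indicator functions and control each monomial separately, whereas the paper applies the mixed invariance directly with $\Psi=\constraint$, the smooth Lipschitz extension of the predicate, applied to the vector-valued rounded outputs. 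The paper's route is more convenient for the truncation and scaling step: after truncation, the vectors $\widetilde{\bF^{\sf dec}_{s}}$ need not lie in $\simplex_q$, so the rounding scheme needs $\scale$ and the argument of Claims~\ref{claim:close_to_1_csp} and~\ref{claim:trunc_scale_vec}, and this is cleaner when the entire Lipschitz $\constraint$ is treated at once rather than monomial by monomial. You flag the truncation error as an obstacle but do not carry out the scaling argument, which is a genuine (if routine) piece of the soundness proof that you would need to make precise.
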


Theorem~\ref{thm:approx_alg} along with Corollary~\ref{corr:dict_test} prove our main Theorem~\ref{thm:main}. We prove Theorem~\ref{thm:dictatorship_test} in the rest of the section. We first set up a few notations in Section~\ref{section:functions} that will be used in the analysis of the soundness of the test.\skipi

	\subsubsection{Functions on Product Spaces}
	\label{section:functions}
	Let $(\Omega, \mu)$ be a probability space with $|\Omega| = q$ and $\mu$ has full support on $\Omega$. Define the inner product between two functions $f,g: \Omega\rightarrow \mathbb{C}$ on this space as follows: $\langle f, g\rangle = \E_{x\sim \mu}[f(x)\overline{g(x)}]$.
	
	\begin{definition}
		An orthonormal ensemble consists of a basis of real orthonormal random variables $\calL = \{ \ell_0 \equiv \mathbf{1}, \ell_1, \ldots, \ell_{q-1}\}$, where $\mathbf{1}$ is the constant $1$ function.
	\end{definition}
	
	Henceforth, we will sometimes refer to orthonormal ensembles as just ensembles. For an ensemble $\calL = \{ \ell_0 \equiv \mathbf{1}, \ell_1, \ldots, \ell_{q-1}\}$ of random variables, we will use $\calL^R$ to denote the
	ensemble obtained by taking $R$ independent copies of $\calL$. Further $\calL^{(i)}= \{ \ell_0^{(i)}, \ell_1^{(i)}, \ldots, \ell_{q-1}^{(i)}\}$ will denote the $i^{th}$ independent copy of $\calL$.

	Fix an ensemble $\calL = \{ \ell_0 \equiv \mathbf{1}, \ell_1, \ldots, \ell_{q-1}\}$ that forms a basis for $L^2(\Omega)$. Given such a basis for $L^2(\Omega)$, it induces a basis for the space $L^2(\Omega^R)$, given by the random variables
	\[
    \left\{\ell_{\V \sigma} :=\prod_{i=1}^R \ell_{\sigma_i}^{(i)} \quad \middle| \quad {\V \sigma} \in \{0,1,\ldots,q-1\}^R\right\}.
    \]
	Therefore, any function $\calF: \Omega^R \rightarrow \mathbb{R}$ has a multilinear expansion
	\[
    \calF(\V z) = \sum_{{\V \sigma} \in \{0,1,\ldots,q-1\}^R} \hat{\calF}({\V \sigma}) \ell_{\V \sigma}(\V z),
    \]
	where $\ell_{\V \sigma}(\V z)= \prod_{i=1}^R \ell_{\sigma_i}(z_i)$.

	\begin{definition}
		A multi-index ${\V \sigma}$ is a vector $(\sigma_1, \sigma_2, \ldots. \sigma_R)\in \{0,1,\ldots,q-1\}^R$ and the degree of ${\V \sigma}$ is denoted by $|{\V \sigma}|$ which is equal to $|{\V \sigma}| = |\{i\in [R] \mid \sigma_i \neq 0\}|$. Given a set of indeterminates $\bcalX = \{x_j^{(i)} | j\in \{0,1,\ldots,q-1\} ,i\in [R]\}$ and a multi-index ${\V \sigma}$, define the monomial $x_{\V \sigma}$ as
		$x_{\V \sigma} = \prod_{i=1}^R x_{\sigma_i}^{(i)}$.
		The degree of the monomial is given by $|{\V \sigma}|$. A multilinear polynomial over such indeterminates is given by
		$$F(\V x) = \sum_{{\V \sigma}\in \{0,1,\ldots,q-1\}^R} \hat{F}_{\V \sigma} x_{\V \sigma}.$$
	\end{definition}

	Given a function $\calF: \Omega^R \rightarrow \mathbb{C}$ whose multilinear expansion with respect to the orthonormal ensemble $\calL = \{ \ell_0 \equiv \mathbf{1}, \ell_1, \ldots, \ell_{q-1}\}$ is given by $\calF(\V z) = \sum_{\sigma \in \{0,1,\ldots,q-1\}^R} \hat{\calF}(\sigma) \ell_\sigma(\V z)$ , we define a corresponding {\em formal polynomial} in the indeterminates $\bcalX = \{x_j^{(i)} | j\in \{0,1,\ldots,q-1\} , i\in [R]\}$, as follows:
	
	$$F(\V x) = \sum_{{\V \sigma}} \hat{\calF}({\V \sigma}) x_{\V \sigma}.$$
	
	
	We will always use the symbol $\calF$ to denote a function on a product probability space $\Omega^R$.
    Further $ F(\V x)$ will denote the formal multilinear polynomial corresponding to $\calF$. Hence, $ F(\calL^R)$ is
	a random variable obtained by substituting the random variables $\calL^R$ in place of $\V x$. For instance, the following equation holds in this notation:
	$$\E_{\V z\in \Omega^R}[\calF(\V z)] = \E[ F(\calL^R)].$$

\paragraph{Vector Valued Functions.}	We will always use the symbol $\bcalF = (\calF_1, \calF_2, \ldots, \calF_{q})$ to denote a vector-valued function on a product probability space $\Omega^R$. Further, $ \V F(\V x) = (F_1, F_2, \ldots, F_{q})$ will denote the formal multilinear polynomial corresponding to $\bcalF$.

	\subsubsection{Soundness Analysis of the Dictatorship Test}
	
 We now analyze the soundness of the test. Towards this, we assume that $F$ is quasirandom according to Definition~\ref{def:quasirandom_function} and we fix decompositions of $F$ that guarantee the quasirandomness of $F$ throughout the analysis.  For each $s\in \calV$, let $\Omega_s = (\Sigma, \mu_s)$ be a probability space with atoms in $\Sigma$ where the probability of $a\in \Sigma$ is $\|\V b_{s,a}\|_2^2$.

	We will fix an arbitrary mapping from $\Sigma$ to $\{1,2,\ldots, q\}$, denoted by $\amap: \Sigma \rightarrow \{1,2,\ldots, q\}$. The domain of the payoff $\constraint: \Sigma^3 \rightarrow \{0,1\}$ can be extended from $\Sigma^3$ to $\simplex_{q}^3$. To see this, by the abuse of notation, first define a $\Delta_q$-representation of a payoff $\constraint: \Sigma^3 \rightarrow \{0,1\}$ as $\constraint:\Delta_q^3 \rightarrow \{0,1\}$ where
	$$\constraint(\V e_{a_1},\V e_{a_2}, \V e_{a_3} ) = \constraint(\amap^{-1}(a_1), \amap^{-1}(a_2),\amap^{-1}(a_3)), \mbox{ for all } (a_1, a_2,a_3) \in \{1,2,\ldots, q\}^3.$$
	
	The function ${\constraint}$ can be extended to the domain $\simplex_q^3$ by its multi-linear extension. Again, by abusing the notation, define the extension ${\constraint}$ as:
	
	\begin{equation}
		\label{eq:payoff}
		{\constraint}(\V x_1, \V x_2, \V x_3)  = \sum_{\sigma\in \Sigma^3 } \constraint(\sigma) \prod_{i=1}^3 x_{i, \amap(\sigma_i)}, \mbox{ for all } \V x_1, \V x_2, \V x_3 \in \simplex_q.
	\end{equation}

	\paragraph{Extending $\constraint$ to $\mathbb{C}^{3q}$:} We will extend the payoff function $\constraint$ further to a real-valued
    function on $(\mathbb{C}^{q})^3$, by constructing a smooth extension of $\constraint$.  This smooth extension of $\constraint$ satisfies the following properties, controlled by a constant $C_0(q)$ that depends only on $q$.

	\begin{enumerate}
		\item All the partial derivatives of $\constraint$ up to order $3$ are uniformly bounded by $C_0(q)$.
		\item $C$ is a Lipschitz function with Lipschitz constant $C_0(q)$, i.e.,$\forall \{\V x_1, \V x_2, \V x_3\}, \{\V y_1, \V y_2, \V y_3\} \in (\mathbb{C}^{q})^{3}$,
		$$| \constraint(\V x_1, \V x_2, \V x_3) - \constraint(\V y_1, \V y_2, \V y_3)|\leq C_0(q) \sum_{i=1}^3 \|\V x_i - \V y_i\|_2.$$
	\end{enumerate}

	\paragraph{Local and Global Ensembles.}
	Fix a given SDP solution $(\V V, \V \mu)$ with value $1$. We define the following local and global orthonormal ensembles of random variables for every $s\in \calV$ as follows:
	
	\begin{itemize}
		\item {\bf Local Integral Ensembles $\calL$:}
		The Local	Integral Ensemble $\calL = \{\V \ell_s\mid s \in\calV\}$ for a variable $s\in \calV$, $\V \ell_s = \{\ell_{s,0}\equiv \mathbf{1}, \ell_{s,1}, \ldots, \ell_{s,q-1}\}$, is a set of random variables that are orthonormal ensembles for the space $\Omega_s$.
		\item {\bf Global Gaussian Ensembles $\calG$:}
		The Global Gaussian Ensembles $\calG = \{\V g_s\mid s \in\calV\}$ are generated by setting
		$\V g_s = \{g_{s,0} \equiv \mathbf{1}, g_{s,1}, \ldots, g_{s,q-1}\}$ where
		$${g}_{s,j} = \sum_{\omega\in \Sigma}\ell_{s,j}(\omega)\langle \V b_{s,\omega}, \V \zeta\rangle, \quad \forall j\in \{1, \ldots, q-1\},$$
		and $\V \zeta$ is a random normal Gaussian vector of appropriate dimension.
	\end{itemize}

	The following lemma states that the local integral ensemble and the global Gaussian ensemble have matching first and second moments. We need this to apply the invariance principle in our analysis below.
	\begin{lemma}(\cite{BKMcsp1})
		\label{lemma:matching_moments}
		For every $s\in \calV$, $\V g_s$ is an orthonormal ensemble w.r.t. the space $\Omega_s$. Also, for any payoff $\constraint\in \calC$, the global ensembles $\calG$ match the following moments of the local integral ensembles $\calL$:
		$$ \E_{\V \zeta}[g_{s,j}.g_{s',j'}] = \E_{(\omega, \omega')\sim \mu_{\constraint}|(s,s')}[\ell_{s,j}(\omega).\ell_{s',j'}(\omega')] \quad \forall j, j'\in \{1,\ldots, q-1\}, s, s'\in \calV(P'),$$
		where $\mu_{\constraint}|(s,s')$ is the marginal distribution of $\mu_{\constraint}$ on the coordinates of $s,s'$.
	\end{lemma}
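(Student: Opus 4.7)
The plan is to prove both claims by a direct second-moment computation, leveraging the SDP inner product constraint~\eqref{eq:inner_p_SDP1}. Recall that $g_{s,j} = \sum_{\omega \in \Sigma} \ell_{s,j}(\omega) \langle \V b_{s,\omega}, \V \zeta\rangle$ where $\V \zeta$ is a standard Gaussian vector. Since $\V \zeta$ has i.i.d.\ standard normal coordinates, for any fixed vectors $\V u, \V v$ we have $\mathbb{E}_{\V \zeta}[\langle \V u, \V \zeta\rangle \langle \V v, \V \zeta\rangle] = \langle \V u, \V v\rangle$ and $\mathbb{E}_{\V \zeta}[\langle \V u, \V \zeta\rangle]=0$. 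In particular each $g_{s,j}$ is mean zero for $j \geq 1$ (since $\ell_{s,0} \equiv \mathbf{1}$ is the only constant basis element, the others integrate to $0$ against $\mu_s$, but the mean vanishes purely from the Gaussianity).

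For the orthonormality claim, I would compute
\[
\mathbb{E}_{\V \zeta}[g_{s,j}\cdot g_{s,j'}] = \sum_{\omega,\omega' \in \Sigma}\ell_{s,j}(\omega)\ell_{s,j'}(\omega')\langle \V b_{s,\omega}, \V b_{s,\omega'}\rangle.
\]
Now apply the SDP constraint with $i = j = s$: for any constraint $C$ containing $s$, $\langle \V b_{s,\omega}, \V b_{s,\omega'}\rangle = \Pr_{x \sim \mu_C}[x_s = \omega, x_s = \omega']$, which equals $\mu_s(\omega)$ when $\omega = \omega'$ and $0$ otherwise (note this is well-defined, independent of $C$, because the marginal of $\mu_C$ on $s$ is $\mu_s$ by consistency of the SDP). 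Substituting yields $\sum_{\omega}\ell_{s,j}(\omega)\ell_{s,j'}(\omega)\mu_s(\omega) = \langle \ell_{s,j}, \ell_{s,j'}\rangle_{\mu_s} = \mathbf{1}_{j=j'}$ by orthonormality of $\calL$. This establishes that $\V g_s$ is an orthonormal ensemble in $L^2(\Omega_s)$.

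For the second-moment matching claim, the computation is essentially identical but with two different variables $s, s' \in \calV(C)$:
\[
\mathbb{E}_{\V \zeta}[g_{s,j}\cdot g_{s',j'}] = \sum_{\omega,\omega'}\ell_{s,j}(\omega)\ell_{s',j'}(\omega')\langle \V b_{s,\omega}, \V b_{s',\omega'}\rangle = \sum_{\omega,\omega'}\ell_{s,j}(\omega)\ell_{s',j'}(\omega')\mu_C|_{(s,s')}(\omega,\omega'),
\]
where the second equality uses the SDP constraint~\eqref{eq:inner_p_SDP1}. The right-hand side is exactly $\mathbb{E}_{(\omega,\omega')\sim \mu_C|_{(s,s')}}[\ell_{s,j}(\omega)\ell_{s',j'}(\omega')]$, giving the desired matching.

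There is no real obstacle here; the lemma is essentially a direct unpacking of definitions, and the only thing to verify is that the SDP inner product constraint provides exactly the covariance structure needed. The single subtle point is that $\langle \V b_{s,\omega}, \V b_{s,\omega'}\rangle$ for $s = s'$ is well-defined independent of which constraint $C$ one uses (equivalently, the marginal $\mu_s$ is globally well-defined), which follows because any valid SDP solution must satisfy~\eqref{eq:inner_p_SDP1} simultaneously across all constraints, forcing the pairwise marginals to agree on shared variables.
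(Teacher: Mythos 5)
Your proof is correct and, given that the paper omits the argument (deferring to the cited reference), you have filled it in with essentially the only available route: expand each $g_{s,j}$ as a linear combination of Gaussian inner products, use $\E_{\V\zeta}[\langle\V u,\V\zeta\rangle\langle\V v,\V\zeta\rangle]=\langle\V u,\V v\rangle$, and then substitute the SDP inner-product constraint~(\ref{eq:inner_p_SDP1}). The diagonal case ($s=s'$) correctly reduces to $\sum_\omega \ell_{s,j}(\omega)\ell_{s,j'}(\omega)\mu_s(\omega)=\delta_{j,j'}$, and the off-diagonal case ($s\neq s'$, both in $\calV(C)$) reads off the pairwise marginal of $\mu_C$ exactly as stated. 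You also correctly flag the one genuine well-definedness subtlety: the marginal $\mu_s$ is the same regardless of which $C\ni s$ one uses to evaluate it, because the left-hand side $\langle\V b_{s,\omega},\V b_{s,\omega'}\rangle$ is a fixed quantity not depending on $C$.

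One small stylistic point: your sentence about $g_{s,j}$ having mean zero conflates two independent reasons (orthogonality of $\ell_{s,j}$ to $\ell_{s,0}\equiv\mathbf 1$ in $L^2(\Omega_s)$, and the mean-zero property of the Gaussian $\V\zeta$). The cleanest argument is the second one: $\E_{\V\zeta}[\langle\V b_{s,\omega},\V\zeta\rangle]=0$ already kills each term in the sum defining $g_{s,j}$, so $\E[g_{s,j}]=0$ for $j\geq 1$ with no appeal to the structure of the $\ell$'s. That, together with your diagonal second-moment computation, gives the full orthonormality of $\{g_{s,0}\equiv\mathbf 1,g_{s,1},\dots,g_{s,q-1}\}$. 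This does not affect the correctness of the proof.
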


\paragraph{Functions used in the soundness analysis.} In order to analyze the above expectation, we define functions related to $\calF_s$ for $s\in \calV$ analogously to Section~\ref{sec:mixed_invariance}.
     For small enough $\xi>0$, let $\eps>0$ be arbitrarily small constant compared to $\xi$ and take a parameter $T$ such that $T\gg 1/\xi,1/\eps, 1/\alpha, |\Sigma|$. Let $\mathcal{P}_{s}$ be the collection of product functions
	associated with $F'_{s,j}$ from the decomposition of $F:\Sigma^R\rightarrow \Sigma$ satisfying the quasirandomness property as in Definition~\ref{def:quasirandom_function} with the above parameters where $\tau$ is much smaller than the parameters $\xi, \eps$. Write each
	$P\in \spn(\mathcal{P}_{s})$ as
	$P = \prod\limits_{i=1}^{R}\chi_{P,i}(\sigma_s(x_i))$. Here, $\sigma_s: \Sigma\rightarrow H_s^\star$ is the embedding into the group $H_s^\star$ with respect to the set of distributions $M_{\mu_s} = \{\mu_C ~\mid~ s\in \calV(C)\}$. With these collections of product functions for each $s\in \calV$, we first define the noisy version of the functions $\calF_{s,j}$ for $j\in [q]$ as:
    \begin{equation}\label{eq:noisy_csp}
	\mathrm{T}_{\calP_s, 1-\eps}\calF_{s,j}(\V z)
	= \Expect{I\subseteq_{\eps} [n]}{\Expect{\V y\sim \mathrm{T}_{\mu_s, \mathcal{P}_s, I} \V z}{\calF_{s,j}(\V y)}},
	\end{equation}
	where the operator $\mathrm{T}_{\mu_s, \mathcal{P}_s, I}$ is defined in Definition~\ref{def:noise_op_P}. Using this, define the following decomposition which is close to the above function in the $\ell_2$ norm as proved in Lemma~\ref{lem:approx_formula_noise_op_fancy},

	\begin{equation}\label{eq:decompose_L_P_csp}
	\calF'_{s,j}(\V z)
	=\sum\limits_{P\in\spn(\mathcal{P}_{s})}
	L_{P,s,j}(\V z)\cdot \prod\limits_{i=1}^{R} \chi_{P,i}(\sigma_s(z_i)).
	\end{equation}
	Here  ${\sf deg}(L_{P,s,j})\leq d$ and $\norm{L_{P, s, j}}_2\leq d$, where $d$  depends on $\xi, \eps, \alpha, |\Sigma|$. These are the same decompositions used to define the quasirandomness of the function $F$ in Definition~\ref{def:quasirandom_function}. Again, by the $(d,\tau)$ quasirandomness condition, we have for every $P\in \spn(\calP_{s})$ and $j\in [q]$, $\max_{i\in [n]}I_i[L_{P,s,j}]\leq \tau$. Next, we define a decoupled version of the above function where we provide separate inputs for the low-degree part and the product function part
    (we recall that by Lemma~\ref{lem:decoupled_1} there exists a coupling between $\mu_s^R\times \mu_s^R$ and $\mu_s^R$ under which these two functions are close to each):
	\begin{equation}\label{eq:decoupled_csp}
	\tilde{\calF}^{\sf dec}_{s,j}(\V x,\V y)
	=\sum\limits_{P\in\spn(\mathcal{P}_{s})}
	L_{P,s,j}(\V y)\cdot \prod\limits_{i=1}^{R} \chi_{P,i}(\sigma_s(x_i)).
	\end{equation}
	Interpreting the first input over the group $(H_s^\star)^R$, we define ${\calF}^{\sf dec}_{s,j}\colon (H_s^\star)^R\times \Sigma^R \to\mathbb{C}$ by replacing $\sigma_s(x_i)$
	with the group element input. Namely,
	\begin{equation}\label{eq:decoupled_group_csp}
	{\calF}^{\sf dec}_{s,j}(\V a,\V y)
	=
	\sum\limits_{P\in\spn(\mathcal{P}_{s, j})}
	 L_{P,s,j}(\V y)\cdot \prod\limits_{i=1}^{R}\chi_{P,i}(a_i).
	\end{equation}
	Towards applying the invariance principle, we replace the low-degree polynomials with their corresponding multilinear extensions to get the following function
	${F}^{\sf dec}_{s, j}\colon (H_s^\star)^R\times \mathbb{R}^{(q-1)R }\to\mathbb{C}$:
        \[
	{F}^{\sf dec}_{s,j}(\V a,\V \ell)
	=
	\sum\limits_{P\in\spn(\mathcal{P}_{s, j})}
	 L_{P,s,j}(\V \ell)\cdot \prod\limits_{i=1}^{R}\chi_{P,i}(a_i).
	\]
 We will also need to work with the analogous function ${F}^{\sf dec}_{s, j}\colon \Sigma^R\times \mathbb{R}^{(q-1)R }\to\mathbb{C}$, where in the above definition $a_i$ is replaced with $\sigma(x_i)$, and we use the same notation to refer to this function as the distinction will be clear from the inputs.

 Finally, the function $\widetilde{F^{\sf dec}_{s, j}}\colon (H_s^\star)^R\times \mathbb{R}^{(q-1)R }\to\mathbb{R}$ is defined by taking only real part of the output of ${F}^{\sf dec}_{s, j}$ and truncating it using the function ${\sf trunc}_{[0,1]}$ defined in Figure~\ref{fig:rounding_scheme}.
 $$\widetilde{F^{\sf dec}_{s, j}}(\V a, \V \ell) = {\sf trunc}_{[0,1]}(\mathfrak{Re}({F}^{\sf dec}_{s,j}(\V a, \V \ell))).$$
We also use the notation $\widetilde{\qquad }$ on any function with the same definition as above. This completes the description of the various functions that we will encounter in the soundness analysis below.

	\paragraph{Soundness analysis.} The acceptance probability of the test for a given function $\bcalF$ is given by:
	\begin{equation}
		\label{eq:dict_test_passing}
		\Pr[\bcalF\mbox{ passes }\dict_{\V V, \V \mu}] = \E_{{\constraint}\sim \calC}  \underbrace{\E_{{\V z}_{\constraint} \sim \mu_{\constraint}^{R}}[{\constraint}(\bcalF_{s_1}({\V z}_{s_1}), \bcalF_{s_2}({\V z}_{s_2}), \bcalF_{s_3}({\V z}_{s_3}))]}_{(\rom{1})},
	\end{equation}
where $\calV(\constraint) = (s_2, s_2, s_3)$ and $C$ is the smooth extension of the payoff function. Note that for all $s$ and ${\V z} \in \Sigma^R$, we have $\sum_{j} \calF_{s, j}(\V z) = 1$. Furthermore, we have the same property for the noisy functions as given by the following claim.

 \begin{claim}
     \label{claim:noisy_sum_1_csp}
     For all $s$ and ${\V z} \in \Sigma^R$, we have $\sum_{j} \mathrm{T}_{\calP_s, 1-\eps}\calF_{s,j}(\V z) = 1$.
 \end{claim}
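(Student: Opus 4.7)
The claim is essentially a direct consequence of linearity of expectation together with the pointwise identity $\sum_j \calF_{s,j}(\V y) = 1$ for all $\V y \in \Sigma^R$, which holds because $\calF_{s,j}(\V y) = \mathbf{1}_{F(\V y) = j}$ and $F(\V y)$ takes exactly one value in $\Sigma = [q]$. The plan is simply to observe that the noise operator $\mathrm{T}_{\calP_s, 1-\eps}$ defined in~\eqref{eq:noisy_csp} is an averaging operator: its action on a function is a convex combination of the function's values, and hence it commutes with finite sums.

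Concretely, starting from the definition
\[
\mathrm{T}_{\calP_s, 1-\eps}\calF_{s,j}(\V z)
= \Expect{I\subseteq_{\eps} [n]}{\Expect{\V y\sim \mathrm{T}_{\mu_s, \mathcal{P}_s, I} \V z}{\calF_{s,j}(\V y)}},
\]
I would sum over $j \in [q]$ and swap the sum with the two expectations using linearity:
\[
\sum_{j=1}^{q} \mathrm{T}_{\calP_s, 1-\eps}\calF_{s,j}(\V z)
= \Expect{I\subseteq_{\eps} [n]}{\Expect{\V y\sim \mathrm{T}_{\mu_s, \mathcal{P}_s, I} \V z}{\sum_{j=1}^{q}\calF_{s,j}(\V y)}}.
\]
Since the inner sum equals $1$ pointwise, the right-hand side collapses to $1$, giving the claim.

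There is no real obstacle here; the statement is a structural sanity check used to argue that the noisy versions still form a probability distribution in the soundness analysis. The proof is a two-line computation and I would present it as such, with no further machinery needed.
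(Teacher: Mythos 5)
Your proof is correct and is essentially identical to the paper's own one-line computation: both swap the finite sum over $j$ with the two expectations by linearity and then use the pointwise identity $\sum_j \calF_{s,j}(\V y) = 1$. Nothing further to add.
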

 \begin{proof}
     \[
     \sum_j \mathrm{T}_{\calP_s, 1-\eps}\calF_{s,j}(\V z)
	= \sum_j \Expect{I\subseteq_{\eps} [n]}{\Expect{\V y\sim \mathrm{T}_{\mu_s, \mathcal{P}, I} \V z}{\calF_{s,j}(\V y)}} = \Expect{I\subseteq_{\eps} [n]}{\Expect{\V y\sim \mathrm{T}_{\mu_s, \mathcal{P}_s, I} \V z}{\sum_j \calF_{s,j}(\V y)}} = 1.
    \qedhere\]
 \end{proof}

 We now apply Theorem~\ref{thm:basic_mixed_invariance} to the expression $(\rom{1})$, and we use $\constraint$ as $\Psi$ therein. We remark that in the current setting we use ${\sf trunc}_{[0,1]}$ instead of ${\sf trunc}$ as in the proof of Theorem~\ref{thm:basic_mixed_invariance}, but the argument goes through in the same way. The only difference is that in the argument in  Claim~\ref{claim:finish_mixed_inv} instead of using the fact that $f' = \mathrm{T}_{\calP, 1-\eps} f$ is $1$-bounded (as there), we use the fact that $\mathrm{T}_{\calP_s, 1-\eps}\calF_{s,j}$ is $[0,1]$-valued. Therefore, we conclude the following

\begin{equation}
    \label{eq:move_to_dec_fun}
    \card{
				(\rom{1})
				-
				\underbrace{\Expect{\substack{\V a_{\constraint}\sim \tilde{\mu}_{\constraint}^R\\ \calG^R}}{\constraint(\widetilde{\bF^{\sf dec}_{s_1}}({\V a}_{s_1}, {\V g}_{s_1}), \widetilde{\bF^{\sf dec}_{s_2}}({\V a}_{s_2}, {\V g}_{s_2}), \widetilde{\bF^{\sf dec}_{s_3}}({\V a}_{s_3}, {\V g}_{s_3}))}}_{(\rom{2})}
			}\lll \xi,
\end{equation}
where $a_{\constraint} = (\V a_{s_1}, \V a_{s_2}, \V a_{s_3})\sim \tilde{\mu}_{\constraint}^R$  is the following distribution on $(H_{s_1}^\star\times H_{s_2}^\star\times H_{s_3}^\star)^R$ given by first sampling $\V z_{\constraint} = (\V z_{s_1}, \V z_{s_2}, \V z_{s_3})\sim \mu_{\constraint}^R$ and then applying the embedding maps $\sigma_{s}: \Sigma\rightarrow H_s^\star$ coordinatewise to $z_{s_j}$.

At this point, we would like to make a couple of observations. In the expectation given in $(\rom{2})$, the distribution $\calG^R$ can be sampled globally using the vectors from the SDP solution. However, the distribution $\tilde{\mu}_{\constraint}^R$ is specific to the constraint $C$. We do not know how to sample the strings $\{\V a_s \}_{s\in \calV}$ globally that marginally match with the distributions $\tilde{\mu}_{\constraint}^R$ for every constraint $C$. To remedy this, we next show that we can move from the distribution $\tilde{\mu}_{\constraint}^R$ to a distribution that samples a random set of $R$ solutions to the {\gesystem} and outputs the restriction of these solutions to the variables $s_1, s_2, s_3$. In order to show the closeness of these two distributions, we use the property that the product functions that we choose satisfy the high-rank property and the SDP solution is consistent with the {\gesystem}, guaranteed by Lemma~\ref{lemma:sdp_noZ_local_globalembed}.

 Specifically, consider the following two distributions:
	\begin{enumerate}
		\item The distribution
		$\mathcal{D}_{\mathcal{T}}$: sample $R$  assignments $\V \alpha^{(1)}, \V \alpha^{(2)}, \ldots, \V \alpha^{(R)}$ from the set of satisfying assignment $\mathcal{T}\subseteq G_{\sf master}^{\calV}$ to the {\gesystem} independently and uniformly at random, consider $(a_i, b_i, c_i):=(\alpha^{(i)}_{s_1},\alpha^{(i)}_{s_2},\alpha^{(i)}_{s_3})\in H_{s_1}^\star\times H_{s_2}^\star\times H_{s_3}^\star$
		and output $(\V a, \V b, \V c)$.
		\item The distribution
		$\mathcal{D}_{\mu_{\constraint}}$: sample $(\V x,\V y,\V z)\sim \mu_{\constraint}^{R}$ and output
        $(\sigma_{s_1}(\V x), \sigma_{s_2}(\V y), \sigma_{s_3}(\V z))$.
	\end{enumerate}
    For any $s\in \mathcal{V}$ and $P\in\spn(\mathcal{P}_{s})$,  take $\chi_{P}\colon \left(H_{s_1}^{\star}\right)^R\to \mathbb{C}$ a character such that $P(x) = \chi_P(\sigma_{s}(x))$.
    For any distribution $\mathcal{D}$ on $(H_{s_1}^\star\times H_{s_2}^\star\times H_{s_3}^\star)^R$, consider the distribution $\chi(\mathcal{D})$ defined as follows: Sample $(\V a, \V b, \V c) \sim \mathcal{D}$ and output
    \[
		(\chi_{P}(\V a),
		\chi_Q(\V b),
		\chi_R(\V c))_{P\in\spn(\mathcal{P}_{s_1}),Q\in\spn(\mathcal{P}_{s_2}),
			R\in\spn(\mathcal{P}_{s_3})}.
		\]

	Note that although each coordinate $i\in [R]$ of the output of the distribution  $\chi(\mathcal{D})$ is a vector in $\mathbb{C}^D$ for some $D = O_{|\Sigma|}(1)$, the support of the distribution for each $i\in [R]$ is bounded by $O_{|\Sigma|}(1)$. The following lemma (Lemma~\ref{lem:bound_sd_mixed_inv_csp}) asserts that the distributions $\chi(\mathcal{D}_{\mu_{\constraint}})$, and $\chi(\mathcal{D}_{\mathcal{T}})$ are close in statistical distance.   Here, we crucially use the fact that the set of solutions $\mathcal{T}$ to the {\gesystem} and the SDP solution $(\V V, \V \mu)$ are consistent with each other as stated in Lemma~\ref{lemma:sdp_noZ_local_globalembed}. Before we state and prove the lemma, we need the following two facts.

	\begin{fact}\label{fact:trivial_equivalency2}
		Suppose $P \in \spn(\mathcal{P}_{s_1})$, $Q \in \spn(\mathcal{P}_{s_2})$, $R \in \spn(\mathcal{P}_{s_3})$ are such that $(P_iQ_iR_i)\not\equiv 1$
		for at least $T$ coordinates under the distribution $\mu_C$. Then letting $\alpha$ be the minimum probability of an atom in $\mu_C$, we have that
		\[
		\card{\Expect{(\V x,\V y,\V z)\sim \mu_C^R}{\chi_P(\sigma_{s_1}(\V x))\chi_Q(\sigma_{s_2}(\V y))\chi_R(\sigma_{s_3}(\V z))}}\leq (1-\Omega_{\alpha,\card{\Sigma}}(1))^{T}.
		\]
	\end{fact}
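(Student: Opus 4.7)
The plan is to exploit the product structure of $\mu_C^R$ across the $R$ coordinates and reduce the estimate to a per-coordinate bound. Concretely, since $P = \prod_{i=1}^{R}\chi_{P,i}\circ\sigma_{s_1}$, $Q = \prod_{i=1}^{R}\chi_{Q,i}\circ\sigma_{s_2}$ and $R = \prod_{i=1}^{R}\chi_{R,i}\circ\sigma_{s_3}$, and the measure $\mu_C^R$ is the $R$-fold product of $\mu_C$, the expectation factorizes:
\[
\Expect{(\V x,\V y,\V z)\sim \mu_C^R}{\chi_P(\sigma_{s_1}(\V x))\chi_Q(\sigma_{s_2}(\V y))\chi_R(\sigma_{s_3}(\V z))}
= \prod_{i=1}^R E_i,
\]
where $E_i = \Expect{(x,y,z)\sim \mu_C}{\chi_{P,i}(\sigma_{s_1}(x))\chi_{Q,i}(\sigma_{s_2}(y))\chi_{R,i}(\sigma_{s_3}(z))}$. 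Every factor satisfies $|E_i|\leq 1$ trivially, so it suffices to show $|E_i|\leq 1-\Omega_{\alpha,|\Sigma|}(1)$ for each of the $T$ coordinates where the hypothesis on non-triviality kicks in; the bound $(1-\Omega_{\alpha,|\Sigma|}(1))^T$ then follows by multiplying together these $T$ factors (the remaining factors contribute at most $1$ each).

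For the per-coordinate bound I would argue as follows. The value $\phi_i(x,y,z) := \chi_{P,i}(\sigma_{s_1}(x))\chi_{Q,i}(\sigma_{s_2}(y))\chi_{R,i}(\sigma_{s_3}(z))$ is a root of unity of order at most $|G|=O_{|\Sigma|}(1)$, so any two distinct values of $\phi_i$ are at Euclidean distance at least $\Omega_{|\Sigma|}(1)$. Since every atom of $\mu_C$ has probability at least $\alpha$, the expectation $E_i$ is a convex combination of unit-modulus complex numbers in which at least $\alpha$ mass sits on each appearing value. A short convexity computation shows that whenever at least two distinct unit-modulus values appear, each with weight $\geq \alpha$, the convex combination has modulus at most $\sqrt{1-2\alpha(1-\alpha)(1-\cos\theta)}$ for $\theta \geq \Omega_{|\Sigma|}(1)$ the angle between them, which is at most $1-\Omega_{\alpha,|\Sigma|}(1)$.

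The main obstacle is thus to rule out the degenerate case that $\phi_i$ is constant on $\supp(\mu_C)$ yet not identically $1$. Here I will use both the consistency of the SDP/{\gesystem} solution (Lemma~\ref{lemma:sdp_noZ_local_globalembed}) and the no-$(\mathbb{Z},+)$-embedding property of $\supp(\mu_C)$: consistency gives $\langle\supp(\mu_C)\rangle=\calT|_C$ as a subgroup of $H_{s_1}^\star\times H_{s_2}^\star\times H_{s_3}^\star$, and an argument along the lines of Lemma~\ref{lemma:PQR_local_global} identifies the function $\phi_i$ with a character of $\calT|_C$. If that character were constant equal to some $c$ on $\supp(\mu_C)$ but not identically $1$, then raising $\phi_i$ to the order of $c$ would yield a non-trivial $\mathbb{Z}$-valued relation on $\supp(\mu_C)$, contradicting the assumption that $\supp(\mu_C)$ admits no $(\mathbb{Z},+)$-embedding. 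Hence the hypothesis $(P_iQ_iR_i)\not\equiv 1$ actually forces $\phi_i$ to take at least two distinct values on $\supp(\mu_C)$, so the convexity bound of the previous paragraph applies and the proof is concluded.
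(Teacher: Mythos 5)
Your factorization of the expectation over the $R$ coordinates and the convexity computation bounding each ``active'' factor by $1-\Omega_{\alpha,|\Sigma|}(1)$ are correct and match the paper's (briefer) proof. You are also right to flag the degenerate possibility that $\phi_i=P_iQ_iR_i$ is a constant $c\neq 1$ on $\supp(\mu_C)$, in which case that factor has modulus exactly $1$; the paper's own proof leaves this case implicit.

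Your proposed resolution of the degenerate case, however, does not work. If $\phi_i\equiv c$ with $c$ a root of unity of order $k$, the identity ``$\phi_i^{k}\equiv 1$'' is vacuous --- every finite-order character satisfies such a relation --- and gives no $\mathbb{Z}$-valued constraint. What the constancy of $\phi_i$ actually yields, after taking discrete logarithms of $\chi_{P,i}\circ\sigma_{s_1}$, $\chi_{Q,i}\circ\sigma_{s_2}$, $\chi_{R,i}\circ\sigma_{s_3}$, is an Abelian embedding of $\supp(\mu_C)$ into the \emph{finite} cyclic group $\mathbb{Z}/k\mathbb{Z}$. The no-$(\mathbb{Z},+)$-embedding hypothesis forbids only embeddings into the torsion-free group $(\mathbb{Z},+)$; embeddings into finite Abelian groups such as $\mathbb{Z}/k\mathbb{Z}$ are not merely allowed, they are the central objects the entire construction is built around (cf.\ Theorem~\ref{thm:csp4} and the standard embeddings). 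So there is no contradiction to extract and your argument collapses. The correct defense --- which is also how this fact is actually invoked inside Lemma~\ref{lem:bound_sd_mixed_inv_csp} --- is that the high-rank requirement on $\mathcal{D}_C$ from Definition~\ref{def:quasirandom_function}, combined with the minimality-of-representation argument used in Lemma~\ref{lemma:symbolic_to_decay}, forces $\phi_i$ to be genuinely \emph{non-constant} on all but at most one of the $T$ flagged coordinates: if two coordinates both had $\phi_i$ constant and $\neq 1$, their constants could be absorbed into the global phase $\theta$, lowering $\Delta_{{\sf symbolic}}(D_PD_QD_R,1)$ below its asserted value. With ``$\not\equiv 1$'' read as ``non-constant'' (up to one coordinate, harmless after adjusting $T$), your convexity estimate closes the argument with no appeal at all to $(\mathbb{Z},+)$-embeddings or to consistency with the {\gesystem}.
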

	\begin{proof}
		Let $\mathcal{J}$ be the set of $i$'s
		such that $(P_iQ_iR_i)\not\equiv 1$ under $\mu_C$.
		Then the left-hand side is equal to
		\[
        \prod\limits_{i\in \mathcal{J}}
		\card{\Expect{(x_i,y_i,z_i)\sim \mu_C}{\chi_{P_i}(\sigma_{s_1}(x_i))\chi_{Q_i}(\sigma_{s_2}(y_i))\chi_{R_i}(\sigma_{s_3}(z_i))}},
		\]
		and it suffices to upper-bound each
		one of the terms by $1-\Omega_{\alpha,\card{\Sigma}}(1)$. Fix $i\in \mathcal{J}$, and note that
		the values that $P_iQ_iR_i$
		may receive are discrete, all have absolute value $1$ and they are $\Omega_{\card{\Sigma}}(1)$ far apart in
		absolute value. Thus, there is a fixed constant $c_{\alpha, \card{\Sigma}}>0$ such
		that for $i\in \mathcal{J}$, we have
		$$\card{\Expect{(x_i,y_i,z_i)\sim \mu_C}{\chi_{P_i}(\sigma_{s_1}(x_i))\chi_{Q_i}(\sigma_{s_2}(y_i))\chi_{R_i}(\sigma_{s_3}(z_i))}}\leq 1-c_{\alpha, \card{\Sigma}},$$ and the fact follows.
	\end{proof}

A similar statement is also true if we replace the distribution with $\mathcal{D}_{\mathcal{T}}$.
\begin{fact}\label{fact:trivial_equivalency_group_2}
		Suppose $P \in \spn(\mathcal{P}_{s_1})$, $Q \in \spn(\mathcal{P}_{s_2})$, $R \in \spn(\mathcal{P}_{s_3})$ are such that $(P_iQ_iR_i)\not\equiv  1$
		for at least $T$ coordinates under the distribution $\mu_C$. If the SDP solution and the solutions to the {\gesystem} are consistent, we have that
		\[
		\card{\Expect{(\V a,\V b,\V c)\sim \mathcal{D}_{\mathcal{T}}}{\chi_P(\V a)\chi_Q(\V b)\chi_R(\V c)}}\leq (1-\Omega_{\card{\Sigma}}(1))^{T}.
		\]
	\end{fact}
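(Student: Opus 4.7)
The plan is to exploit the consistency guarantee (Lemma~\ref{lemma:PQR_local_global}) and the fact that $\mathcal{T}$ is a subgroup to reduce this to a trivial character-orthogonality computation. In fact, the plan will show that the expectation is actually exactly $0$ whenever $T \geq 1$; the stated bound $(1-\Omega_{\card{\Sigma}}(1))^T$ is then immediate.

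First I would observe that under $\mathcal{D}_{\mathcal{T}}$ the tuples $(a_i,b_i,c_i)$ for $i=1,\ldots,R$ are independent, and each is distributed as the projection of a uniformly random $\V\alpha\in\mathcal{T}$ onto the coordinates $(s_1,s_2,s_3)$. Since $\mathcal{T}\subseteq\prod_v H_v^\star$ is the solution set of a system of linear equations over an Abelian group, it is a subgroup, and the coordinate projection $\pi\colon\mathcal{T}\to H_{s_1}^\star\times H_{s_2}^\star\times H_{s_3}^\star$ is a group homomorphism with image $\mathcal{T}|_C$. Hence the pushforward of the uniform measure on $\mathcal{T}$ under $\pi$ is the uniform measure on $\mathcal{T}|_C$.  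Consequently the expectation factorizes as
\[
\Expect{(\V a,\V b,\V c)\sim \mathcal{D}_{\mathcal{T}}}{\chi_P(\V a)\chi_Q(\V b)\chi_R(\V c)}
=\prod_{i=1}^R \Expect{(a,b,c)\sim U(\mathcal{T}|_C)}{\chi_{P_i}(a)\chi_{Q_i}(b)\chi_{R_i}(c)}.
\]

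Next I would focus on a single index $i$ for which $P_iQ_iR_i \not\equiv 1$ on ${\sf supp}(\mu_C)$. Applying Lemma~\ref{lemma:PQR_local_global} (whose hypotheses are satisfied thanks to the consistency of the SDP solution with $\mathcal{T}$, guaranteed by Lemma~\ref{lemma:sdp_noZ_local_globalembed}), I conclude that the character $(a,b,c)\mapsto \chi_{P_i}(a)\chi_{Q_i}(b)\chi_{R_i}(c)$ is not identically $1$ on $\mathcal{T}|_C$. This is a character of the finite Abelian group $\mathcal{T}|_C$, and any non-trivial character has zero average under the uniform distribution; hence the $i$-th factor vanishes.

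Putting this together, whenever at least $T\geq 1$ coordinates $i$ satisfy $P_iQ_iR_i\not\equiv 1$ under $\mu_C$, at least one factor in the product is $0$, so the entire expectation equals $0$ and is in particular bounded above by $(1-\Omega_{\card{\Sigma}}(1))^T$. I do not anticipate any real obstacle here: the only subtlety is making sure that the projection of the uniform distribution on the subgroup $\mathcal{T}$ really is uniform on $\mathcal{T}|_C$, which is a standard consequence of $\pi$ being a surjective group homomorphism with equal-size fibers. Everything else is a direct invocation of Lemma~\ref{lemma:PQR_local_global} and character orthogonality.
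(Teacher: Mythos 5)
Your proof is correct, and it in fact yields a stronger conclusion than the paper does. The paper's argument (in its own proof of this fact) parallels the one it gives for Fact~\ref{fact:trivial_equivalency2}: for each bad coordinate $i\in\mathcal{J}$ it exhibits two tuples in the support of $\mathcal{D}_\mathcal{T}^{(i)}$ on which $\chi_{P_i}\chi_{Q_i}\chi_{R_i}$ takes distinct values, uses that all such values are roots of unity of bounded order (hence $\Omega_{|\Sigma|}(1)$--separated), and that each atom has probability $\Omega_{|\Sigma|}(1)$, to conclude that each factor is at most $1 - c_{|\Sigma|}$. You instead observe that under consistency $\mathcal{T}|_C = {\sf span}({\sf supp}(\mu_C))$ is a \emph{subgroup}, and that the per-coordinate pushforward of the uniform measure on $\mathcal{T}$ is the uniform measure on $\mathcal{T}|_C$, so Lemma~\ref{lemma:PQR_local_global} tells you the restriction of $\chi_{P_i}\chi_{Q_i}\chi_{R_i}$ to $\mathcal{T}|_C$ is a non-trivial character, whose average is therefore exactly $0$. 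Thus each factor with $i\in\mathcal{J}$ vanishes identically, and the entire expectation is $0$ once $T\geq 1$. This is cleaner and strictly stronger than what the paper proves; the paper presumably states the $(1-\Omega(1))^T$ bound merely to mirror the form of Fact~\ref{fact:trivial_equivalency2}, where (since $\mu_C$ is not a group) no exact vanishing is available.

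One small inaccuracy worth noting: you assert that $\mathcal{T}$ itself is a subgroup of $\prod_v H_v^\star$. That is not quite right, since the {\gesystem} contains inhomogeneous equations such as $y_{s_1}^{\vec\chi_1} y_{s_2}^{\vec\chi_2} y_{s_3}^{\vec\chi_3} = \chi(g)$ with $\chi(g)\neq 1$; so $\mathcal{T}$ is in general a \emph{coset} of a subgroup, not a subgroup. This does not harm your argument. The projection $\pi$ restricted to a coset still has equal-size fibers (they are cosets of $\ker\pi$ intersected with the underlying subgroup), so the pushforward of the uniform measure is still uniform. And $\mathcal{T}|_C$ is a genuine subgroup precisely because consistency forces it to equal ${\sf span}({\sf supp}(\mu_C))$, which is defined as a subgroup -- so the character orthogonality step is sound. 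Just be careful about the stated justification.
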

	\begin{proof}
		Again, for the set $\mathcal{J}$ as before, we have

		\[
		\card{\Expect{(\V a,\V b,\V c)\sim \mathcal{D}_{\mathcal{T}}}{\chi_P(\V a)\chi_Q(\V b)\chi_R(\V c)}}
		=\prod\limits_{i\in \mathcal{J}}
		\card{\Expect{(a_i,b_i,c_i)\sim \mathcal{D}_{\mathcal{T}}^{(i)}}{\chi_{P_i}(a_i)\chi_{Q_i}(b_i)\chi_{R_i}(c_i)}},
		\]
		where we let $\mathcal{D}_{\mathcal{T}}^{(i)}$ be the tuple on $(s_1, s_2, s_3)$ corresponding to the assignment $\V \alpha^{(i)}$. Here, we used the fact that the SDP solution and the solutions to the {\gesystem} are consistent, and hence, for $i\notin \mathcal{J}$, $P_iQ_iR_i\equiv 1$ under $\mathcal{D}_{\mathcal{T}}^{(i)}$ by Lemma~\ref{lemma:PQR_local_global}. It suffices to upper-bound each
		one of the terms in the above product by $1-\Omega_{\card{\Sigma}}(1)$. Fix $i\in \mathcal{J}$. We know that $(P_iQ_iR_i)\not\equiv  1$ under the distribution $\mu_C$. As the SDP solution and the solutions to the {\gesystem} are consistent, again using Lemma~\ref{lemma:PQR_local_global}, we have $\chi_{P_i}(a)\chi_{Q_i}(b)\chi_{R_i}(c)\neq \chi_{P_i}(a')\chi_{Q_i}(b')\chi_{R_i}(c')$ for some $(a, b, c)\neq (a', b', c')$ in the support of $\mathcal{D}_{\mathcal{T}}^{(i)}$. The probability that this $(a, b, c)$ is sampled according to $\mathcal{D}_{\mathcal{T}}^{(i)}$ is at least $\Omega_{|\Sigma|}(1)$.

        Similar to the above proof, the distinct values of $\chi_{P_i}\chi_{Q_i}\chi_{R_i}$ are $\Omega_{\card{\Sigma}}(1)$ far apart in
		absolute value. Thus, there is a fixed constant $c_{\card{\Sigma}}>0$ such
		that for $i\in \mathcal{J}$, we have
		$$\card{\Expect{(a_i,b_i,c_i)\sim \mathcal{D}_{\mathcal{T}}^{(i)}}{\chi_{P_i}(a_i)\chi_{Q_i}(b_i)\chi_{R_i}(c_i)}}\leq 1-c_{\card{\Sigma}},$$ and the fact follows.
	\end{proof}

We now prove the lemma.

	\begin{lemma}\label{lem:bound_sd_mixed_inv_csp}
		Fix $\constraint\in \calC$. Suppose the sizes of each one of
		$\mathcal{P}_{s_1},\mathcal{P}_{s_2},\mathcal{P}_{s_3}$
		is at most $r$, and that the size of each one of $H_{s_1}^\star$ $H_{s_2}^\star$, $H_{s_3}^\star$ is at most $m$. Suppose further that for any $P\in\spn(\mathcal{P}_{s_1})$,
		$Q\in\spn(\mathcal{P}_{s_2})$
		and $R\in\spn(\mathcal{P}_{s_3})$,
		it holds that either $PQR\equiv 1$ under $\mu_C^R$
		or else $P_iQ_iR_i\not\equiv 1$ under $\mu_C$
		for at least $T$ of the coordinates
		$i\in [R]$. Then
		\[
		{\sf SD}(\chi(\mathcal{D}_{\mu_{\constraint}}), \chi(\mathcal{D}_{\mathcal{T}}))
		\lll_{m,r}
		(1-\Omega_{m,\alpha}(1))^{T}.
		\]
	\end{lemma}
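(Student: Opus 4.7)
The plan is to view both $\chi(\mathcal{D}_{\mu_{\constraint}})$ and $\chi(\mathcal{D}_{\mathcal{T}})$ as distributions on a single finite Abelian group whose size is bounded independently of $R$, and then derive the statistical-distance bound by Parseval and Cauchy--Schwarz, with Facts~\ref{fact:trivial_equivalency2} and~\ref{fact:trivial_equivalency_group_2} controlling every Fourier coefficient. The key observation is that the entire output $(\chi_{P}(\V a), \chi_{Q}(\V b), \chi_{R}(\V c))_{P,Q,R}$ is a deterministic function of the character values at any fixed generating set of $\spn(\mathcal{P}_{s_{1}})$, $\spn(\mathcal{P}_{s_{2}})$, $\spn(\mathcal{P}_{s_{3}})$. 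Picking such generators (of total size at most $3r$), both distributions factor through the coarser map $f$ into the Abelian group $A := \mu_{m}^{D}$, where $D\leq 3r$ and $\mu_{m}$ denotes the group of $m$-th roots of unity (with $m$ large enough that all character values of $H^{\star}_{s_{j}}$ lie in $\mu_{m}$, which costs only $m=O_{m}(1)$). By the data-processing inequality, it suffices to bound ${\sf SD}(p,q)$ where $p := f(\mathcal{D}_{\mu_{\constraint}})$ and $q := f(\mathcal{D}_{\mathcal{T}})$ live on $A$, and crucially $|A|=O_{m,r}(1)$ is independent of $R$.

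Next I would identify the characters of $A$ with triples $(P,Q,R)\in\spn(\mathcal{P}_{s_{1}})\times \spn(\mathcal{P}_{s_{2}})\times \spn(\mathcal{P}_{s_{3}})$: an exponent tuple of the generators corresponds, via the multiplicativity $\chi_{PP'}=\chi_{P}\chi_{P'}$, to a unique such triple, and evaluating the associated character of $A$ against $p$ (respectively $q$) produces exactly $\E_{\mathcal{D}_{\mu_{\constraint}}}[\chi_{P}(\V a)\chi_{Q}(\V b)\chi_{R}(\V c)]$ (respectively $\E_{\mathcal{D}_{\mathcal{T}}}[\chi_{P}(\V a)\chi_{Q}(\V b)\chi_{R}(\V c)]$). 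For the trivial character both coefficients equal $1$. For any non-trivial character, the hypothesis of the lemma dichotomizes: if $PQR\equiv 1$ on ${\sf supp}(\mu_{\constraint}^{R})$, then applying Lemma~\ref{lemma:PQR_local_global} coordinate-wise (which is valid because of the consistency established in Lemma~\ref{lemma:sdp_noZ_local_globalembed}) gives that also $PQR\equiv 1$ on ${\sf supp}(\mathcal{D}_{\mathcal{T}})$, so both Fourier coefficients equal $1$ and the difference vanishes; otherwise at least $T$ coordinates satisfy $P_{i}Q_{i}R_{i}\not\equiv 1$ on $\mu_{\constraint}$, and Facts~\ref{fact:trivial_equivalency2} and~\ref{fact:trivial_equivalency_group_2} each bound the corresponding expectation by $(1-c)^{T}$ for some $c=\Omega_{m,\alpha}(1)$, so the difference is at most $2(1-c)^{T}$.

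Combining the per-character bounds with Parseval on $A$ then yields $\|p-q\|_{2}^{2} = \frac{1}{|A|}\sum_{\psi}\bigl|\widehat{(p-q)}(\psi)\bigr|^{2} \leq 4(1-c)^{2T}$, and Cauchy--Schwarz gives $\|p-q\|_{1}\leq \sqrt{|A|}\cdot \|p-q\|_{2}\leq 2\sqrt{|A|}(1-c)^{T} = O_{m,r}\bigl((1-c)^{T}\bigr)$, so ${\sf SD}(p,q)\lll_{m,r}(1-c)^{T}$, as required. The main technical obstacle is the compression step itself: the naive ambient group $(H^{\star}_{s_{1}}\times H^{\star}_{s_{2}}\times H^{\star}_{s_{3}})^{R}$ has cardinality exponential in $R$, and without first reducing to the group $A$ of bounded size the standard $L_{1}$--$L_{2}$ conversion through Parseval would be vacuous. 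Arranging that simultaneously $|A|=O_{m,r}(1)$ and that every character of $A$ is precisely of the form $(P,Q,R)$ to which Facts~\ref{fact:trivial_equivalency2} and~\ref{fact:trivial_equivalency_group_2} apply (so that the hypothesis of the lemma directly controls every Fourier coefficient) is what makes the argument tight.
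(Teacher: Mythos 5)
Your proof is correct, and it takes a genuinely different route from the paper's. Both proofs ultimately bound the same quantities $\E_{\mathcal{D}_{\mu_C}}[\chi_P\chi_Q\chi_R]$ and $\E_{\mathcal{D}_{\mathcal{T}}}[\chi_P\chi_Q\chi_R]$ using Lemma~\ref{lemma:PQR_local_global} together with Facts~\ref{fact:trivial_equivalency2} and~\ref{fact:trivial_equivalency_group_2}, but they organize the reduction to these quantities differently. The paper works directly on the support of $\chi(\mathcal{D})$ (which is of size $O_{m,r}(1)$ since it is indexed by $\spn(\mathcal{P}_{s_1})\times\spn(\mathcal{P}_{s_2})\times\spn(\mathcal{P}_{s_3})$): it arithmetizes each indicator $1_{\chi_P(a)=a_P}$ via Lagrange interpolation, expands the product to obtain coefficients $B(P,Q,R)$ of size $O_{m,r}(1)$, bounds the pointwise difference of probability masses term-by-term, and then sums over the $O_{m,r}(1)$ atoms. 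You instead observe that the map $\chi$ factors through evaluation at a generating set, apply data processing to land on an Abelian group $A$ of size $O_{m,r}(1)$, and then run Fourier analysis on $A$: every character of $A$ has Fourier coefficient equal to one of the controlled expectations, so Parseval plus Cauchy--Schwarz conclude. Your route buys an explicit structural account of why the ambient space is small (the compression to $A$), and it avoids tracking the Lagrange coefficients $B(P,Q,R)$, which is a bit cleaner; the paper's approach is more hands-on but sidesteps setting up Fourier duality. One small imprecision to flag: a non-trivial character of $A$ need not correspond to a non-trivial triple $(P,Q,R)$, since the exponent-vector-to-$\spn$ map can be non-injective if the chosen generators satisfy relations. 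This does no harm — if $(P,Q,R)=(1,1,1)$ or more generally $PQR\equiv 1$, the Fourier coefficient of $p-q$ vanishes by Lemma~\ref{lemma:PQR_local_global}, so the per-character bound of $2(1-c)^T$ still holds uniformly — but the dichotomy should be stated over triples $(P,Q,R)$ rather than over ``non-trivial characters.''
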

 \begin{proof}
 Take any tuple
		$S = (a_{P}, b_{Q}, c_{R})_{P\in\spn(\mathcal{P}_{s_1}),Q\in\spn(\mathcal{P}_{s_2}),
			R\in\spn(\mathcal{P}_{s_3})}$ in either
		one of the supports. Consider their contribution to the two distributions.

        	\[
		\chi(\mathcal{D}_{\mathcal{T}})(S)
		=
		\Expect{(\V a,\V b,\V c)\sim \mathcal{D}_\mathcal{T}}{\prod\limits_{P\in\spn(\mathcal{P}_{s_1})}
			1_{\chi_P(a) = a_P}
			\prod\limits_{Q\in\spn(\mathcal{P}_{s_2})}
			1_{\chi_{Q}(b) = b_Q}
			\prod\limits_{R\in\spn(\mathcal{P}_{s_3})}
			1_{\chi_R(c) = c_R}}
		\]
		and
		\[
		\chi(\mathcal{D}_{\mu_{\constraint}})(S)
		=
		\Expect{(\V a, \V b, \V c)\sim \mathcal{D}_{\mu_{\constraint}}}{\prod\limits_{P\in\spn(\mathcal{P}_{s_1})}
			1_{\chi_P(a) = a_P}
			\prod\limits_{Q\in\spn(\mathcal{P}_{s_2})}
			1_{\chi_{Q}(b) = b_Q}
			\prod\limits_{R\in\spn(\mathcal{P}_{s_3})}
			1_{\chi_R(c) = c_R}}.
		\]
		Arithmetizing the indicators as
		$1_{\chi_P(a) = a_P} = \prod\limits_{\substack{a'\neq a_P\\ a'\in {\sf Image}(P)}}\frac{\chi_P(a) - a'}{a_P-a'}$ and similarly for the
		other ones, then opening things up,
		we get that there are coefficients
		$B(P,Q,R)$ that are at most $O_{m,r}(1)$ in absolute value such that
		\[
		\chi(\mathcal{D}_{\mathcal{T}})(S)
		=\sum\limits_{
			\substack{P\in\spn(\mathcal{P}_{s_1})
				\\
				Q\in\spn(\mathcal{P}_{s_2})
				\\
				R\in\spn(\mathcal{P}_{s_3})}}
		B(P,Q,R)\Expect{(\V a,\V b,\V c)\sim \mathcal{D}_\mathcal{T}}{\chi_P(a)\chi_Q(b)\chi_R(c)}
		\]
        and
		\[
        \chi(\mathcal{D}_{\mu_{\constraint}})(S)
		=\sum\limits_{
			\substack{P\in\spn(\mathcal{P}_{s_1})
				\\
				Q\in\spn(\mathcal{P}_{s_2})
				\\
				R\in\spn(\mathcal{P}_{s_3})}}
		B(P,Q,R)\Expect{(\V a,\V b,\V c)\sim \mathcal{D}_{\mu_{\constraint}}}{\chi_P(a)\chi_Q(b)\chi_R(c)}.
		\]
		Fix $P,Q,R$ and consider their contribution to $\chi(\mathcal{D}_{\mathcal{T}})(S)$
		and $\chi(\mathcal{D}_{\mu_{\constraint}})(S)$.
		For $P, Q, R$ such that $PQR\equiv 1$ in the support of $\mu^R$,
		the two contributions are the same. This follows from Lemma~\ref{lemma:PQR_local_global}.
		Else, by assumption $P_iQ_iR_i\not\equiv 1$ under $\mu_C$ for at least
		$T$ coordinates, and by Fact~\ref{fact:trivial_equivalency2} the second expectations is at most
		$(1-\Omega_{m,\alpha}(1))^{T}$ in
		absolute value. Since the SDP solution is consistent with the {\gesystem}, using Fact~\ref{fact:trivial_equivalency_group_2}, the same holds for the first expectation for the terms corresponding to such $P_iQ_iR_i$. It follows that
		\[
		\card{
			\chi(\mathcal{D}_{\mathcal{T}})(S)
			-
			\chi(\mathcal{D}_{\mu_{\constraint}})(S)}
		\leq \sum\limits_{
			\substack{P\in\spn(\mathcal{P}_{s_1})
				\\
				Q\in\spn(\mathcal{P}_{s_2})
				\\
				R\in\spn(\mathcal{P}_{s_3})}}
		\card{B(P,Q,R)}(1-\Omega_{m,\alpha}(1))^{T}
		\lll_{m,r}
		(1-\Omega_{m,\alpha}(1))^{T}.
		\]
		Therefore, we get that
		${\sf SD}(\chi(\mathcal{D}_{\mu_{\constraint}}), \chi(\mathcal{D}_{\mathcal{T}}))\lll_{m,r}
		\sum\limits_{S}(1-\Omega_{m,\alpha}(1))^{T}\lll_{m,r}(1-\Omega_{m,\alpha}(1))^{T}$.
 \end{proof}

 Now, consider the following expectation where both the inputs to the function $\widetilde{\bF^{\sf dec}_{s, \sigma}}$ are sampled using global distributions.

$$
(\rom{3})= \Expect{\mathcal{D}_{\mathcal{T}}, \calG^R}{\constraint(\widetilde{\bF^{\sf dec}_{s_1}}({\V a}_{s_1}, {\V g}_{s_1}), \widetilde{\bF^{\sf dec}_{s_2}}({\V a}_{s_2}, {\V g}_{s_2}), \widetilde{\bF^{\sf dec}_{s_3}}({\V a}_{s_3}, {\V g}_{s_3}))}.
$$
As the quantities inside the expectations from $(\rom{2})$ and $(\rom{3})$ are bounded by $O_q(1)$, using the above lemma, we have

\begin{equation}
    \label{eq:move_to_global_dist}
\card{(\rom{2}) - (\rom{3})} \lll_q {\sf SD}(\chi(\mathcal{D}_{\mu_{\constraint}}), \chi(\mathcal{D}_{\mathcal{T}}))\lll_{m,r}(1-\Omega_{m,\alpha}(1))^{T} \lll \xi,
\end{equation}
by choosing an appropriately large value of $T$.\skipi

Combining Equations (\ref{eq:dict_test_passing}), (\ref{eq:move_to_dec_fun})and (\ref{eq:move_to_global_dist}), we get

 \begin{align}\label{eq:testpassing_after_invariance}
		\card{\Pr[\bcalF\mbox{ passes }\dict_{\V V, \V \mu}] -  \Expect{{\constraint}\sim \calC}  {\Expect{(\mathcal{D}_\mathcal{T}, \calG^{R}) } {{\constraint}(\widetilde{\bF^{\sf dec}_{s_1}}({\V a}_{s_1}, {\V g}_{s_1}), \widetilde{\bF^{\sf dec}_{s_2}}({\V a}_{s_2}, {\V g}_{s_2}), \widetilde{\bF^{\sf dec}_{s_3}}(\V a_{s_3}, {\V g}_{s_3}))]}}} \lll \xi.
\end{align}

The second expectation in the above expression almost looks like the approximation guarantee of our rounding algorithm given in Figure~\ref{fig:rounding_scheme}. The only difference is that we scale the vector $\widetilde{\bF^{\sf dec}_{s}}({\V a}, {\V g})\in \mathbb{R}^q$ in step $4$.  We now analyze the loss that occurs because of this scaling. Let $\bF^{\sf dec}_{s}({\V a}, {\V g})^\star\in \simplex_q$ be the vector obtained by scaling it using the function $\scale$ defined in step $4$ of the rounding scheme $\Round_{\bcalF}$ below. The expected value of the solution returned by the rounding scheme $\Round_{\bcalF}$ is given by:
\begin{equation}\label{eq:rounding_scheme}
    \Round_{\bcalF}(\V V, \V \mu) = \Expect{{\constraint}\sim \calC}{ \Expect{(\mathcal{D}_\mathcal{T}, \calG^{R})}{{\constraint}(\bF^{{\sf dec}}_{s_1}({\V a}_{s_1}, {\V g}_{s_1})^\star, \bF^{{\sf dec}}_{s_2}({\V a}_{s_2}, {\V g}_{s_2})^\star, \bF^{{\sf dec}}_{s_3}({\V a}_{s_3}, {\V g}_{s_3})^\star) }}.
\end{equation}

Let $\delta = o_{\xi}(1)$. Here, the notation $o_{\xi}(1)$ means that the expression goes to $0$ as $\xi$ goes to $0$. Fix a constraint $C$ and a variable $s\in \calV(C)$. Let $({\V z}_{s}, {\V z'_{s}}, {\V Z_{s}})$ to be distributed according to the coupled distribution from Lemma~\ref{lem:coupling_decoupled} for $\kappa$ such that $T^{-1}\ll\kappa\ll d^{-1},\xi$. Let $E_s$ be the event that $\sum_j \mathfrak{Re}({\tilde{\calF}^{{\sf dec}}_{s, j}}({\V z}_{s}, {\V z'_{s}}) )\in [1- \delta, 1+\delta]$. The next claim shows that the event $E_s$ occurs with high probability over $({\V z}_{s}, {\V z'_{s}})$. We will use this later to show that the effect of truncation and scaling has little effect on the expression (\ref{eq:rounding_scheme}).

 \begin{claim}
 \label{claim:close_to_1_csp}
    For every $s\in \calV$, $\Pr[E_s]\geq 1-\delta$.
 \end{claim}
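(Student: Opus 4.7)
The plan is to bound the $L_2$-distance of $\sum_j \tilde{\calF}^{\sf dec}_{s,j}(\V z_s, \V z'_s)$ from the constant function $1$, and then apply Chebyshev's inequality.

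First, by Claim~\ref{claim:noisy_sum_1_csp}, we have $\sum_j \mathrm{T}_{\calP_s, 1-\eps}\calF_{s,j}(\V Z_s) = 1$ pointwise. Second, by the defining property of the decomposition $\calF'_{s,j}$ and Lemma~\ref{lem:approx_formula_noise_op_fancy}, we have
$\|\calF'_{s,j} - \mathrm{T}_{\calP_s, 1-\eps}\calF_{s,j}\|_2 \leq \eps$ for every $j$, and hence by the triangle inequality
\[
\Big\|\sum_j \calF'_{s,j} - 1\Big\|_2 \leq q\eps.
\]
Third, applying Lemma~\ref{lem:decoupled_1} with the coupling $(\V z_s, \V z'_s, \V Z_s)$ to each of the functions $\calF'_{s,j}$ (which is, by construction, exactly the $\tilde{f}$ of that lemma for the choice of collection $\calP_s$), we obtain
\[
\Expect{}{\big|\tilde{\calF}^{\sf dec}_{s,j}(\V z_s, \V z'_s) - \calF'_{s,j}(\V Z_s)\big|^2} \lll_{d,q} \kappa
\]
for every $j$, so combining via Cauchy--Schwarz gives
\[
\Expect{}{\Big|\sum_j \tilde{\calF}^{\sf dec}_{s,j}(\V z_s, \V z'_s) - \sum_j \calF'_{s,j}(\V Z_s)\Big|^2} \lll_{d,q} \kappa.
\]

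Combining the last two estimates by the triangle inequality yields
\[
\Big\|\sum_j \tilde{\calF}^{\sf dec}_{s,j}(\V z_s, \V z'_s) - 1\Big\|_2^2 \lll_{d,q} \kappa + (q\eps)^2.
\]
Since the parameters are chosen so that $\kappa, \eps \ll \delta$ (recall $T^{-1}\ll \kappa \ll d^{-1}, \xi$ and $\delta = o_\xi(1)$ with $\xi$ small relative to the parameters governing $\calF'_{s,j}$), this upper bound is at most $\delta^3$. By Chebyshev's inequality,
\[
\Prob{}{\Big|\sum_j \tilde{\calF}^{\sf dec}_{s,j}(\V z_s, \V z'_s) - 1\Big| \geq \delta} \leq \frac{\delta^3}{\delta^2} = \delta.
\]
Finally, since $1$ is real, the event $\big|\sum_j \tilde{\calF}^{\sf dec}_{s,j}(\V z_s, \V z'_s) - 1\big| \leq \delta$ implies $\sum_j \mathfrak{Re}\big(\tilde{\calF}^{\sf dec}_{s,j}(\V z_s, \V z'_s)\big) \in [1-\delta, 1+\delta]$, which is precisely the event $E_s$. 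Hence $\Pr[E_s] \geq 1-\delta$.

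The only mildly delicate point is to verify that the parameter hierarchy asserted just before the claim ($T^{-1}\ll\kappa\ll d^{-1}, \xi$, together with the choice of $\delta = o_\xi(1)$) indeed allows $\kappa$ and $q\eps$ to be made as small as the desired power of $\delta$; this can be arranged by choosing $\delta$ to be, e.g., $(\kappa + q^2\eps^2)^{1/4}$, which still tends to $0$ as $\xi \to 0$.
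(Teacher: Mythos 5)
Your proof is correct and follows essentially the same route as the paper: bound $\sum_j\tilde{\calF}^{\sf dec}_{s,j}-1$ in $L_2$ via Claim~\ref{claim:noisy_sum_1_csp}, Lemma~\ref{lem:approx_formula_noise_op_fancy}, and Lemma~\ref{lem:decoupled_2} under the coupling from Lemma~\ref{lem:coupling_decoupled}, then finish with a concentration inequality (you use Chebyshev, where the paper rolls its own Markov-type bound via the auxiliary function $\varrho'$, but the two are interchangeable here). One notational slip: the $L_2$-approximation parameter in Lemma~\ref{lem:approx_formula_noise_op_fancy} is $\xi$, not $\eps$ (which is reserved for the noise rate of $\mathrm{T}_{\calP_s,1-\eps}$), so the occurrences of $\eps$ in your bound $\|\calF'_{s,j}-\mathrm{T}_{\calP_s,1-\eps}\calF_{s,j}\|_2\leq\eps$ and in the final choice $\delta=(\kappa+q^2\eps^2)^{1/4}$ should read $\xi$.
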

 \begin{proof}
Let $1\gg\delta'>0$ to be fixed later. Define $\varrho'(x)$ where  $\varrho'(x)= |\mathfrak{Re}(x)|$ if $\mathfrak{Re}(x)\in [-\delta'^2, \delta'^2]$ and $1$ otherwise. Consider the function $\zeta(b) = |1-b|$ such that the function. Let $({\V z}_{s}, {\V z'_{s}}, {\V Z_{s}})$ to be distributed according the coupled distribution from Lemma~\ref{lem:coupling_decoupled}. As $\zeta(1+b) = |b|$ for every $ b\in \mathbb{C}$ we get
\begin{align*}
  &\E_{(\V z_s, \V z'_s, \V Z_s)} \left[\varrho'\circ\zeta\left(\sum_j{\tilde{\calF}^{{\sf dec}}_{s, j}}({\V z_s}, \V z'_s)\right) \right]\\
  & \qquad= \E_{(\V z_s, \V z'_s, \V z_s)} \left[\varrho'\circ\zeta\left(\sum_j{\tilde{\calF}^{{\sf dec}}_{s, j}}({\V z_s}, \V z'_s) + \sum_j \mathrm{T}_{\calP_s, 1-\eps}\calF_{s,j}(\V Z_s) -  \sum_j \mathrm{T}_{\calP_s, 1-\eps}\calF_{s,j}(\V Z_s)\right) \right]\\
  & \qquad=  \E_{(\V z_s, \V z'_s, \V z_s)}\left[ \varrho'\left(\left|\sum_j{\tilde{\calF}^{{\sf dec}}_{s, j}}({\V z_s}, \V z'_s) - \sum_j \mathrm{T}_{\calP_s, 1-\eps}\calF_{s,j}(\V Z_s) \right|\right)\right]. \tag{Using Claim~\ref{claim:noisy_sum_1_csp}}
  \end{align*}
  Now, using the fact that $\varrho'(|x|)\leq \delta'^{-2}|x|$ for all $x$, we have
  \begin{align*}
    \E_{(\V z_s, \V z'_s, \V Z_s)} \left[\varrho'\circ\zeta\left(\sum_j{\tilde{\calF}^{{\sf dec}}_{s, j}}({\V z_s}, \V z'_s)\right) \right]&\leq \delta'^{-2}  \E_{(\V z_s, \V z'_s, \V Z_s)}\left[ \left|\sum_j{\tilde{\calF}^{{\sf dec}}_{s, j}}({\V z_s}, \V z'_s) - \sum_j \mathrm{T}_{\calP_s, 1-\eps}\calF_{s,j}(\V Z_s) \right|\right] \\
&\leq  \delta'^{-2} \E_{(\V z_s, \V z'_s, \V Z_s)}\left[ \left|\sum_j{\tilde{\calF}^{{\sf dec}}_{s, j}}({\V z_s}, \V z'_s) - \sum_j \mathrm{T}_{\calP_s, 1-\eps}\calF_{s,j}(\V Z_s) \right|^2\right]^{1/2}\\
&\lll_q \delta'^{-2} \sum_j   \E_{(\V z_s, \V z'_s, \V Z_s)}\left[ \left| \tilde{\calF}^{{\sf dec}}_{s, j}({\V z_s}, \V z'_s) -  \mathrm{T}_{\calP_s, 1-\eps}\calF_{s,j}(\V Z_s) \right|^2\right]^{1/2},
 \end{align*}
where the second line is the Cauchy-Schwarz inequality. Now,
\begin{align*}
      &\E_{(\V z_s, \V z'_s, \V Z_s)}\left[ \left| \tilde{\calF}^{{\sf dec}}_{s, j}({\V z_s}, \V z'_s) -  \mathrm{T}_{\calP_s, 1-\eps}\calF_{s,j}(\V Z_s) \right|^2\right] \\
      &\qquad = \sum_{j=1}^q \Expect{}{\card{\tilde{\calF}^{{\sf dec}}_{s,j}({\V z}_{s}, {\V z'_{s}})-\mathrm{T}_{\calP_s, 1-\eps}\calF_{s, j}({\V Z}_{s})}_2^2}\\
       &\qquad = \sum_{j=1}^q \Expect{}{\card{\tilde{\calF}^{{\sf dec}}_{s,j}({\V z}_{s}, {\V z'_{s}})- \calF'_{s, j}({\V Z}_{s}) + \calF'_{s, j}({\V Z}_{s})-  \mathrm{T}_{\calP_s, 1-\eps}\calF_{s, j}({\V Z}_{s})}_2^2}\\
       &\qquad \lll \sum_{j=1}^q \Expect{}{\card{\tilde{\calF}^{{\sf dec}}_{s,j}({\V z}_{s}, {\V z'_{s}})- \calF'_{s, j}({\V Z}_{s})}^2} + \Expect{}{\card{\calF'_{s, j}({\V Z}_{s})-  \mathrm{T}_{\calP_s, 1-\eps}\calF_{s, j}({\V Z}_{s})}_2^2}.
      \end{align*}

  Using Lemma~\ref{lem:approx_formula_noise_op_fancy}, the second expectation is upper bounded by $\xi$ and using Lemma~\ref{lem:decoupled_2}, the first expectation is upper bounded by $\lll_{d,q} {\kappa}\ll \xi$. Thus, we have for every $1\leq j\leq q$,
\[
\E_{(\V z_s, \V z'_s, \V Z_s)}\left[ \left|\tilde{\calF}^{{\sf dec}}_{s, j}({\V z_s}, \V z'_s) -  \mathrm{T}_{\calP_s, 1-\eps}\calF_{s,j}(\V Z_s) \right|^2\right] = o_{\xi}(1).
\]
Therefore, we get
\begin{align}
  \E_{(\V z_s, \V z'_s, \V Z_s)} \left[\varrho'\circ \zeta\left(\sum_j {\tilde{\calF}^{{\sf dec}}_{s, j}}({\V z_s}, \V z'_s)\right) \right]
\lll_q  \frac{o_{\xi}(1)}{\delta'^2}. \label{eq:scale_3}
\end{align}

For some setting of $\delta \approx  \delta'^{-2}o_{\xi}(1)$, we get that with probability at least $1-\delta$ it holds that $\sum_j \mathfrak{Re}({\tilde{\calF}^{{\sf dec}}_{s, j}}({\V z}_{s}, {\V z'_{s}}) )\in [1-\delta, 1+\delta]$. This finishes the proof of the claim.
 \end{proof}

We now remove the truncation and scaling, and see the effect of this on the expression from (\ref{eq:rounding_scheme}). Towards this, we have the following simple claim.

\begin{claim}\label{claim:trunc_scale_vec}
    Let ${\V a}\in \mathbb{R}^q$ such that $\sum_i {a}_i \in [1-\delta,1+\delta]$, and let ${\V a}^\star$ be the vector that we get after scaling the vector $\tilde{\V a} := {\sf trunc}_{[0,1]}(\V a)$ (as defined in Figure~\ref{fig:rounding_scheme}), then
    \[
    \sum_i (\tilde{a}_i - a^\star_i)^2 \leq q \sum_i (a_i - \tilde{a}_i)^2 + O_q(\delta).
    \]
\end{claim}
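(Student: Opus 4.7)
The scaling map $\scale$ applied to $\tilde{\V a}\in[0,1]^q$ is the natural normalization onto the simplex: $a_i^\star = \tilde a_i/\tilde S$ where $\tilde S := \sum_i \tilde a_i$ (with a fixed convention, e.g.\ uniform, when $\tilde S = 0$). The plan is to control $\sum_i(\tilde a_i - a_i^\star)^2$ purely in terms of how far $\tilde S$ is from $1$, and then use that the $\ell_2$ distance between $\V a$ and $\tilde{\V a}$ controls the distance between $S:=\sum_i a_i$ and $\tilde S$.

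First, by Cauchy--Schwarz, $|\tilde S - S|^2 = |\sum_i (\tilde a_i - a_i)|^2 \leq q\sum_i (a_i-\tilde a_i)^2$, and since $|S - 1|\leq \delta$ by hypothesis, this gives
\[
(\tilde S - 1)^2 \leq 2(\tilde S - S)^2 + 2(S-1)^2 \leq 2q\sum_i (a_i-\tilde a_i)^2 + 2\delta^2.
\]
Second, assuming $\tilde S>0$ and plugging in $a_i^\star = \tilde a_i/\tilde S$, we expand
\[
\sum_i (\tilde a_i - a_i^\star)^2 = \Bigl(\frac{\tilde S - 1}{\tilde S}\Bigr)^{\!2} \sum_i \tilde a_i^2 \leq \frac{(\tilde S - 1)^2}{\tilde S},
\]
where in the last step we used the key inequality $\sum_i \tilde a_i^2 \leq \sum_i \tilde a_i = \tilde S$, valid because $\tilde a_i \in [0,1]$.

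The argument then splits on the magnitude of $\tilde S$. In the ``good'' case $\tilde S \geq 1/2$, combining the two displays yields
\[
\sum_i (\tilde a_i - a_i^\star)^2 \leq 2(\tilde S-1)^2 \leq 4q\sum_i (a_i-\tilde a_i)^2 + 4\delta^2,
\]
which is of the required form (up to absorbing the constant into the $q$ and $O_q(\delta)$ terms). In the ``degenerate'' case $\tilde S < 1/2$, we have $|\tilde S - 1| > 1/2$, and the first display forces $\delta + \sqrt{q\sum_i(a_i-\tilde a_i)^2} > 1/2$; thus at least one of $\delta$ or $\sum_i(a_i - \tilde a_i)^2$ is bounded below by an absolute constant. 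Since the trivial bound $\sum_i (\tilde a_i - a_i^\star)^2 \leq q$ holds (each $\tilde a_i, a_i^\star\in[0,1]$), a sufficiently large choice of the hidden constant in $O_q(\delta)$ (together with the $q\sum_i(a_i-\tilde a_i)^2$ term) absorbs this trivial bound.

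The only subtle point is the degenerate case, which is handled by the observation that either $\delta$ or the $\ell_2$ error in the truncation step must already be large when $\tilde S$ drops below $1/2$. Everything else reduces to the single identity $\sum_i \tilde a_i^2 \leq \tilde S$ and Cauchy--Schwarz applied to $\tilde S - S$.
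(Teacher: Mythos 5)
Your argument diverges from the paper's at one crucial point, and that divergence creates a gap in the degenerate case.

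The paper avoids any case split by using the sharper inequality $\sum_i \tilde a_i^2 \le \bigl(\sum_i \tilde a_i\bigr)^2 = \tilde S^2$ (valid because the $\tilde a_i$ are nonnegative), which gives
$\sum_i (\tilde a_i - a_i^\star)^2 = \frac{(\tilde S - 1)^2}{\tilde S^2}\sum_i \tilde a_i^2 \le (\tilde S - 1)^2$
with no $\tilde S$ in the denominator. You instead invoke $\sum_i \tilde a_i^2 \le \tilde S$ (from $\tilde a_i \le 1$), which is weaker and leaves a $1/\tilde S$ factor, and this is what forces your case split. In the ``good'' case $\tilde S \ge 1/2$ your chain is fine up to constant factors (it produces roughly $4q$ rather than $q$ on the leading term, but since $\sum_i(a_i - \tilde a_i)^2$ can be arbitrarily large, this is a genuine weakening of the stated inequality, not something that can be swept into $O_q(\delta)$; it only suffices because the downstream application already carries an $O_q(1)$ prefactor).

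The real problem is the degenerate case. You claim that $\delta + \sqrt{q\sum_i(a_i - \tilde a_i)^2} > 1/2$ forces ``one of $\delta$ or $\sum_i(a_i - \tilde a_i)^2$ to be bounded below by an absolute constant,'' but it does not: in the branch where $\delta$ is small you only get $\sum_i(a_i - \tilde a_i)^2 > \Omega(1/q)$, i.e.\ $q\sum_i(a_i - \tilde a_i)^2 > \Omega(1)$, which is far from the trivial upper bound of $q$ you are trying to absorb. Concretely, with $\delta$ tiny, the trivial bound gives LHS $\le q$, while the right-hand side would only be guaranteed to exceed an absolute constant --- not $q$ --- so the absorption fails. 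As it happens, the degenerate case $\tilde S < 1/2$ simply cannot occur when $\delta < 1/2$: if some $a_i > 1$ then $\tilde a_i = 1$ so $\tilde S \ge 1$, and if all $a_i \le 1$ then $\tilde a_i = \max(0, a_i) \ge a_i$ so $\tilde S \ge S \ge 1-\delta$; thus $\tilde S \ge 1-\delta$ always. Had you noted this, your split would collapse and the argument would go through (with the loss of the constant in the good case). But the argument as written leans on an absorption claim that is false, and this is where the proof breaks. The simplest repair is to replace $\sum_i \tilde a_i^2 \le \tilde S$ with $\sum_i \tilde a_i^2 \le \tilde S^2$, which removes the denominator and, with it, the entire case split.
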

\begin{proof} First, note that $\sum_i \tilde{a}_i >0 $ if $\sum_i {a}_i \in [1-\delta,1+\delta]$. We have
    \[
    \sum_i (\tilde{a}_i - a^\star_i)^2 = \sum_i \left(\tilde{a}_i - \frac{\tilde{a}_i}{\sum_j \tilde{a}_j}\right)^2 = \left(\sum_i \tilde{a}_i - 1\right)^2 \sum_i\frac{\tilde{a}_i^2}{(\sum_j \tilde{a}_j)^2} \leq \left(\sum_i \tilde{a}_i - 1\right)^2,
    \]
    where we used
    $(\sum_j \tilde{a}_j)^2
    \geq \sum_j \tilde{a}_j^2$
    which holds as $\tilde{a}_j$ are non-negative. Now,
    \[
    \left(\sum_i \tilde{a}_i - 1\right)^2 \leq \left(\sum_i \tilde{a}_i - a_i\right)^2 + O_q(\delta)\leq q\sum_i (\tilde{a}_i - a_i)^2+ O_q(\delta),
    \]
    where the last inequality is the Cauchy-Schwarz inequality.
\end{proof}

\begin{claim}
For every constraint $C$ on $(s_1, s_2, s_3)$, if $(\V a_{s_1}, \V a_{s_2}, \V a_{s_3})$ and $(\V g_{s_1}, \V g_{s_2}, \V g_{s_3})$ are distributed according to the distribution from Equation (\ref{eq:rounding_scheme}), then
\begin{align*}
\left|
    \begin{array}{c}
    \E\left[{\constraint}(\bF^{{\sf dec}}_{s_1}({\V a}_{s_1}, {\V g}_{s_1})^\star, \bF^{{\sf dec}}_{s_2}({\V a}_{s_2}, {\V g}_{s_2})^\star, \bF^{{\sf dec}}_{s_3}({\V a}_{s_3}, {\V g}_{s_3})^\star) \right] -\\
   \E\left[{\constraint}(\widetilde{\bF^{{\sf dec}}_{s_1}}({\V a}_{s_1}, {\V g}_{s_1}), \widetilde{\bF^{{\sf dec}}_{s_2}}({\V a}_{s_2}, {\V g}_{s_2}), \widetilde{\bF^{{\sf dec}}_{s_3}}({\V a}_{s_3}, {\V g}_{s_3}))\right]
\end{array}
    \right|\leq o_\tau(1) + o_{\xi}(1)+ O_q(\delta).
\end{align*}
\end{claim}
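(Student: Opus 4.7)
The plan is to apply the Lipschitz property of the smooth extension of $\constraint$ (Property 2 in the paragraph ``Extending $\constraint$ to $\mathbb{C}^{3q}$'') to reduce the claim to showing that, for each $s\in\{s_1,s_2,s_3\}$,
\[
\E\big\|\bF^{{\sf dec}}_{s}(\V a_{s},\V g_{s})^\star-\widetilde{\bF^{{\sf dec}}_{s}}(\V a_{s},\V g_{s})\big\|_2\;\leq\;o_\tau(1)+o_\xi(1)+O_q(\delta).
\]
I would split on the event $E_s'=\{\sum_j\mathfrak{Re}(\calF^{{\sf dec}}_{s,j}(\V a_s,\V g_s))\in[1-\delta,1+\delta]\}$. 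On $\overline{E_s'}$ both vectors are bounded by $O(\sqrt{q})$, so the contribution is $O_q(\Pr[\overline{E_s'}])$. On $E_s'$, writing $a_j=\mathfrak{Re}(\calF^{{\sf dec}}_{s,j}(\V a_s,\V g_s))$ and $\tilde{a}_j={\sf trunc}_{[0,1]}(a_j)=\widetilde{\bF^{{\sf dec}}_{s,j}}(\V a_s,\V g_s)$, Claim~\ref{claim:trunc_scale_vec} gives the pointwise bound $\|\cdot\|_2\le \sqrt{q\sum_j(a_j-\tilde{a}_j)^2}+O_q(\delta)$, so after Cauchy--Schwarz it only remains to verify $(\mathrm{i})$ $\Pr[\overline{E_s'}]=o_\tau(1)+o_\xi(1)$ and $(\mathrm{ii})$ $\E\sum_j(a_j-\tilde{a}_j)^2=\sum_j\E\,\mathrm{dist}(a_j,[0,1])^2=o_\tau(1)+o_\xi(1)$.

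Both $(\mathrm{i})$ and $(\mathrm{ii})$ can be obtained by transporting the proof of Claim~\ref{claim:close_to_1_csp} from the coupled distribution on $(\V z_s,\V z'_s,\V Z_s)$ to the actual rounding distribution $(\V a_s,\V g_s)\sim\mathcal{D}_\mathcal{T}\times\calG^R$. I would do this in two steps. First, by the second part of Theorem~\ref{thm:invariance_principle} applied to the Lipschitz test functions $\varrho'\circ\zeta$ (from the proof of Claim~\ref{claim:close_to_1_csp}) and $\mathrm{dist}(\cdot,[0,1])^2$ composed with $\mathfrak{Re}$, I would swap $\calG^R$ for the local ensemble $\calL_s^R$. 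This swap is legitimate because, by the $(d,\tau)$-quasirandomness of $F$ (Definition~\ref{def:quasirandom_function}), every coordinate influence of $L_{P,s,j}$ is at most $\tau$, and hence the swap costs $o_\tau(1)$. Second, for the $\mathcal{D}_\mathcal{T}$-marginal $\V a_s\in(H_s^\star)^R$, I would compare it with $\sigma_s(\mu_s^R)$ via a single-variable version of the argument of Lemma~\ref{lem:bound_sd_mixed_inv_csp}: the only quantities of $\V a_s$ that enter $\calF^{{\sf dec}}_{s,j}(\V a_s,\cdot)$ are the characters $\prod_i\chi_{P,i}(a_{s,i})$ for $P\in\spn(\calP_s)$, and the consistency between the SDP solution and {\gesystem} (Lemma~\ref{lemma:sdp_noZ_local_globalembed}) together with Lemma~\ref{lemma:PQR_local_global} forces these character expectations to match those under $\sigma_s(\mu_s^R)$ up to $(1-\Omega(1))^{{\sf rk}(\calP_s)}=o_\xi(1)$ for $T$ large enough.

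After these two swaps, $\calF^{{\sf dec}}_{s,j}(\V a_s,\V g_s)$ has essentially the distribution of $\tilde{\calF}^{{\sf dec}}_{s,j}(\V x,\V y)$ for independent $\V x,\V y\sim\mu_s^R$. At this point the proof of Claim~\ref{claim:close_to_1_csp} applies almost verbatim: Lemma~\ref{lem:decoupled_3} together with Lemma~\ref{lem:approx_formula_noise_op_fancy} couples this in $L^2$ with $\mathrm{T}_{\calP_s,1-\eps}\calF_{s,j}(\V Z_s)$, which is $[0,1]$-valued and, by Claim~\ref{claim:noisy_sum_1_csp}, sums identically to $1$ over $j$. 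The first fact gives $(\mathrm{ii})$ because the $L^2$ distance of a $[0,1]$-valued function from $[0,1]$ is zero, and combined with Markov's inequality (as in the chain leading to~\eqref{eq:scale_3}) gives $(\mathrm{i})$ with $\delta=o_\xi(1)/\delta'^2$ for an appropriate auxiliary $\delta'$.

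The main obstacle is the second comparison step. The $\mathcal{D}_\mathcal{T}$-marginal on a single coordinate $s$ is a uniform distribution on a translate of a subgroup of $(H_s^\star)^R$, and it is exactly the consistency property from Definition~\ref{def:tight_system} and Lemma~\ref{lemma:PQR_local_global} that pins this subgroup down to be the one generated by $\sigma_s(\supp(\mu_s))$, so that all relevant character moments of the two distributions agree. With the parameters chosen in the hierarchy $\tau\ll T^{-1}\ll\eps\ll\xi$ paralleling~\eqref{eq:mixed_inv_params}, the errors from the two swaps and from the decoupling aggregate to $o_\tau(1)+o_\xi(1)$, while the truncation--scaling error controlled by Claim~\ref{claim:trunc_scale_vec} contributes $O_q(\delta)$, giving the claimed bound.
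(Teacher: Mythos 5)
Your proof follows essentially the same route as the paper's: Lipschitz reduction of $\constraint$ to a per-coordinate $L^2$ bound, distributional swaps using Lemma~\ref{lem:bound_sd_mixed_inv_csp} (consistency/rank) and Theorem~\ref{thm:invariance_principle}, the coupling lemmas, Claim~\ref{claim:trunc_scale_vec}, Claim~\ref{claim:close_to_1_csp}, and the approximate formula (Lemma~\ref{lem:approx_formula_noise_op_fancy}) plus Claim~\ref{claim:noisy_sum_1_csp}. The main difference is one of ordering: you condition on the boundedness event $E_s'$ (defined on the actual rounding distribution) and apply Claim~\ref{claim:trunc_scale_vec} \emph{before} any distribution swaps, then transport items (i) and (ii) back through the swaps; the paper instead swaps $\mathcal{D}_\mathcal{T}\to\mu_C^R$ and $\calG^R\to$ local first, and only then conditions on $E_{s_j}$ (on the coupled distribution). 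Your reordering is legitimate but pushes the invariance step onto the test functions $\varrho'\circ\zeta$ and $\mathrm{dist}(\cdot,[0,1])^2\circ\mathfrak{Re}$, neither of which is $C^3$ nor is exactly the $\truncerr$ covered by the second item of Theorem~\ref{thm:invariance_principle}; a small smoothing or a detour through $\truncerr$ (avoiding the Cauchy--Schwarz that produced $\truncerr^2$ in (ii)) would tidy that up. To be fair, the paper's own invariance application to $\|\bF^{\sf dec}(\cdot)^\star-\widetilde{\bF^{\sf dec}}(\cdot)\|_2$ involves the non-Lipschitz $\scale$ map and would need comparable care, so at the level of detail given this is an acceptable sketch.
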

\begin{proof}
Throughout this proof, we will use the following simple fact related to adding and removing conditioning on the event happening with high probability. If $X$ is a non negative random variable which is $c$-bounded and $E$ be the event that $X\in [1-\delta, 1+\delta]$ with $\Pr[E] \geq 1-\eta \geq 1/2$, then $\Expect{}{X} \leq  \Expect{}{X\mid E}  + \eta\cdot c$. Also, $\Expect{}{X\mid E} \leq 2\cdot \Expect{}{X}$.

We now begin with the proof. Using the fact that  $C$ is a Lipschitz function with Lipschitz constant $C_0(q)$, we get that the LHS is upper bounded by
    \[
    C_0(q)\cdot \sum_{j=1}^3 \E\left[ \| \bF^{{\sf dec}}_{s_j}({\V a}_{s_j}, {\V g}_{s_j})^\star -\widetilde{\bF^{{\sf dec}}_{s_j}}({\V a}_{s_j}, {\V g}_{s_j}) \|_2\right].
    \]
    Using Lemma~\ref{lem:bound_sd_mixed_inv_csp}, the fact that the expectation is bounded by $O_q(1)$, we get that the LHS can be upper bounded by
     \[
     C_0(q)\cdot \sum_{j=1}^3 \underbrace{\E\left[ \| \bF^{{\sf dec}}_{s_j}({\V z}_{s_j}, {\V g}_{s_j})^\star -\widetilde{\bF^{{\sf dec}}_{s_j}}({\V z}_{s_j}, {\V g}_{s_j}) \|_2 \right]}_{(\rom{1})}+o_\xi(1).
     \]
     For each fixing of ${\V z_{s_j}}$, the corresponding function has all the influences bounded by $\lll_{q} \tau$. Applying the invariance principle from Theorem~\ref{thm:invariance_principle}, we get
     \begin{align*}
\E\left[ \| \bF^{{\sf dec}}_{s_j}({\V z}_{s_j}, {\V g}_{s_j})^\star -\widetilde{\bF^{{\sf dec}}_{s_j}}({\V z}_{s_j}, {\V g}_{s_j}) \|_2 \right]
     &\leq \E\left[ \| \tilde{\bcalF}^{{\sf dec}}_{s_j}({\V z}_{s_j}, {\V z'_{s_j}})^\star -\widetilde{\tilde{\bcalF}^{{\sf dec}}_{s_j}}({\V z}_{s_j}, {\V z'_{s_j}}) \|_2\right] + o_\tau(1)
     \end{align*}
    Overall, the LHS from the claim is upper bounded by
     $$C_0(q)\cdot \sum_{j=1}^3 \E\left[ \| \tilde{\bcalF}^{{\sf dec}}_{s_j}({\V z}_{s_j}, {\V z'_{s_j}})^\star -\widetilde{\tilde{\bcalF}^{{\sf dec}}_{s_j}}({\V z}_{s_j}, {\V z'_{s_j}}) \|_2 \right]+o_\xi(1) + o_\tau(1).$$
     In this expression, $({\V z}_{s_j}, {\V z'_{s_j}})$ are distributed according to the distribution $\mu_{s_j}^R\times \mu_{s_j}^R$. Consider $({\V z}_{s_j}, {\V z'_{s_j}}, {\V Z_{s_j}})$ the distributed from Lemma~\ref{lem:coupling_decoupled} with the same $\kappa$ as required for Claim~\ref{claim:close_to_1_csp}. As $T\gg d,\kappa^{-1}$, using Lemma~\ref{lem:coupling_decoupled}, the statistical distance between these two distributions on $({\V z}_{s_j}, {\V z'_{s_j}})$ can be upper bounded by $o_\xi(1)$. The quantity inside the expectation is $O_q(1)$ bounded, and hence, we can move to the coupled distribution by incurring an error of $o_\xi(1)$. Thus, we assume this distribution on $({\V z}_{s_j}, {\V z'_{s_j}})$ now onwards. Hence, the above quantity is upper bounded by
     \[
     C_0(q)\cdot \sum_{j=1}^3 \Expect{({\V z}_{s_j}, {\V z'_{s_j}}, {\V Z_{s_j}})}{ \| \tilde{\bcalF}^{{\sf dec}}_{s_j}({\V z}_{s_j}, {\V z'_{s_j}})^\star -\widetilde{\tilde{\bcalF}^{{\sf dec}}_{s_j}}({\V z}_{s_j}, {\V z'_{s_j}}) \|_2 }+o_\xi(1) + o_\tau(1),
     \]
     It follows from Claim~\ref{claim:close_to_1_csp} that $\Pr[E_{s_j}]\geq 1-\delta$. Hence, the above quantity is upper bounded by
    $$O_q(1)\cdot  \sum_{j=1}^3 \E\left[ \| \tilde{\bcalF}^{{\sf dec}}_{s_j}({\V z}_{s_j}, {\V z'_{s_j}})^\star -\widetilde{\tilde{\bcalF}^{{\sf dec}}_{s_j}}({\V z}_{s_j}, {\V z'_{s_j}})\|_2  ~\big|~  E_{s_j}\right]+o_\xi(1) + o_\tau(1)+ O_q(\delta).$$
    Using Claim~\ref{claim:trunc_scale_vec}, the above quantity (and hence the LHS from the claim) is upper bounded by
     $$O_q(1)\cdot  \sum_{j=1}^3 \E\left[ \| \mathfrak{R}(\tilde{\bcalF}^{{\sf dec}}_{s_j}({\V z}_{s_j}, {\V z'_{s_j}})) -\widetilde{\tilde{\bcalF}^{{\sf dec}}_{s_j}}({\V z}_{s_j}, {\V z'_{s_j}})\|_2  ~\big|~  E_{s_j}\right]+o_\xi(1) + o_\tau(1)+ O_q(\delta).$$
We have:
  \begin{align*}
  &\E\left[ \| \mathfrak{Re}(\tilde{\bcalF}^{{\sf dec}}_{s_j}({\V z}_{s_j}, {\V z'_{s_j}})) -\widetilde{\tilde{\bcalF}^{{\sf dec}}_{s_j}}({\V z}_{s_j}, {\V z'_{s_j}})\|_2  ~\big|~  E_{s_j}\right]\\
   &\quad\quad\quad\quad\quad\quad
   \leq \frac{1}{\Pr[E_{s_j}]}\E\left[ \| \mathfrak{Re}(\tilde{\bcalF}^{{\sf dec}}_{s_j}({\V z}_{s_j}, {\V z'_{s_j}})) -\widetilde{\tilde{\bcalF}^{{\sf dec}}_{s_j}}({\V z}_{s_j}, {\V z'_{s_j}})\|_2\right]\\
           &\quad\quad\quad\quad\quad\quad \leq 2\E\left[ \| \mathfrak{Re}(\tilde{\bcalF}^{{\sf dec}}_{s_j}({\V z}_{s_j}, {\V z'_{s_j}})) -\widetilde{\tilde{\bcalF}^{{\sf dec}}_{s_j}}({\V z}_{s_j}, {\V z'_{s_j}})\|_2\right],
  \end{align*}
  using the fact that $\Pr[E_{s_j}] = 1-
  \delta\geq 1/2$.
  Therefore, the LHS from the claim is upper bounded by
  \begin{equation}\label{eq:truncation_re_error_1}
      O_q(1)\cdot  \sum_{j=1}^3 \E\left[ \| \mathfrak{Re}(\tilde{\bcalF}^{{\sf dec}}_{s_j}({\V z}_{s_j}, {\V z'_{s_j}})) -\widetilde{\tilde{\bcalF}^{{\sf dec}}_{s_j}}({\V z}_{s_j}, {\V z'_{s_j}})\|_2 \right] +o_\xi(1) + o_\tau(1)+ O_q(\delta).
  \end{equation}
  Now, for a given variable $s\in \calV$,
 define
 \begin{align}
  \Theta_s:=\E\left[ \| \mathfrak{Re}(\tilde{\bcalF}^{{\sf dec}}_{s}({\V z}_{s}, {\V z'_{s}})) -\widetilde{\tilde{\bcalF}^{{\sf dec}}_{s}}({\V z}_{s}, {\V z'_{s}})\|_2\right]= \Expect{}{\truncerr(\tilde{\bcalF}^{{\sf dec}}_{s}({\V z}_{s}, {\V z'_{s}}))}, \label{eq:truncation_re_error_2}
  \end{align}
  where $\truncerr(a_1,\ldots,a_q) = \sqrt{\sum\limits_{i=1}^{q}\card{{\sf trunc}_{[0,1]}(\mathfrak{Re}(a_i)) - \mathfrak{Re}(a_i)}^2}$.
 Using the fact that $\truncerr$ is $O(1)$ Lipschitz,
 $$
 \E_{}\left[\truncerr\left(\tilde{\bcalF}^{{\sf dec}}_{s}({\V z}_{s}, {\V z'_{s}})\right)\right] \lll \E_{}\left[\truncerr\left(\mathrm{T}_{\calP_s, 1-\eps}{\bcalF}_{s}({\V Z_{s}})\right) + \norm{\tilde{\bcalF}^{{\sf dec}}_{s}({\V z}_{s}, {\V z'_{s}})-\mathrm{T}_{\calP_s, 1-\eps}{\bcalF}_{s}({\V Z}_{s})}_2\right].
 $$
  As $\mathrm{T}_{\calP_s, 1-\eps}{\calF}_{s, j} \in [0,1]$ for $1\leq j\leq q$,
  $\truncerr\left(\mathrm{T}_{\calP_s, 1-\eps}{\bcalF}_{s}({\V z}_{s})\right)=0$. Therefore, we have
  \begin{align*}
      \Theta_s^2 &\lll \Expect{}{\norm{\tilde{\bcalF}^{{\sf dec}}_{s}({\V z}_{s}, {\V z'_{s}})-\mathrm{T}_{\calP_s, 1-\eps}{\bcalF}_{s}({\V Z}_{s})}_2}^2\\
      & \leq \Expect{}{\norm{\tilde{\bcalF}^{{\sf dec}}_{s}({\V z}_{s}, {\V z'_{s}})-\mathrm{T}_{\calP_s, 1-\eps}{\bcalF}_{s}({\V Z}_{s})}_2^2}\\
      & = \sum_{j=1}^q \Expect{}{\card{\tilde{\calF}^{{\sf dec}}_{s,j}({\V z}_{s}, {\V z'_{s}})-\mathrm{T}_{\calP_s, 1-\eps}\calF_{s, j}({\V Z}_{s})}_2^2}\\
       & = \sum_{j=1}^q \Expect{}{\card{\tilde{\calF}^{{\sf dec}}_{s,j}({\V z}_{s}, {\V z'_{s}})- \calF'_{s, j}({\V Z}_{s}) + \calF'_{s, j}({\V Z}_{s})-  \mathrm{T}_{\calP_s, 1-\eps}\calF_{s, j}({\V Z}_{s})}_2^2}\\
       &\lll \sum_{j=1}^q \Expect{}{\card{\tilde{\calF}^{{\sf dec}}_{s,j}({\V z}_{s}, {\V z'_{s}})- \calF'_{s, j}({\V Z}_{s})}^2} + \Expect{}{\card{\calF'_{s, j}({\V Z}_{s})-  \mathrm{T}_{\calP_s, 1-\eps}\calF_{s, j}({\V Z}_{s})}_2^2}.
      \end{align*}
      Using Lemma~\ref{lem:approx_formula_noise_op_fancy} and Lemma~\ref{lem:decoupled_1} we can bound the first and second expectations from each summand by $o_\xi(1)$ respectively to get that $\Theta_s \lll o_\xi(1)$.
      Therefore,
  \begin{equation}\label{eq:truncation_re_error_3}
     \Expect{}{\truncerr(\tilde{\bcalF}^{{\sf dec}}_{s}({\V z}_{s}, {\V z'_{s}}))}\leq  o_{\xi}(1).
  \end{equation}
  Combining Equations (\ref{eq:truncation_re_error_1}), (\ref{eq:truncation_re_error_2}), and (\ref{eq:truncation_re_error_3}), we get that the LHS from the claim is upper bounded by $o_{\tau}(1) + o_{\xi}(1) + O_q(\delta)$ as claimed.
\end{proof}
\begin{proof}[Proof of Theorem~\ref{thm:dictatorship_test}]
Using the above claim and Equation(\ref{eq:rounding_scheme}), and the fact that $\tau\ll \xi$, we get
\begin{equation}
    \Round_{\bcalF}(\V V, \V \mu) \geq  \Expect{{\constraint}\sim \calC}{ \Expect{(\mathcal{D}_\mathcal{T}, \calG^{R})}{{\constraint}(\widetilde{\bF^{{\sf dec}}_{s_1}}({\V a}_{s_1}, {\V g}_{s_1}), \widetilde{\bF^{{\sf dec}}_{s_2}}({\V a}_{s_2}, {\V g}_{s_2}), \widetilde{\bF^{{\sf dec}}_{s_3}}({\V a}_{s_3}, {\V g}_{s_3}))}} - o_{\xi}(1).
\end{equation}

Combining with~\eqref{eq:testpassing_after_invariance} gives

\begin{equation*}
    \Pr[\bcalF\mbox{ passes }\dict_{\V V, \V \mu}] \leq \Round_{\bcalF}(\V V, \V \mu) + o_{\xi}(1).
\end{equation*}

As the integral value of the instance $\inst$ is at most ${\sf OPT}(\inst)$, we have $\Round_{\bcalF}(\V V, \V \mu)\leq  {\sf OPT}(\inst)$ it follows that
\[
		\Pr[\bcalF\mbox{ passes }\dict_{\V V, \V \mu}] \leq {\sf OPT}(\inst) + o_\xi(1).
        \qedhere
\]
\end{proof}

\begin{figure}
        \begin{center}
\fbox{
  \begin{minipage}{0.95\linewidth}
    	{\bf Setup:} For each $s\in \calV$, the probability space $\Omega_s = (\Sigma, \mu_s)$ consists of atoms in $\Sigma$ with the distribution $\mu_s(a) = \|b_{s,a}\|^2$. Let $\bcalF_s$ denote the function obtained by interpreting the
				function $\bcalF : \Sigma^R\rightarrow \blacktriangle_q$ as a function over $\Omega_s^R$.
				
				{\bf Input:} An instance $\inst = (\calV, \calC)$ of a Max-$\mathcal{P}$-CSP such that the algorithm $\alg$ accepts $\inst$. Let $(\V V, \V \mu)$ be a solution for the basic SDP relaxation of $\inst$ and $\mathcal{T} \subseteq G_{\sf master}^\calV$ be the subspace of the set of satisfying assignments of the instance $\inst(\calV, \calC)$ over $G_{\sf master}^{\calV}:=\prod_{v\in \calV} H_v^\star$ that satisfy the conditions from Lemma~\ref{lemma:sdp_noZ_local_globalembed}. \\

				{\bf Rounding Scheme:}
				
				Step I: Sample $R$ Gaussian vectors $\V \zeta^{(1)}, \V \zeta^{(2)}, \ldots, \V \zeta^{(R)}$ with the same dimension as $\V V$.\\
				
				Step II: For each $s\in \calV$, do the following:

				\begin{enumerate}
					\item For each $j\in [R]$, let $g^{(j)}_{s,0}\equiv {\mathbf{1}}$ and for $c\in \{1,\ldots, q-1\}$, set
					$$g^{(j)}_{s,c}  =\sum_{\omega\in \Sigma} \ell_{s,c}(\omega)\langle \V b_{s,\omega}, \V \zeta^{(j)}\rangle.$$
					Let ${\bg^{(j)}_{s}} = ({g^{(j)}_{s,0}}\equiv \mathbf{1}, {g^{(j)}_{s,1}}, \ldots, {g^{(j)}_{s,q-1}})$ and $\bg_s = (\bg^{(1)}_s, \bg^{(2)}_s, \ldots, \bg^{(R)}_s)$.
					
					\item Sample $R$ uniformly random assignments ${\V \alpha}^{(1)}, {\V \alpha}^{(2)},\ldots, {\V \alpha}^{(R)}$ from the set of satisfying assignments to the instance $\inst$ over $G_{\sf master}^\calV$. Let ${\V a}_s = ({\V \alpha}^{(1)}_s, {\V \alpha}^{(2)}_s ,\ldots, {\V \alpha}^{(R)}_s)$.
					
					\item Evaluate the polynomial $ \bF^{\sf dec}_{s}$ with $({\V a}_s, {\bg_s})$ as inputs to obtain ${\V p}_s\in \mathbb{C}^q$, and let $\tilde{\V p}_s\in \mathbb{C}^q$ be such that $(\tilde{\V p}_s)_j = {\sf trunc}_{[0,1]}(\mathfrak{Re}(({\V p}_s)_j))$	where
					\[
					{\sf trunc}_{[0,1]}(x) = \left\{ \begin{array}{ll}
						0 &\mbox{ if } x<0\\
						x &\mbox{ if } 0\leq x\leq 1,\\
						1 &\mbox{ if } x>1,\\
					\end{array}\right. \]
					\item Round $\tilde{\V p}_s$ to $\V p^\star_s$.
					$$\V p^\star_s = \scale((\tilde{\V p}_s)_1, (\tilde{\V p}_s)_2, \ldots,(\tilde{\V p}_s)_{q}),$$
					where
					\[\scale(x_1, x_2, \ldots, x_{q}) = \left\{ \begin{array}{ll}
						\frac{1}{\sum_i x_i}(x_1, x_2, \ldots, x_{q}) & \mbox{ if } \sum_i x_i \neq 0,\\
						(1,0,0,\ldots, 0) & \mbox{ if } \sum_i x_i = 0.\\
					\end{array}\right.\\
					\]
					\item Assign the variable $s\in \calV$ a value $a\in \Sigma$ with probability $(\V p^\star_s)_{\amap^{-1}(a)}$.
					
				\end{enumerate}
				
				Step III: Output the assignment from Step II.
    \end{minipage}
  }
\end{center}
\caption{Rounding Scheme $\Round_{\bcalF}$.}
\label{fig:rounding_scheme}
\end{figure}

\clearpage
\printbibliography

\newpage
\appendix
 \section{Omitted Proofs}\label{sec:proof_that_symm}
 Let $p\geq 5$ be a prime,
 take $\Sigma = \mathbb{F}_p$
 and consider the predicate $P\colon \Sigma^3\to\{0,1\}$
 defined as $P(x,y,z) = 1_{x,y,z\text{ are distinct}}$.
 \begin{claim}
     The collection $\{P\}$ is
     $\textsc{Mildly-Symmetric}$.
 \end{claim}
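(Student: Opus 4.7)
The plan is to take the affine group $\{\tau_{a,b}(u)=au+b : a\in\mathbb{F}_p^{\times}, b\in\mathbb{F}_p\}$ as the collection of actions. Each $\tau_{a,b}$ is a bijection of $\mathbb{F}_p$, so it sends any triple with three distinct entries to another such triple, and Property~1 of Definition~\ref{def:symm_p} will be immediate.

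For Property~2, fix a satisfying $(x,y,z)$ (three distinct entries) and set $\gamma=(z-x)/(y-x)\in\mathbb{F}_p\setminus\{0,1\}$; a change of variables $s=ax+b$, $h=a(y-x)$ shows that the orbit will take the clean form
\[
O=\bigl\{(s,\,s+h,\,s+\gamma h)\,:\,s\in\mathbb{F}_p,\;h\in\mathbb{F}_p^{\times}\bigr\}.
\]
I will argue by contradiction that no non-trivial $(\mathbb{Z},+)$-embedding $\sigma_1,\sigma_2,\sigma_3\colon\mathbb{F}_p\to\mathbb{Z}$ of $O$ exists. The embedding identity $\sigma_1(s)+\sigma_2(s+h)+\sigma_3(s+\gamma h)=0$ has two free parameters, so I will fix two distinct $u,u'\in\mathbb{F}_p$ and instantiate the identity at $(s,\,u-s)$ and $(s,\,u'-s)$ as $s$ ranges over $\mathbb{F}_p\setminus\{u,u'\}$; subtracting the two instances will give
\[
\sigma_2(u)-\sigma_2(u')=\sigma_3(v)-\sigma_3(v+c),
\]
with $c=\gamma(u-u')$ independent of $s$ and $v=\gamma u'+(1-\gamma)s$ sweeping a co-two subset of $\mathbb{F}_p$ as $s$ varies. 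Since the left side is $s$-independent, the difference $d_c(v):=\sigma_3(v+c)-\sigma_3(v)$ will be constant on this co-two subset. Sliding the pair $(u,u')\mapsto(u+t,u'+t)$ keeps $c$ fixed but shifts the two excluded $v$-values; matching $d_c$ on the non-empty overlap of the two resulting ranges --- an overlap guaranteed by $p\geq 5$ --- will promote $d_c$ to a constant on all of $\mathbb{F}_p$. Consequently $c\mapsto d_c(0)$ defines an additive homomorphism $\mathbb{F}_p\to\mathbb{Z}$, which must vanish because $\mathbb{F}_p$ is $p$-torsion while $\mathbb{Z}$ is torsion-free; thus $\sigma_3$ will be constant.

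Once $\sigma_3\equiv c_3$, the embedding identity collapses to $\sigma_1(s)+\sigma_2(s+h)=-c_3$, and comparing two choices of $h\neq 0$ at a common $s$ forces $\sigma_2(s+h)=\sigma_2(s+h')$; letting $s$ range over $\mathbb{F}_p$ then forces $\sigma_2$ to be constant, and $\sigma_1$ inherits constancy. This will contradict the assumed non-triviality and finish the proof. The only step that is not pure bookkeeping will be the promotion ``constant on a co-two subset of $v$'' $\Rightarrow$ ``constant on $\mathbb{F}_p$'' in the preceding paragraph; this is exactly where the hypothesis $p\geq 5$ enters, since we need at least two translates of $(u,u')$ whose excluded $v$-sets together miss a common element.
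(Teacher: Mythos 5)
Your proof is correct and follows the same high-level strategy as the paper: use the affine group $\{\tau_{a,b}\}$, parameterize the orbit, assume a $(\mathbb{Z},+)$-embedding, and force it to be constant via the torsion-freeness of $\mathbb{Z}$ together with a ``promotion'' argument that hinges on $p\geq 5$. Where you diverge from the paper is in the mechanism of the promotion step. The paper first eliminates $\sigma_1$ and $\sigma_2$ in favor of $\sigma_3$ (in their notation, $\phi$), derives the approximate Cauchy identity $\phi(\alpha)=\phi(\beta)+\phi(\alpha-\beta)$ for all $\alpha\neq 0$ with $(z-1)\beta-z\alpha\neq 0$, and then runs a Freiman-style local self-correction argument (choosing $\alpha',\beta'$ generically and chaining) to upgrade this to the full Cauchy equation. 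Your approach instead fixes a difference parameter $c=\gamma(u-u')$, observes that the first difference $d_c(v)=\sigma_3(v+c)-\sigma_3(v)$ is constant on a co-two subset, and promotes it to a global constant by a two-translate covering argument; this makes $c\mapsto d_c(0)$ a homomorphism $\mathbb{F}_p\to\mathbb{Z}$, which vanishes by torsion. Your difference-function formulation is somewhat cleaner than the paper's self-correction because it sidesteps the need to verify multiple pairwise genericity conditions; conversely the paper's self-correction argument is more directly in the style of additive-combinatorics quasi-homomorphism corrections, which may generalize more readily. Both proofs locate the use of $p\geq 5$ in the same place: ensuring two co-two subsets of $\mathbb{F}_p$ can be chosen to intersect and jointly cover $\mathbb{F}_p$.

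One small point of care in your write-up: after establishing that $d_c$ is constant on the first co-two subset and the second (translated) one, you need the two excluded pairs to be \emph{disjoint} so the union of the two co-two subsets is all of $\mathbb{F}_p$; this requires $t\notin\{0,\,a-b,\,b-a\}$ (in the notation $a=(1-\gamma)u+\gamma u'$, $b=u'$, with $a\neq b$ since $\gamma\neq 1$), which is possible exactly when $p>3$. Your phrasing ``whose excluded $v$-sets together miss a common element'' conflates this disjointness requirement with the separate (and automatic for $p\geq 5$) requirement that the two co-two subsets overlap; spelling out both would make the role of $p\geq 5$ crisper.
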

 \begin{proof}
     For each $a\in\mathbb{F}_p\setminus \{0\}$ and $b\in\mathbb{F}_p$
     define the map
     $\tau_{a,b}\colon \Sigma\to\Sigma$ by
     $\tau_{a,b}(u)=au+b$. It is clear that each one of these maps preserves the satisfying assignments of $P$, so it remains to check
     the second condition in Definition~\ref{def:symm_p}.

     Let $(x,y,z)$ be any satisfying assignment.
     We can find $a\neq 0$ and $b$
     such that $ax+b = 0$
     and $ay+b = 1$, so we may assume without loss of generality that $(x,y,z) = (0,1,z')$ for some $z'\neq 0,1$; this is justified as the orbits of $(0,1,az+b)$ and $(x,y,z)$ under $\{\tau_{a',b'}\}_{a'\neq 0,b'}$ are the same,
      and $az+b$ is some element in $\mathbb{F}_p$ not equal to $0,1$. 

     Let $A = \sett{(b,a+b,az+b)}{a\neq 0, b\in\mathbb{F}_p}$
     be the orbit of $(0,1,z)$
     under $\{\tau_{a,b}\}_{a\neq 0, b}$ for $z\neq 0,1$,
     and suppose that $\sigma,\gamma,\phi$ is
     an embedding of $A$ into $(\mathbb{Z},+)$. By applying an affine shift
     to all embeddings we may
     assume that
     $\gamma(0)=\phi(0) = 0$, and by the definition of embeddings we get that
     \begin{equation}\label{eq:omitted1}
         \sigma(b) + \gamma(a+b) + \phi(az+b) = 0\qquad
         ~\forall a\neq 0, b.
     \end{equation}
     Taking
     $b=-a$ and using $\gamma(0)=0$ we get that
     $\sigma(-a) + \phi(a(z-1)) = 0$ for $a\neq 0$,
     so $\phi(a(z-1)) = -\sigma(-a)$. Hence we get that
     \[
        -\phi(-b(z-1))
        +\gamma(a+b)
        +\phi(az+b) = 0
        \qquad
         ~\forall a,b\neq 0.
     \]
     Taking $b = -az$ and using $\phi(0) = 0$ we get
     $-\phi(az(z-1))+\gamma(-a(z-1)) = 0$,
     meaning that $\gamma(y) = \phi(-zy)$ for all $y\neq 0$.
     Equality also holds for $y=0$
     (as both values are $0$)
     and we conclude that
     \[
        -\phi(-b(z-1))
        +\phi(-z(a+b))
        +\phi(az+b) = 0
                \qquad
         ~\forall a,b\neq 0.
     \]
     Note that
     the image of $(-b(z-1), -z(a+b))$ under $a,b\neq 0$
     consists of $(\alpha,\beta)$
     such that $\alpha\neq 0$
     and $(z-1)\beta-z\alpha\neq 0$, and we get that for any such $\alpha,\beta$ it holds that $\phi(\alpha) = \phi(\beta) + \phi(\alpha-\beta)$.
     We now use the idea of
     local self-correction
     to argue that $\phi(\alpha) = \phi(\beta) + \phi(\alpha-\beta)$ in fact holds
     for all $\alpha,\beta\in\mathbb{F}_p$.

    Take any $\alpha,\beta$
     satisfying
     $(z-1)\beta-z\alpha= 0$.
     If $\beta = 0$ the equality $\phi(\alpha) = \phi(\beta) + \phi(\alpha-\beta)$ is clear, so assume otherwise. It follows that
     $\alpha = \frac{z-1}{z}\beta$, so
     $\alpha\neq 0,\beta$.
     Choose
     $\alpha'\in\mathbb{F}_p\setminus \{\alpha,0,\frac{z}{z-1}{\alpha}\}$ arbitrarily and then $\beta'\in\mathbb{F}_p$ uniformly.
     With probability at least
     $1-\frac{3}{p}>0$ we
     have that
     $(z-1)\beta'-z\alpha' \neq 0$,
     $(z-1)(\beta-\beta')-z(\alpha-\alpha') \neq 0$,
     $(z-1)\beta'-z\beta \neq 0$,
     $(z-1)(\alpha'-\beta')-z(\alpha-\beta) \neq 0$.\footnote{Note that we have $4$ conditions, so a naive application of the union bound only gives a lower bound of $1-\frac{4}{p}$
     on the probability that all of these events hold. However, since $(z-1)\beta-z\alpha = 0$, the first two conditions can be seen to be equivalent, so the result of the union bound improves to $1-\frac{3}{p}$.}.
     Also, by choice of $\alpha'$ we
     have $(z-1)\alpha'-z\alpha \neq 0$
     and $\alpha'\neq 0,\alpha,\beta$,
     $\beta\neq 0$ and $\alpha\neq 0$. Fix $\alpha',\beta'$
     satisfying all of these inequalities.
     Adding up the constraints we get from the first two conditions we get that
     \[
        \phi(\alpha') + \phi(\alpha-\alpha')
        =
        (\phi(\beta')+\phi(\alpha'-\beta'))
        +
        (\phi(\beta-\beta')+\phi(\alpha-\beta-\alpha'+\beta'))
     \]
     Using the third and fourth conditions
     we have $\phi(\beta')+\phi(\beta-\beta') = \phi(\beta)$
     and $\phi(\alpha'-\beta') + \phi(\alpha-\beta-\alpha'+\beta') = \phi(\alpha-\beta)$
     so that the right hand side simplifies to $\phi(\beta)+\phi(\alpha-\beta)$. Using the fifth
     condition the left hand
     side simplifies to $\phi(\alpha)$, altogether giving that
     $\phi(\alpha) = \phi(\beta)+\phi(\alpha-\beta)$.

     We conclude that
     $\phi(\alpha) = \phi(\beta)+\phi(\alpha-\beta)$
     for all $\alpha,\beta\in\mathbb{F}_p$.
     Thus, we get that
     $\phi(x) = x\phi(1)$ for all $x\in\mathbb{F}_p$ and
     also that
     $\phi(1) = \phi(2\cdot (p-1)/2 + 2)=2\phi((p-1)/2)+\phi(2)=(p+1)\phi(1)$,
     hence $\phi(1)=0$. This implies that
     $\phi\equiv 0$, and using
     $\gamma(y) = \phi(-zy)$ that holds for $y\neq 0$ and $\gamma(0) = 0$ it follows that $\gamma\equiv 0$.
     Plugging this into~\eqref{eq:omitted1} gives that $\sigma\equiv 0$, and the proof is concluded.
 \end{proof}

\end{document}